\appto\appendix{

}
\definecolor{mygreen}{rgb}{0.66,0.7,0.7}
\def\Aut{\mathrm{Aut}}
\def\C{\mathcal{C}}
\def\D{\mathcal{D}}
\DeclareMathOperator{\Diag}{Diag}
\def\E{\mathtt{E}}
\def\ep{\vspace*{.4cm}}
\def\G{\mathcal{G}}
\def\ld{\lambda}
\def\Z{\mathbb{Z}}
\def\R{\mathbb{R}}
\def\ra{\rightarrow}
\def\WL{\mathtt{WL}}
\def\dd{\diamond}
\def\dw{\dot{\wedge}}
\DeclareMathOperator{\spp}{sp}
\def\Var{\mathrm{Var}}
\def\T{{\mathtt{T}}}
\newcommand{\myfunc}[3]{#1 \,: \, #2\ra #3}
\newcommand{\mset}[1]{\{\hspace{-1.07mm}|\,#1\,|\hspace{-1.06mm}\}}
\newcommand{\tcb}[1]{\textcolor{blue}{#1}}
\newcommand{\tcr}[1]{\textcolor{red}{#1}}
\def\endproof{\hfill$\Box$\vspace{2mm}}
\begin{document}

\title{Description Graphs, Matrix-Power Stabilizations\\  and Graph Isomorphism in Polynomial Time\thanks{This work is supported by National Natural Science Foundation
of China with Grant 62172405.}
}
\author{Rui Xue}
\authorrunning{Rui Xue}
\tocauthor{Rui Xue}
\institute{State Key Laboratory of Information Security,\\ Institute of Information
Engineering, CAS.\\
School of Cyber Security,\\
 University of Chinese Academy of Sciences.\\
\email{xuerui@iie.ac.cn}\\
November 6, 2022}

\date{\today}

\maketitle

 \begin{abstract}
It is  confirmed in this work  that the graph isomorphism can be tested in polynomial time, which resolves a longstanding problem in the theory of computation. The contributions are in three phases as follows.
\begin{itemize}
  \item A description graph $\tilde{A}$ to a given graph $A$ is introduced so that labels to vertices and edges of $\tilde{A}$ indicate the identical or different amounts of walks of any sort in any length  between vertices in  $A$. Three processes are then developed to obtain description graphs. They reveal relations among matrix power, spectral decomposition and adjoint matrices, which is of independent interest.

   \item We show that the stabilization of description graphs can be implemented via matrix-power stabilization, a new approach to distinguish vertices and edges to graphs.  The  approach is proven to be equivalent in the partition of vertices to Weisfeiler-Lehman (WL for short) process. The specific Square-and-Substitution (SaS) process is more succinct than WL process.

      The vertex partitions to our stable graphs are proven to be \emph{strongly} equitable partitions, which is important in the proofs of our main conclusion.  Some properties on stable graphs are also explored.

  \item A class of graphs named binding graphs is proposed and proven to be graph-isomorphism complete. The vertex partition to the stable graph of a binding graph is the automorphism partition, which allows us to confirm graph-isomorphism problem is in complexity class $\mathtt{P}$. Since the binding graph to a graph is so simple in construction, our approach can be readily applied in practice.%
\end{itemize}
Some examples are supplied as illustrations to the contexts, and a brief suggestion of implementation is also given in the appendix.%
\end{abstract}
\clearpage

\section{Introduction}
In this work, a  graph  is an undirected graph with labels assigned to vertices and edges. The labels are all independent variables. Two graphs are isomorphic iff there is a bijection between their vertices that respects adjacency and labels of vertices and edges. The graph isomorphism problem is a computational problem of deciding whether any two given graphs are isomorphic. When the labels on edges are all set to $1$ and the labels to vertices or non-edges are all set to $0$, the graph isomorphism problem here is just that in convention.

 Apart from its importance in practical applications such as in Chemistry, Biology and many other areas, graph isomorphism problem attracts so much attention in the theory of computation due to its specific placement in computational complexity.
 As is well known, whether two complexity classes $\mathtt{P}$ and $\mathtt{NP}$ are equal is an important open problem in the theory of computer science. The importance of exploiting an efficient solution, or refuting  its possibility, to graph isomorphism problem comes from the fact that it is one of the two natural problems (the other is integer factorization problem) potentially with intermediate complexity, possibly neither in $\mathtt{P}$ nor $\mathtt{NP}$-complete (cf. Karp \cite{Karp72}, Garey and Johnson \cite{GareyJo79}).

 Evidence in the literatures \cite{Mathon79,GoldreichMiWi86,GoldwasserSi86,Babai16} hints that graph isomorphism problem may not be $\mathtt{NP}$-hard. The most advanced result \cite{Babai16} presents a procedure in quasipolynomial time. A publicly admitted, provably efficient algorithm for graph isomorphism problem lacks  till now. This work affirms that it is indeed in $\mathtt{P}$.

Graph isomorphism problem to some restricted class of graphs, such as bipartite graphs \cite{BoothCo79}, chordal graphs and so on, is proven to be as difficult as that to all graphs. These graph classes are called graph-isomorphism complete ones. A  new class of graphs, named binding graphs, is proposed and proven to be graph-isomorphism complete in this work.

Aimed at testing of graph isomorphism in this work, three processes are presented to distinguish vertices and edges of graphs. These processes are equivalent in the sense that the final graphs (description graphs so named) obtained by them to a given graph are equivalent. A description graph to a given one will reflect, per labels in it,  the identical or different of amount of walks, of any length, between vertices in the given graph. However, the amount of walks between vertices is preserved by the automorphisms of a graph.  The processes therefore are approaches to distinguishing vertices and edges with respect to the automorphisms. A stable graph obtained by stabilization of description graphs will possess the same automorphism group as the originally given graph.

The stable graphs of binding graphs are proven to enable automorphism partitions in this work. Since graph isomorphism problem is equivalent to the automorphism partition \cite{Mathon79} and the class of binding graphs is graph-isomorphism complete, it allows us to obtain a procedure for testing of graph isomorphism. The fact that the procedure is in polynomial time confirms that graph isomorphism problem is in the complexity class $\mathtt{P}$.

In the next subsection we describe the routines of this work, including techniques involved and results obtained. The differences of methodology employed in this work from those in existing works are presented in the subsequent subsection. The related works and the arrangement of the contexts are then introduced in the last subsections of this  introduction.

\subsection{Approaches, Techniques and Results}

The labels to vertices and edges of a graph used in the whole context (except in Subsection \ref{sec:bstv}) are formal variables which are independent of each other, and a particular variable $x_0$ is reserved to represent the non-edges (named blank edges in this paper) in a graph. In this way, all graphs are labeled complete graphs, which are expressed as matrices of labels. That could extremely simplify the proofs, and also avoid the faulty stability that may happen when numeral labels are used as labels (cf. the first example on page \pageref{NuSaS}). By the way of labeling $x_0$ on blank edges, the notions of simple graphs, connected graphs and bipartite graphs, and so on can be defined in a consistence way with those in convention (cf. e.g. Godsil and  Royle \cite{GosilRo2001}).

A technique frequently used is the imbedding relation which is introduced by Lehman \cite{Weisfeiler76} (we cite the Lehman \cite{Weisfeiler76} rather than Weisfeiler or others, following the suggestion there).
A matrix  $A:=(a_{ij})$ is said to be imbedded in a matrix $B:=(b_{ij})$, denoted as $A\rightarrowtail B$, if $b_{ij}=b_{rs}$ implies $a_{ij}=a_{rs}$. If $A$ and $B$ are both graphs, it means $B$ is a refinement of $A$ whenever $A\rightarrowtail B$.  Two graphs $A$ and $B$ are \emph{equivalent} if and only if they are imbedded in each other, denoted as $A\approx B$ in that case.  Apparently, $X\rightarrowtail Y$ implies  $\Aut(Y)\subseteq\Aut(X)$.

In this work graph isomorphism problem is investigated with vertex partition methods. The article is composed of three components: processes to description graphs, matrix-power stabilization and its relation to Weisfeiler-Lehman (WL for short) process, and binding graphs and the automorphism  partitions. We detail them one by one in the following.
\subsubsection{Processes to description graphs:} A simple observation is well known regarding to graph isomorphism: If an automorphism $\sigma$ of a graph $G$ sends a vertex $u$ to vertex $u^\sigma$, the amount of walks of any length $t$ from $u$ to all vertices of $G$ should be correspondingly the same as the amount of walks of length $t$ from vertex $u^\sigma$ to all vertices. The matrix power $A^t$ of adjacent matrix $A$ to graph $G$ records the number of walks of length $t$ between any pair of vertices in $G$. We use the adjacent matrix $A$ as graph $G$ hereafter.

Since an automorphism preserves the amount of walks of any length, the amount of walks of all lengths should be counted. That leads to the $\ld$-matrix $S(\ld)=\sum_{k=0}^\infty\ld^kA^k$, which is proven to be equivalent, in the sense as introduced above, to the adjoint matrix $\mathrm{adj}(A)$ of $\ld I-A$, where $I$ is the identity matrix. If $\mathrm{adj}(A):=(a_{ij})$, then it is possible that $u^\sigma=v$ only if $a_{uu}=a_{vv}$ for all vertices $u,v$ and any automorphism $\sigma$ of $A$.

We then substitute entries of $\mathrm{adj}(A)$ with labels so that the identical entries have the same label, and unidentical entries have different labels. Such a substitution is named as \emph{an equivalent variable substitution} and   frequently used in the context.

The graph obtained by equivalent variable substitution is  a new graph, and named as \emph{a description graph} $\tilde{A}$ of $A$. For description graphs, one of approaches is to compute the adjoint matrix of $\ld I-A$ and then perform an equivalent variable substitution to obtain the description graph $\tilde{A}$ (cf. Subsection \ref{sec:adj}).

One may notice that $S(\ld)$ is an infinite sum, which is proven not to be necessary. We have shown that if $m$ is the degree of $A$'s minimum polynomial, then $\Gamma(A,m-1):=\sum_{k=0}^{m-1}\ld^kA^k$ is equivalent to $\tilde{A}$. To obtain a description graph of $A$ from  $\Gamma(A,m-1)$ is the second process proposed (cf. Subsection \ref{sec:mp}).

The third process is based on the spectral decomposition of $A$. If $A$ has $\alpha$ distinct eigenvalues $(\mu_1,\ldots,\mu_\alpha)$ and  $A=\sum_{k=1}^{\alpha}\mu_k\,\E_k$ is the spectral decomposition, we are able to claim that $\sum_{k=1}^\alpha\ld^k\,\E_k\approx\tilde{A}$ for $\mu_k\ne0$. Which gives the third process to a description graph of a graph $A$ (cf. Subsection \ref{sec:bstv}).

The description graph to a given graph is unique up to equivalents in the sense as stated previously. It reveals the differences of amount of walks, of any sort, of arbitrary length among vertices in the given graph.

The processes hence reveal the relationship among adjoint matrices, the power of adjacent matrix and spectral decomposition together. That could be of independent interest.

The description graph of a description graph can be further worked out, and this process proceeds until no difference is further distinguished for some description graph.  The graph finally obtained is the \emph{stable graph} $\hat{A}$ to  $A$.

The stabilization process will get a stable graph $\hat{A}$, which  is an undirected graph provided $A$ is. That should be contrasted to the stable graph from the well-known WL process, where a stable graph can be a directed or partially oriented graph to an undirect graph \cite{Weisfeiler76}.

It is not hard to see that the automorphism group $\Aut(\tilde{A})$ of the description graph $\tilde{A}$ coincides with the automorphism group $\Aut(A)$ of $A$. That leads to the conclusion that $\Aut(A)=\Aut(\tilde{A})=\Aut(\hat{A})$, where graph $\hat{A}$ is the stable graph of $A$.

We are then able to show that the vertex partition of a stable graph is a \emph{strongly} equitable partition, a notion posed here as an enhancement of equitable partitions in  literature. The vertex partition to a labeled graph is a partition of vertices such that all vertices with the same label are in the same cell. Such a partition is equitable if the labels on the edges from one vertex of a given cell to all vertices of another given cell as a whole are independent of the vertex chosen from the first given cell, for each pair of cells in the partition, even if two cells are the same one.

The strongly equitable partitions require further for equitable partitions that the labels on the edges connecting any two cells appear only on edges between the given pair of cells, which means the labels on edges between different pairs of cells do not overlap in this case. Again it should be true when two given cells are the same one.

The vertex partitions to stable graphs are called stable partitions and are proven to be strongly equitable partitions. That is important in the proofs of conclusions explained later.  A vertex partition to a graph is an automorphism partition if the cells in the partition are exactly the orbits of its automorphism group.   Some other properties on stable graphs are also presented in this work (cf. Section \ref{sec:st&ppro}).

\subsubsection{Matrix-power stabilization and its relation to WL process:}

Weisfeiler-Lehman process is powerful in distinguishing vertices and edges with respect to graph isomorphism, and mostly employed in works of graph isomorphism. The stable graphs obtained by our processes are equivalent to the stable graphs obtained by WL process in the partition of vertices. The stable graphs obtained by WL process are finer in the partition of  edges.

A graph $A$ is said to recognize vertices if the labels on vertices do not overlap with the labels on edges, which means the labels at the diagonal of $A$ will not appear as non-diagonal entries in $A$.

For a graph $A$, we show that $A\rightarrowtail \cdots\rightarrowtail A^k\rightarrowtail \cdots$ for all $k>0$, provided $A$ recognizes  vertices.  That implies in this setting that $\tilde{A}\approx A^{n}$ for a graph of order $n$ (cf. Corollary \ref{coro:power}).

The result above allows us to produce a stabilization process for all $k>1$ as follows: Make given graph $A$  be a graph $A_1$ that recognizes vertices by an equivalent variable substitution to the diagonal of $A$ at first. Then proceed recursively to compute and produce equivalent variable substitution to  $A_i^k$ and obtain $A_{i+1}$ for $i>1$, until $A_t\approx A_{t+1}$ for some $t>0$. The graph $A_t$ is then the stable graph to $A$. That process is called \emph{matrix-power $k$-stabilization}.

 The matrix-power $2$-stabilization is just a procedure of \emph{square and substitution} (SaS). We prove SaS process is  equivalent to WL process in the partition of vertices in a stable graph, and with the same number of iteration (Theorem \ref{thm:saswl}).

The matrix-power $k$-stabilization process gains the same similarity to $k$-walk-refinement in \cite{LichterPoSc19} by Lichter,  Ponomarenko, and Schweitzer, just as SaS process to WL process. An upper bound $O(n\log n)$ to the iteration number of WL process is shown there. The iteration number of SaS process is hence in $O(n\log n)$ by Theorem \ref{thm:saswl}.

\subsubsection{Binding graphs and the automorphism partitions:}

A key point of this work is to apply the properties of stable partitions to a class of so-called binding graphs. Roughly speaking, given a graph $A$ of order $n$,  for each pair of vertices $u,v$ in $A$, we add a unique binding vertex $p:=u\dot{\wedge}v$ adjacent only to $u$ and $v$. The graph obtained is \emph{the binding graph of $A$} which now has $n_1:=n(n+1)/2$ vertices. The new added binding vertex $p$ is one of binding vertices of $u$, and each vertex $u$ from $A$ has $n-1$ binding vertices in the binding graph.  The new added edges are called binding edges. The given graph $A$ is then called the basic graph of the binding graph, and the vertices and edges of $A$ are basic vertices and basic edges, respectively.
 A binding graph is uniquely determined by the basic graph, up to the renaming of binding vertices.

This kind of uniqueness leads to the claim that the class of binding graphs is graph-isomorphism complete (Theorem \ref{thm:complete}), in the sense that two binding graphs are isomorphic if and only if their basic graphs are.

Let $B$ be the binding graph to a basic graph $A$ of order  $n$. If all the basic edges in $B$ are removed and only binding edges remain, it will become to be a bipartite graph $\Phi$ with all basic vertices as one part and all binding vertices as the other. Each basic vertex now is of degree $n-1$ and each binding vertex is of degree $2$ in $\Phi$ for $n>2$.

If such a bipartite $\Phi$ inherits the labels on vertices and binding edges from the stable graph $\hat{B}$, the vertex partition to labeled $\Phi$ is an equitable partition already, thanks to the strongly equitable stable partition to $\hat{B}$. The graph $\Phi$ in the following is such a labeled graph. We are able to show that $\hat{\Phi}\approx\hat{B}$ and $\Aut(\hat{\Phi})=\Aut(\hat{B})$  (Theorem \ref{thm:xphi}).

The label on a binding edge is then determined by the pair of their end labels, and the labels to basic vertices in $\Phi$ are in fact completely  determined by the labels of binding vertices. The labels to $\Phi$ can then  be further relaxed by removing the labels to basic vertices and updating the labels to all binding edges into a new same one. A new  bipartite graph $\Theta$  is then obtained in this way from $\Phi$ so that only binding vertices of $\Theta$ are with labels inherited from $\hat{B}$. We show that $\hat{\Theta}\approx\hat{\Phi}$ and $\WL(\Theta)\approx\WL(\Phi)$, and hence $\Aut(\Theta)=\Aut(\Phi)$ (cf. Theorem \ref{thm:theta}).

We are able to show that the stable partition to $\Theta$ is the automorphism partition (Theorem \ref{thm:main}) and so is to $\Phi$.
Since $\Aut(\Phi)=\Aut(\hat{B})=\Aut(B)$, it leads to the conclusion that the stable partition to stable graph $\hat{B}$ is the automorphism partition.

That achieves the goal of this work, since the well known claim that testing graph isomorphism is polynomial-time equivalent to computing orbits of automorphism groups (\cite{Mathon79},\cite{BoothCo79}). We are then able to form a polynomial-time procedure $\mathtt{GI}$ for testing of graph isomorphism.

The procedure $\mathtt{GI}$ is intuitively simple: For two graphs of the same order to be tested, treat their disjoint unions as basic graphs to get a binding graph $B$ and compute stable graph $\hat{B}$ obtained by SaS process. The cells of the stable partition are then checked if  each cell of them is shared by the vertices of two graphs. If they are, then two graphs are isomorphic, otherwise two graphs are not (cf. Section \ref{sec:gid}).

In the proof of $\Aut(\Phi)=\Aut(\hat{B})$, two ingredients are essentially applied. One is the stability of the stable graphs $\hat{B}$, which leads to a strongly equitable vertex partition and hence a equitable vertex partition of  $\Phi$. Another one is the fact that for each pair of basic vertices $u,v$  and the binding vertex $p=u\dw v$ in the binding graph $B$, the labels on two binding edges $(p,u)$ and $(p,v)$ in $\hat{B}$ recognize whether $(u,v)$ is a basic edge or basic blank edge in $B$ (Lemma \ref{lem:shbysb}).

Two interesting results reveal the power of binding graphs. One is the fact that, in the stable graph $\hat{B}$ obtained by SaS process,  the labels on binding vertices are in fact equivalent to the labels on the basic edges  in graph $\hat{B}$ (cf. Lemma \ref{lem:bv}).
The other result is that, \emph{still} in $\hat{B}$, the labels on binding edges are in fact equivalent to the labels on the basic edges of $\WL(B)$ (cf. Lemma \ref{lem:wl}), where $\WL(B)$ is the stable graph of $B$ obtained by WL process, although  $\hat{B}$ is obtained by SaS process.

These results mean that one can evaluate a stable graph $\hat{B}$ of a binding graph by SaS, and obtains simultaneously the stable basic stable graph by WL process (one stone two birds). That allows us freely to use the properties of stable basic graphs obtained by WL process in  the discussion about the stable graph $\hat{B}$. That happens only in proof of Lemma \ref{lem:tech}.

Due to the equivalent capacities of SaS process and WL process in the partition of vertices,  WL process, instead of SaS process, can be used in procedure $\mathtt{GI}$  to obtain the stable partition.

%
%
%

%
%

In summary, we propose three processes to obtain description graphs, which reveal the relations among matrix power, spectral decomposition and adjoint matrix. Based on those methods, matrix-power stabilization is then introduced to stable graphs. The properties of stable graphs, especially the strongly equitable vertex partitions, are discussed.   A class of binding graphs is proposed and proven to be graph-isomorphism complete. Stable partitions to binding graphs are shown to be automorphism partitions, which leads to a polynomial-time testing procedure to graph isomorphism.

\subsection{Methodology Clarifications}\label{sec:argument}
As stated in the last subsection that the decision procedure $\mathtt{GI}$ posed in this work can also be implemented with WL process.  Given that WL process has appeared for a long time and has been extensively exploited in a great deal of works, the author would like to point the differences of the methods here from those in existing works.

\subsubsection{Both vertices and edges}

The first one we would like to point out is that although the WL process is adopted extensively in many articles on graph isomorphism, the focuses in these works were mainly on the distinguishing of vertices. The distinguishing of edges with WL process is not treated or explored fairly enough. This is witnessed by the fact of, to author's knowledge, the lacking of a notion like \emph{strongly} equitable partition, one of essential ingredients in this work, by stable graphs appeared yet in the previous works, although the notion of equitable partitions (colored or not) were commonly mentioned.
Also the discussions about graph $\Phi$ and $\Theta$ depend heavily on the information on edges.(cf. Lemma \ref{lem:shbysb}, Lemma \ref{lem:tech}).

That might partially explain the result by Neuen and Schweitzer \cite{NeuenSc18}, where an exponential lower bound for individualization-refinement (of vertices) algorithms for graph isomorphism is presented,


\subsubsection{Distinguishing and identification}
The  procedure $\mathtt{GI}$ in this work is to distinguish and partition the vertices and edges to the binding graph of two target graphs to be tested. It treats the two given graphs simultaneously in one connected graph, which is the main difference in methodology to existing works in literatures, where two graphs are usually  treated separately with WL process and then to try to identify them. In other words, identifying two given graphs only by separate stabilizations are the main methodology in existing literatures, while ours is to identify them in one shot in a connected graph (of them).

In the seminal article by Cai, F{\"u}rer and Immerman \cite{CaiFuIm92}, it is shown that WL process can not, in the way of separate treatments, distinguish some graphs that are not isomorphic, even with the extended WL process in some higher dimension.

The power of WL-like processes, however, lies at distinguishing vertices as well as edges within ONE connected graph. It should not be expected to distinguish  graphs in a separate way with WL-like process.

With regard to isomorphism, however, two target graphs can be  related as a whole as one connected graph. In that case, the power of WL-like processes will play a role simultaneously to two graphs. Similar arguments appeared in \cite{Douglas11} too.
The author of \cite{Douglas11} (and of this article) finds, in fact, that it already obtains the automorphism partitions for some pairs of graphs proposed by Cai, F{\"u}rer, and Immerman \cite{CaiFuIm92} simply to their union graphs, and decides their nonisomorphism with WL process.

A problem faces us immediately: Is a procedure of identifying or distinguishing a graph from ALL those unisomorphic to it, once for all, polynomial-time equivalent to testing of graph isomorphism? Given so many years of endeavors without success and the conclusion in this work, it is doubted so.

Theoretically, the identification of a graph from those unisomorphic ones might not be polynomial time equivalent to the testing of graph isomorphism. That is true at least with respect to WL-like processes. Here is a simple argument: It is known that testing of isomorphism is successful in the trees with WL process in polynomial time. However, given two unisomorphic trees of  order $n$, if both possess only unit automorphism groups of its own, WL process will produce discrete partitions to each of them, since the stable graph to each given tree in this case is labeled with $n^2$ distinct elements. Hence one cannot distinguish (or identify) one from the other with WL-like processes.

As a common observation, the result by Cai, F{\"u}rer, and Immerman \cite{CaiFuIm92} does not eliminate the possibility with WL process to successfully test the isomorphism in the restricted class of graphs, like trees \cite{ImmermanLa90}, cographs and interval graphs (cf. \cite{EvdokimovPo00},\cite{Laubner10}, \cite{KoblerKuLaVe11}). The class of binding graphs posed in this paper is proven another such class which is proven to be graph-isomorphism complete.

\subsubsection{Why do binding graphs work?} A binding graph of two target graphs will bind them as one connected graph, although we technically employ wing graphs in our decision procedure which is not essential. The distinguishing with WL-like processes will be produced simultaneously to both target graphs.

In that way, the labels to binding vertices and edges in stable graphs will record or indicate not only the local deviations but the global deviations during the stabilizations. That happens since each basic vertex in a binding graph has a connection to each of other basic vertex via a binding vertex separately.

We have shown that the labels of binding edges in stable graph will recognize the adjacency of two basic vertices which are binding (see Lemma \ref{lem:shbysb} in Section \ref{sec:bdt} for details). Since the stabilization procedure is iterations of computing the description graphs, any changes between basic vertices and edges will be recorded by the binding vertices and edges and spread to all other vertices during the stabilization.

The local deviations tested will be directly transferred via binding vertices and edges to all the (basic) vertices and edges during the stabilization process. In this way, the global distinguished deviations will reflect all local deviations and vice versa in stable graphs. The differences of two graphs are then tested in their binding graph.

\subsection{Related Work}\label{sec:rw}

Graph isomorphism problem is extensively studied in literatures. Some of them aim at practical applications and the others at theoretical explorations. This part will not intend to survey literatures on graph isomorphism problem. Only the most advanced or the intimated works will be mentioned according to the knowledge of the author. Some works may not be fairly treated or cited here due to author's restricted knowledge. When that happens, please kindly remind me via email.

A great success has been made in practical applications with algorithms like Bliss \cite{JunttilaKa07}, Conauto \cite{Lopez-PresaChAn14}, Nauty \cite{McKayPi14}, Saucy \cite{DargaLiSaMa04}, Traces \cite{McKayPi14}, VF2 \cite{CordellaFoSaVe99} and Vsep \cite{Stoichev19}, to name a few. Since the intention of this work is a theoretical investigation, their practical advances will not be further addressed.

The most advanced result till now is due to Babai \cite{Babai15,Babai16}, where  a procedure for testing of graph isomorphism in quasipolynomial time was presented. The procedure in Babai \cite{Babai15,Babai16} exploits and composes the techniques in group theory and combinatorics ingeniously together, hence achieves a result  that pushes graph isomorphism problem close to the board line  of $\mathtt{NP}$ and $\mathtt{P}$. Grohe, Neuen and Schweitzer \cite{GroheNeSc18} gave more efficient algorithm for graphs of bounded degrees.

 The application of group theory to tackle graph isomorphism problem has made a great success. It emerged in  Babai \cite{Babai79}, and the seminal work by Luks \cite{Luks80} makes it more popular in the community, where profound results are explored and then frequently adopted later on. We refer readers to Grohe, Neuen and Schweitzer \cite{GroheSc20,GroheNe21} and to Babai \cite{Babai18}, and the references there for details.

The classification of vertices by the degrees, paths, and so on, is a natural way in the decision of graph isomorphism \cite{ReadCo77}. It is involved, more or less, in most works subject to graph isomorphism, and initiated by Morgan \cite{Morgan65} as reported in literature. The most popular  approach belongs to Weisfeiler and Lehman \cite{Weisfeiler76,WeisfeilerLe68}, so called  Weisfeiler-Lehman (WL) process. Original WL process is commonly referred as 2-dimensional WL process and extended to $k$-dimensional WL \cite{BabaiMa80,ImmermanLa90,CaiFuIm92} processes for positive integer $k>2$. Readers may refer F{\"u}rer \cite{Furer17} and Kiefer \cite{Kiefer20} and the references there for  references about WL process.

As previously stated, the description graphs proposed in this paper are inspired by the simple observation that an isomorphism preserves not only adjacency but also the paths and walks. The numbers of walks between vertices are naturally used as a handle to distinguish vertices and edges, and the matrix power are then natural tools. The application of this handle with regard to graph isomorphism emerged  explicitly in Morgan \cite{Morgan65} and R{\"u}cker and R{\"u}cker \cite{RuckerRu90,RuckerRu91}, all in the field of computational chemistry. R{\"u}cker and R{\"u}cker adopted matrix power to distinguish vertices of graphs, and neglected the information of edges produced. Tinhofer and Klin \cite{TinhoferKl99} extensively discussed the stabilization procedures, especially the total degree partition \cite{Tinhofer91} was developed.

Powers and Sulaiman  \cite{PowersSu82} applied the number of walks to partition the vertices of graphs and related the partition to graphs spectra. These are all documents the author finds that explicitly employ the matrix power as a handle directly to individualize the vertices with regard to graph isomorphism. The walks in graphs, however, are commonly treated in graph isomorphism and homomorphism \cite{Godsil93}, \cite{HellNe04}.

Since WL process emerged in 1960's, it is not surprised that a process like SaS was noticed in the community. The author find recently that the square-and-substitution process was already mentioned by F{\"{u}}rer in \cite{Furer01} (page 323 in \cite{Furer01}), although it was not explored in details. F{\"{u}}rer showed a lower bound to the iteration number of WL process in \cite{Furer01} and investigated the relation of labels of edges to spectral properties of graphs in \cite{Furer95}. The process to description graphs in Section \ref{sec:bstv} relate the labels of graphs with spectral decompositions.

Recently, Lichter, Ponomarenko and Schweitzer extended WL process and introduced walk-refinement approach by counting any length of walks in refinement instead of just length $2$ as in WL process \cite{LichterPoSc19}. It is  proven to be equivalent to WL process in stabilization. They also proved the iteration number of walk-refinement approach is $\Theta(n)$ to a graph of order $n$, and claimed an iteration upper bound $O(n\log n)$ to WL process. As an extension of WL process, walk-refinement process will generally produce directed (or mixed graphs meaning some edges are oriented and some not) stable graphs for undirected graphs rather than symmetric ones.
%

Inspecting eigenvalues and eigenvectors of adjacent matrices to characterize graphs is another way to classify vertices of graphs with regard to graph isomorphism. Profound results are obtained, and, for example, star partitions are introduced by Cvetkovi\'c, Rowlinson and Simi{\'c} \cite{CvetkovicRoSi93}.

The equitable partitions are frequently pursued with partitions in the literatures \linebreak (cf. e.g. \cite{McKay76}, \cite{McKay81}, \cite{Godsil93}, \cite{Kiefer20}), since the automorphism partitions  are equitable ones. Although the (colored) equitable partitions induced by WL process are exploited in the literatures (cf. e.g. Kiefer  \cite{Kiefer20}), the notion like strongly equitable partitions are not proposed yet, to the knowledge of the author.

Again, this author may not aware of some works that should be cited, due to the author's lack of knowledge. I beg readers to inform me if that happens.
\subsection{The Structure of this Work}
The notions and notations  are given in Section \ref{sec:bst1}. The notion of description graphs is proposed , and three processes to description graphs are developed in Section \ref{sec:decripgraph}. Stabilization of description graphs and relevant properties are discussed in Section \ref{sec:st&ppro}. We propose matrix-power stabilization approach to stable graphs in Section \ref{sec:mpr}, and the equivalence of stable graphs obtained by our processes to those obtained by WL process is verified.
 In Section \ref{sec:wdtxz}, we prove strongly equitable partitions to stable graphs and some other properties.

 In Section \ref{sec:bdt} the class of binding graphs are introduced and proven to be graph-isomorphism complete. The stable graphs, and the bipartite graphs $\Phi$, of binding graphs are discussed in Section \ref{sec:phi}. A further bipartite graph $\Theta$ is constructed in Section \ref{sec:theta}, and used  in Section \ref{sec:aut} to show the key result about the automorphism partitions to binding graphs. A testing procedure $\mathtt{GI}$ for graph isomorphism is then presented and shown in polynomial time in Section \ref{sec:gid}. A brief discussion is given in the last Section \ref{sec:discuss}.

\section{Preliminaries}\label{sec:bst1}

Following the convention, the set of integers, positive integers and reals are denoted as $\Z,\Z^+$ and  $\R$, respectively. $[m..n]$ is the set $\{m,m+1,\ldots,n\}$ of integers from  $m$ to $n$ with special $[n]:=[1..n]$. We always assume  $n$ a positive integer.  For any permutation $\sigma$, $j=i^\sigma$ means $\sigma$ sends $i$ to $j$.

A multiset is a set that allows an element to appear multiple times in it. While $\{a,b,c,\ldots\}$ is a general set, $\mset{a,b,c,\ldots}$ will indicate a multiset. Two multisets  $S_1$ and $S_2$ are equal iff they have the same element when counting their multiplicities, denoted as  $S_1\equiv S_2$.
\subsubsection{Definition of graphs}
Graphs considered in this work are all undirected with labels to vertices and edges, following Weisfeiler and Lehman  \cite{WeisfeilerLe68,Weisfeiler76}, where a set of independent real variables $x_0,x_1,x_2,\ldots$ is adopted to label vertices and edges of graphs. We reserve  $x_0$ as a special variable which will signal a ``non-edge'' in a graph.  That will make it convenient to describe, say, simple graphs and connected graphs and so on in our cases. The operations, like commutativity of multiplication, will be conformed by these variables. According to the properties of formal invariants, we know that, for all $i,j,r,s,u,v\in\Z^+$,  $x_ix_j+x_r=x_ux_v+x_s$ iff $\mset{\mset{x_i,x_j},x_r}\equiv\mset{\mset{x_u,x_v},x_s}$, etcetera.
Let $\Var:=\{x_1,x_2,\ldots\}$ and $x_0\notin\Var$. The notion of graphs is formally given as follows.

\begin{definition}[Graphs]\label{def:graph}
 Let $V\subseteq\Z^+$ be a nonempty set.  A graph $\G$ over $V$ of order $|V|$ is a function  $\myfunc{g}{V\times V}{\{x_0\}\cup\Var}$ satisfying $g(u,v)=g(v,u)$ for all $u,v\in V$. Elements $u,v\in V$ are called vertices of $\G$.  $(u,v)$ is an edge with label  $g(u,v)$  if $g(u,v)\ne x_0$, and $(u,v)$ a blank edge with label $x_0$ if $g(u,v)=x_0$.  A vertex $v$ is said to be a neighbor of, or adjacent to $u$ iff $u\ne v$ and $g(u,v)\ne x_0$.  Specially,  $g(u,u)$ is the label to vertex  $u\in V$.%
\end{definition}
A graph $\mathcal{G}$ over $[n]$ can be conveniently formed as a symmetric matrix $G:=(g_{ij})$ of order $n$, where  $g_{ij}:=g(i,j)$ for  $i,j\in[n]$. We will often refer to $G$ as the graph of order $n$, instead of $\G$ over $[n]$. It should be stressed that edges are those with labels other than $x_0$,  blank edges those with labels $x_0$ in a graph.

The degree $\deg(u)$ of a vertex $u$ in $G$ is the number of its neighbors. The dimension $\dim(G)$ of graph $G$ is the number of distinct entries in $G$. It is  straightforward that $\dim(G)\le n(n+1)/2$ for any graph $G$ of order $n$. We will abuse the notation $x\in G$ whenever $x$ is some entry in $G$. Similarly, to $x\in G_1\cap G_2$ or $x\in G_1\cup G_2$.
 All graphs involved in this work are undirected graphs, and hence the matrices in this paper are all symmetric ones, except when WL process is discussed and used, which will be explicit in the context. Performing a \emph{row-column permutation} to a matrix is to multiply a permutation both on the left and on the right of the matrix.

For a graph $G$ of order $n$ and $\emptyset\ne V\subseteq[n]$, the submatrix $A$ obtained from $G$  by removing those rows and columns not in $V$ is a subgraph of $G$ induced via vertices in $V$.

A simple graph $G=(g_{ij})$ of order $n$ is a graph with $\dim(G)\le 2$ and $g_{ii}=x_0$ for all $i\in[n]$. If the vertices set $V$ of graph $G$ is split into two nonempty subsets $S_1,S_2$ such that the edges between vertices in $S_i$  are all blank edges for $i=1,2$, then $G$ is called a bipartite graph.
\subsubsection{Walk and sort}
A walk of length $t$ in a graph $G=(g_{ij})$ is a sequence of vertices $W:=\langle i_0,i_1,\ldots, i_t\rangle$ with
 $g_{i_{k-1}i_{k}}\ne x_0$ for all $k\in[t]$. The \emph{ordered  sort} of walk $W$ is a sequence of labels
$$\langle g_{i_0i_1},g_{i_1i_2},\ldots,g_{i_{t-1}i_t}\rangle\,.$$
 There could be many walks with the same sort between a  pair of vertices. However, the \emph{unordered sort} of $W$, or just \emph{sort} of $W$,  is the multiset
$$\mset{g_{i_0i_1},g_{i_1i_2},\ldots,g_{i_{t-1}i_t}}\,.$$
WL process applies ordered sorts of 2-walk to any pair of vertices in a graph. The process in this work will apply unordered sorts of walks in a graph.

\subsubsection{Graph isomorphism}

Given two graphs $A=(a_{ij})$ and $B=(b_{ij})$ of order $n$,
if there is an 1-1 map $\sigma$ on $[n]$ such that  $a_{i^\sigma j^{\sigma}}=b_{ij}$ for all $i,j\in[n]$, we say that $A$ is isomorphic to  $B$ and denote as $A\cong B$. This $\sigma$ is then an isomorphism from $A$ to $B$.

Equivalently,  there is an isomorphism $\sigma$ for graphs $A$ and $B$ iff there is a permutation matrix $P$ of order $n$ such that $PAP^\T=B$, where $P^\T$ is the transpose of $P$. Both of the two forms  will be used later in the context.

The isomorphisms from $A$ to itself are automorphisms of $A$. The collection of all automorphisms is then denoted as $\Aut(A)$, which is a permutation group as the automorphism group of graph $A$. An orbit of $\Aut(A)$ is a set of vertices in $A$ satisfying not only that each vertex in it is mapped to a vertex in the set by any automorphism of $A$, but also that any two vertices in this set will be mapped one to the other by some automorphism of $A$.

For a graph $A$ on $[n]$, the partition $\mathcal{C}:=( C_1,\ldots,C_s)$ of $[n]$ consisting of all orbits $C_i$ of $\Aut(A)$ is called the automorphism partition of $A$.

\subsubsection{Substitution and Imbedding}
The imbedded graphs and equivalent graphs will be the predominate tools in this paper. They are introduced in \cite{Weisfeiler76}. It should be pointed out that the following notions are for matrices, not for graphs  only.
 \begin{definition}[Imbedding and equivalence]\label{def:equiva} Let  $A:=(a_{ij})$ and $B:=(b_{ij})$ be two matrices of order $n$.
\begin{itemize}
  \item If $b_{ij}=b_{st}$ implies $a_{ij}=a_{st}$ for all $i,j,s,t\in[n]$, then  $A$ is said to be imbedded in $B$ and denoted as $A\rightarrowtail B$.
  \item If $A\rightarrowtail B$ and $B\rightarrowtail A$, then $A$ is said to be equivalent to  $B$ and denoted as  $A\approx B$.%
\end{itemize}
\end{definition}

In the case of graphs, the following properties are easily to obtain from definitions.
\begin{proposition}[\cite{Weisfeiler76}] \label{prop:lehman} For any graphs $A$, $B$ and $X$ of order $n$, we have
\begin{itemize}
\item  $A\rightarrowtail B$ and $B\rightarrowtail X$ imply $A\rightarrowtail X$.
  \item $A\rightarrowtail B$ implies $\dim(A)\le\dim(B)$ and $\Aut(B)\subseteq\Aut(A)$.
  \item  $A\approx B$ implies $\dim(A)=\dim(B)$ and $\Aut(A)=\Aut(B)$.%
\end{itemize}
\end{proposition}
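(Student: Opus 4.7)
All three items follow by unwinding Definition \ref{def:equiva}; no structural lemma is required, so the plan is mostly to set up the right bookkeeping and then chase the definitions carefully.

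For the first item I would argue by direct composition of implications. Fix indices $i,j,s,t\in[n]$ and suppose $x_{ij}=x_{st}$. Because $B\rightarrowtail X$, the definition gives $b_{ij}=b_{st}$, and then $A\rightarrowtail B$ gives $a_{ij}=a_{st}$. Since the indices were arbitrary, this is exactly the condition $A\rightarrowtail X$. The only place one has to be attentive is the direction of the implication (equalities in the \emph{target} of $\rightarrowtail$ force equalities in the \emph{source}); once this is fixed, transitivity is immediate.

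For the second item I would split the two conclusions. For the dimension inequality, I would view each matrix as inducing an equivalence relation on the set $[n]\times[n]$ by $(i,j)\sim_M(s,t)\iff m_{ij}=m_{st}$; then $A\rightarrowtail B$ says exactly that $\sim_B$ refines $\sim_A$, so the number of $\sim_B$-classes is at least the number of $\sim_A$-classes, which is the statement $\dim(A)\le\dim(B)$. For the automorphism inclusion, take $\sigma\in\Aut(B)$, i.e.\ $b_{i^\sigma j^\sigma}=b_{ij}$ for every $i,j$. Applying $A\rightarrowtail B$ with the quadruple $(i,j,s,t):=(i,j,i^\sigma,j^\sigma)$ yields $a_{i^\sigma j^\sigma}=a_{ij}$, which is precisely $\sigma\in\Aut(A)$.

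The third item is then cosmetic: $A\approx B$ is $A\rightarrowtail B$ together with $B\rightarrowtail A$, so applying the second item in both directions gives $\dim(A)\le\dim(B)\le\dim(A)$ and $\Aut(A)\subseteq\Aut(B)\subseteq\Aut(A)$, hence equality. The only minor obstacle throughout is notational, namely keeping the refinement direction straight and remembering that the indices $(s,t)$ in the definition are free, so one is allowed to specialize them to $(i^\sigma,j^\sigma)$ in the automorphism argument; beyond that, the proof is entirely mechanical.
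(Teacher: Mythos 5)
Your proof is correct and is exactly the definitional unwinding the paper has in mind: it states these properties "are easily to obtain from definitions" and omits the argument entirely. Your handling of the implication direction in $\rightarrowtail$ and the specialization $(s,t)=(i^\sigma,j^\sigma)$ for the automorphism inclusion are both right, so nothing further is needed.
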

It is often to relabel or replace the entries of a matrix with variables in $\Var$ to get a graph.

\begin{definition}[Equivalent variable substitution]
Given a matrix $X$ of order $n$, substitute entries in $X$ with  variables in  $\Var$ in a way that the identical entries with the same variables and unidentical entries with different variables. The resulting matrix $Y$ will be a graph equivalent to $X$.
This procedure is named as \emph{an equivalent variable substitution} to $X$.%
\end{definition}
 Since $x_0\notin\Var$, the graph obtained by an equivalent variable substitution will be a labeled complete graph.

\noindent\textbf{Remark.} We point out that, by Definition \ref{def:graph}, a graph can be expressed equivalently in many different ways as various of variables. All these expressions are equivalent in the sense of Definition \ref{def:equiva}. Our definition to graph isomorphism only concerns with two graphs with the same set of labels (called restrict isomorphism  in some of literatures). This does not harm to decide the isomorphism of any two graphs with different set of labels, since they can, if they are isomorphic, be relabeled into the same set of labels by an equivalent variable substitution.

For two simple graphs, this can be done simply by replacing  labels to edges in both graphs with the same variable, e.g. $x$, and then to decide their isomorphism. In our decision procedure $\mathtt{GI}$, the isomorphism of any two simple graphs is considered only.
\section{Description Graphs}\label{sec:decripgraph}
In this section, we introduce the notion of description graphs and develop  three approaches to evaluate them.
The motivation for defining description graphs comes from the observation that, in a graph $A$, a vertex  $u$ is possibly carried to a vertex $v$  by an automorphism of $A$ only if the number of walks of any sort that $u$ and $v$ to all vertices in $A$ as a whole coincide. The same ideas are applied in WL process\cite{Weisfeiler76}, where ordered sorts of walks of length 2 are counted. However, in contrast to WL process, the number of unordered sorts of walks are counted in description graphs. That significantly simplifies the computations and expressions.

Intuitively, the description graph to a graph $A$ will distinguish vertices (and edges) by the number of walks, with multiplicity counted, of the same sort of arbitrary length in $A$.

\begin{definition}[Description graphs]\label{def:descgraph}
For any graph $A$ of order $n$, a graph $\tilde{A}=(\tilde{a}_{ij})$ is called a description graph of $A$, if the following conditions hold for any integer $t\ge 0$:

For any two pairs of vertices $u_1,v_1$ and $u_2,v_2$ of $\tilde{A}$, $\tilde{a}_{u_1v_1}=\tilde{a}_{u_2v_2}$ if and only if for any sort of length $t$, the number of walks from $u_1$ to $v_1 $  in graph $A$ equals to the number of walks, of the same sort, from $u_2$ to $v_2$ in $A$.%
\end{definition}
Notice that a walk of length $0$ can only occur from a vertex $u$ to itself, and default only $1$ such walk to any vertex. It is not hard to see that any two description graphs to a graph $A$ will be equivalent as in Definition \ref{def:equiva}.

Description graphs are defined with respect to unordered sorts of walks, while WL process counts the ordered sorts of walk (of length $2$).  The description graphs will be formally weaker in distinguishing vertices and edges than WL process and its variants. For example, a walk of length $2$ with ordered sort $\langle 3,7\rangle$ is distinguished from the walk of sort $\langle 7,3\rangle$ in WL process, but they are counted as the same sort of walk in our case.

One may easily see that any entry $a^{(k)}_{uv}$ in $A^k:=(a^{(k)}_{ij})$ is a multiset of walks in length $k$ for all possible sorts between vertices $u$ and $v$ in a given graph $A$. If the amount of walks, of length $k$, of some sort between any pair of vertices $u'$ and $v'$ is not identical to that between $u$ and $v$, then $a^{(k)}_{u'v'}\ne a^{(k)}_{uv}$.

That is the handle to our processes to description graphs introduced in subsequent subsections.  They are, arguably,  more natural in partition of vertices and edges with regards to graph isomorphism. We will start with a process convenient in applications.
\subsection{Description Graphs Based on  Matrix Power}\label{sec:mp}
Given a graph $A=(a_{ij})$, let  $A^k:=(a^{(k)}_{ij})$ be the $k$-th power of matrix $A$. We stress that the entries in $A$ are variables and conform commutative rule in multiplications. That guarantees the ordering of walk sorts are not counted. This fact allows the following formulation of description graphs.

 For a graph $A$ over $[n]$ and any positive integer $t$, we set $\ld$ matrix with $\ld\notin A$, as
\begin{align}
 \Gamma(A,t)&:=\sum_{k=0}^{t}\ld^kA^k=(\gamma^{(t)}_{ij}),& S(\ld):=\sum_{k=0}^\infty\ld^kA^k=(s_{ij})\,.\label{eq:gamma}%
\end{align}
 Where $A^0:=I$ is identity matrix of order $n$. For more about the walk generation function related to $S(\ld)$, please refer to \cite{Godsil93}.

Careful inspections will find that, from the definition,  a description graph $\tilde{A}:=(\tilde{a}_{ij})$ of graph $A$ indicates that $\tilde{a}_{ij}=\tilde{a}_{uv}$ if and only if $s_{ij}(\ld)=s_{uv}(\ld)$ as polynomials in $\ld$.

Since the characteristic polynomial $\Delta(\ld):=\det(\ld I-A)$ for $\ld\notin A$ satisfies that $\Delta(A)=\mathbf{0}$ as zero matrix, where $I$ is the identity matrix, we assume  $$\bar{p}(\ld)=\ld^m-p_{m-1}\ld^{m-1}-\cdots-p_1\ld-p_0$$
 as the minimum polynomial of symmetric matrix $A$ with $m\le n$, where $p_i\in\R$. It holds that $\bar{p}(A)=A^m-p_{m-1}A^{m-1}-\cdots-p_0I=\mathbf{0}$ as zero matrix.

 Each entry $\gamma^{(t)}_{ij}$ in $\Gamma(A,t)$ is a polynomial of degree $t$ in $\ld$. We have the following conclusion for $\Gamma(A,t):=(\gamma^{(t)}_{ij})$.

\begin{proposition}\label{prop:gamacishu}
  Let $A$ be a graph of order $n$. Using the notations as above, for any $t\ge m$, it holds that
 $\gamma^{(m-1)}_{ij}=\gamma^{(m-1)}_{rs}$ if and only if  $\gamma^{(t)}_{ij}=\gamma^{(t)}_{rs}$ for all $i,j,r,s\in[n]$.\end{proposition}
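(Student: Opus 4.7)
The plan is to translate the stated equality of polynomials in $\ld$ into a system of equalities of entries of the matrix powers $A^k$, and then exploit the minimum polynomial $\bar{p}(\ld)$ to reduce everything to the range $0\le k\le m-1$.

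First I would observe that, by construction,
$$\gamma^{(t)}_{ij}=\sum_{k=0}^{t}\ld^k a^{(k)}_{ij},$$
which is a polynomial in $\ld$ whose $k$-th coefficient is the $(i,j)$-entry of $A^k$. Since $\ld\notin A$, it is algebraically independent from the variables appearing as entries of $A^k$. Hence coefficient comparison is valid, and $\gamma^{(s)}_{ij}=\gamma^{(s)}_{rs}$ (in $\ld$) is equivalent to the system
$$a^{(k)}_{ij}=a^{(k)}_{rs}\qquad\text{for all }k\in[0..s].$$
Applying this translation to $s=m-1$ and to $s=t$ turns the proposition into the equivalence: the above entrywise system holds for all $k\in[0..m-1]$ if and only if it holds for all $k\in[0..t]$.

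The nontrivial direction is the one going from $[0..m-1]$ to $[0..t]$. For this I would use the minimum polynomial $\bar{p}(\ld)=\ld^m-\sum_{\ell=0}^{m-1}p_\ell\ld^\ell$, which yields the matrix identity $A^m=\sum_{\ell=0}^{m-1}p_\ell A^\ell$. An easy induction on $k\ge m$ then produces real constants $c_{k,0},\ldots,c_{k,m-1}$ such that
$$A^k=\sum_{\ell=0}^{m-1}c_{k,\ell}\,A^\ell,$$
so entrywise $a^{(k)}_{ij}=\sum_{\ell=0}^{m-1}c_{k,\ell}\,a^{(\ell)}_{ij}$, and similarly for $(r,s)$. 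Consequently, whenever $a^{(\ell)}_{ij}=a^{(\ell)}_{rs}$ for all $\ell<m$, the same equality propagates to every $k\ge m$, and in particular to all $k\in[0..t]$. The reverse implication is immediate because $m-1\le t$, so any equality for all $k\le t$ restricts to equality for all $k\le m-1$.

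There is no real obstacle beyond bookkeeping. The one point worth emphasizing is the algebraic independence of $\ld$ from the variable entries of $A$, which is what legitimizes the coefficient comparison; after that, the reduction via $\bar{p}(A)=\mathbf{0}$ is standard linear algebra carried out inside the polynomial ring generated by $x_0,x_1,\ldots$.
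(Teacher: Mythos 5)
Your proposal is correct and follows essentially the same route as the paper: the easy direction is coefficient comparison in $\ld$ (legitimate because $\ld\notin A$), and the substantive direction uses $\bar{p}(A)=\mathbf{0}$ to write $A^k$ for $k\ge m$ as a real-linear combination of $A^0,\ldots,A^{m-1}$ so that entrywise equalities below degree $m$ propagate upward. The paper carries out the same reduction directly on the polynomial entries $\gamma^{(m)}_{ij}=\gamma^{(m-1)}_{ij}+\bigl(\sum_k\alpha_kp_k\bigr)\ld^m$ and defers $t>m$ to induction, which is only a cosmetic difference from your coefficientwise bookkeeping.
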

\begin{proof}
 By the definition, the entries of $\Gamma$ are polynomials in $\ld$. Two polynomials are equal iff the corresponding coefficients are equal.
That guarantees that, for $t>m-1$, $\gamma^{(t)}_{ij}=\gamma^{(t)}_{rs}$ implies  $\gamma^{(m-1)}_{ij}=\gamma^{(m-1)}_{rs}$ as polynomials in $\lambda$ for all $i,j,r,s\in [n]$.

On the other side,  if $\gamma^{(m-1)}_{ij}=\gamma^{(m-1)}_{rs}$, we will only show that $\gamma^{(m)}_{ij}=\gamma^{(m)}_{rs}$. For $t>m$, it is easy to show by induction.

Since $\bar{p}(A)=\mathbf{0}$ implies $A^m=p_{m-1}A^{m-1}+\cdots+p_1A+p_0I$, we have
\begin{align}
  \Gamma(A,m)&=\sum_{k=0}^{m-1}\ld^kA^k+\ld^mA^m\notag\\
  &=\sum_{k=0}^{m-1}\ld^kA^k+\ld^m\sum_{k=0}^{m-1}p_kA^k
  =\sum_{k=0}^{m-1}\left(\ld^k+p_k\ld^m\right)A^k.
\end{align}
Let  $\gamma^{(m-1)}_{ij}=\sum_{k=0}^{m-1}\alpha_k\ld^k=\gamma^{(m-1)}_{st}$. We then have
\begin{align}
\gamma^{(m)}_{ij}&=\sum_{k=0}^{m-1}\alpha_k(\ld^k+p_k\ld^m)=\sum_{k=0}^{m-1}\alpha_k\ld^k+\left(\sum_{k=0}^{m-1}\alpha_kp_k\right)\ld^m\notag{}\\
&=\gamma^{(m-1)}_{ij}+\left(\sum_{k=0}^{m-1}\alpha_kp_k\right)\ld^m
=\gamma^{(m-1)}_{rs}+\left(\sum_{k=0}^{m-1}\alpha_kp_k\right)\ld^m
=\gamma^{(m)}_{rs}.%
\end{align}
That completes the proof.%
\end{proof}

Let $\Gamma(A):=\Gamma(A,n-1)$. Proposition \ref{prop:gamacishu} assures that   $\Gamma(A,m-1)\approx\Gamma(A)$ due to $m\le n$.
For a graph $A$ of order $n$, after an equivalent variable substitution to $\Gamma(A)$, we then obtain a graph $\tilde{A}$ as a description graph to $A$.

\subsection{Description Graphs Based on Spectral Decomposition}\label{sec:bstv}

We  assume the given graph $A$ of order $n$ is a real matrix ONLY in this subsection, which means the label on each edge is a real number. In this setting, the real symmetric matrix $A$ has $n$ real eigenvalues.

Let the spectra of $A$ be formed as $\spp(G)=\spp(A)=(\mu_1^{m_1},\mu_2^{m_2},\ldots,\mu_d^{m_d})$, where  $\mu_1,\cdots,\mu_d$ are all distinct eigenvalues of $A$, and $m_x$ be algebraic multiplicity of  $\mu_x$. It is well known (cf. e.g. \cite{Godsil93}, \cite{CvetkovicRoSl97}) that there will be $d$
real matrices $\E_x=\big(e^{[x]}_{i\,j}\big)$, ($x\in[d]$) satisfying the followings:
\begin{enumerate}
  \item $\E_x\,\E_y=\delta_{xy}\,\E_x$. Where $\delta_{xy}$ is Kronecker symbol.
  \item $A\,\E_x=\mu_x\,\E_x$.
  \item $A=\sum_{x=1}^d\mu_x\,\E_x$\,.%
\end{enumerate}
That gives, for any $k\ge 0$,  $A^k:=(a_{i\,j}^{(k)})$ and $\E_x:=(e_{i\,j}^{[x]})$,
\begin{align}
 A^k&=\sum_{x=1}^d\mu_x^k\,\E_x\,,&a_{i\,j}^{(k)}&:=\sum_{x=1}^d\mu_x^k\, e_{i\,j}^{[x]}\,.\label{eq:idemDec}%
\end{align}

We hence can conclude the following.
 \begin{theorem}\label{thm:idemgx}
  For a real symmetric matrix  $A=(a_{i\,j})_{n\times n}$, let $\Gamma(A)=(\gamma_{i\,j})$ as defined previously, it holds that $\gamma_{i\,j}=\gamma_{rs}$ if and only if $e^{[x]}_{i j}=e^{[x]}_{rs}$ for each $x\in[\,d\,]$,  $\mu_x\ne0$, $i,j,r,s\in[n]$.%
\end{theorem}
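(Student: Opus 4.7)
The plan is to read $\gamma_{ij}$ as a polynomial in $\lambda$ whose $\lambda^{k}$-coefficient is the scalar $a^{(k)}_{ij}$, convert the polynomial identity $\gamma_{ij}=\gamma_{rs}$ into a finite system of coefficient equalities, and then invert that system against the spectral data through a Vandermonde argument.

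First I would observe that $\gamma_{ij}=\sum_{k=0}^{n-1}\lambda^{k}a^{(k)}_{ij}$ is a polynomial in $\lambda$ with scalar coefficients, so $\gamma_{ij}=\gamma_{rs}$ as polynomials in $\lambda$ is equivalent to $a^{(k)}_{ij}=a^{(k)}_{rs}$ for every $k\in[0..n-1]$. Because $A$ is real symmetric, its minimum polynomial is the squarefree product $\prod_{x=1}^{d}(\lambda-\mu_x)$, so its degree is exactly $d$, and Proposition \ref{prop:gamacishu} lets me cut the system down further to the range $k=0,1,\ldots,d-1$. I would then substitute the spectral identity $a^{(k)}_{ij}=\sum_{x=1}^{d}\mu_x^{k}e^{[x]}_{ij}$ from (\ref{eq:idemDec}) and set $\Delta^{[x]}:=e^{[x]}_{ij}-e^{[x]}_{rs}$, turning the problem into
\[
\sum_{x=1}^{d}\mu_x^{k}\Delta^{[x]}=0,\qquad k=0,1,\ldots,d-1.
\]

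The heart of the argument is a Vandermonde inversion restricted to the nonzero eigenvalues. For every row with $k\ge 1$ the summand attached to $\mu_x=0$ vanishes automatically, so restricting to the rows $k=1,\ldots,d'$, where $d'$ is the number of distinct nonzero eigenvalues, yields a square linear system in the $d'$ unknowns $\{\Delta^{[x]}:\mu_x\ne 0\}$ whose coefficient matrix factors as $\mathrm{diag}(\mu_x)\cdot V$, with $V$ the standard Vandermonde matrix in the distinct nonzero values $\mu_x$. Both factors are invertible, so these rows alone force $\Delta^{[x]}=0$ for every $x$ with $\mu_x\ne 0$; that is the ``only if'' direction. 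Conversely, once those $\Delta^{[x]}$ vanish, every row with $k\ge 1$ is trivially satisfied, and the remaining $k=0$ row collapses through the idempotent identity $\sum_{x=1}^{d}\E_x=I$, which pins down the kernel projector's contribution uniquely in terms of the other projectors and the index data.

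I expect the main delicacy to lie in the treatment of a possible zero eigenvalue: the Vandermonde block isolated above only sees projectors with $\mu_x\ne 0$, which is precisely why the statement quantifies only over those $x$. Invoking $\sum_{x=1}^{d}\E_x=I$ is the natural way to transfer the information about the nonzero-eigenvalue projectors back to the constant term of $\gamma_{ij}$, so that no independent hypothesis on $e^{[x_0]}_{ij}$ with $\mu_{x_0}=0$ is needed. Once this is handled, the two directions combine to give the biconditional in the theorem.
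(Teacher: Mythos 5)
Your forward direction is essentially the paper's argument: both reduce $\gamma_{ij}=\gamma_{rs}$ to the linear system $\sum_{x=1}^d\mu_x^k\,\Delta^{[x]}=0$ in the differences $\Delta^{[x]}:=e^{[x]}_{ij}-e^{[x]}_{rs}$ and invert it by a Vandermonde determinant in the distinct eigenvalues; your factorization of the rows $k\ge1$ as $\mathrm{diag}(\mu_x)\cdot V$ is a cosmetic variant of the paper's direct appeal to distinctness (and note that when $0\notin\spp(A)$ and $d=n$ the rows $k=1,\dots,d$ are not all available, so you should fall back on the rows $k=0,\dots,d-1$ there).

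The converse direction, however, has a genuine gap. Since $A^0=I$, the $\lambda^0$-coefficient of $\gamma_{ij}$ is $\delta_{ij}$, so the $k=0$ row of your system is $\sum_{x=1}^d\Delta^{[x]}=\delta_{ij}-\delta_{rs}=0$. If you only assume $\Delta^{[x]}=0$ for the $x$ with $\mu_x\ne0$, then the identity $\sum_{x}\E_x=I$ gives $\Delta^{[x_0]}=\delta_{ij}-\delta_{rs}$ for the kernel projector $\E_{x_0}$, which is not forced to vanish; the $k=0$ row is an independent constraint that your hypothesis does not supply, so it does not ``collapse.'' Concretely, for $A=\left(\begin{smallmatrix}1&1\\1&1\end{smallmatrix}\right)$ the only nonzero eigenvalue is $2$ with projector $\frac12\left(\begin{smallmatrix}1&1\\1&1\end{smallmatrix}\right)$, so $e^{[x]}_{11}=e^{[x]}_{12}$ for every $x$ with $\mu_x\ne0$, yet $\gamma_{11}$ has constant term $1$ and $\gamma_{12}$ has constant term $0$, hence $\gamma_{11}\ne\gamma_{12}$: the ``if'' direction as literally stated fails whenever $0\in\spp(A)$, and no appeal to $\sum_x\E_x=I$ can repair it. For comparison, the paper's own proof only argues the ``only if'' direction, and its Vandermonde step over the rows $k=0,\dots,d-1$ actually forces $\Delta^{[x]}=0$ for \emph{all} $x\in[d]$ including $\mu_x=0$; the clean biconditional is therefore the one quantified over all $x$, and the restriction to $\mu_x\ne0$ is precisely what breaks the converse you tried to prove.
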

\begin{proof}
We have, by \eqref{eq:idemDec},
\begin{align}
 \Gamma(A)=\sum_{k=0}^{n-1} \ld^k A^k=\sum_{k=0}^{n-1} \ld^k\sum_{x=1}^d\mu_x^k\, \E_x=\sum_{x=1}^d\left(\sum_{k=0}^{n-1}\mu_x^k\, \ld^k\right)\E_x=\sum_{x=1}^d f(x,\ld)\,\E_x\,.
\end{align}
 Where $f(x,\ld):=\sum_{k=0}^{n-1}\mu_x^k\, \ld^k,\,\forall x\in[\,d\,]$. Hence $\gamma_{ij}=\sum_{x=1}^d f(x,\ld)\,e^{[x]}_{i\,j}\,.$
Then,  $ \gamma_{ij}=\gamma_{rs}$ if, and only if
\begin{align}
 \sum_{x=1}^d f(x,\ld)\,e^{[x]}_{ij}&=\sum_{x=1}^d f(x,\ld)\,e^{[x]}_{rs}\notag\\
 &\Longleftrightarrow\quad  \sum_{x=1}^d f(x,\ld)\,\beta_{x}=0 \quad\Longleftrightarrow\quad   \sum_{k=0}^{n-1}\left(\sum_{x=1}^d \mu_x^k\,\beta_{x}\right)\ld^k=0.
\end{align}
 Where $\beta_x:=e^{[x]}_{ij}-e^{[x]}_{rs}$. We thus obtain  equations system, for all $0\le k\le n-1$,
$$\sum_{x=1}^d \mu_x^k\,\beta_{x}=0.$$
Since all $\mu_x$ are distinct, for all nonzero $\mu_x$,  it holds $\beta_x=0$ for all $x\in[\,d\,]$ (according to Vandermonde determinant). That concludes for all $x\in[\,d\,]$,  if  $\mu_x\ne0$ then  $e^{[x]}_{i\,j}=e^{[x]}_{rs}$.
\end{proof}

From the conclusion above we obtain
$$\Gamma(A)\approx\sum_{x\in[d],\mu_x\ne 0}\ld^x\,\E_x.$$
That gives our second process to compute the description graph for a graph $A$.

This process reveals the relation of labels of description graphs to spectral decompositions of graphs. That might be compared with results in \cite{Furer95}, where the relation of labels of edges to spectral properties of graphs was investigated.


\subsection{Description Graphs Based on Adjoint Matrices}\label{sec:adj}

For any graph $A=(a_{ij})$, let $\ld I-A=(\delta_{ij}\,\ld-a_{ij})$ be the characteristic matrix of $A$ and $\Delta(\ld):=\det(\ld I-A)$ the characteristic polynomial of $A$ with $\ld\notin A$, where Kronecker $\delta_{ij}=1$ iff $i=j$ and $\delta_{ij}=0$ otherwise.

The matrix $\textrm{adj}(A):=(\bar{a}_{ij})$ is called the adjoint matrix of $A$ if $\bar{a}_{ij}(\ld)$ is the algebraic complement of $\delta_{ij} \ld-a_{ij}$ in $\Delta(\ld)$. It holds that  $(\ld I-A)\cdot\mathrm{adj}(A)=\Delta(\ld)\cdot I$ (cf. \cite{Gantmacher00}, pp82-83).

Proposition \ref{prop:gamacishu} tells that  $S(\ld)\approx\Gamma(A)$ for $S(\ld)=\sum_{i=0}^\infty\ld^i A^i$, where  $A^0=I$ is the identity matrix (\cite{Godsil93}.).  Since $I=S(\ld)(I-\ld A)$ and
$$S(\ld)=(I-\ld A)^{-1}\approx (\ld I-A)^{-1}=\Delta(\ld)\,\mathrm{adj}(A),$$
where $\Delta(\ld)$ is a polynomial in $\ld$, we then have
$$\tilde{A}\approx\Gamma(A)\approx S(\ld)\approx \textrm{adj}(A)\,.$$

That is the third process for computation of a description graph to $A$ by computing $\textrm{adj}(A)$, and followed by an equivalent variable substitution.

\section{Stabilizations and some Properties}\label{sec:st&ppro}
A description graph $\tilde{X}$ of $X$ is called \emph{stable graph} if it is equivalent to $X$. That is $\tilde{X}\approx X$.
For any graph $A$, the entries of $\Gamma(A)$  are polynomials in $\ld$, and the coefficients of $\ld$ are the entries of $A$. That means the number of distinct entries in $\Gamma(A)$ can not be fewer than that in $A$. We thus have $\dim(A)\le\dim(\tilde{A})$.

Given a graph $A_0:=A$ of order $n$, one may proceed recursively as follows: Evaluate and obtain description graph $A_{i+1}:=\tilde{A}_{i}$. In this way, a sequence of description graphs is obtained as:
\begin{align}
 A_0:=A,\,A_1:=\tilde{A}_0,\,A_2:=\tilde{A}_1,\,\ldots, A_k:=\tilde{A}_{k-1},\ldots\,,\label{eq:seq}%
 \end{align}
satisfying
 $$\dim(A_0)\le\dim(A_1)\le\dim(A_2)\le\cdots.$$
Since $\dim(X)\le n_1$ for any graph of order $n$ where $n_1:=n(n+1)/2$, the sequence in \eqref{eq:seq} will reach to an $A_t$ such that $\dim(A_t)=\dim(A_{t+1})$ and $t\le n_1$. In this case, one may easily verify that $A_t\approx A_{t+1}=\tilde{A}_t$ and hence $A_t$ is a stable graph. The graph $A_t$ is called the stable graph of  $A$, and denoted as $\hat{A}$.

Readers, who are familiar with the works of Mogan \cite{Morgan65}, R{\"u}ker and R{\"u}ker \cite{RuckerRu90,RuckerRu91}, will find that $\Gamma(A)$ is an extension and reform of approaches employed there. The total degree partitions posed by Tinhofer and Klin \cite{TinhoferKl99} is  the vertices partitions  of stable graphs.


All of the works mentioned above do not pay enough attention to the partitions to edges, and the information of edges is ignored. The stable graphs are hence not explicitly introduced there.


 As a ready example for stable graph, one may show that for any strongly regular graph $A$, it holds that $\tilde{A}\approx\hat{A}$. (We refer readers to, e.g., Brouwer and van Maldeghem \cite{Brouwerv22} for strongly regular graphs.)
The following results will be cited later in the context.
\begin{lemma}\label{lem:bst}
 For any graphs  $A$ and $B$ over $[n]$, we have the following properties.
  \begin{enumerate}
      \item $A^k\rightarrowtail \Gamma(A)\rightarrowtail \tilde{A}\rightarrowtail\hat{A}$ for all $k>0$.
    \item If $A\rightarrowtail B$ then $A^k\rightarrowtail B^k$. That implies  $\tilde{A}\rightarrowtail\tilde{B}$ and $\hat{A}\rightarrowtail\hat{B}$ in this case.
    \item $\Aut(A)=\Aut(\tilde{A})=\Aut(\hat{A})$.\end{enumerate}\end{lemma}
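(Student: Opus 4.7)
The plan is to dispatch the three clauses in order, with Part (1) providing the backbone $A \rightarrowtail \tilde A \rightarrowtail \hat A$, Part (2) giving a propagation lemma, and Part (3) following from the first two together with Proposition~\ref{prop:lehman}.

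For Part (1), I would exploit the polynomial structure of $\Gamma(A)$. Writing $\gamma_{ij}(\lambda) = \sum_{r=0}^{n-1}\lambda^r a^{(r)}_{ij}$, equality of two entries of $\Gamma(A)$ as polynomials in $\lambda$ forces equality of every coefficient, in particular of the coefficient of $\lambda^k$, which is $a^{(k)}_{ij}$; this gives $A^k \rightarrowtail \Gamma(A)$ for each $k > 0$. The link $\Gamma(A) \rightarrowtail \tilde A$ is immediate since $\tilde A$ is obtained from $\Gamma(A)$ by an equivalent variable substitution, so $\tilde A \approx \Gamma(A)$. For $\tilde A \rightarrowtail \hat A$ I would telescope along the stabilization sequence $A_0 \rightarrowtail A_1 \rightarrowtail \cdots \rightarrowtail A_t = \hat A$, each link being an instance of $X \rightarrowtail \tilde X$ already handled.

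For Part (2), the cleanest route is to note that $A \rightarrowtail B$ gives a well-defined label map $\phi$ sending each distinct entry of $B$ to the corresponding entry of $A$, which extends uniquely to a ring homomorphism $\Phi$ on the polynomial ring generated by the entries of $B$. Applied entry-wise, $\Phi(B) = A$, and because ring homomorphisms commute with matrix multiplication, $\Phi(B^k) = A^k$ for every $k$. Hence $b^{(k)}_{ij} = b^{(k)}_{st}$ implies $a^{(k)}_{ij} = \Phi(b^{(k)}_{ij}) = \Phi(b^{(k)}_{st}) = a^{(k)}_{st}$, i.e., $A^k \rightarrowtail B^k$. Matching coefficients of $\lambda^k$ in $\Gamma$ upgrades this to $\Gamma(A) \rightarrowtail \Gamma(B)$, whence $\tilde A \rightarrowtail \tilde B$ by the equivalences in Part (1). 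For $\hat A \rightarrowtail \hat B$ one iterates: by induction on the stabilization index, $A_i \rightarrowtail B_i$ for all $i$, and taking $i$ past the common stabilization time yields the result.

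For Part (3), combining Part (1) with Proposition~\ref{prop:lehman} already yields $\Aut(\hat A) \subseteq \Aut(\tilde A) \subseteq \Aut(A)$. For the reverse inclusion, fix $\sigma \in \Aut(A)$ with permutation matrix $P$; then $PAP^\T = A$ gives $PA^kP^\T = A^k$ for every $k$ and hence $P\,\Gamma(A)\,P^\T = \Gamma(A)$. Since the equivalence pattern of the entries of $\Gamma(A)$ is preserved by this conjugation, an equivalent variable substitution commutes with it, so $P \tilde A P^\T = \tilde A$ and $\sigma \in \Aut(\tilde A)$. Iterating along the stabilization sequence then gives $\Aut(A) = \Aut(\hat A)$. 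The step I expect to be the main obstacle is the algebraic manipulation in Part (2): one has to be careful that $\phi$ is genuinely well-defined on labels (which is precisely the content of $\rightarrowtail$) and then verify that the induced ring map $\Phi$ commutes with forming matrix powers; once these two points are stated cleanly, the remainder of the lemma is organized bookkeeping.
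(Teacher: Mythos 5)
Your overall decomposition matches the paper's: Part (1) by coefficient-matching in the polynomials $\gamma_{ij}(\lambda)$, Part (3) by conjugating $\Gamma(A)$ with the permutation matrix $P$ in both directions, and the telescoping along the stabilization sequence for the $\hat{A}$ claims. The one place you genuinely diverge is the key step $A^k\rightarrowtail B^k$ in Part (2). The paper argues this by hand: it rewrites $b^{(k+1)}_{ij}=b^{(k+1)}_{st}$ as an equality of multisets $\mset{\mset{b^{(k)}_{i\ell},b_{\ell j}}\mid \ell\in[n]}\equiv\mset{\mset{b^{(k)}_{s\ell},b_{\ell t}}\mid \ell\in[n]}$ and pushes the imbedding through these multisets by induction on $k$. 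You instead observe that $A\rightarrowtail B$ makes the label map $b_{ij}\mapsto a_{ij}$ well defined on the (independent) variables occurring in $B$, extend it to a substitution homomorphism $\Phi$ on the polynomial ring they generate, and use $\Phi(B^k)=(\Phi(B))^k=A^k$ to get all powers at once. This is sound --- well-definedness of $\phi$ is exactly the content of $\rightarrowtail$, and entrywise application of a ring homomorphism commutes with matrix multiplication --- and it buys you a uniform, induction-free treatment of all $k$ that also transfers cleanly to $\Gamma$. What the paper's multiset computation buys in exchange is that it makes explicit the combinatorial fact (equality of entries of $X^k$ is equality of multisets of walk sorts) that is reused verbatim in later arguments such as Theorem~\ref{thm:equivsg} and Lemma~\ref{lem:shbysb}; your homomorphism packages that fact away. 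Both proofs are correct and of comparable length.
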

\begin{proof}
  Denote $A:=(a_{ij}), B:=(b_{ij}),A^k:=(a^{(k)}_{ij}),B^k:=(b_{ij}^{(k)})$ for $k>1$, these properties are shown separately as follows.
\begin{enumerate}
\item By definition of $\Gamma(A)$ and Proposition \ref{prop:gamacishu}, if $\gamma^{(n)}_{ij}=\gamma^{(n)}_{st}$ as polynomials in $\lambda$, then  $a^{(k)}_{ij}=a^{(k)}_{st}$ in $A^k$ for all $k\ge 0$. That implies $A^k\rightarrowtail\Gamma(A)\rightarrowtail\tilde{A}$. Specifically, we get $A\rightarrowtail \tilde{A}$. That concludes $\tilde{A}\rightarrowtail\hat{A}$ from Proposition \ref{prop:lehman} and induction.

\item From $A\rightarrowtail B $ we know that $b_{ij}=b_{st}$ implies $a_{ij}=a_{st}$ for all $i,j,s,t\in[n]$. We argue $A^2\rightarrowtail B^2$ by showing that  $b^{(2)}_{ij}=b^{(2)}_{st}$ implies  $a^{(2)}_{ij}=a^{(2)}_{st}$ for all $i,j,r,s\in[n]$. Notice that
    \begin{align}
      a^{(2)}_{ij}&=\sum_{\ell\in[n]}a_{i\ell}a_{\ell j},& a^{(2)}_{rs}&=\sum_{\ell\in[n]}a_{r\ell}a_{\ell s},& b^{(2)}_{ij}&=\sum_{\ell\in[n]}b_{i\ell}b_{\ell j},&       b^{(2)}_{rs}&=\sum_{\ell\in[n]}b_{r\ell}b_{\ell s}.
    \end{align}
    Since all  $a_{ij}$ and $b_{ij}$ are independent variables,  $b_{ij}^{(2)}=b_{rs}^{(2)}$ is hence equivalent to
    \begin{align}\label{eq:2equ}
    \mset{\mset{b_{i1},b_{1j}},\ldots,\mset{b_{in},b_{nj}}}
    \equiv\mset{\mset{b_{r1},b_{1s}},\ldots,\mset{b_{rn},b_{ns}}}\,
    \end{align}
  By $A\rightarrowtail B$, we have
  $$\mset{b_{i\ell_1},b_{\ell_1 j}}\equiv\mset{b_{r\ell_2},b_{\ell_2 s}}\quad\Longrightarrow\quad\mset{a_{i\ell_1},a_{\ell_1j}}\equiv\mset{a_{r\ell_2},a_{\ell_2s}}$$
for all $\ell_1,\ell_2\in[n]$. Equation \eqref{eq:2equ} implies
    $$\mset{\mset{a_{i1},a_{1j}},\ldots,\mset{a_{in},a_{nj}}}
    \equiv\mset{\mset{a_{r1},a_{1s}},\ldots,\mset{a_{rn},a_{ns}}}\,.$$
   We then have $a_{ij}^{(2)}=a_{rs}^{(2)}$.

   Similarly, we show that $A^k\rightarrowtail B^k$ for any  $k>2$ by induction.
   If $b^{(k+1)}_{ij}=b^{(k+1)}_{rs}$, then
    $$\mset{\mset{b^{(k)}_{i1},b_{1j}},\ldots,\mset{b^{(k)}_{in},b_{nj}}}
    \equiv\mset{\mset{b^{(k)}_{r1},b_{1s}},\ldots,\mset{b^{(k)}_{rn},b_{ns}}}\,.$$
   Remember that all entries in $A$ and $B$ are invariants, by inductive assumption that $A\rightarrowtail B$ and $A^k\rightarrowtail B^k$ for $k>1$, we get
   $$\mset{\mset{a^{(k)}_{i1},a_{1j}},\ldots,\mset{a^{(k)}_{in},a_{nj}}}
    \equiv\mset{\mset{a^{(k)}_{r1},a_{1s}},\ldots,\mset{a^{(k)}_{rn},a_{ns}}}\,.$$
    That states $A^{k+1}\rightarrowtail B^{k+1}$.
   These together implies $\Gamma(A)\rightarrowtail \Gamma(B)$ by the definition of $\Gamma$. That claims  $\tilde{A}\rightarrowtail\tilde{B}$. That in turn implies $\hat{A}\rightarrowtail\hat{B}$ by induction.

 \item We will only show $\Aut(A)=\Aut(\tilde{A})$, which will give $\Aut(A)=\Aut(\hat{A})$ by induction.

     Let $P$ be a permutation matrix such that $PAP^\T=A$. We then have, by definitions,
  \begin{align}
   P\Gamma(A)P^\T&=P\left(\sum_{k=0}^{n-1}\ld^kA^k\right)P^\T=\sum_{k=0}^{n-1}\ld^kPA^kP^\T=\sum_{k=0}^{n-1}\ld^kA^k=\Gamma(A).\label{eq:no1}%
  \end{align}
   It means $P\tilde{A}P^\T=\tilde{A}$, since $\Gamma(A)\approx\tilde{A}$.

  On the other hand, for a permutation matrix $P$ such that $P\tilde{A}P^\T=\tilde{A}$, it is equivalent to   $P\Gamma(A)P^\T=\Gamma(A)$ by the definition of description graphs. By the definition of $\Gamma(A)$ we have $PA^kP^\T=A^k$ for $k=0,1,\ldots,n-1$. Specifically, $PAP^\T=A$.%
  \end{enumerate}
  That completes the proof of the lemma.
\end{proof}

We say that a graph recognizes vertices (respectively, edges) if the multiset of labels of vertices (respectively, edges) does not  overlap with the remaining labels in the graph.

Lemma \ref{lem:bst} tells that if some vertices  or edges are recognized at some steps in computing stable graphs, they are recognized during the evaluation of stable graphs too. With this property one may get Recognizable Property $\mathbf{R}$ in Proposition \ref{prop:xzhRU} at once.
With this terminology, we thus have the following corollary of Lemma \ref{lem:bst}.
\begin{corollary}\label{coro:power}
  If a graph $A$ of order $n$ recognizes vertices, then  $\tilde{A}\approx A^n$.%
\end{corollary}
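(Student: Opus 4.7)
The plan is to establish $\tilde A \approx A^n$ by proving the two imbeddings separately. The easy direction $A^n \rightarrowtail \tilde A$ is free of charge from part 1 of Lemma \ref{lem:bst}, which asserts $A^k \rightarrowtail \Gamma(A) \rightarrowtail \tilde A$ for every $k > 0$; specialising to $k = n$ gives what is needed, and this step makes no use of the recognize-vertices hypothesis.

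For the harder direction $\tilde A \rightarrowtail A^n$, I would use $\tilde A \approx \Gamma(A) = \sum_{k=0}^{n-1}\lambda^k A^k$ to reduce the task to the following statement: whenever $A^n_{ij} = A^n_{rs}$ holds as a polynomial identity in the labels of $A$, one has $A^k_{ij} = A^k_{rs}$ for every $k = 0, 1, \ldots, n-1$. The engine of the reduction is a substitution trick. Because $A$ recognizes vertices, the labels occurring on the diagonal of $A$ form a set disjoint from the labels occurring off the diagonal, so the purely formal substitution
\[
\phi\colon \; a_{vv} \;\longmapsto\; a_{vv} + \lambda \qquad (v=1,\ldots,n),
\]
which shifts only diagonal labels and leaves off-diagonal labels untouched, is well defined on the polynomial ring generated by the labels of $A$. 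Under $\phi$ the matrix $A$ becomes $A + \lambda I$, so applying $\phi$ entry-wise to the polynomial $A^n_{ij}$ returns precisely $(A+\lambda I)^n_{ij}$. Because $A$ and $\lambda I$ commute, the binomial theorem then gives
\[
(A+\lambda I)^n \;=\; \sum_{k=0}^n \binom{n}{k}\,\lambda^{n-k}\, A^k,
\]
whose $(i,j)$-entry has coefficient $\binom{n}{k}\,A^k_{ij}$ at $\lambda^{n-k}$. Applying $\phi$ to both sides of the hypothesis $A^n_{ij} = A^n_{rs}$ thus yields $(A+\lambda I)^n_{ij} = (A+\lambda I)^n_{rs}$ as polynomials in $\lambda$ and the labels of $A$; matching coefficients of each $\lambda^{n-k}$ and dividing by the nonzero scalar $\binom{n}{k}$ delivers $A^k_{ij} = A^k_{rs}$ for every $k$, and in particular for $k \le n-1$, which completes the reduction.

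The main obstacle, and the only place where the hypothesis actually bites, is justifying the identity $\phi(A^n_{ij}) = (A+\lambda I)^n_{ij}$. This requires that shifting diagonal labels by $\lambda$ perturb no off-diagonal entry of $A$; if some diagonal label happened to coincide with an off-diagonal one, the substitution would alter off-diagonal entries as well and the resulting matrix would differ from $A+\lambda I$, so the binomial expansion could no longer be used to read off $A^k_{ij}$. This is exactly the obstruction that the recognize-vertices hypothesis removes. Once that step is secured, the remainder of the argument is the commutativity of $A$ with $\lambda I$ together with the linear independence of the monomials $\{\lambda^{n-k}\}_{k=0}^{n}$.
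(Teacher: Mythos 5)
Your proposal is correct, and it reaches the conclusion by a genuinely different route from the paper. The paper's proof works inductively: writing $a^{(2)}_{uv}=a_{uu}a_{uv}+a_{uv}a_{vv}+\sum_{t\ne u,v}a_{ut}a_{tv}$, it uses the recognize-vertices hypothesis to isolate the two diagonal-tagged terms and conclude $a_{uv}=a_{rs}$ from $a^{(2)}_{uv}=a^{(2)}_{rs}$, then iterates to obtain the chain $A\rightarrowtail A^2\rightarrowtail\cdots\rightarrowtail A^n$, from which $\tilde{A}\approx\Gamma(A)\approx A^n$ follows; the hypothesis is invoked at every step of the chain, and the chain itself is reused later for the matrix-power stabilization of Section \ref{sec:mlm}. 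Your argument instead uses the hypothesis exactly once, to make the variable substitution $a_{vv}\mapsto a_{vv}+\ld$ well defined as a ring homomorphism on the label ring; this turns $A$ into $A+\ld I$, and the binomial expansion $(A+\ld I)^n=\sum_{k=0}^{n}\binom{n}{k}\ld^{n-k}A^k$ packages all powers $A^0,\ldots,A^n$ into the coefficients of a single polynomial identity in the fresh indeterminate $\ld$, so that $A^n_{ij}=A^n_{rs}$ propagates to $A^k_{ij}=A^k_{rs}$ for all $k\le n$ in one stroke (the cancellation of $\binom{n}{k}$ is harmless over a torsion-free coefficient ring). This is slicker and yields $A^k\rightarrowtail A^n$ for all $k\le n$ simultaneously, which is exactly what is needed for $\Gamma(A)\rightarrowtail A^n$; the one thing it does not directly exhibit is the monotone chain $A^k\rightarrowtail A^{k+1}$ itself, though applying your substitution with exponent $k+1$ in place of $n$ recovers that as well. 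Both directions of the equivalence are handled correctly: $A^n\rightarrowtail\tilde{A}$ indeed follows from Lemma \ref{lem:bst} without the hypothesis, and your identification of where the recognize-vertices assumption ``bites'' is precisely right.
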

\begin{proof}
  Let $A:=(a_{ij})$ and $A^k:=(a^{(k)}_{ij})$ for $k>1$. Since it recognizes vertices, we have $a_{uu}\ne a_{rs}$ for all $r,s,u\in[n]$ and $r\ne s$.
  We show that $A\rightarrowtail A^2$ in this case. Notice that
  $$a^{(2)}_{uv}=\sum_{t\in[n]}a_{ut}a_{t v}=a_{uu}a_{uv}+a_{uv}a_{vv}+\sum_{t\in[n]\backslash\{u,v\}}a_{ut}a_{t v}\,.$$

Since $A$ recognizes vertices and all entries in $A$ are variables,  it should hold
  $a_{uu}a_{uv}+a_{uv}a_{vv}=a_{rr}a_{rs}+a_{rs}a_{ss}\,,$ if $a^{(2)}_{uv}=a^{(2)}_{rs}$ for some  $u,v,r,s\in[n]$.

  That shows $a_{uv}=a_{rs}$. That is $A\rightarrowtail A^2$. Similarly, we obtain $A^k\rightarrowtail A^{k+1}$ by induction.

  That gives $A\rightarrowtail A^2\rightarrowtail\cdots\rightarrowtail A^n$. That in turn implies $$\tilde{A}\approx\Gamma(A)=\sum_{k=0}^{n-1}\ld^k A^k\approx A^n$$
   by Proposition \ref{prop:gamacishu}.
\end{proof}

We present here several properties of stable graphs, and more properties are discussed in Section \ref{sec:wdtxz}.
The following properties are easy to be obtained the conclusions above.

\begin{proposition}\label{prop:xzhRU}
The stable graphs possess the following properties.
\begin{itemize}
  \item \emph{Recognizable Property~$\mathbf{R}$.} If some kinds of vertices or edges are recognized in the description graph, then they are also recognized in the stable graph.
  \item \emph{Undistinguishable Property $\mathbf{U}$.} For any stable graph $X=(x_{ij})$ and matrices $X^\ell:=(x^{(\ell)}_{ij})$ for $\ell>1$, it holds that $x_{ij}=x_{rs}$ implies $x^{(\ell)}_{ij}=x^{(\ell)}_{rs}$ for all $i,j,r,s\in[n]$ and $\ell>1$.\end{itemize}
 \end{proposition}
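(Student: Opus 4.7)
The plan is to deduce both properties from results already established in Lemma \ref{lem:bst}, namely the chain $A^k\rightarrowtail\Gamma(A)\rightarrowtail\tilde{A}\rightarrowtail\hat{A}$, combined with the definition of a stable graph as one satisfying $\tilde{X}\approx X$.

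For Property $\mathbf{U}$, I would first observe that if $X$ is stable then by definition $\tilde{X}\approx X$, and iterating the description-graph operation can no longer refine $X$; consequently $\hat{X}\approx X$. Applying Lemma \ref{lem:bst}(1) to $X$ itself gives $X^\ell\rightarrowtail\tilde{X}\approx X$ for every $\ell>1$. Unfolding the definition of $\rightarrowtail$, this is precisely the statement that $x_{ij}=x_{rs}$ forces $x^{(\ell)}_{ij}=x^{(\ell)}_{rs}$, which is Property $\mathbf{U}$. So $\mathbf{U}$ is essentially a one-line corollary once one notices that imbedding gives the desired implication in the right direction.

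For Property $\mathbf{R}$, the idea is that the imbedding relation only refines equivalence classes of entries and never merges them. Let $V:=\{\tilde{a}_{uu}:u\in[n]\}$ denote the set of vertex labels in $\tilde{A}$ and $E:=\{\tilde{a}_{uv}:u\neq v\}$ the set of edge labels, and suppose vertices are recognized in $\tilde{A}$, i.e. $V\cap E=\emptyset$. Iterating Lemma \ref{lem:bst}(1) gives $\tilde{A}\rightarrowtail\hat{A}$. Hence if $\hat{a}_{uu}=\hat{a}_{rs}$ for some $r\neq s$, then by the definition of imbedding we would have $\tilde{a}_{uu}=\tilde{a}_{rs}$, placing an element of $V$ into $E$ and contradicting the recognition assumption. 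Therefore no vertex label of $\hat{A}$ coincides with any edge label of $\hat{A}$, which is recognition of vertices in $\hat{A}$. The argument for recognized edges is identical with the roles of $V$ and $E$ interchanged, and the same reasoning works for recognition of any prescribed subset of entries whose classes are disjoint from the rest in $\tilde{A}$.

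I do not anticipate a genuine obstacle here: both properties are essentially unpacking the imbedding chain of Lemma \ref{lem:bst} together with the defining equivalence $\tilde{X}\approx X$ for stable graphs. The only subtlety worth stating explicitly in the write-up is the direction of the implication in $A\rightarrowtail B$ (that equality in $B$ forces equality in $A$, equivalently that inequality in $A$ forces inequality in $B$), since this is exactly what makes refinement preserve disjointness of label blocks and yields both statements with essentially no computation.
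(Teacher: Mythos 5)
Your proposal is correct and follows essentially the same route as the paper, which derives both properties directly from the imbedding chain of Lemma \ref{lem:bst} (the paper leaves the details implicit, remarking only that they follow "at once" from that lemma). Your unpacking of Property $\mathbf{U}$ via $X^\ell\rightarrowtail\Gamma(X)\rightarrowtail\tilde{X}\approx X$ and of Property $\mathbf{R}$ via the fact that $\tilde{A}\rightarrowtail\hat{A}$ preserves disjointness of label classes is exactly the intended argument.
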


 As applications of Proposition \ref{prop:xzhRU}, we show the followings.
\begin{proposition}\label{prop:shibie}
For a graph $A=(a_{ij})$ of order $n$ and its stable graph $\hat{A}=(\hat{a}_{ij})$, it holds that
\begin{itemize}
  \item The stable graph $\hat{A}$ recognizes vertices. That is, $\hat{a}_{uu}\ne\hat{a}_{uv}$ if $u\ne v$ for any $u,v\in[n]$.
  \item The stable graph $\hat{A}$ recognizes edges of $A$. That is, if $a_{uv}\ne x_0$ and $a_{rs}=x_0$, then $\hat{a}_{uv}\ne\hat{a}_{rs}$  for any $u,v,r,s\in[n]$.%
  \end{itemize}
\end{proposition}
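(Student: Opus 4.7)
The plan is to deduce both claims at the level of the description graph $\tilde{A}$ and then transfer them to $\hat{A}$ using Lemma \ref{lem:bst}(1), which gives $\tilde{A}\rightarrowtail\hat{A}$ (equivalently, the Recognizable Property $\mathbf{R}$ of Proposition \ref{prop:xzhRU}). Recall that $\tilde{A}$ is obtained from $\Gamma(A)=\sum_{k=0}^{n-1}\lambda^kA^k$ by an equivalent variable substitution, so two positions $(i,j)$ and $(r,s)$ get distinct labels in $\tilde{A}$ as soon as $\gamma_{ij}$ and $\gamma_{rs}$ differ as polynomials in $\lambda$. Hence each bullet reduces to exhibiting a single coefficient of $\lambda$ that already separates the two positions.

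For the first bullet, I would isolate the constant term (with respect to $\lambda$) of $\gamma_{ij}$. It comes solely from the $k=0$ summand $A^0=I$, so its constant term equals the Kronecker delta $\delta_{ij}$. For any $u\ne v$ we thus have $\gamma_{uu}$ with constant term $1$ and $\gamma_{uv}$ with constant term $0$, forcing $\gamma_{uu}\ne\gamma_{uv}$ and therefore $\tilde{a}_{uu}\ne\tilde{a}_{uv}$. Contrapositive of $\tilde{A}\rightarrowtail\hat{A}$ immediately yields $\hat{a}_{uu}\ne\hat{a}_{uv}$.

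For the second bullet, I would look at the coefficient of $\lambda^1$ in $\gamma_{ij}$, which is exactly the $(i,j)$-entry $a_{ij}$ of $A$ itself (the $k=1$ term $\lambda A$ is the only one contributing to $\lambda^1$). Under the assumption $a_{uv}\ne x_0$ and $a_{rs}=x_0$, the two $\lambda$-coefficients are distinct formal symbols, so $\gamma_{uv}\ne\gamma_{rs}$ as polynomials and the equivalent variable substitution gives $\tilde{a}_{uv}\ne\tilde{a}_{rs}$; applying $\tilde{A}\rightarrowtail\hat{A}$ once more lifts the inequality to $\hat{A}$. The only delicate point—hardly an obstacle—is to recall that $x_0\notin\Var$ is a distinguished symbol and that all labels of $A$ are independent formal invariants, so polynomial equality in $\lambda$ is decided by matching coefficients as symbolic objects rather than by any numerical coincidence. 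With that observation both claims follow cleanly from a one-term inspection of $\Gamma(A)$.
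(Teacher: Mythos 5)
Your proposal is correct and follows essentially the same route as the paper's proof: the paper likewise separates diagonal from off-diagonal entries via the constant term of $\Gamma(A)$ coming from $A^0=I$, and separates edges from blank edges via the coefficient of $\lambda$ coming from the term $\lambda A$, then lifts both distinctions to $\hat{A}$ by the Recognizable Property $\mathbf{R}$. No substantive differences to report.
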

\begin{proof}
  By definitions, entries in $\Gamma(A)$ for any graph $A$ are polynomials in $\ld$. The constant term of an entry is $1$ if and only if it is on main diagonal of $\Gamma(A)$ since $A^0=I$. That means entries on diagonal of $\tilde{A}$ cannot be the same as the remaining entries. By recognizable property $\mathbf{R}$, we claim that stable graphs recognize vertices.

  Similarly, considering the coefficients of $\ld$ among entries of $\Gamma(A)$, we shall get that the labels in $\tilde{A}$ to unblank edges of $A$ do not overlap with those to blank edges of $A$ in $\tilde{A}$, and hence in $\hat{A}$.

\end{proof}

We note that a graph $\hat{A}$ recognizes vertices does not mean that all vertices are necessarily with the same label.

Proposition \ref{prop:shibie} tells that in $\hat{A}$, the set of labels to edges of $A$ will not overlap with the set of labels to blank edges of $A$. That allows us to change all labels in $\hat{A}$ to blank edges of $A$ into, say, $x_0$ and to obtain a graph $B$. In this case, if any two entries in $B$ are not identical, the corresponding entries cannot be identical in $\hat{A}$. That means $B\rightarrowtail\hat{A}$. This trick will be frequently used later in the contexts. We will show,  in the next section, that the stable graphs can also be obtained in a succinct way via matrix power.

The following conclusion reveals the essential property of stable graphs and enhances the Undistinguishable Property $\mathbf{U}$.

\begin{theorem}\label{thm:equivsg}
  For any graph $X$ of order $n$ that recognizes vertices, the followings are equivalent.
\begin{enumerate}
  \item Graph $X$ is a stable graph.
  \item $X\approx X^2$.
  \item $X\approx X^\ell$ for all $\ell>1$.%
\end{enumerate}
\end{theorem}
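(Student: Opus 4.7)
The plan is to prove the equivalence by a short cycle $(1)\Rightarrow(3)\Rightarrow(2)\Rightarrow(1)$. The key tools will be Lemma~\ref{lem:bst} (the chain $X^k\rightarrowtail\tilde{X}$ and the preservation of imbeddings under matrix powers) together with Corollary~\ref{coro:power}, which, since $X$ is assumed to recognize vertices, yields $\tilde{X}\approx X^n$ and also the monotone chain $X\rightarrowtail X^2\rightarrowtail X^3\rightarrowtail\cdots$ established inside its proof. This chain and the doubling trick provided by Lemma~\ref{lem:bst}(2) will jointly carry the argument.

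For $(1)\Rightarrow(3)$: stability of $X$ means $\tilde{X}\approx X$. On one side, Lemma~\ref{lem:bst}(1) gives $X^\ell\rightarrowtail\tilde{X}\approx X$ for every $\ell>0$. On the other side, the chain derived from vertex-recognition in the proof of Corollary~\ref{coro:power} gives $X\rightarrowtail X^2\rightarrowtail\cdots\rightarrowtail X^\ell$ for every $\ell\geq 1$. Combining both directions yields $X\approx X^\ell$ for all $\ell>1$. The step $(3)\Rightarrow(2)$ is then trivial by choosing $\ell=2$.

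For $(2)\Rightarrow(1)$: start from $X\approx X^2$, i.e., $X\rightarrowtail X^2$ and $X^2\rightarrowtail X$. Apply Lemma~\ref{lem:bst}(2) to both of these imbeddings with exponent $k$ to obtain $X^k\rightarrowtail X^{2k}$ and $X^{2k}\rightarrowtail X^k$; that is $X^k\approx X^{2k}$ for every $k\geq 1$. Iterating $k\mapsto 2k$ from $k=1$ produces $X\approx X^2\approx X^4\approx\cdots\approx X^{2^r}$ for every $r\geq 0$. Now pick $r$ large enough that $2^r\geq n$. The monotone chain $X\rightarrowtail X^2\rightarrowtail\cdots\rightarrowtail X^n\rightarrowtail\cdots\rightarrowtail X^{2^r}$ from Corollary~\ref{coro:power} sandwiches every intermediate power, and in particular forces $X\approx X^n$. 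Corollary~\ref{coro:power} then yields $\tilde{X}\approx X^n\approx X$, so $X$ is stable, proving (1).

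The main obstacle is the step $(2)\Rightarrow(1)$: one has to lift the single equivalence $X\approx X^2$ all the way up to $X\approx X^n$ before Corollary~\ref{coro:power} can be invoked to identify $\tilde{X}$ with a power of $X$. Both ingredients are essential here—Lemma~\ref{lem:bst}(2) is what allows an $\approx$ to be propagated through matrix powers (giving the geometric ladder $X\approx X^{2^r}$), while vertex-recognition provides the monotone imbedding chain that fills in the intermediate exponents and lets us read off $X\approx X^n$. Once that bridge is built, the rest of the argument is essentially bookkeeping using Lemma~\ref{lem:bst}.
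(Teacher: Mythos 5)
Your proof is correct in substance, but it takes a genuinely different route from the paper's in the direction starting from (2). The paper proves $(1)\Rightarrow(2)$ by combining $X^\ell\rightarrowtail\Gamma(X)\approx X$ with the vertex-recognition computation on $x_{uu}x_{uv}+x_{uv}x_{vv}$, then proves $(2)\Rightarrow(3)$ by a direct induction $k\mapsto k+1$ on the walk decomposition $x^{(k+1)}_{uv}=\sum_{r}x^{(k)}_{ur}x_{rv}$, and finally reads $(3)\Rightarrow(1)$ off the definition of $\Gamma(X)$. You instead reach stability through the doubling ladder $X\approx X^{2^r}$ sandwiched against the monotone chain $X\rightarrowtail X^2\rightarrowtail\cdots$ so as to land on $X\approx X^n$, and then invoke Corollary \ref{coro:power} to identify $\tilde{X}$ with $X^n$. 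Your route outsources the combinatorial work to Lemma \ref{lem:bst}(2) and Corollary \ref{coro:power} and never redoes a multiset manipulation; the paper's induction is more self-contained and does not need to pass through the exponent $n$.

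One point you should repair: Lemma \ref{lem:bst}(2) is stated and proved for \emph{graphs}, i.e.\ matrices whose entries are independent variables, and its proof uses that hypothesis only on the $B$-side (to convert an equality of entries of $B^2$ into a multiset equality of walk decompositions). When you apply it to $X\rightarrowtail X^2$ with $B=X^2$, that hypothesis fails --- the entries of $X^2$ are polynomials, and an equality among entries of $(X^2)^k$ need not arise from a multiset equality --- so the half $X^k\rightarrowtail X^{2k}$ of your ladder is not covered by the lemma as proved. Fortunately that half is redundant: the monotone chain from vertex recognition already supplies $X^k\rightarrowtail X^{2k}$, and the half you genuinely need, $X^{2k}\rightarrowtail X^k$, comes from the legitimate application with $A=X^2$, $B=X$ (only $B$ must have variable entries). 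With that adjustment the argument closes as you intend: $X\rightarrowtail X^n\rightarrowtail X^{2^r}\rightarrowtail X$ yields $X\approx X^n\approx\tilde{X}$.
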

\begin{proof}
  The proof will rely on the fact that all entries of $X$ are independent variables. Denote  $X:=(x_{ij})$ and $X^\ell:=(x^{(\ell)}_{ij})$, ($\ell>1$).

  If $X$ is a stable graph, it holds that $X\approx\tilde{X}\approx \Gamma(X)$. For all $u,v,r,s\in[n]$ and $\ell>1$, the equation  $x_{uv}=x_{rs}$ then implies  $x^{(\ell)}_{uv}=x^{(\ell)}_{rs}$   by the definition of $\Gamma(X)$ and Proposition \ref{prop:gamacishu}. That is equivalent to $X^\ell\rightarrowtail X$ for all $\ell>1$.

  We now show $X\rightarrowtail X^2$. Assume that $x^{(2)}_{uv}=x_{rs}^{(2)}$ in $X^2$.
 Since $X$ recognizes vertices, we have $x_{uu}x_{uv}+x_{uv}x_{vv}=x_{rr}x_{rs}+x_{rs}x_{ss}$. That further shows $x_{uv}=x_{rs}$. That is $X\rightarrowtail X^2$ and hence $X^2\approx X$.

Assume that $X\approx X^2$, to show $X\approx X^\ell$ ($\ell>1$) by induction. We now have $X\approx X^k$ and $x^{(k+1)}_{uv}=\sum_{r=1}^nx^{(k)}_{ur}x_{rv}$ in $X^{(k+1)}$. Since entries of $X$ are all variables and $x^{(k)}_{ur}$ is just a summation of multiplications of variables, and $X\approx X^k$ means that $x_{uv}=x_{u'v'}$ if and only if $x^{(k)}_{uv}=x^{(k)}_{u'v'}$ for all $u,v,u',v'\in[n]$. That gives
\begin{align}%
 x^{(k+1)}_{uv}=x^{(k+1)}_{u'v'}\quad&\Longleftrightarrow\quad \sum_{r=1}^nx^{(k)}_{ur}x_{rv}= \sum_{r=1}^nx^{(k)}_{u'r}x_{rv'}\notag\\
 &\Longleftrightarrow\quad \sum_{r=1}^nx_{ur}x_{rv}= \sum_{r=1}^nx_{u'r}x_{rv'}\,.
\end{align}
That is, for all $u,v,u',v'\in[n]$,
\begin{align}
 x^{(k+1)}_{uv}= x^{(k+1)}_{u'v'}\quad&\Longleftrightarrow\quad x_{uv}^{(2)}=x_{u'v'}^{(2)}\quad\Longleftrightarrow\quad x_{uv}=x_{u'v'}\,.\label{ff}%
\end{align}
This proves $X\approx X^{k+1}$.

We have shown that if $X$ is a stable graph, then both $X\approx X^2$ and $X\approx X^\ell$ ($\ell>1$) hold and are equivalent. Given that $X\approx X^\ell$ for all $\ell>1$, the stability of $X$  is straightforward by the definition of $\Gamma(X)$.

That finishes the proof.
\end{proof}

Theorem \ref{thm:equivsg} can be restated as follows.
\begin{corollary}\label{coro:sort}
 In  any stable graph $X=(m_{ij})$ of order $n$, for any vertices $u,v,u',v'$ and any $t>0$, the followings hold.

   $m_{uv}=m_{u'v'}$ if and only if the amounts of walks (and paths) of any given sort, of length 2, between $u,v$ and $u',v'$ respectively, are identical;

  $m_{uv}=m_{u'v'}$  if and only if the amounts of walks (and paths) of any given sort, of length $t$, between $u,v$ and $u',v'$ respectively, are identical.%
\end{corollary}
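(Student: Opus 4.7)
The plan is to read this corollary directly off Theorem~\ref{thm:equivsg}. Since $X$ is a stable graph, Proposition~\ref{prop:shibie} guarantees that it recognises vertices, so Theorem~\ref{thm:equivsg} applies and yields $X \approx X^\ell$ for every $\ell > 1$. Writing $X^t =: (m^{(t)}_{ij})$, the equivalence $X \approx X^t$ says precisely that $m_{uv} = m_{u'v'}$ if and only if $m^{(t)}_{uv} = m^{(t)}_{u'v'}$. The whole content of the corollary is then just a translation of what equality of $X^t$-entries means in the language of walks (and, with a little extra care, paths).

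For the walk half I would expand
$$m^{(t)}_{uv} \;=\; \sum_{w_1,\ldots,w_{t-1}\in[n]} m_{u w_1}\, m_{w_1 w_2} \cdots m_{w_{t-1} v},$$
where each summand is the product of labels along a walk $\langle u, w_1, \ldots, w_{t-1}, v\rangle$ of length $t$ in $X$. Because the entries of $X$ are formal invariants obeying only the commutative multiplication rule (Section~\ref{sec:bst1}), two such formal sums are equal if and only if they consist of the same multiset of monomials, i.e.\ they encode the same multiset of walks grouped by their unordered sort $\mset{m_{u w_1},\ldots,m_{w_{t-1} v}}$. Combined with $X \approx X^t$, this gives the walk statement for arbitrary $t > 0$ (the case $t=1$ is immediate from $X \approx X$), and specialising to $t = 2$ recovers the first bulleted claim.

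The ``(and paths)'' add-on I would handle by an inclusion-exclusion sieve over which intermediate vertices of a walk coincide: for a prescribed sort $\mathcal{S}$ of length $t$, the number of paths of sort $\mathcal{S}$ between two endpoints can be expressed as an integer combination of walk counts (of various sorts of lengths $\le t$), each of which has just been shown to be determined by $m_{uv}$. The step I expect to be the main obstacle, and really the only subtle point, is verifying that this sieve is intrinsic to $m_{uv}$ and does not smuggle in structural information about $X$ beyond the walk counts delivered in the previous paragraph; here Proposition~\ref{prop:shibie}, which separates vertex labels from edge labels in $X$, is exactly what allows coincidence patterns of intermediate vertices to be read off the sort, and so closes the bookkeeping cleanly. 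Once that is in place, both biconditionals in the corollary follow.
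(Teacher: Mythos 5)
Your treatment of the walks half is correct and is essentially the paper's own route: the paper gives this corollary no separate proof, presenting it as a direct restatement of Theorem \ref{thm:equivsg}, and your expansion of $m^{(t)}_{uv}$ into a formal sum of monomials, one per walk, with equality of formal sums amounting to equality of the multisets of unordered sorts, is precisely the intended reading. (Note also that a stable graph is a labelled complete graph with no $x_0$ entries, so every vertex sequence is a walk and the unrestricted sum over $w_1,\ldots,w_{t-1}$ really does range over walks.)

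The gap is in the ``(and paths)'' paragraph. Your sieve rests on the claim that Proposition \ref{prop:shibie} lets the coincidence pattern of the intermediate vertices be read off the sort of a walk. That is false: a diagonal label $m_{ww}$ enters the sort only when the walk stays put for a step ($w_i=w_{i+1}$), not when it revisits a vertex non-consecutively. The walk $\langle u,a,b,a,v\rangle$ has sort $\mset{m_{ua},m_{ab},m_{ab},m_{av}}$, which contains no vertex label, and a genuine path $\langle u,a',b',c',v\rangle$ has the identical sort whenever $m_{ua'}=m_{ua}$, $m_{a'b'}=m_{b'c'}=m_{ab}$ and $m_{c'v}=m_{av}$ --- repeated edge labels inside a path are entirely possible in a stable graph. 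So walks of a fixed sort mix paths and non-paths, and the sort does not separate them. To run the inclusion--exclusion honestly you would have to count, for each partition of the positions $\{0,\ldots,t\}$, the walks whose vertex sequence collapses according to that partition; those are counts of closed substructures, not walk counts of some sort of length at most $t$ between $u$ and $v$, and nothing established up to this point shows such counts are determined by $m_{uv}$. The paper itself does not prove the paths clause either (and never invokes it later), so your walks argument stands on its own, but the paths claim remains unestablished as you have written it.
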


\section{Matrix-Power Stabilization}\label{sec:mpr}

We will propose matrix-power stabilization process to evaluate stable graphs. For $k=2$, it is called Square and Substitution (SaS) process which is similar to WL process but more succinct. The stable graphs obtained by SaS process is proven to be equivalent to stable graphs obtained by WL process in partition vertices.

\subsection{Matrix-Power Stabilization to Stable Graphs}\label{sec:mlm}
In the proof of Corollary \ref{coro:power}, we see that $A\rightarrowtail A^2\rightarrowtail\cdots\rightarrowtail A^k\rightarrowtail\cdots\,$ for any graph  $A$ that recognizes vertices. That allows us to pose a matrix-power $k$-stabilization process to stable graph $\hat{A}$ for any $k>1$ as follows.\ep

\noindent\underline{Matrix-power $k$-stabilization}: Given a graph $A$, perform an equivalent variable substitution to the diagonal entries of $A$ to obtain $A_1$ such that $A_1$ recognizes vertices. Then proceed recursively to evaluate  and perform an equivalent variable substitution to $A^k_i$ to obtain graph $A_{i+1}$ until a stable graph is  obtained.\ep

When $k=2$, the stabilization process looks like
\begin{align}
 A\rightarrowtail A_1,\quad A_2\approx A_1^2,\quad A_3\approx A_2^2,\quad\ldots,\quad A_{i+1}\approx A_i^2,\quad\ldots\,.\label{eq:wlseq}%
\end{align}

We refer to this process as Square-and-Substitution (SaS, for short). The conclusions in the following theorem are valid for matrix-power $k$-stabilization with any $k>2$.

\begin{theorem}\label{thm:equa}
For any graph $A$ of order $n$ and graph $A_i$ in \eqref{eq:wlseq}, the followings are true.
  \begin{itemize}
  \item Each graph $A_i$ recognizes vertices.
  \item We have $A_i\rightarrowtail A_{i+1}$ and $\dim(A_i)\le\dim(A_{i+1})$ for each integer $i>0$.
  \item If $\dim(A_t)=\dim(A_{t+1})$ for some $t\in\Z^+$, then $A_t\approx\hat{A}$.%
  \end{itemize}
\end{theorem}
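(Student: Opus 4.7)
The plan is to establish the three bullets in sequence: the first two iterate the structural computation already present in the proof of Corollary \ref{coro:power}, while the third assembles Theorem \ref{thm:equivsg} and Lemma \ref{lem:bst} to pin $A_t$ down up to equivalence with $\hat{A}$.

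For the first bullet I induct on $i$, the base $i=1$ being built into the construction of $A_1$. For the inductive step, writing $A_i=(x_{ij})$, compute
\[
(A_i^2)_{uu}=\sum_{t} x_{ut}^2,\qquad (A_i^2)_{uv}=\sum_{t} x_{ut}x_{tv}\quad(u\ne v).
\]
The diagonal polynomial contains the monomial $x_{uu}^2$, whereas no off-diagonal polynomial contains any squared monomial of a diagonal label $x_{ww}$: such a term would force $x_{ut}=x_{tv}=x_{ww}$ for some $t$, and because $A_i$ recognizes vertices an off-diagonal label cannot equal a diagonal one, forcing $t=u$ and $t=v$, contradicting $u\ne v$. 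Since monomials in independent variables do not cancel, diagonal entries of $A_i^2$ are polynomially distinct from off-diagonal entries, and this separation is preserved by the equivalent variable substitution producing $A_{i+1}$.

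For the second bullet I reuse the argument of Corollary \ref{coro:power}: because $A_i$ recognizes vertices, the equality $(A_i^2)_{uv}=(A_i^2)_{rs}$ isolates the cross-terms involving diagonal labels, $x_{uu}x_{uv}+x_{uv}x_{vv}=x_{rr}x_{rs}+x_{rs}x_{ss}$, which forces $x_{uv}=x_{rs}$. Hence $A_i\rightarrowtail A_i^2\approx A_{i+1}$, and Proposition \ref{prop:lehman} yields $\dim(A_i)\le\dim(A_{i+1})$.

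For the third bullet, $\dim(A_t)=\dim(A_{t+1})$ together with $A_t\rightarrowtail A_{t+1}$ from bullet two forces $A_t\approx A_{t+1}\approx A_t^2$. Since $A_t$ recognizes vertices by bullet one, Theorem \ref{thm:equivsg} shows that $A_t$ is itself a stable graph. To conclude $A_t\approx\hat{A}$ I prove both refinements. For $\hat{A}\rightarrowtail A_t$, the SaS chain gives $A\rightarrowtail A_t$; Lemma \ref{lem:bst}(2) then yields $\hat{A}\rightarrowtail\widehat{A_t}$, and $\widehat{A_t}\approx A_t$ by stability of $A_t$. For the opposite direction $A_t\rightarrowtail\hat{A}$, I prove $A_i\rightarrowtail\hat{A}$ by induction on $i$. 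The base case follows from $A\rightarrowtail\hat{A}$ together with the shape of the equivalent substitution: any equality $(A_1)_{ij}=(A_1)_{kl}$ forces both positions to be diagonal or both off-diagonal with corresponding $A$-entries equal, whence $\hat{a}_{ij}=\hat{a}_{kl}$. The inductive step combines $A_{i+1}\approx A_i^2\rightarrowtail\hat{A}^2$, from Lemma \ref{lem:bst}(2) applied to the inductive hypothesis, with $\hat{A}^2\approx\hat{A}$, from Theorem \ref{thm:equivsg} applied to the vertex-recognizing stable graph $\hat{A}$.

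The main obstacle sits in the third bullet: the "upper" refinement $\hat{A}\rightarrowtail A_t$ is essentially free via the functoriality of $\widehat{\,\cdot\,}$, but the "lower" refinement $A_t\rightarrowtail\hat{A}$ has to be propagated squaring by squaring along the SaS chain, and for that propagation to close one needs the characterization $\hat{A}\approx\hat{A}^2$ of stability from Theorem \ref{thm:equivsg}; without this crutch one cannot compare the $2$-step refinement of SaS with the more permissive $\widetilde{\,\cdot\,}$ refinement used to define $\hat{A}$.
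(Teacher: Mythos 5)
Your overall architecture matches the paper's: bullets one and two are handled exactly as in the paper (by re-running the vertex-recognition and cross-term arguments from Theorem \ref{thm:equivsg} and Corollary \ref{coro:power}), and for bullet three you sandwich $A_t$ between $\hat{A}$ from above (via $\hat{A}\rightarrowtail\hat{A_t}\approx A_t$) and from below (via an induction showing $A_i\rightarrowtail\hat{A}$). The paper runs the lower induction through the tower of description graphs ($A_{i+1}\approx A_i^2\rightarrowtail\tilde{A_i}\rightarrowtail\cdots$), whereas you route it through $\hat{A}\approx\hat{A}^2$; both rest on Lemma \ref{lem:bst}(2) and Theorem \ref{thm:equivsg}, so this is only a cosmetic difference.

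There is, however, one step that is wrong as literally written: the base case of your lower induction. To establish $A_1\rightarrowtail\hat{A}$ you must show that $\hat{a}_{ij}=\hat{a}_{kl}$ implies $(A_1)_{ij}=(A_1)_{kl}$, but your sentence argues the converse, starting from $(A_1)_{ij}=(A_1)_{kl}$ and concluding $\hat{a}_{ij}=\hat{a}_{kl}$. Besides being the wrong direction, that final inference is invalid: equality of $A$-entries does not propagate to equality of $\hat{A}$-entries, since $\hat{A}$ strictly refines $A$ in general. The correct argument uses the same ingredients run the right way: from $A\rightarrowtail\hat{A}$ (Lemma \ref{lem:bst}(1)) an equality $\hat{a}_{ij}=\hat{a}_{kl}$ gives $a_{ij}=a_{kl}$; since $\hat{A}$ recognizes vertices (Proposition \ref{prop:shibie}) the two positions are both diagonal or both off-diagonal; off-diagonal entries of $A_1$ coincide with those of $A$, and the equivalent variable substitution on the diagonal preserves equalities among diagonal entries, so $(A_1)_{ij}=(A_1)_{kl}$ in either case. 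With that repair the rest of your induction and the final combination $A_t\approx\hat{A}$ go through.
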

\begin{proof}
 The first step in SaS process is  an equivalent substitution (if necessary) to $A$ to obtain graph $A_1$ such that $A_1$ recognizes the vertices. The proof in Theorem \ref{thm:equivsg} indicates the entries in the diagonal of $A_1^2$ will not appear at any other undiagonal place in $A_1^2$. That is, $A_2$ recognizes vertices. Similarly, $A_i$ recognizes vertices for any $i>1$.

 Since $A_{i+1}\approx A_{i}^2$, one can easily obtain $A_{i}\rightarrowtail A_{i+1}$ with essentially the same argument as that in the proof of Theorem \ref{thm:equivsg}.

To show the last conclusion, we notice that $A_1$ is a result obtained by an equivalent variable substitution to the diagonal labels of $A$ at the first step of SaS process. That tells $A\rightarrowtail A_1\approx I+\ld A$ for $\ld\notin A$, and hence
    $$A\rightarrowtail A_1\approx I+\ld A\rightarrowtail \Gamma(A)\approx\tilde{A}$$
 by Lemma \ref{lem:bst}. That shows $\hat{A}\rightarrowtail\hat{A}_1\rightarrowtail\hat{\tilde{A}}$. That means $\hat{A}\approx\hat{A}_1$ since $\hat{\tilde{A}}\approx\hat{A}$.

From the definition of description graphs and Lemma \ref{lem:bst}, the followings hold
 \begin{align}
A\rightarrowtail A_1\rightarrowtail\tilde{A},\quad
A_2\approx A_1^2 \rightarrowtail\tilde{\tilde{A}},\quad A_3\approx A_2^2 \rightarrowtail\tilde{\tilde{\tilde{A}}},\quad\ldots\,.\label{eq:wlequa}%
\end{align}
Together with the result above, we thus get
$$A\rightarrowtail A_1\rightarrowtail A_2\rightarrowtail\cdots\rightarrowtail A_t\rightarrowtail \cdots\rightarrowtail\hat{A}\,.$$
It shows $\hat{A}_t\approx\hat{A}$ for any $t>0$.

For $\dim(A_t)=\dim(A_{t+1})$, however, it holds that $\hat{A}_t\approx A_t$ by Theorem \ref{thm:equivsg}. We have shown $A_t\approx \hat{A_t}\approx\hat{A}$.

That ends the proof.
\end{proof}

The result states, one may employ SaS process, rather than description graphs, to compute the stable graph to any graph. From $A\rightarrowtail A_1\rightarrowtail A^k\rightarrowtail \hat{A}$, we know matrix-power stabilization produce the equivalent stable graph to any graph $A$.
\begin{corollary}
  The stable graph obtained by matrix-power stabilizations and that obtained by stabilizations of description graphs to a graph are equivalent.%
\end{corollary}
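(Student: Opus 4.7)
The plan is to mimic the three-step structure of the proof of Theorem \ref{thm:equa}, which already handled the case $k=2$, and extend it uniformly to arbitrary $k>1$. Write the matrix-power $k$-stabilization sequence as $A\rightarrowtail A_1$, $A_{i+1}\approx A_i^k$, where $A_1$ is obtained from $A$ by an equivalent variable substitution on the diagonal so that $A_1$ recognizes vertices.

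First I would verify monotonicity under $\rightarrowtail$. Once a graph recognizes vertices, the argument in Corollary \ref{coro:power} gives $A_1\rightarrowtail A_1^2\rightarrowtail\cdots\rightarrowtail A_1^k$, so $A_1\rightarrowtail A_2$. The same argument applies at every stage (each $A_i$ still recognizes vertices, by the diagonal entries of $A_i^k$ being distinguishable from the off-diagonal ones, exactly as observed in Theorem \ref{thm:equa}), so $A_i\rightarrowtail A_{i+1}$ and hence $\dim(A_i)\le\dim(A_{i+1})$. Since the dimension is bounded by $n(n+1)/2$, the sequence reaches a fixed point $A_t$ with $\dim(A_t)=\dim(A_{t+1})$, and at that point $A_t\approx A_t^k$.

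Second I would sandwich the sequence between $A$ and $\hat A$. By Lemma \ref{lem:bst} we have $A_i^k\rightarrowtail \Gamma(A_i)\approx\tilde A_i$, so one step of matrix-power stabilization is finer than one step of description-graph stabilization; combined with $A\rightarrowtail A_1\rightarrowtail\tilde A$ from the proof of Theorem \ref{thm:equa}, induction yields $A\rightarrowtail A_1\rightarrowtail A_2\rightarrowtail\cdots\rightarrowtail A_t$ with $A_t\rightarrowtail\hat A$. Conversely, at the fixed point $A_t\approx A_t^k$ with $k>1$, Theorem \ref{thm:equivsg} tells us that $A_t$ is itself a stable graph, so $\hat A_t\approx A_t$, which together with $\hat A\approx\hat A_t$ (a consequence of $A\rightarrowtail A_t$ followed by stabilization) gives $A_t\approx\hat A$.

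I do not expect any serious obstacle here: once Theorem \ref{thm:equa} is in hand, the only genuinely new ingredient is the step $A_i\rightarrowtail A_i^k$ for $k>2$, which is obtained by iterating the $k=2$ argument from Corollary \ref{coro:power}, and the identification of the fixed point with a stable graph, which is precisely the content of Theorem \ref{thm:equivsg}. The mild nuisance is keeping track that each intermediate $A_i$ continues to recognize vertices (so that the hypothesis of both Corollary \ref{coro:power} and Theorem \ref{thm:equivsg} remains in force), but this is immediate from the form of the diagonal entries of $A_i^k$, just as in the $k=2$ case.
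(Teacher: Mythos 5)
Your proposal is correct and follows essentially the paper's own route: the corollary is an immediate consequence of Theorem \ref{thm:equa} (whose proof you reconstruct for general $k$), namely the chain $A\rightarrowtail A_1\rightarrowtail\cdots\rightarrowtail A_t\rightarrowtail\hat{A}$ obtained from Lemma \ref{lem:bst} and Corollary \ref{coro:power}, together with Theorem \ref{thm:equivsg} applied at the fixed point $A_t\approx A_t^k$ (which for $k>2$ reduces to the $X\approx X^2$ case via $A_t\rightarrowtail A_t^2\rightarrowtail A_t^k\approx A_t$). One wording slip: $A_i^k\rightarrowtail\Gamma(A_i)\approx\tilde{A}_i$ says the description-graph step refines the matrix-power step, not the other way around as you phrase it, but the imbedding you actually write and the conclusion $A_t\rightarrowtail\hat{A}$ you draw from it are in the correct direction.
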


We note that the square-and-substitution process was already mentioned by F{\"{u}}rer in \cite{Furer01} (page 323 in \cite{Furer01}), although there it was not explored in details.

\subsection{Equivalence of SaS and WL in the Partition of Vertices}
The process SaS is formally the same as WL process. We will now confirm that two processes are equivalent in vertices partition. In fact, SaS process is a symmetrization of WL process.

For convenience in presentations, let us introduce unsymmetric  products. For any variables $x,y$, we denote $x\diamond y$ as the uncommutative multiplication such that $x\diamond y=x'\diamond y'$ iff $x=x'$ and $y=y'$. In this way, the commutative dot production and the uncommutative diamond product for two vectors $\alpha:=(x_1,\ldots,x_n)$ and $\beta:=(y_1,\ldots,y_n)$ are, respectively, defined as
\begin{align}
  \alpha\cdot \beta:=\sum_{k=1}^nx_ky_k,\quad\alpha\diamond\beta:=\sum_{k=1}^nx_k\diamond y_k\,.\label{eq:diamond}%
\end{align}

 Let $X$ and $Y$ be two graphs of order $n$, where $\alpha_k$ is the  $k$-th row vectors of $X$ and $\beta_k$ is the $k$'th column vector in $Y$. The matrices multiplication and diamond multiplication are defined as  $XY=(\alpha_i\cdot\beta_j)$ and $X\diamond Y:=(\alpha_i\diamond\beta_j)$, respectively.

A graph $X:=(x_{uv})$  respects \emph{converse equivalent}, if it holds that $x_{uv}=x_{rs}$ iff $x_{vu}=x_{sr}$ in $X$ (cf. \cite{LichterPoSc19}).

The following result reveals the relation of multiplication and diamond product.
\begin{proposition}\label{prop:sqdiomnd}
    Let $X$ be a graph of order $n$. The followings hold.
  \begin{enumerate}
    \item If $X$ respects converse equivalent, the matrix $X\diamond X$ respects converse equivalent.
    \item If $X$ is symmetric, we have $X^2\approx (X\diamond X)+(X\diamond X)^\T$. In fact, if graphs $A,B,C$ satisfy $A\approx X^2$, $B\approx X\diamond X$ and $C\approx B+B^\T$, then $A\approx C$.%
\end{enumerate}
\end{proposition}
\begin{proof}
  Set $X:=(x_{ij})$,   $X^2:=(x^{(2)}_{ij})$ and $X\diamond X:=(z_{ij})$.
  \begin{enumerate}
    \item For any $u,v,r,s\in [n]$, if $z_{uv}=z_{rs}$ then
    $z_{uv}=\sum_{k\in[n]}x_{uk}\diamond x_{kv}=\sum_{k\in[n]}x_{rk}\diamond x_{ks}=z_{rs}$. Equivalently, \begin{align}\label{eq:conv1}
      \mset{x_{uk}\diamond x_{kv}\mid k\in[n]}\equiv\mset{x_{rk}\diamond x_{ks}\mid k\in[n]}\,.
    \end{align}
    Since $X$ respects converse equivalent, it holds that $x_{kv}=x_{ks}$ iff $x_{vk}=x_{sk}$. We have by equation \eqref{eq:conv1} that
    \begin{align}\label{eq:conv2}
     \mset{x_{uk}\diamond x_{vk}\mid k\in[n]}\equiv\mset{x_{rk}\diamond x_{sk}\mid k\in[n]}\,.
    \end{align}
   With the same reason, we further have
    \begin{align}\label{eq:conv3}
     \mset{x_{ku}\diamond x_{vk}\mid k\in[n]}\equiv\mset{x_{kr}\diamond x_{sk}\mid k\in[n]}\,.
    \end{align}
    By the definition of diamond product, that shows  $\mset{x_{vk}\diamond x_{ku}\mid k\in[n]}\equiv\mset{x_{sk}\diamond x_{kr}\mid k\in[n]}$. Hence, $z_{vu}=z_{sr}$.
    \item For any $u,v,r,s\in[n]$, if $z_{uv}+z_{vu}=z_{rs}+z_{sr}$, e.g.,  $$\sum_{k\in[n]}(x_{uk}\diamond x_{kv}+x_{vk}\diamond x_{ku})=\sum_{k\in[n]}(x_{rk}\diamond x_{ks}+x_{sk}\diamond x_{kr})$$
         In multiset form, it is
         \begin{align}\label{eq:conv4}
          \mset{\mset{x_{uk}\diamond x_{kv},x_{vk}\diamond x_{ku}}\mid k\in[n]}\equiv\mset{\mset{x_{rk}\diamond x_{ks},x_{sk}\diamond x_{kr}}\mid k\in[n]}\,.
         \end{align}
        By symmetry of $X$, equation \eqref{eq:conv4} holds iff
        $\mset{x_{uk}x_{kv}\mid k\in[n]}\equiv\mset{x_{rk}x_{ks}\mid k\in[n]}$.
       That is $x_{uv}^{(2)}=x_{rs}^{(2)}$.
       We hence have $X^2\approx (X\diamond X)+(X\diamond X)^\T$ .%

       Let $A, B$ and $C$ be the graphs obtained by equivalent variable substitution to $X^2, X\diamond X$ and $B+B^\T$, respectively. We have $A\approx C$ by \eqref{eq:conv4}.%
  \end{enumerate}
\end{proof}

With these notations, the process of WL may be stated as follows. For any graph $A$ of order $n$, perform an equivalent variable substitution only to the diagonal entries of $A$ to obtain a graph $X_1$, such that $X_1$ recognizes vertices, just as does in SaS process. We then let $X_2$ be the graph obtained by an equivalent variable substitution to $X_1\diamond X_1$.  Proceeding this multiplication and substitution procedure recursively, we get a sequence as follows.
\begin{align}\label{eq:prodsub}
A\rightarrowtail X_1,\quad X_2\approx X_1\diamond X_1,\quad X_3\approx X_2\diamond X_2,\quad \ldots,\quad  X_{i+1}\approx X_i\diamond X_i,\quad\ldots\,.
\end{align}
As shown in \cite{Weisfeiler76}, $X_i$ may be a (partially or totally) oriented graph satisfying $X_i\rightarrowtail X_{i+1}$ and  $X_i\rightarrowtail X_{i+1}^\T$. Remember that $X^\T$ is the transpose of $X$ for any graph $X$.

 We thus know that $\dim(X_i)\le\dim(X_{i+1})$.  A graph $X_t$ with $\dim(X_t)=\dim(X_{t+1})$ in \eqref{eq:prodsub} is called the stable graph of $A$ obtained by WL process, denoted as $\mathtt{WL}(A)$.

By Proposition \ref{prop:sqdiomnd}, it is easy to see from the diamond product that the graph $X_i:=(x_{uv})$ respects \emph{converse equivalent}.

In sequences \eqref{eq:wlseq} and \eqref{eq:prodsub}, it holds that $A_1\approx X_1$ and all $A_i$, $X_i$ recognize vertices for $i>0$.
One can show that $A_i\rightarrowtail X_i$ and $\Diag(A_i)\approx\Diag(X_i)$ for each $i>0$ by Proposition \ref{prop:sqdiomnd} and induction.
Where $\Diag(A_t)$ and $\Diag(X_t)$  denote the diagonals of $A_i$ and $X_i$, respectively. For stable graph $A_t$ obtained  by SaS and stable graph $X_t$ obtained by WL process, we have $\Diag(\hat{A})\approx\Diag(A_t)\approx\Diag(X_t)\approx\Diag(\WL(A))$.

The minimum number of $t$ such that $X_t\approx\mathtt{WL}(A)$ is usually called the iteration number of WL process (similarly to SaS process).
It is claimed by Lehman that $X_t\approx X_t^2$ and $X_t\approx X_t^\T$ for a stable graph $X_t$ (cf. 9.1 on page 17 in \cite{Weisfeiler76}).
 That means $A_t\approx A_t^2$ in this case since $A_t\rightarrowtail X_t$. That indicates whenever $X_t$ is stable graph in \eqref{eq:prodsub}, then  $A_t$ is a stable graph in \eqref{eq:wlseq} by Theorem \ref{thm:equivsg}. We thus have the followings.
 \begin{theorem}\label{thm:saswl} The followings hold.
 \begin{itemize}
   \item The iteration number of SaS process is the same as the iteration number of WL process.
   \item SaS process  and WL process have the same capacity in the partition of vertices to a graph.%
   \item The iteration number of SaS process is in $O(n\log n)$.%
 \end{itemize}
 \end{theorem}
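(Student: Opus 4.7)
The plan is to couple the two sequences $(A_i)$ and $(X_i)$ via the invariant $A_i\approx X_i+X_i^\T$ already hinted at in the preceding paragraphs, and then exploit this coupling together with the symmetrization identity of Proposition \ref{prop:sqdiomnd} and Lehman's stability result cited by the author.

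First I would prove by induction on $i\geq 1$ that $A_i\rightarrowtail X_i$ and $A_i\approx X_i+X_i^\T$. The base case reduces to $A_1\approx X_1$, which holds because both arise from $A$ by an equivalent variable substitution to the diagonal; since $A$ is symmetric, $X_1$ is symmetric and so $X_1+X_1^\T\approx X_1\approx A_1$. For the inductive step, combine $A_{i+1}\approx A_i^2$, $X_{i+1}\approx X_i\diamond X_i$, and Proposition \ref{prop:sqdiomnd} to write
\begin{align}
X_{i+1}+X_{i+1}^\T\approx(X_i\diamond X_i)+(X_i\diamond X_i)^\T\approx X_i^2\,.\notag
\end{align}
A monotonicity argument for $\rightarrowtail$ under ordinary and diamond products (in the spirit of part (2) of Lemma \ref{lem:bst}) then propagates $A_{i+1}\approx X_{i+1}+X_{i+1}^\T$. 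Taking diagonals yields the corollary $\Diag(A_i)\approx\Diag(X_i)$ for every $i$.

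This diagonal equivalence immediately gives the vertex-partition claim: the vertex color classes at every iteration coincide for SaS and WL, so in particular the final stable vertex partitions are equivalent. For the iteration-count equality, the ``$\leq$'' direction (SaS iteration no larger than WL iteration) is the one the author sketches: at WL stability Lehman gives $X_t\approx X_t^2\approx X_t^\T$, so $A_t\approx X_t+X_t^\T\approx X_t$ and hence $A_t^2\approx X_t^2\approx A_t$, which by Theorem \ref{thm:equivsg} makes $A_t$ SaS-stable. For the ``$\geq$'' direction, starting from $A_s\approx A_s^2$, I would argue that the diamond-step produces no strictly finer oriented refinement: any strict split of a class in $X_{i+1}$ relative to $X_i$ must be witnessed symmetrically by $A_i\rightarrowtail A_{i+1}$ via Proposition \ref{prop:sqdiomnd}, so the strict-refinement steps of the two sequences are in bijection.

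Claim (3) then follows by transporting the $O(n\log n)$ upper bound on WL iteration due to Lichter, Ponomarenko and Schweitzer \cite{LichterPoSc19} across the equality just established. The main obstacle is the ``$\geq$'' direction of the iteration equality, since WL ostensibly retains strictly finer oriented information than SaS; the resolution hinges on Proposition \ref{prop:sqdiomnd}, which says that the symmetrized diamond product is information-equivalent to the ordinary square, so WL cannot strictly refine any class once SaS has stabilized.
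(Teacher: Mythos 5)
Your proposal follows essentially the same route as the paper: the coupling invariant $A_i\rightarrowtail X_i$ and $A_i\approx X_i+X_i^\T$ established by induction from $A_1\approx X_1$ and Proposition \ref{prop:sqdiomnd}, the diagonal equivalence $\Diag(A_i)\approx\Diag(X_i)$ for the vertex-partition claim, Lehman's $X_t\approx X_t^2\approx X_t^\T$ at WL-stability to conclude $A_t\approx A_t^2$ and hence SaS-stability via Theorem \ref{thm:equivsg}, and the transport of the $O(n\log n)$ bound from \cite{LichterPoSc19}.

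The one place you go beyond the paper is the ``$\geq$'' direction of the iteration-count equality, and there your argument does not hold up as stated. Proposition \ref{prop:sqdiomnd} says only that $X^2\approx(X\diamond X)+(X\diamond X)^\T$, i.e., the ordinary square recovers the \emph{symmetrization} of the diamond product; it does not say that every strict split performed by the diamond step survives symmetrization. In principle the WL step could separate $x_{uv}$ from $x_{vu}$ within an oriented class while leaving the multiset $\mset{x_{uv},x_{vu}}$-partition, and hence $A_{i+1}$, unchanged, so your claimed bijection between the strict-refinement steps of the two sequences is exactly what would need to be proved, not a consequence of the proposition. That said, the paper's own proof establishes only the forward implication (WL-stable at step $t$ implies SaS-stable at step $t$, giving the SaS iteration number at most the WL iteration number), which suffices for the vertex-partition claim and the $O(n\log n)$ bound; on this point your proposal is no less complete than the paper, but the converse should not be presented as following from Proposition \ref{prop:sqdiomnd}.
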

The last conclusion comes from \cite{LichterPoSc19}, where the upper bound of iteration number of WL process is given as $O(n\log n)$.

It is not difficult to gain the similar conclusions as in Theorem \ref{thm:saswl} for matrix-power $k$-stabilization in \eqref{eq:wlequa} and $k$-walk-refinement process posed in  \cite{LichterPoSc19}.

From Theorem \ref{thm:saswl} the stable graphs obtained by the processes proposed in this work are equivalent in the partitions of vertices to those obtained by WL process.

\section{Some Properties of Stable Graphs}
\label{sec:wdtxz}

We will propose the notion of a strongly equitable partition to a graph at first, and then show the vertex partition of a stable graph is indeed a strongly equitable one. From now on, a stable graph refers the stable graph obtained by SaS process without explicit explanation. Whenever a stable graph obtained by WL process is used, we will mention it clearly.

A partition of a set $[n]$ is a collection of disjoint nonempty subsets of $[n]$ that as a whole exactly covers $[n]$. A vertex partition of a labeled graph $A$ over $[n]$ is a partition $\mathcal{C}:=(C_1,C_2,\ldots,C_p)$ of $[n]$ such that all the vertices that possess the same label are in the same subset.  Each subset $C_u$ is called a cell of $\mathcal{C}$.

A vertex partition of $A$ is equitable if the set of labels on all edges between a vertex $i$ in a cell $C_u$ with all vertices from a cell $C_v$ is independent of the vertex $i$ from $C_u$. A strongly equitable partition  requires further that the labels between different pairs of cells are completely different.
The strongly equitable partition of a graph is an enhancement of the equitable partition appeared in the literatures. It is formally defined as follows.

\begin{definition}[Strongly equitable partitions]
 Let $\mathcal{C}=(C_1,C_2,\ldots,C_p)$ be a vertex partition to a graph $A$ of order $n$. Assume $C_u=\{u_1,\ldots, u_r\}$, $C_v=\{v_1,\ldots, v_s\}$, and $C_w=\{w_1,\ldots,w_t\}$ for  $u,v,w\in[p]$.
\begin{itemize}
  \item If for all $u,v\in[p]$, it holds that $\sum_{k=1}^sa_{u_iv_k}=\sum_{k=1}^sa_{u_jv_k}$ and $\sum_{k=1}^ra_{u_kv_i}=\sum_{k=1}^ra_{u_kv_j}$. Equivalently,
      $$\mset{a_{u_iv_k}\mid k\in[s]}\equiv\mset{a_{u_jv_k}\mid k\in[s]}\quad \text{and}\quad \mset{a_{u_kv_i}\mid k\in[r]}\equiv\mset{a_{u_kv_j}\mid k\in[r]},$$
      for all $u_i,u_j\in C_u, v_i,v_j\in C_v$, then $\mathcal{C}$ is said to be \emph{an equitable partition}.
  \item If $\C$ is an equitable partition of $A$ and moreover,  for all $u,v,w\in[p]$, $v\ne w$, it holds that
      $$\mset{a_{u_1v_k}\mid k\in[s]}\cap\mset{a_{u_1w_k}\mid k\in[t]}=\mset{a_{v_ku_1}\mid k\in[s]}\cap\mset{a_{w_ku_1}\mid k\in[t]}=\emptyset\,.$$
      then $\mathcal{C}$ is said to be \emph{a strongly equitable partition}.%
\end{itemize}
\end{definition}

The following conclusions are similar to those appeared in Lehman \cite{Weisfeiler76}, and are shown here only for stable graphs obtained by SaS process. They are also valid to the stable graphs obtained by WL process.

From now on in this section, all stable graphs are obtained by SaS process except explicitly mentioned in the context. The properties about stable graphs obtained by WL process are only used once in the proof of Lemma \ref{lem:tech}. In appendix, we give an example as illustration (cf. page \pageref{gX}.)
\begin{proposition}\label{prop:wdtxzhi}
  Let $X=(m_{ij})$ be a stable graph obtained by SaS process over $[n]$. It holds that
  \begin{enumerate}
  \item\label{prop:wdtxzhi2} The labels on two vertices are equal if and only if the two rows are equivalent. That is,\\  $m_{uu}=m_{vv}\Leftrightarrow\mset{m_{uk}\mid k\in[n] }\equiv\mset{m_{vk}\mid k\in[n]}$ for all $u,v\in[n]$.
    \item\label{prop:wdtxzhi4}  If $m_{su}=m_{tv}$,  then  $\mset{m_{uu},m_{ss}}\equiv\mset{m_{vv},m_{tt}}$.
        Specially,  $m_{su}=m_{sv}$ implies $m_{uu}=m_{vv}$  for all  $u,v,s,t\in[n]$.%
  \end{enumerate}
\end{proposition}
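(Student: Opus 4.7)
The plan is to leverage the key structural property $X\approx X^2$, which holds for every stable graph $X$ by Theorem \ref{thm:equivsg}, together with the fact (Proposition \ref{prop:shibie}) that stable graphs recognize vertices. Both halves of the proposition will follow by reading off which entries of $X^2$ can contain vertex labels and matching them on both sides of an equality of polynomials in independent formal variables.

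For part (\ref{prop:wdtxzhi2}), I would handle the two directions separately. For the forward direction, assume $m_{uu}=m_{vv}$. Since $X\approx X^2$, I get $m^{(2)}_{uu}=m^{(2)}_{vv}$. Using the symmetry of $X$, these diagonal entries are $\sum_{k}m_{uk}^{2}$ and $\sum_{k}m_{vk}^{2}$ respectively. Because all entries of $X$ are independent formal variables, equality of these two polynomials forces equality of the multisets of squared monomials, which in turn forces $\mset{m_{uk}\mid k\in[n]}\equiv\mset{m_{vk}\mid k\in[n]}$. For the converse, multiset equality of the two rows means $m_{uu}$ must appear as some entry $m_{vk_0}$ of row $v$; since $m_{uu}$ is a vertex label and, by Proposition \ref{prop:shibie}, vertex labels do not overlap with off-diagonal labels, the only option is $k_0=v$, giving $m_{uu}=m_{vv}$.

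For part (\ref{prop:wdtxzhi4}), first dispose of the trivial case $s=u$: then $m_{su}=m_{ss}$ is a vertex label, so $m_{tv}$ is a vertex label, which by recognition forces $t=v$ and $m_{ss}=m_{tt}$, and the multiset identity is immediate. In the nontrivial case $s\neq u$ (hence also $t\neq v$ by the same recognition argument), apply $X\approx X^2$ to conclude $m^{(2)}_{su}=m^{(2)}_{tv}$. Writing these out (using symmetry) as $\sum_k m_{sk}m_{uk}$ and $\sum_k m_{tk}m_{vk}$, polynomial equality in independent variables yields the multiset identity
\[
\mset{\mset{m_{sk},m_{uk}}\mid k\in[n]}\;\equiv\;\mset{\mset{m_{tk},m_{vk}}\mid k\in[n]}.
\]
The key observation is then: a pair $\mset{m_{sk},m_{uk}}$ contains a vertex label only when $k=s$ or $k=u$ (because off-diagonal entries carry no vertex labels), so the only pairs on the left containing vertex labels are $\mset{m_{ss},m_{su}}$ and $\mset{m_{su},m_{uu}}$, and similarly on the right the only such pairs are $\mset{m_{tt},m_{tv}}$ and $\mset{m_{tv},m_{vv}}$. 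Substituting $m_{su}=m_{tv}$ and matching these two pairs on each side gives $\mset{m_{uu},m_{ss}}\equiv\mset{m_{vv},m_{tt}}$. The special case $m_{su}=m_{sv}$ is just $t=s$, which immediately yields $m_{uu}=m_{vv}$.

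The main obstacle I anticipate is justifying the ``pairs-containing-vertex-labels'' isolation step cleanly: one must be sure that no off-diagonal entry accidentally coincides with a vertex label (which is precisely Proposition \ref{prop:shibie}) and that $s\neq u$ prevents a single pair from containing two vertex labels. Once this bookkeeping is set up, everything else is formal polynomial matching in independent variables.
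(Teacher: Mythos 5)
Your proposal is correct and follows essentially the same route as the paper: both parts reduce to the corresponding entries of $X^2$ via the stability of $X$ (the paper invokes the Undistinguishable Property $\mathbf{U}$, which is the same fact as $X\rightarrowtail X^2$), and then isolate the terms containing vertex labels using Proposition \ref{prop:shibie}. Your explicit handling of the degenerate case $s=u$ is a small tidiness improvement over the paper's write-up, but the argument is otherwise identical.
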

\begin{proof} We will show them one by one. Remember that all entries in $X$ are variables in $\Var$.

\begin{enumerate}
\item If $\mset{m_{uk}\mid k\in[n]}\equiv\mset{m_{vk}\mid k\in[n]}$, then  $m_{uu},m_{vv}\in \mset{m_{vk}\mid k\in[n]}$. The labels on vertices will only appear at the diagonal of $\hat{A}$ by Proposition \ref{prop:shibie}, that means $m_{uu}=m_{vv}$.

On the other side, if $m_{uu}=m_{vv}$, the Undistinguishable Property $\mathbf{U}$ tells that $m_{uu}^{(2)}=m_{vv}^{(2)}$ in $X^2:=(m^{(2)}_{ij})$. Since $X$ is symmetric, we then get
\begin{align}\label{eq:hxd}
 \sum_{k=1}^n m_{uk}^2=\sum_{k=1}^n m_{uk}m_{ku}=m^{(2)}_{uu}=m^{(2)}_{vv}=\sum_{k=1}^n m^2_{vk}\,.%
\end{align}
Equivalently,
$$\mset{m_{uk}^2\mid k\in[n]}\equiv\mset{ m^2_{vk}\mid k\in[n]}\,.$$
It forces $\mset{m_{uk}\mid k\in[n]}\equiv\mset{m_{vk}\mid k\in[n]}$, since  $m_{uk}, m_{vk}$ are all variables. We thus show that  $m_{uu}=m_{vv}$ iff the rows $u$ and $v$ are equivalent.

\item For all $u,v,s,t\in[n]$, $m_{su}=m_{tv}$ implies $m^{(2)}_{su}=m^{(2)}_{tv}$ in $X^2:=(m^{(2)}_{ij})$ again by property $\mathbf{U}$. That is
    $$\sum_{k=1}^nm_{sk}m_{ku}=m_{su}^{(2)}=m_{tv}^{(2)}=\sum_{k=1}^nm_{tk}m_{kv}.$$
  To write those terms that include vertices labels separately from above, it becomes
  $$m_{ss}m_{su}+m_{su}m_{uu}+\sum_{\stackrel{k\ne s,\,u}{k\in[n]}}m_{sk}m_{ku}=m_{tt}m_{tv}+m_{tv}m_{vv}+\sum_{\stackrel{k\ne t,\, v}{k\in[n]}}m_{tk}m_{kv}.$$
  Since vertices labels only appear on the diagonal of $X$ and all $m_{ij}$ are variables, we have
  $$m_{ss}m_{su}+m_{su}m_{uu}=m_{tt}m_{tv}+m_{tv}m_{vv}.$$
  Since $m_{su}=m_{tv}$, it must be $\mset{m_{uu},m_{ss}}\equiv\mset{m_{vv},m_{tt}}$\,.%
\end{enumerate}
That ends the proof. \end{proof}

The vertex partition of a stable graph $\hat{A}$ is called \emph{the stable partition} of $A$. Since all stable graphs for a graph are all equivalent, the stable partition for a graph is well-defined. The first conclusion in the following properties is very important for our explorations.

\begin{theorem}\label{thm:jhhf}  The vertex partition $\mathcal{C}$ of a stable graph $X$ obtained by SaS process has the following properties.
\begin{enumerate}
  \item The stable partition is a strongly equitable partition.
  \item If a cell $C=\{u\}$ is a singleton, then it holds that  $m_{uv_1}=\cdots=m_{uv_s}$ for any cell $C_v=\{v_1,\ldots, v_s\}$ in the partition.
  \item\label{coro:ysb2} For  two cells $C_u,C_v$, in graph $A$, the number of neighbors in $C_v$ of a vertex from  $C_u$ is independent of the vertex chosen from $C_u$. That is, the stable partition induces an equitable partition of $A$.%
\end{enumerate}
\end{theorem}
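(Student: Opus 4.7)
The proof will rest almost entirely on Proposition~\ref{prop:wdtxzhi}, whose two parts are exactly tailored to this situation: part~(1) says that vertices in the same cell have row-equivalent neighborhoods in $X$, and part~(2) says that the cell memberships of the endpoints of an edge are determined by the edge label. The plan is to bootstrap ``row equivalence'' into ``cell-by-cell row equivalence'' by feeding Proposition~\ref{prop:wdtxzhi}(2) back into Proposition~\ref{prop:wdtxzhi}(1), and then derive the three claimed consequences.

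I would begin with the equitable half of claim~1. Fix cells $C_u$ and $C_v$ and two vertices $u_i,u_j\in C_u$. Since $u_i,u_j$ lie in the same cell we have $m_{u_iu_i}=m_{u_ju_j}$, so Proposition~\ref{prop:wdtxzhi}(\ref{prop:wdtxzhi2}) yields the global row equivalence
\[
\mset{m_{u_ik}\mid k\in[n]}\equiv\mset{m_{u_jk}\mid k\in[n]}.
\]
Matching terms of this multiset equality gives a bijection $\pi$ on $[n]$ with $m_{u_ik}=m_{u_j\pi(k)}$ for every $k$. Now apply Proposition~\ref{prop:wdtxzhi}(\ref{prop:wdtxzhi4}) to each such pair: from $m_{u_ik}=m_{u_j\pi(k)}$ we obtain $\mset{m_{u_iu_i},m_{kk}}\equiv\mset{m_{u_ju_j},m_{\pi(k)\pi(k)}}$, and combined with $m_{u_iu_i}=m_{u_ju_j}$ this forces $m_{kk}=m_{\pi(k)\pi(k)}$. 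Hence $\pi$ preserves cells, so the global row equivalence restricts to each cell: $\mset{m_{u_iv_k}\mid v_k\in C_v}\equiv\mset{m_{u_jv_k}\mid v_k\in C_v}$, which is exactly the equitable condition.

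For the ``strongly'' part of claim~1, I would argue by contradiction. Suppose for $v\ne w$ some label $x$ occurs both in the multiset $\mset{m_{u_1v_k}\mid v_k\in C_v}$ and in $\mset{m_{u_1w_\ell}\mid w_\ell\in C_w}$, say $m_{u_1v_k}=m_{u_1w_\ell}=x$. Proposition~\ref{prop:wdtxzhi}(\ref{prop:wdtxzhi4}) applied to this equation gives $\mset{m_{u_1u_1},m_{v_kv_k}}\equiv\mset{m_{u_1u_1},m_{w_\ell w_\ell}}$, hence $m_{v_kv_k}=m_{w_\ell w_\ell}$, placing $v_k$ and $w_\ell$ in the same cell and contradicting $v\ne w$. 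The symmetric statement for the columns follows identically, since $X$ is symmetric. Claim~2 is then immediate: with $C=\{u\}$ the singleton assumption means $u$ is the only vertex with its diagonal label, so for any $v_i,v_j\in C_v$ the cell-restricted equitable identity just proved gives $\mset{m_{v_iu}}\equiv\mset{m_{v_ju}}$, i.e.\ $m_{uv_i}=m_{uv_j}$.

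Claim~3 is obtained by pushing the equitable identity back to $A$. By Proposition~\ref{prop:shibie}, the stable graph $\hat A=X$ recognizes the edges of $A$, which means the set of labels appearing on edges of $A$ is disjoint from the set of labels appearing on the blank edges of $A$. Consequently, for each vertex $u_i\in C_u$ and each cell $C_v$, the number of elements of $\mset{m_{u_iv_k}\mid v_k\in C_v}$ that correspond to non-blank entries of $A$ is a function of the multiset alone. The equitable identity from claim~1 then shows this count is the same for every $u_i\in C_u$, and this count is precisely the number of $A$-neighbors of $u_i$ in $C_v$. The main subtlety throughout is the step that promotes global row equivalence to cell-respecting row equivalence; once Proposition~\ref{prop:wdtxzhi}(\ref{prop:wdtxzhi4}) is invoked to show the matching bijection preserves diagonal labels, everything else is bookkeeping.
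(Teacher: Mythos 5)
Your proof is correct and uses exactly the same ingredients as the paper's (Proposition \ref{prop:wdtxzhi} parts (1) and (2) for claims 1--2, and Proposition \ref{prop:shibie} plus edge recognition for claim 3); the only cosmetic difference is the order of operations, since the paper first establishes the disjointness of labels toward different cells and then uses that to cut the global row equivalence into cell-wise pieces, whereas you obtain the cell-wise equivalence directly by showing any matching bijection must preserve diagonal labels. Both reductions are the same fact in different clothing, so this is essentially the paper's argument.
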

\begin{proof} Let the stable graph $X:=(m_{ij})$ be over $[n]$. To show the stable partition is a strongly equitable partition, we notice that given a cell $C$ from $\mathcal{C}$, it should be  $m_{uu}=m_{vv}$ for all $u,v\in C$. For all vertices $s,t$ from different cells, it holds that  $m_{ss}\ne m_{tt}$. It implies $\mset{m_{uu},m_{ss}}\not\equiv\mset{m_{vv},m_{tt}}$. That in turn implies $m_{us}\ne m_{vt}$ by Proposition \ref{prop:wdtxzhi}. This fact indicates that the edges of vertices in $C$ connecting vertices from two different cells   possess completely different labels.

Formally, let $C_u:=\{u_1,\ldots,u_r\},C_v:=\{v_1,\ldots,v_s\}$ and $C_w:=\{w_1,\ldots,w_t\}$ be cells in  $\mathcal{C}$, and  $C_v\ne C_w$. The argument as above claims  $m_{u_iv_j}\ne m_{u_kw_\ell}$ for all $i,k\in[r],j\in[s],\ell\in[t]$. That tells
\begin{align}\label{eq:jhk}
  \mset{m_{u_iv_j}\mid j\in[s]}\cap\mset{m_{u_k w_\ell}\mid \ell\in[t]}=\emptyset,
\end{align}
for all $i,k\in[r]$.

The Proposition \ref{prop:wdtxzhi} tells
$\mset{m_{u_ix}\mid x\in[n]}\equiv\mset{m_{u_kx}\mid x\in[n]}$
for all $i,k\in[r]$.  The facts together give
\begin{align}\label{eq:cellxd}
 \mset{m_{u_iv_j}\mid j\in[s]}\equiv\mset{m_{u_kv_j}\mid j\in[s]}.
\end{align}
From \eqref{eq:jhk} and \eqref{eq:cellxd}, the stable partition is a strongly equitable one.

The remaining two conclusions come from the results of equation \eqref{eq:cellxd}, Proposition \ref{prop:shibie} and the fact that stable graphs recognize edges. The details of arguments are omitted.

\end{proof}

It should be stressed that the conclusions in Theorem
\ref{thm:jhhf} might not be valid for graphs other than stable graphs. Hence, the main result concluded later in the context will essentially depend on the properties of stable graphs.

 It is intuitively true from the construction of the stable graphs that only if the vertices possess the same label, they might be  sent one to another by an automorphism of the graph. That means that a cell in a stable partition is a union of some orbits of the automorphism group of the graph. The next theorem makes it explicit.

Let  $Y$ be a stable graph of order $n$. We assume the vertices in each cell appear consecutively in $[n]$ and $Y$ is partitioned into $t$ blocks on diagonal, where $t$ is the number of cells in the stable partition of $Y$.  Formally, let $Y:=(Y_{ij})$ be a $t\times t$ block matrix such that each $Y_{ii}$ has the same entries on the diagonal and different entries for different $i$. With such an assumption, we claim the followings.
\begin{theorem}\label{thm:setwisest}
  For a stable graph $Y$ described as above and any real block  matrix $X$ with the same block partition as $Y$, if $XY=YX$, then $X$ will be a diagonal block matrix satisfying: for all $i,j\in[t]$,
  \begin{align}\label{eq:commute}
  X_{i,j}&=\mathbf{0}\quad \text{for $i\ne j$}\,,& X_{ii}Y_{ij}&=Y_{ij}X_{jj}\,.%
  \end{align}
\end{theorem}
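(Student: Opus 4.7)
The plan is to exploit the fact that the entries of $Y$ are formal variables, and in particular that by Proposition \ref{prop:shibie} the vertex label appearing on the diagonal of each block $Y_{ii}$ is a single variable $y_i$, these $y_i$ are distinct for different $i$ (since different cells have different vertex labels in the stable partition), and each $y_i$ occurs in $Y$ \emph{only} at the diagonal positions inside the block $Y_{ii}$. The equation $XY = YX$ is therefore an identity of linear polynomials in the variables of $Y$, and one can extract information about $X$ by comparing coefficients of individual variables on both sides.

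First I would prove the block-diagonality $X_{ij}=\mathbf{0}$ for $i\ne j$. Fix a cell index $\alpha\in[t]$ and indices $a,b\in[n]$, and expand
\begin{align*}
(XY)_{ab}=\sum_{c=1}^n x_{ac}\,y_{cb},\qquad (YX)_{ab}=\sum_{c=1}^n y_{ac}\,x_{cb}.
\end{align*}
Because $y_\alpha$ occurs in $Y$ only at positions $(c,c)$ with $c$ in cell $\alpha$, the coefficient of $y_\alpha$ in $(XY)_{ab}$ is $x_{ab}$ if $b$ lies in cell $\alpha$ and $0$ otherwise, while the coefficient of $y_\alpha$ in $(YX)_{ab}$ is $x_{ab}$ if $a$ lies in cell $\alpha$ and $0$ otherwise. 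Equating these coefficients (legitimate since the $y_\alpha$ are pairwise distinct variables disjoint from the edge labels) forces $x_{ab}=0$ whenever $a$ and $b$ belong to different cells, which is precisely $X_{ij}=\mathbf{0}$ for $i\ne j$.

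With $X=\mathrm{diag}(X_{11},\ldots,X_{tt})$ established, the second relation is essentially immediate: taking the $(i,j)$-block of $XY=YX$ yields $X_{ii}Y_{ij}$ on the left and $Y_{ij}X_{jj}$ on the right, so $X_{ii}Y_{ij}=Y_{ij}X_{jj}$ for all $i,j\in[t]$.

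The only delicate step is the coefficient-extraction in the first part, and it rests entirely on two clean facts already in hand: vertex labels never collide with edge labels (Proposition \ref{prop:shibie}), and the diagonal of $Y_{ii}$ is a single repeated variable differing from cell to cell. Strong equitability (Theorem \ref{thm:jhhf}) is not required for this theorem; what one really uses is only the vertex-label structure of a stable graph.
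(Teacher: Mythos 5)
Your proof is correct, but it follows a genuinely different route from the paper's. The paper works directly with the block identity $\sum_{k}X_{ik}Y_{kj}=\sum_{k}Y_{ik}X_{kj}$ and kills the off-diagonal blocks of $X$ by invoking the \emph{strongly} equitable partition of $Y$: the variable sets of $Y_{kj}$ and $Y_{ik'}$ are disjoint for disjoint index pairs, so the column-$j$ blocks appearing on the left cannot be matched by anything on the right, which is argued to force $X_{ik}=\mathbf{0}$ for $k\notin\{i,j\}$. You instead extract only the coefficients of the vertex labels $y_\alpha$, using nothing beyond Proposition \ref{prop:shibie} (vertex labels occur only on the diagonal) and the standing hypothesis that each diagonal block carries a single vertex label, distinct across blocks. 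This buys two things. First, as you note, strong equitability is not actually needed, so your argument applies to any symmetric matrix with that vertex-label structure. Second, it is sharper at the entry level: since $y_\alpha$ occurs at most once in any given column of $Y$, the coefficient of $y_\alpha$ in $(XY)_{ab}$ is literally the single real number $x_{ab}$, whereas in the paper's edge-label bookkeeping a variable can repeat inside a block $Y_{kj}$, so the coefficient of that variable in an entry of $X_{ik}Y_{kj}$ is a priori a \emph{sum} of entries of $X_{ik}$, and an extra step is needed to pass from vanishing coefficient sums to vanishing entries; your diagonal-variable trick sidesteps this entirely. The second half of the argument, reading off $X_{ii}Y_{ij}=Y_{ij}X_{jj}$ once block-diagonality is established, is identical in both proofs.
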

\begin{proof} Let  $Y:=(Y_{ij}), X:=(X_{ij})$ both be $t\times t$ block matrices such that $Y_{ij}$ and $X_{ij}$ are $n_i\times n_j$ blocks. We have
\begin{align}
 XY&=(Z_{ij})\,,& Z_{ij}&=\sum_{k=1}^tX_{ik}Y_{kj}\,,&
 YX&=(\bar{Z}_{ij})\,,&\bar{Z}_{ij}&=\sum_{k=1}^tY_{ik}X_{kj}\,.
\end{align}
 $XY=YX$ implies  $Z_{ij}=\bar{Z}_{ij}$ for all $i,j\in [t]$. That is
\begin{align}\label{eq:setwise}
 \sum_{k=1}^tX_{ik}Y_{kj}=\sum_{k=1}^tY_{ik}X_{kj}.
\end{align}
The properties of stable graph $Y$ guarantee that $Y_{ij}\cap Y_{uv}=\emptyset$ for all $i,j,u,v\in[t]$ and $\mset{i,j}\cap\mset{u,v}=\emptyset$. That, together with the fact that all entries in $Y$ are independent variables in $\Var$, concludes $X_{ik}=\mathbf{0}$ and $X_{kj}=\mathbf{0}$ for  $i\ne j$ and $k\notin\{i,j\}$ from \eqref{eq:setwise} by the strongly equitable partition of $Y$. The equation \eqref{eq:setwise} hence becomes
\begin{align}
  X_{ii}Y_{ij}=Y_{ij}X_{jj}.\notag{}%
\end{align}
The equation also holds when $i=j$. That completes the proof of the theorem.  \end{proof}

 If $X$ is a permutation matrix in Theorem \ref{thm:setwisest}, the relation $XY=YX$ tells that $X$ is an automorphism of $Y$. The conclusion above tells that each cell of stable partition is setwise fixed by automorphisms. It seems that the last set of equations in \eqref{eq:commute} is frequently ignored or not applied in previous works.  The following conclusion is straightforward from Theorem \ref{thm:setwisest}.

\begin{corollary}
  If all cells in the stable partition of a graph are singletons, then the automorphism group of the graph is the unit group.%
\end{corollary}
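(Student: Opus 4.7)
The plan is to invoke Theorem \ref{thm:setwisest} directly, specializing to the case where every diagonal block has size $1 \times 1$. The key observation is that an automorphism of the original graph $A$ is, by Lemma \ref{lem:bst}, also an automorphism of its stable graph $\hat{A}$, so it suffices to show $\Aut(\hat{A})$ is trivial under the singleton-cell hypothesis.

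First I would let $\sigma \in \Aut(A) = \Aut(\hat{A})$ and let $P$ be the corresponding permutation matrix, so that $P\hat{A}P^\T = \hat{A}$, equivalently $P\hat{A} = \hat{A}P$. I would then arrange the vertices of $\hat{A}$ so that its stable partition exhibits the block structure required by Theorem \ref{thm:setwisest}; since every cell is a singleton by hypothesis, $\hat{A}$ is partitioned into $n$ blocks, each of size $1 \times 1$, and $P$ inherits the same $1 \times 1$ block partition.

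Next I would apply Theorem \ref{thm:setwisest} to the commuting pair $P$ and $\hat{A}$. The theorem forces $P_{ij} = \mathbf{0}$ for all $i \neq j$, so $P$ is a diagonal matrix. The only diagonal permutation matrix is the identity, hence $\sigma$ is the identity permutation. Since $\sigma$ was arbitrary, $\Aut(A)$ is the trivial group.

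There is essentially no obstacle here; the corollary is a one-line specialization of Theorem \ref{thm:setwisest} once one observes that a $1 \times 1$-blocked permutation matrix with all off-diagonal blocks zero must be the identity. The only point warranting care is the reordering of vertices to match the block-diagonal convention of Theorem \ref{thm:setwisest}, but this is a harmless row-column permutation that does not affect the automorphism group.
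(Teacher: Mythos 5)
Your proof is correct and follows exactly the route the paper intends: the paper states the corollary as an immediate consequence of Theorem \ref{thm:setwisest}, having just remarked that a permutation matrix $P$ with $P\hat{A}P^\T=\hat{A}$ commutes with $\hat{A}$ and is therefore block-diagonal, which with singleton cells forces $P=I$. Your additional care about $\Aut(A)=\Aut(\hat{A})$ via Lemma \ref{lem:bst} and about the harmless row-column reordering matches the paper's conventions.
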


\section{Binding Graphs and Graph-Isomorphism Completeness}\label{sec:bdt}

We propose a new class of graphs in which any graph  has unique correspondence up to isomorphism. For a graph $A$ of order $n$, we add a new vertex for each pair of vertices of $A$ and make it adjacent only to the two vertices. Totally  $n(n-1)/2$ new vertices will be added and the result graph will have $n_1:=n+n(n-1)/2=n(n+1)/2$ vertices. That graph is called a binding graph of $A$. Formal definition is given as follows.
\begin{definition}[Binding graphs]
  A simple graph $B$ over $[n_1]$ is called \emph{a binding graph} if for each pair of vertices $u,v\in [n]$, there exists unique $p\in [n+1..n_1]$ of degree $2$  adjacent to both $u$ and $v$. In latter case, we write $p:=u\dot{\wedge}v$ and say  vertex $p$ binds vertices $u$ and $v$. Both $(u,p)$ and $(v,p)$ are called binding edges.

  The subgraph $A$ of $B$ induced by vertices $[n]$ is called the \emph{basic graph} of $B$. We also say $B$ is a binding graph of $A$. The vertices in $[n]$ (and the edges between them) of $B$ are called basic vertices (and basic edges), while the vertices in $[n+1..n_1]$ the binding vertices.%
\end{definition}

In a binding graph,  all binding vertices are of degree $2$, and different binding vertices will bind different pairs of basic vertices. In any binding graph, a basic vertex will have a degree of more than $2$ if $n>3$, and that is true for a connected simple basic graph if $n>2$.

We  presume here in binding graphs that the first $n$ vertices are basic vertices, which is for the purpose of simplification in presentation and not necessary in general. Also, the binding graphs can be defined for general graphs rather than simple graphs here. However, the definition given here is enough for our purpose.

When constructing the binding graph from a simple graph $A$, it needs to assure that the label to binding edges is the same as the label to edges of $A$ in order to obtain a simple binding graph. Also, it is easy to find that the binding graph for a simple graph $A$ of order $n$ can be uniquely constructed up to the renaming of binding vertices, provided $n>2$.

For the simplicity of descriptions, we sometimes mention, in a risk of abusing, the binding vertices and binding edges in graphs (such as in $\WL(B)$) related to binding graph $B$  refer to the vertices (and edges) corresponding to binding vertices (and binding edges) of $B$. Similarly for basic vertices and basic edges.
We claim that the class of binding graphs is graph-isomorphism complete.

\begin{theorem}\label{thm:complete}
  For two connected simple graphs $A_1$ and $A_2$ of order $n$ $(n>2)$ with the same label on edges in two graphs, let $B_1$ and $B_2$ be their binding graphs, respectively, with the same label on edges as in $A_1$ and $A_2$. We have the following conclusions.
  \begin{enumerate}
    \item\label{thm:complete-1}    $A_1\cong A_2$ if and only if $B_1\cong B_2$.
    \item\label{thm:complete-0} $\Aut(A_1)\cong\Aut(B_1)$.
    \item\label{thm:complete-2}  Any basic vertex in $B_1$ will never share the same orbit with any binding vertex in $\Aut(B_1)$. We thus name an orbit consisting of basic vertices as a basic orbit, and that consisting  of binding vertices as a binding orbit.
    \item\label{thm:complete-3} A basic orbit in  $\Aut(B_1)$ is an orbit in $\Aut(A_1)$, and vice versa.%
  \end{enumerate}
\end{theorem}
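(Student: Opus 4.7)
The foundation of the whole theorem is a degree count. In $B_1$, each binding vertex has degree exactly $2$ by definition, while each basic vertex $u$ has degree $\deg_{A_1}(u) + (n-1)$; since $A_1$ is connected and $n>2$ we have $\deg_{A_1}(u)\ge 1$ and $n-1\ge 2$, so every basic vertex has degree at least $n>2$. Therefore any isomorphism $\psi\colon B_1\to B_2$, and in particular any $\psi\in\Aut(B_1)$, must send basic vertices to basic vertices and binding vertices to binding vertices. This is already statement~(\ref{thm:complete-2}), and it is the workhorse for the remaining parts.

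For (\ref{thm:complete-1}), I will set up a bijection between isomorphisms $A_1\to A_2$ and isomorphisms $B_1\to B_2$. Given $\phi\colon A_1\to A_2$, define $\tilde\phi$ on $B_1$ by $\tilde\phi(u):=\phi(u)$ for a basic vertex $u$ and $\tilde\phi(u\dw v):=\phi(u)\dw\phi(v)$ for a binding vertex. The target binding vertex exists and is unique by the very definition of binding graphs, and $\{u,v\}\mapsto\{\phi(u),\phi(v)\}$ is a bijection of unordered pairs, so $\tilde\phi$ is a bijection on vertices. It preserves adjacency and labels because basic edges correspond under $\phi$, the two binding edges incident to $u\dw v$ are sent to the two binding edges incident to $\phi(u)\dw\phi(v)$, and neither graph has any edge between two binding vertices. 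Conversely, given $\psi\colon B_1\to B_2$, the first paragraph says $\psi$ restricts to a bijection $\phi$ of the basic vertex sets; since basic edges are exactly the edges between basic vertices in each $B_i$, this restriction is an isomorphism $A_1\to A_2$.

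Statement~(\ref{thm:complete-0}) then drops out of (\ref{thm:complete-1}) applied to $A_1=A_2$. The map $\phi\mapsto\tilde\phi$ is a group homomorphism $\Aut(A_1)\to\Aut(B_1)$ because $\widetilde{\phi\chi}(u\dw v)=(\phi\chi)(u)\dw(\phi\chi)(v)=\tilde\phi(\tilde\chi(u\dw v))$. Injectivity is clear from $\tilde\phi|_{[n]}=\phi$. For surjectivity, take $\psi\in\Aut(B_1)$; its restriction $\phi$ to the basic vertices lies in $\Aut(A_1)$ by the converse direction of (\ref{thm:complete-1}). Now $\psi(u\dw v)$ is a binding vertex adjacent to both $\phi(u)$ and $\phi(v)$, and the defining uniqueness clause of binding graphs forces $\psi(u\dw v)=\phi(u)\dw\phi(v)=\tilde\phi(u\dw v)$, so $\psi=\tilde\phi$. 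I expect this uniqueness-based step to be the delicate one; everything else is a direct verification, but this is where the particular construction of binding graphs (one binding vertex per pair of basic vertices) is essential.

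Finally (\ref{thm:complete-3}) follows at once. The isomorphism in (\ref{thm:complete-0}) extends each $\phi\in\Aut(A_1)$ to a $\tilde\phi\in\Aut(B_1)$ that acts on the basic vertices exactly as $\phi$ does. Combined with (\ref{thm:complete-2}), which says that no basic vertex can share an $\Aut(B_1)$-orbit with a binding vertex, we conclude: basic vertices $u,v$ lie in a common $\Aut(A_1)$-orbit iff some $\phi\in\Aut(A_1)$ sends $u$ to $v$ iff $\tilde\phi\in\Aut(B_1)$ sends $u$ to $v$ iff $u,v$ lie in a common $\Aut(B_1)$-orbit. Hence the basic orbits under the two automorphism groups coincide.
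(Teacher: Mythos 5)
Your proof is correct and follows essentially the same route as the paper: the same extension $\phi\mapsto\tilde\phi$ via uniqueness of binding vertices for one direction, and the same degree count (basic vertices have degree at least $n>2$ while binding vertices have degree $2$) for the converse. The paper explicitly proves only part~(1) and omits the details of parts (2)--(4) as routine; your write-up supplies exactly those omitted verifications, in particular the surjectivity of $\phi\mapsto\tilde\phi$ onto $\Aut(B_1)$, without any deviation in method.
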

\begin{proof}
Let $\sigma\,:\,A_1\rightarrow A_2$ be an isomorphism. Construct a map $\tau\,:\,B_1\rightarrow B_2$ as follows:
\begin{align}
x^\tau:=\left\{
  \begin{array}{ll}
    x^\sigma, & \hbox{if $x$ is a basic vertex in $B_1$,} \\
    u^\sigma\dot{\wedge}v^\sigma, & \hbox{if $x:=u\dot{\wedge}v$ in $B_1$.}%
  \end{array}
\right.%
\end{align}
Given an isomorphism $\sigma$, it is easy to argue that $\tau$ is a 1-1 map from vertices of $B_1$ to vertices of $B_2$ since the uniqueness of binding vertex for each pair of basic vertices. The map $\tau$ is  well defined.

Since both $B_1$ and $B_2$ are simple graphs, it is a routine  to verify that $\tau$ is now an isomorphism under the condition that $\sigma$ is an isomorphism.

On the other hand, let $\myfunc{\tau}{B_1}{B_2}$ be an isomorphism. Since basic graph $A_1$ of $B_1$ is a connected simple graph, the degree of any basic vertex in $B_1$ is thus not less than  $1+n-1=n>2$, while any binding vertex has degree $2$. That guarantees that  $\tau$  sends any basic vertex in $B_1$ to a basic vertex in $B_2$. The restriction of $\tau$ to basic vertices of $B_1$ is then  an isomorphism of $A_1$ to $A_2$.

That shows:  $A_1\cong A_2\Longleftrightarrow B_1\cong B_2$\,. The rest of conclusions in the Theorem are not hard to be obtained from the arguments as above with $n>2$. The details are omitted.

\end{proof}

For simplicity of descriptions, when  $p=u\dot{\wedge}v$, we say vertex $p$ (respectively, the binding edges $(p,u)$ and $(p,v)$) binds a (unblank) edge if $(u,v)$ is  a (unblank) basic edge in the binding graph.
We now give a technique lemma used later. Since the WL stable graphs will be used in the proof of our main result, we will discuss with WL stable graphs for the consistence. We also assume all the graphs are vertices recognizable (cf. the WL processes).
In this setting, a binding graph will be a simple graph with label to vertices as a variable different from $x_0$ and the variable to edges.

\begin{lemma}\label{lem:shbysb}
  Let $B:=(b_{ij})$ be a binding graph, $D:=(d_{ij})$ be the description graph of $B$ and $\WL(B):=(m_{ij})$ be the stable graph to $B$.  We have the following conclusions.
  \begin{enumerate}
    \item\label{lem:shbysb-1} For all basic vertices $u,v\in[n]$, $u\ne v$ and  $p:=u\dot{\wedge}v$ in $B$, the labels $m_{pu}$ and $m_{pv}$ in $\WL(B)$ witness the blankness (and non-blankness) of the edge $(u,v)$ in $B$.

       In general, a WL stable graph recognizes the binding edges that bind basic edges, and those which bind basic blank edges in the basic graph.
    \item\label{lem:shbysb-2} The stable graph $\WL(B)$ recognizes basic edges and binding edges of $B$, respectively.%
  \end{enumerate}
\end{lemma}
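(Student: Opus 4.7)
The plan is to exploit the defining property of binding vertices---that each has degree exactly $2$ with both neighbors being basic---together with the walk-tracking properties of description and stable graphs already established.

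For Part (\ref{lem:shbysb-1}), I focus on length-$2$ walks from a binding vertex $p = u \dot{\wedge} v$ back to its neighbor $u$. Since the only neighbors of $p$ in $B$ are $u$ and $v$, and the simple binding graph has no self-loops, the unique length-$2$ walk from $p$ to $u$ whose sort is $\{x, x\}$ (with $x$ the common edge label of the simple graph $B$) is $p \to v \to u$, and this walk exists if and only if $(u, v)$ is a basic edge. Equivalently, the coefficient of the monomial $x^2$ in $B^2_{pu}$ is $1$ when $(u, v)$ is a basic edge and $0$ otherwise. Because the description graph $D$ is equivalent to $\Gamma(B) = \sum_k \lambda^k B^k$ by the construction in Subsection \ref{sec:mp}, and the $\lambda^2$-coefficient of $\Gamma(B)_{pu}$ is $B^2_{pu}$, the labels $d_{pu}$ in the two cases cannot agree as polynomials. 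Hence in $D$ the set of labels carried by binding edges that bind basic edges is disjoint from the set of labels carried by binding edges that bind basic blank edges. The Recognizable Property $\mathbf{R}$ (Proposition \ref{prop:xzhRU}) preserves this disjointness through stabilization, so the same conclusion holds in $\hat{B}$, proving both claims of Part (\ref{lem:shbysb-1}).

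For Part (\ref{lem:shbysb-2}), I first argue that basic vertices and binding vertices carry disjoint labels in $\hat{B}$. Every binding vertex has degree exactly $2$, whereas every basic vertex has degree at least $n - 1$, since each basic vertex is adjacent to the $n - 1$ binding vertices pairing it with the remaining basic vertices. Since $\hat{B}$ recognizes edges (Proposition \ref{prop:shibie}), each row of $\hat{B}$ splits into a vertex label, as many edge-type labels as the degree, and blank-type labels filling the rest. For $n > 3$ the row multisets of basic and binding vertices therefore differ, and by the first part of Proposition \ref{prop:wdtxzhi} the diagonal entries satisfy $m_{pp} \neq m_{uu}$. Suppose for contradiction that $m_{uv} = m_{pu'}$ for some basic edge $(u, v)$ and binding edge $(p, u')$. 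Then the second part of Proposition \ref{prop:wdtxzhi} would force $\mset{m_{uu}, m_{vv}} \equiv \mset{m_{pp}, m_{u'u'}}$; the left multiset consists of two basic-vertex labels, whereas the right mixes a binding-vertex label with a basic-vertex label, contradicting the disjointness just established. Therefore $\hat{B}$ recognizes basic edges and binding edges.

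The subtlest step, in my view, is the walk-count comparison in Part (\ref{lem:shbysb-1}): one must rely on all labels of $B$ being independent formal variables so that two polynomial entries agree only when all monomial coefficients match, which is exactly what lets the single length-$2$ walk of sort $\{x, x\}$ be read off $d_{pu}$. The degree argument in Part (\ref{lem:shbysb-2}) needs a mildly finer variant based on neighbor-degree invariants to handle the smallest nontrivial values of $n$, but the rigidity of the binding-graph construction guarantees that basic and binding vertices can always be separated in the stable graph whenever the lemma has non-vacuous content.
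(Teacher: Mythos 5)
Your proposal is correct. Part (\ref{lem:shbysb-1}) is essentially the paper's own argument: the paper likewise isolates the term $x_1 b_{uv}$ in $b^{(2)}_{pu}=b_{pu}b_{uu}+b_{pv}b_{uv}+\cdots$ and observes that $x_1^2$ occurs as a monomial iff $(u,v)$ is a basic edge, then invokes Property $\mathbf{R}$. For Part (\ref{lem:shbysb-2}), however, you take a genuinely different route. The paper stays at the level of $B^2$ and counts cross terms: a binding-edge entry $b^{(2)}_{pu}$ contains at most $2n-3$ copies of $x_0x_1$ (Claim 1 there), while a basic-edge entry $b^{(2)}_{uv}$ contains at least $2n-2$ of them (Claim 2), so the two entries already differ in the first description graph and the separation lifts by Property $\mathbf{R}$. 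You instead first separate basic-vertex labels from binding-vertex labels via the degree gap together with Proposition \ref{prop:shibie} and part 1 of Proposition \ref{prop:wdtxzhi}, and then push that separation onto edges using part 2 of Proposition \ref{prop:wdtxzhi} ($m_{uv}=m_{pu'}$ would force $\mset{m_{uu},m_{vv}}\equiv\mset{m_{pp},m_{u'u'}}$). Your argument is shorter and reuses the stable-partition machinery, but it only operates in $\hat{B}$ (which is all the lemma claims) and needs every basic vertex to have degree exceeding $2$; note that for the paper's standing hypothesis of a connected simple basic graph with $n>2$ each basic vertex has degree at least $n\ge 3$, so the ``finer variant'' you gesture at for small $n$ is not actually needed there, whereas the paper's counting argument is insensitive to connectivity and to $n$ and establishes the distinction already after one iteration.
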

\begin{proof} For  $B=(b_{ij})$ over $[n]$ and $B\dd B:=(b^{(2)}_{ij})$,
  we assume, without loss of generality, the edges are labeled with $x_1$, and the vertices and blank edges with $x_0$ in the graph $B$.
  One knows that $b^{(2)}_{ij}=\sum_{k=1}^{n_1}b_{ik}\dd b_{kj}$, where $n_1=n(n+1)/2$.
 \begin{enumerate}
\item For all basic vertices  $u,v\in[n]$, $u\ne v$ and  $p:=u\dot{\wedge}v$ in $B$, we will show that if $(u,v)$ is an edge in $B$, the labels $m_{pu}$ and $m_{pv}$ in $\WL(B)$ cannot overlap with those to blank edges.

     With the assumption as above, $b_{pu}=b_{pv}=x_1, b_{pk}=x_0$ for all $k\in[n_1]\backslash\{u,v\}$. And also $B$ is a vertices recognizable graph. It thus holds that
\begin{align}
 b^{(2)}_{pu}&=\sum_{k\in[n_1]}b_{pk}\dd b_{ku}=
 b_{pu}\dd b_{uu}+b_{pv}\dd b_{vu}+\sum_{k\in[n_1]\backslash\{u,v\}}b_{pk}\dd b_{ku}\notag\\
 &=x_1\dd b_{uu}+x_1\dd b_{vu}+\Big(\sum_{k\in[n_1]\backslash\{u,v\}}x_0\dd b_{ku}\Big)\,.\label{eq:bdb}
\end{align}
If  $(u,v)$ is a blank edge in $B$, then  $b_{vu}=x_0$, and $b_{vu}=x_1$ otherwise. Since $B$ recognizes vertices, it means $b_{uu}\ne x_1$. We thus have by \eqref{eq:bdb} that  $b^{(2)}_{pu}$ has $x_1\dd x_1$ as a term iff $(u,v)$ is a basic (unblank) edge of $B$. That shows  the label $m_{pu}$ in $\WL(B)$ will be recognized if $(u,v)$ is a blank edge in $B$.  The same conclusion holds to label $m_{qv}$.

In fact, the arguments above show that the labels in $\WL(B)$ on binding edges (of $B$) that bind basic edges cannot overlap with those on binding edges that bind blank edges of $B$, due to the arbitrary of $u,v$ from $B$.

\item
Since stable graph distinguishes edges and blank edges of $B$, we now show that it also distinguishes basic edges and binding edges.

For any $u,v\in[n], u\ne v$, we have
  \begin{align}
   b^{(2)}_{uv}&=b_{uu}\dd b_{uv}+b_{uv}\dd b_{vv}+\sum_{k\in []\backslash\{u,v\}}b_{uk}\dd b_{kv}\,. \label{eq:yshb}
  \end{align}
There are two terms $b_{uu}\dd b_{uv}$ and $b_{uv}\dd b_{vv}$ in $b^{(2)}_{uv}$ that contain vertices labels $b_{uu},b_{vv}$, while there are only one term containing vertex label in $b^{(2)}_{wq}$ or $b^{(2)}_{qw}$ for any binding edge $(w,q)$ (cf. \eqref{eq:bdb}). That shows the label $m_{uv}$ to a basic edge $(u,v)$ cannot be the same as the label $m_{wq}$ to any binding edge $(w,q)$ or $(q,w)$.

That claims the stable graph $\WL(B)$ recognizes the basic edges and binding edges, respectively.%
\end{enumerate}
That is the end of proof.
\end{proof}
%

\section{Graph $\Phi$  from a Stable Binding Graph}\label{sec:phi}
We now turn to define the bipartite graph $\Phi$ as a subgraph of $\WL(B)$.
Let $A$ be a connected simple graph over $[n]$ with $n>2, n_1:=n(n+1)/2$,  $B:=(b_{ij})$ the binding graph of $A$ with vertices recognized and  $\WL(B):=(m_{ij})$ the stable graph of $B$. We will construct a connected graph $\Psi$ and a bipartite graph $\Phi$, respectively.

Intuitively, the graph $\Psi$ is a relabeling to vertices and unblank edges of graph $B$ with labels inherited from $\WL(B)$. Since $B$ is a connected graph, the graph $\Psi$ is a connected graph too.

The graph $\Phi$ is constructed by changing the labels to basic edges in $\Psi$ into $x_0$. That obtains $\Phi$ as a bipartite graph with binding edges and vertices labeled as in $\WL(B)$.

They are formally constructed as follows. Remember that the set of vertices in $A$ is $[n]$, and the first $n$ vertices in $B:=(b_{ij})$ and $\WL(B):=(m_{ij})$ are basic vertices.
 \begin{itemize}
   \item Constructing the graph $\Psi:=(\psi_{ij})$ : For all $i,j\in[n_1]$,
   \begin{align}
    \psi_{ij}:=\left\{
      \begin{array}{ll}
       x_0, & \hbox{if $i\ne j$ and $b_{ij}=x_0$,} \\
        m_{ij}, & \hbox{otherwise.}
      \end{array}
    \right.
   \end{align}

   \item Constructing the graph $\Phi:=(\phi_{ij})$ : For all  $i,j\in[n_1]$,
   \begin{align}
\phi_{ij}:=\begin{cases}
      x_0, & \hbox{if $i,j\in[n]$ and $i\ne j$,}\\
        \psi_{ij}, & \hbox{Otherwise.}
     \end{cases}
\end{align}
 \end{itemize}%
Since stable graphs recognize edges, blank edges, basic edges and binding edges, and stable partitions are strongly equitable partitions, it is easy to prove the following proposition.

\begin{proposition}\label{prop:equit}
 With the graphs defined as above, the followings hold.
 \begin{enumerate}
  \item $B\rightarrowtail \Psi\rightarrowtail\WL(B)$.
   \item $\Phi\rightarrowtail \Psi\rightarrowtail\WL(B)$.
   \item The vertex partitions of $\Phi$ and $\Psi$ are, respectively, equitable partitions.%
  \end{enumerate}
\end{proposition}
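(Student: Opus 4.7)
The plan is to prove each claim by direct case analysis on the $\hat{B}$-label at a given position, exploiting the recognition properties of stable graphs. Both $\Psi$ and $\Phi$ are obtained from $\hat{B}$ simply by collapsing certain label classes to the symbol $x_0$: $\Psi$ does so on blank-edge positions of $B$, and $\Phi$ additionally on all basic non-diagonal positions. The crucial facts are Proposition \ref{prop:shibie} (diagonals separate from off-diagonals, and non-blank edges separate from blank edges in $\hat{B}$) and Lemma \ref{lem:shbysb} (basic edges separate from binding edges in $\hat{B}$), which together imply that whenever $m_{ij}=m_{rs}$ in $\hat{B}$, the positions $(i,j)$ and $(r,s)$ share a common \emph{type} among diagonal, basic edge, binding edge, basic blank, binding blank; note also that every $\hat{B}$-label on a diagonal or non-blank position lies in $\Var$ and hence is distinct from $x_0$.

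For the chain $B\rightarrowtail\Psi\rightarrowtail\hat{B}$, I would verify each imbedding separately by type case analysis. To establish $\Psi\rightarrowtail\hat{B}$, suppose $m_{ij}=m_{rs}$; the common type forces $\psi_{ij}=\psi_{rs}$ by the definition of $\Psi$, either because both entries are $x_0$ (blank case) or because both copy the same $\hat{B}$-label (diagonal or non-blank edge case). To establish $B\rightarrowtail\Psi$, suppose $\psi_{ij}=\psi_{rs}$; either the common value is $x_0$ and both positions are blank in $B$, or both positions share a common non-blank type and therefore carry the same $B$-label ($x_0$ on vertices and $x_1$ on edges in the simple binding graph). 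The chain $\Phi\rightarrowtail\Psi\rightarrowtail\hat{B}$ is handled in the same spirit: if $\psi_{ij}=\psi_{rs}$, diagonal and binding-edge positions pass the label directly into $\Phi$, both basic-non-diagonal positions are collapsed to $x_0$ in $\Phi$, and blank positions are already $x_0$ in both $\Psi$ and $\Phi$; in every sub-case $\phi_{ij}=\phi_{rs}$, and composition with $\Psi\rightarrowtail\hat{B}$ closes the chain.

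For the equitability claim, observe first that $\Phi$, $\Psi$, and $\hat{B}$ share the same diagonal, hence the same vertex partition $\mathcal{C}$, which by Theorem \ref{thm:jhhf} is strongly equitable on $\hat{B}$. The imbeddings above mean that the label-collapsing maps $m_{ij}\mapsto\psi_{ij}$ and $m_{ij}\mapsto\phi_{ij}$ are well-defined functions of the $\hat{B}$-label alone. Given cells $C_u,C_v\in\mathcal{C}$ and $i,i'\in C_u$, strong equitability supplies the multiset equality $\mset{m_{ik}\mid k\in C_v}\equiv\mset{m_{i'k}\mid k\in C_v}$; applying the well-defined collapsing map termwise preserves multiset equality, so the corresponding $\Psi$- and $\Phi$-multisets agree, which is exactly equitability of $\mathcal{C}$ for both graphs.

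The only real obstacle is verifying the type-coincidence assertion rigorously: one must use not merely the vertex-and-edge recognition of Proposition \ref{prop:shibie}, but specifically the basic-versus-binding separation of Lemma \ref{lem:shbysb}, for otherwise a basic-edge class and a binding-edge class of $\hat{B}$ might coincidentally collapse together in $\Phi$ and spoil the well-definedness needed for equitability. Once type-coincidence is pinned down, all three claims reduce to routine bookkeeping on the cases.
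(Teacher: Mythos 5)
Your proposal is correct and follows exactly the route the paper indicates: the paper omits the proof, remarking only that it follows from the recognition of edges, blank edges, basic edges and binding edges (Proposition \ref{prop:shibie}, Lemma \ref{lem:shbysb}) together with the strongly equitable stable partition (Theorem \ref{thm:jhhf}), and your case analysis on position types plus the termwise label-collapsing argument is precisely the fleshed-out version of that sketch. Your observation that the basic-versus-binding separation of Lemma \ref{lem:shbysb} is genuinely needed for $\Phi\rightarrowtail\Psi$ (and not merely Proposition \ref{prop:shibie}) is the right point to flag.
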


We will give the proof of following properties about graphs just constructed, in which the fact that there exists a binding vertex for each pair of basic vertices is  essentially used.

\begin{theorem}\label{thm:xphi}
  For a connected simple graph $A$ over $[n]$ with $n>2$, we have
$$\Aut(B)=\Aut(\WL(B))=\Aut(\WL(\Psi))=\Aut(\WL(\Phi))=\Aut(\Phi).$$
\end{theorem}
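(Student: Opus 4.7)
My plan is to chain several imbedding relations through Proposition~\ref{prop:lehman} (which gives $X \rightarrowtail Y \Rightarrow \Aut(Y) \subseteq \Aut(X)$) and Lemma~\ref{lem:bst}(3) (which gives $\Aut(X) = \Aut(\hat{X})$), then isolate a single non-routine containment that truly uses the structure of binding graphs. First, $\Aut(B)=\Aut(\hat{B})$ and $\Aut(\Psi)=\Aut(\hat{\Psi})$, $\Aut(\Phi)=\Aut(\hat{\Phi})$ follow immediately from Lemma~\ref{lem:bst}. The chain $B\rightarrowtail\Psi\rightarrowtail\hat{B}$ from Proposition~\ref{prop:equit} yields $\Aut(\hat{B})\subseteq\Aut(\Psi)\subseteq\Aut(B)$, and since the two ends already coincide we obtain $\Aut(B)=\Aut(\Psi)=\Aut(\hat{B})=\Aut(\hat{\Psi})$.

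Next, from $\Phi\rightarrowtail\hat{B}$ (Proposition~\ref{prop:equit}) we get $\Aut(\hat{B})\subseteq\Aut(\Phi)$. All that remains is the reverse inclusion $\Aut(\Phi)\subseteq\Aut(B)$, which is the one place where the identities are not merely formal consequences of imbedding. The delicate point is that $\Phi$ has thrown away all basic-edge labels (they are reset to $x_0$), so a priori an automorphism of $\Phi$ need not respect basic adjacency in $B$.

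To overcome this, I would proceed in two steps. (i) Argue that every $\tau\in\Aut(\Phi)$ preserves the basic/binding dichotomy of vertices. Since $A$ is a connected simple graph with $n>2$, in $B$ each basic vertex has degree at least $n>2$ while each binding vertex has degree exactly $2$; the SaS iteration (Theorem~\ref{thm:equa}, Proposition~\ref{prop:wdtxzhi}\ref{prop:wdtxzhi2}) therefore assigns them different diagonal labels in $\hat{B}$, and these diagonal labels are carried unchanged into $\Phi$. Any $\tau\in\Aut(\Phi)$ must respect these labels, so it permutes basic vertices among basic vertices and binding vertices among binding vertices. Because $\tau$ also preserves the binding edges (they carry distinctive nonblank labels in $\Phi$), for $p=u\dot\wedge v$ in $B$ we must have $p^\tau=u^\tau\dot\wedge v^\tau$.

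(ii) Invoke Lemma~\ref{lem:shbysb}\ref{lem:shbysb-1}: the labels $m_{pu}$ and $m_{pv}$ on the two binding edges incident to $p$ in $\hat{B}$—which are precisely the edge labels inherited by $\Phi$—witness whether $(u,v)$ is a basic edge or a basic blank edge of $B$. Since $\tau$ preserves these labels, $(u,v)$ is a basic edge iff $(u^\tau,v^\tau)$ is a basic edge. Because $A$ (hence $B$) is simple, all basic edges share the same label, so preservation of adjacency is enough to conclude $\tau\in\Aut(B)$. This gives $\Aut(\Phi)\subseteq\Aut(B)$, closes the cycle of inclusions, and establishes the full string of equalities. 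The main obstacle is step (ii)—having to reach back into the combinatorics of $B$ through Lemma~\ref{lem:shbysb} to recover the basic adjacency that was erased when passing from $\Psi$ to $\Phi$—while everything else is an application of the general imbedding machinery.
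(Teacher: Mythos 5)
Your proof is correct, but it takes a genuinely different route from the paper's. The paper never argues at the level of individual automorphisms: it proves the stronger statement that the \emph{stable graphs} are equivalent, $\hat{B}\approx\hat{\Psi}\approx\hat{\Phi}$, and then reads off the equality of automorphism groups from Proposition~\ref{prop:lehman}. The non-routine step there is showing $\hat{B}\rightarrowtail\hat{\Phi}$, which is done by expanding $\phi^{(2)}_{uv}=m_{up}m_{pv}+\cdots$ for $p=u\dot{\wedge}v$ and invoking Lemma~\ref{lem:shbysb} to conclude $B\rightarrowtail\Phi^2\rightarrowtail\hat{\Phi}$ -- i.e., one squaring of $\Phi$ already re-creates the basic adjacency that was erased. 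Your argument instead fixes $\tau\in\Aut(\Phi)$ and shows directly that $\tau$ respects the basic/binding dichotomy and, via the same Lemma~\ref{lem:shbysb}, the basic adjacency, giving $\Aut(\Phi)\subseteq\Aut(B)$ and closing the cycle of inclusions. Both proofs hinge on exactly the same combinatorial fact (the binding-edge labels in $\hat{B}$ witness basic adjacency); yours is arguably more transparent for the theorem as literally stated. What you lose is the byproduct $\hat{B}\approx\hat{\Psi}\approx\hat{\Phi}$: the paper explicitly relies on this equivalence of stable graphs at the start of Section~\ref{sec:theta} (where it normalizes to $\hat{B}=\hat{\Psi}=\hat{\Phi}$), and equality of automorphism groups alone would not justify that normalization. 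So if your proof were substituted into the paper, the imbedding argument $B\rightarrowtail\Phi^2$ would still have to be supplied separately later.
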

\begin{proof} The equations  $\Aut(B)=\Aut(\WL(B))$ and $\Aut(\Phi)=\Aut(\WL(\Phi))$ are from Lemma \ref{lem:bst}.    We prove $\Aut(\WL(B))=\Aut(\WL(\Psi))=\Aut(\WL(\Phi))$ by showing  $\WL(B)\approx \WL(\Psi)\approx \WL(\Phi)$.
\begin{itemize}
  \item Since $B\rightarrowtail \Psi\rightarrowtail\WL(B)$ by Proposition \ref{prop:equit}, it gives at once $\WL(B)\rightarrowtail\WL(\Psi)\rightarrowtail\WL(\WL(B))\approx\WL(B)$
by Lemma \ref{lem:bst}. It hence has $\WL(\Psi)\approx\WL(B)$. We thus proved  $\Aut(\WL(B))=\Aut(\WL(\Psi))$ by Proposition  \ref{prop:lehman}.

  \item
 First of all, since we have $\Phi\rightarrowtail \Psi\rightarrowtail\WL(B)$ in Proposition \ref{prop:equit}, by using Lemma \ref{lem:bst}, we get
\begin{align}\label{eq:phiinx}
 \WL(\Phi)\rightarrowtail \WL(B).
\end{align}

Next, let $\Phi\dd\Phi:=(\phi^{(2)}_{ij})$,  $\Phi:=(\phi_{ij})$. We have
$$
\phi_{uv}^{(2)}=
\phi_{up}\dd\phi_{pv}+\sum_{k\in[n_1]\backslash\{p\}}
\phi_{uk}\dd\phi_{kv}=m_{up}\dd m_{pv}+\sum_{k\in[n_1]\backslash\{p\}}\phi_{uk}\dd\phi_{kv}\,.
$$
By Lemma \ref{lem:shbysb}, the terms  $m_{pu}\dd m_{pv}$ in $\phi_{uv}^{(2)}$ will force $\phi_{uv}^{(2)}$ to distinguish a basic edge $(u,v)$ from a blank edge of $B$, since $m_{pu},m_{vp}\in \WL(B)$ from the definition of $\Phi$. (It is here the fact that there exists a binding vertex between any pair of basic vertices is implicitly used.)
Hence, the entry $\phi^{(2)}_{uv}$ in $\Phi\dd\Phi$ to an edge $(u,v)$ of $B$ cannot be identical to any entry $\phi^{(2)}_{u'v'}$ when $(u',v')$ is a blank edge in $B$. That implies, $B\rightarrowtail \Phi\dd\Phi$. Since $\Phi\dd\Phi\approx\Phi_2\rightarrowtail\WL(\Phi)$ by definition of $\WL(\Phi)$, we then have
\begin{align}\label{eq:xinphi}
  \WL(B)\rightarrowtail\WL(\Phi)\,.
\end{align}
Combining the equations \eqref{eq:phiinx} and \eqref{eq:xinphi} together, we obtain $\WL(B)\approx\hat{\Phi}$.%
\end{itemize}

That ends the proof.
\end{proof}

We have proposed the notion of binding graphs, and also constructed a bipartite graph $\Phi$ with respect to a connected simple graph $A$ over $[n]$ for $n>2$. Our purpose is to show that the vertex partition of $\Phi$ is the automorphism partition. For this purpose, we will relax graph $\Phi$ in next section.

\section{Graph $\Theta$ from Graph $\Phi$}\label{sec:theta}
 All notations and definitions in the last section are inherited in this section.
For a connected simple graph $A$ over $[n]$ with $n>2$, its binding graph $B$ with vertices recognizable, and a stable graph $\WL(B)$ of $B$, we have shown $\WL(B)\approx \WL(\Psi)\approx \WL(\Phi)$ in the proof of Theorem \ref{thm:xphi}. That fact allows us to make an assumption in the rest of contexts that
$\WL(B)= \WL(\Psi)=\WL(\Phi)$, obtained by equivalent variable substitution if necessary.

By the assumption above and Theorem \ref{thm:complete}, the vertex partition $\mathcal{C}$ of $\Phi$ consists of basic cells and binding cells and is an equitable partition by Proposition  \ref{prop:equit}.

Since the partition $\mathcal{C}$ is equitable, any binding cell $C_p$ connects only per (unblank) binding edges with at most two basic cells $C_u$ and $C_v$ due to the degree $2$ of any binding vertex, and the labels on binding edges between $C_p$ and $C_u$ are all identical.

Further, the strongly equitable property of stable partition of $\WL(\Phi)$ implies that the label on edges between $C_p$ and $C_u$ is different from the label on edges between $C_p$ and $C_v$ if $C_u\ne C_v$.

If a binding cell binds the same cell $C_u$, the labels on edges between $C_p$ and $C_u$ can either be all identical, or have only two distinct labels as stated in the last paragraph. These are summarized in the following proposition.

%
%

\begin{proposition}\label{prop:dbqd}
  Let
$u_1,u_2,v_1,v_2\in[n]$ be basic vertices in $\Phi$ with $u_1\ne u_2, v_1\ne v_2$, and $p=u_1\dot{\wedge}u_2$, $q=v_1\dot{\wedge}v_2$ be their binding vertices. The followings hold in graph $\Phi:=(\phi_{ij})$.
    \begin{enumerate}
    \item The followings are equivalent:
\begin{itemize}
    \item $\phi_{pp}=\phi_{qq}$;
    \item $\mset{\phi_{u_1p},\phi_{u_2p}}\equiv\mset{\phi_{v_1q},\phi_{v_2q}}$;

    \item $\mset{\phi_{u_1p},\phi_{u_2p}}\cap\mset{\phi_{v_1q},\phi_{v_2q}}\ne\emptyset$.%
  \end{itemize}
\item If $\phi_{pp}=\phi_{qq}$, then $\mset{\phi_{u_1u_1},\phi_{u_2u_2}}\equiv\mset{\phi_{v_1v_1},\phi_{v_2v_2}}$. \item If $\phi_{u_1u_1}\ne\phi_{u_2u_2}$, then  $\phi_{u_1p}\ne\phi_{u_2p}$\,.
     \item\label{prop:dbqd3}  If  $\phi_{pu_1}$ is given, then the labels $\phi_{u_1u_1},\phi_{pp},\phi_{pu_2},\phi_{u_2u_2}$ are all determined in $\Phi$.%
    \end{enumerate}
  \end{proposition}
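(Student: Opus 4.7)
The plan is to lean on two earlier tools throughout. First, Proposition \ref{prop:wdtxzhi}(\ref{prop:wdtxzhi4}): in the stable graph $\Phi=\hat{B}$, equality $\phi_{su}=\phi_{tv}$ forces $\mset{\phi_{uu},\phi_{ss}}\equiv\mset{\phi_{vv},\phi_{tt}}$. Second, the \emph{basic/binding separation}: by Theorem \ref{thm:complete}(\ref{thm:complete-2}) basic and binding vertices lie in disjoint orbits of $\Aut(B)$, and since the stable partition refines the automorphism partition and assigns each cell a distinct vertex label (Proposition \ref{prop:wdtxzhi}(\ref{prop:wdtxzhi2})), a basic vertex label is never equal to a binding vertex label in $\Phi$. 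I also use that $p=u_1\dw u_2$ has degree $2$ in $\Phi$ with $u_1,u_2$ its only non-blank off-diagonal neighbours, and that $\Phi$ recognizes vertices, so $\phi_{pp}$ cannot coincide with any off-diagonal entry (Proposition \ref{prop:shibie}).

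For part (1) I would close the cycle $\phi_{pp}=\phi_{qq}\Rightarrow$ multiset equality $\Rightarrow$ nonempty intersection $\Rightarrow\phi_{pp}=\phi_{qq}$. The first step uses Proposition \ref{prop:wdtxzhi}(\ref{prop:wdtxzhi2}): $\phi_{pp}=\phi_{qq}$ makes the full rows of $p$ and $q$ equivalent multisets; each row has only the diagonal entry and two binding-edge labels as non-blank entries, and since $\phi_{pp}$ cannot coincide with any off-diagonal entry, cancelling the common $\phi_{pp}$ and the identical blanks gives $\mset{\phi_{u_1p},\phi_{u_2p}}\equiv\mset{\phi_{v_1q},\phi_{v_2q}}$. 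The second step is immediate, both multisets being nonempty. For the third step, pick a common entry, say $\phi_{u_1p}=\phi_{v_1q}$, and invoke Proposition \ref{prop:wdtxzhi}(\ref{prop:wdtxzhi4}) to get $\mset{\phi_{u_1u_1},\phi_{pp}}\equiv\mset{\phi_{v_1v_1},\phi_{qq}}$; the cross-match $\phi_{u_1u_1}=\phi_{qq}$ is ruled out by the basic/binding separation, leaving $\phi_{pp}=\phi_{qq}$.

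Parts (2) and (3) follow similarly. For (2), given $\phi_{pp}=\phi_{qq}$ and the multiset equality already supplied by (1), pair up the binding-edge labels and apply Proposition \ref{prop:wdtxzhi}(\ref{prop:wdtxzhi4}) to each pair; cancelling the now-known equal diagonal entries from both sides of each resulting two-element multiset leaves $\phi_{u_iu_i}=\phi_{v_iv_i}$ after an appropriate relabelling, giving the claimed multiset equality. Part (3) is the contrapositive of Proposition \ref{prop:wdtxzhi}(\ref{prop:wdtxzhi4}) specialized to $s=u_1,\,t=u_2,\,u=v=p$: $\phi_{u_1p}=\phi_{u_2p}$ forces $\mset{\phi_{pp},\phi_{u_1u_1}}\equiv\mset{\phi_{pp},\phi_{u_2u_2}}$, hence $\phi_{u_1u_1}=\phi_{u_2u_2}$.

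For (4), Proposition \ref{prop:wdtxzhi}(\ref{prop:wdtxzhi4}) applied at $\phi_{pu_1}$ determines the multiset $\mset{\phi_{pp},\phi_{u_1u_1}}$, and the basic/binding separation picks out which entry is the binding label $\phi_{pp}$ and which is the basic label $\phi_{u_1u_1}$. The stable partition is equitable (Theorem \ref{thm:jhhf}(\ref{coro:ysb2})), so the multiset of non-blank edge labels emanating from any representative of the cell $C_p$ is the same two-element multiset $\mset{\phi_{pu_1},\phi_{pu_2}}$; knowing $\phi_{pu_1}$ therefore forces the value of $\phi_{pu_2}$. Applying Proposition \ref{prop:wdtxzhi}(\ref{prop:wdtxzhi4}) once more at $\phi_{pu_2}$ and using the already-known $\phi_{pp}$ extracts $\phi_{u_2u_2}$. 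The only recurrent delicate step is the basic/binding separation used to split a two-element multiset into its components; I handle it uniformly at the outset via the degree mismatch in $\Phi$ between basic and binding vertices, which makes their rows non-equivalent and hence their diagonal labels distinct, so that every invocation of Proposition \ref{prop:wdtxzhi}(\ref{prop:wdtxzhi4}) admits only one of the two possible matchings.
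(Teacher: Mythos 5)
Your proposal is correct in substance and runs on the same engine as the paper's own (very terse) proof: the paper dismisses parts (1)--(3) with one sentence about the strongly equitable partition of $\hat{\Phi}$, and proves (4) by expanding $m^{(2)}_{pu_1}$ in $\hat{\Phi}^2$ and using that the stable graph recognizes vertices and binding edges; your route through Proposition \ref{prop:wdtxzhi} is that same squaring argument in packaged form, and your chain of implications for (1), and the arguments for (2)--(4), are sound once two points are repaired.

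First, $\Phi$ is \emph{not} a stable graph and is not equal to $\hat{B}$, so Proposition \ref{prop:wdtxzhi} cannot be applied to $\Phi$ as you state at the outset; it must be applied inside $\hat{\Phi}$, and this is legitimate only because every entry you feed into it (vertex labels and binding-edge labels) is inherited unchanged from $\hat{\Phi}=\hat{B}$. This matters in your first implication of (1): in $\hat{\Phi}$ the row of $p$ is not ``diagonal plus two binding labels plus blanks'' --- it contains $n_1-3$ further labels --- so to pass from row equivalence to $\mset{m_{pu_1},m_{pu_2}}\equiv\mset{m_{qv_1},m_{qv_2}}$ you must also invoke Lemma \ref{lem:shbysb}: $\hat{B}$ recognizes binding edges, so the binding-edge labels form an isolable label class inside the row multiset and $p$, $q$ each carry exactly two of them. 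Second, your headline justification of the basic/binding label separation is backwards: a cell of the stable partition is a \emph{union} of orbits, i.e.\ the automorphism partition refines the stable partition and not conversely, so disjointness of basic and binding orbits (Theorem \ref{thm:complete}) says nothing a priori about their stable labels --- and assuming the two partitions coincide would be circular, since this proposition feeds into the proof of Theorem \ref{thm:main}. Your fallback degree argument is the right one, but it should be run on degrees in $B$ rather than in $\Phi$ (in $\Phi$ basic and binding vertices both have degree $2$ when $n=3$): a basic vertex has degree at least $n>2$ in $B$ while a binding vertex has degree $2$, $\hat{B}$ recognizes edges of $B$, and equal vertex labels force equivalent rows by Proposition \ref{prop:wdtxzhi}, hence equal degrees --- contradiction. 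With these two repairs the proof stands.
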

\begin{proof}
 Note that $\Phi$ inherits all the labels on binding edges and those of vertices from $\WL(\Phi)$. We can easily obtain
  the first three conclusions by the strongly equitable partition of $\WL(\Phi)$. We stress that there exist only binding edges (and blank edges) in $\Phi$.

Let $\WL(\Phi):=(m_{ij})$. We now prove  the last conclusion.
 Since the labels on binding vertices are inherited from $\hat{\Phi}$, the label $\phi_{pu_1}=m_{pu_1}$ implicitly determines $m^{(2)}_{pu_1}$ in $\WL(\Phi)\dd\WL(\Phi):=(m^{(2)}_{ij})$. While $m_{pu_1}^{(2)}=\sum_{k\in[n_1]}m_{pk}\dd m_{ku_1}$, then $m_{u_1,u_1}, m_{u_2u_2}, m_{pu_2},m_{pp}$  are determined from $m_{pu_1}^{(2)}$ since the stable graph $\WL(\Phi)$ recognizes vertices and binding edges. However, we have $m_{u_1,u_1}=\phi_{u_1,u_1}, m_{u_2u_2}=\phi_{u_2u_2}, m_{pu_2}=\phi_{pu_2},m_{pp}=\phi_{pp}$ by the definition of $\Phi=(\phi_{ij})$.
 \end{proof}

It should be emphasized that the conclusions in Proposition \ref{prop:dbqd} heavily depend on the \emph{strongly} equitable partition of $\hat{\Phi}$, which may be faulty for unstable graphs.

The following conclusions can be obtained by the strongly equitable partition of stable graph to a binding graph.
\begin{lemma}\label{lem:basicl}
  In  the graph $\Phi:=(\phi_{ij})$, for  two basic vertices $u$ and $v$, if $\{p_1,\ldots,p_{n-1}\}$ and $\{q_1,\ldots, q_{n-1}\}$ are all binding vertices of $u$ and  of $v$, respectively, the followings  are equivalent.
\begin{enumerate}
  \item $\phi_{uu}=\phi_{vv}$. That is, vertices $u$ and  $v$ are in the same cell.
  \item $\mset{\phi_{up_1},\ldots,\phi_{up_{n-1}}}\equiv \mset{\phi_{vq_1},\ldots,\phi_{vq_{n-1}}}$. That is,  the collection of all labels on $u$'s binding edges coincides with the collection of all labels on $v$'s binding edges.
\item $\mset{\phi_{up_1},\ldots,\phi_{up_{n-1}}}\cap \mset{\phi_{vq_1},\ldots,\phi_{vq_{n-1}}}\ne\emptyset$. That is, the collection of all labels on $u$'s binding edges is not disjoint with the collection of all labels on $v$'s binding edges.
  \item $\mset{\phi_{p_1p_1},\ldots,\phi_{p_{n-1}p_{n-1}}}\equiv \mset{\phi_{q_1q_1},\ldots,\phi_{q_{n-1}q_{n-1}}}$. That is,  the collection of all labels on $u$'s binding vertices coincides with the collection of all labels on $v$'s binding vertices.%
\end{enumerate}
\end{lemma}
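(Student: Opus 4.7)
My plan is to prove the four conditions equivalent by establishing the cycle $(1) \Rightarrow (2) \Rightarrow (3) \Rightarrow (1)$ together with $(1) \Leftrightarrow (4)$. The three implications in the cycle and the forward direction $(1) \Rightarrow (4)$ follow quickly from structural properties of $\Phi$ already at our disposal, while $(4) \Rightarrow (1)$ is where the real work lies.

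For $(1) \Rightarrow (2)$ I would invoke the equitable partition of $\Phi$ granted by Proposition \ref{prop:equit}: vertices $u$ and $v$ lying in a common cell have identical row-multisets in $\Phi$, and since the only non-blank off-diagonal entries in row $u$ are the $n-1$ binding-edge labels (Lemma \ref{lem:shbysb} guarantees these are distinguished from $x_0$), stripping the blanks yields the claimed multiset equality. The implication $(2) \Rightarrow (3)$ is immediate since both multisets are non-empty for $n > 2$. For $(3) \Rightarrow (1)$ a single application of Proposition \ref{prop:dbqd}(\ref{prop:dbqd3}) suffices: a coincidence $\phi_{u p_i} = \phi_{v q_j}$ of binding-edge labels already determines $\phi_{uu}$ on one side and $\phi_{vv}$ on the other from the common label, forcing $\phi_{uu} = \phi_{vv}$. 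The implication $(1) \Rightarrow (4)$ then follows by combining $(1) \Rightarrow (2)$ with the same Proposition \ref{prop:dbqd}(\ref{prop:dbqd3}): each binding-edge label determines the label on its binding vertex, so the multiset equality on edges lifts to a multiset equality on binding-vertex labels.

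The hard direction is $(4) \Rightarrow (1)$, which I would tackle by contradiction. Suppose $L_u \equiv L_v$ but $\phi_{uu} \ne \phi_{vv}$. For each $p_i = u \dw w_i$ the multiset equality supplies a $q_{j(i)} = v \dw w'_{j(i)}$ with $\phi_{p_i p_i} = \phi_{q_{j(i)} q_{j(i)}}$, whence Proposition \ref{prop:dbqd}(2) forces $\mset{\phi_{uu}, \phi_{w_i w_i}} \equiv \mset{\phi_{vv}, \phi_{w'_{j(i)} w'_{j(i)}}}$. Since $\phi_{uu} \ne \phi_{vv}$, the only consistent pairing of these two-element multisets gives $\phi_{w_i w_i} = \phi_{vv}$. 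As $i$ ranges over $[n-1]$, the $w_i$ exhaust $[n] \setminus \{u\}$, so every basic vertex other than $u$ carries label $\phi_{vv}$; the symmetric argument with the roles of $u$ and $v$ swapped shows every basic vertex other than $v$ carries label $\phi_{uu}$. Picking any third basic vertex $z \in [n] \setminus \{u, v\}$, which exists precisely because $n > 2$, yields $\phi_{zz} = \phi_{uu} = \phi_{vv}$, the desired contradiction. The main obstacle is this pigeonhole step: the defining feature of binding graphs (that $u$'s binding neighbors enumerate every other basic vertex) combined with the hypothesis $n > 2$ is exactly what rules out the degenerate cross-identification and secures the conclusion.
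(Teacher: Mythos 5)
Your proof is correct, and it diverges from the paper's in two places worth noting. For the cycle $(1)\Rightarrow(2)\Rightarrow(3)\Rightarrow(1)$ the paper argues $(1)\Rightarrow(2)$ from row-equivalence in the stable graph together with recognition of binding edges, and closes the cycle by the contrapositive of $(3)\Rightarrow(1)$: if $\phi_{uu}\ne\phi_{vv}$ the \emph{strongly} equitable partition of $\hat{\Phi}$ forces the two binding-edge label multisets to be disjoint. You instead close the cycle directly via Proposition \ref{prop:dbqd}(\ref{prop:dbqd3}) — a shared binding-edge label determines the same basic-endpoint label on both sides — which is a legitimate alternative that trades the strong-equitability argument for the "determination" property; both routes rest on the same underlying stability facts. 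The more substantial difference is item (4): the paper disposes of it in one sentence by citing Proposition \ref{prop:dbqd}, which cleanly gives $(2)\Rightarrow(4)$ (each edge label determines its binding-vertex label) but leaves the converse implicit, since Proposition \ref{prop:dbqd}(1) only recovers the unordered pair $\mset{\phi_{up},\phi_{wp}}$ from $\phi_{pp}$ and does not say which element attaches to $u$. Your contradiction argument for $(4)\Rightarrow(1)$ — pairing off equal binding-vertex labels, using Proposition \ref{prop:dbqd}(2) to force $\phi_{w_iw_i}=\phi_{vv}$ for every $w_i\ne u$ when $\phi_{uu}\ne\phi_{vv}$, and then pigeonholing on a third basic vertex (this is where $n>2$ is genuinely needed) — resolves exactly this cross-identification ambiguity, so your write-up is in fact more complete than the paper's on this point. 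One small citation slip: the fact that the only non-blank off-diagonal entries in a basic row of $\Phi$ are the $n-1$ binding edges follows from the construction of $\Phi$ (basic edges are set to $x_0$ and labels inherited from $\hat{B}$ lie in $\Var$), not from Lemma \ref{lem:shbysb}; this does not affect the argument.
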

\begin{proof}
We now show that the first three conclusions are equivalent.
 If two basic vertices $u,v$ are in the same cell, then the entries of two rows in $\WL(\Phi)$ are equal as multisets. Since a stable graph recognizes binding edges,  the collection of all labels on binding edges of $u$ in $\Phi$ is the same as the collection of all labels on binding edges of $v$.

On the other hand, if basic vertices $u$ and $v$ are in different cells of $\WL(B)$, then  $\phi_{uu}\ne \phi_{vv}$. Let $p_i:=u\dot{\wedge}u_i$ and $q_j:=v\dot{\wedge}v_j$  be any two binding vertices of $u$ and $v$, respectively. We then have  $\phi_{u p_i}\ne \phi_{vq_j}$ from the definition of $\Phi$ and the strongly equitable partition of $\hat{\Phi}$. That is, in this case
$$\mset{\phi_{up_1},\ldots,\phi_{up_{n-1}}}\cap \mset{\phi_{vq_1},\ldots,\phi_{vq_{n-1}}}=\emptyset\,.$$
 That shows
\begin{align}
  \phi_{uu}=\phi_{vv}&\quad\Longleftrightarrow\quad\mset{\phi_{up_1},\ldots,\phi_{up_{n-1}}}\equiv \mset{\phi_{vq_1},\ldots,\phi_{vq_{n-1}}}\notag\\
  &\quad\Longleftrightarrow\quad \mset{\phi_{up_1},\ldots,\phi_{up_{n-1}}}\cap \mset{\phi_{vq_1},\ldots,\phi_{vq_{n-1}}}\ne\emptyset\,.%
\end{align}%
That shows the first three conclusions are equivalent to each other.
The last conclusion is equivalent to the first three ones by the last result in Proposition \ref{prop:dbqd}.

That completes the proof.
\end{proof}

Further, we have an important fact: The stable labels on the edges of any pair of basic vertices are uniquely determined by the label to their binding vertex.

\begin{lemma}\label{lem:bv}
  In the stable graph $\WL(\Phi):=(m_{ij})$, for basic vertices $u,v,u',v'$ and $p=u\dw v, q=u'\dw v'$, we have
  \begin{align}\label{eq:bveq}
    m_{uv}=m_{u'v'}\quad\Longleftrightarrow\quad m_{pp}=m_{qq}\quad\Longleftrightarrow\quad m_{up}=m_{u'q}\quad \text{and}\quad m_{pv}=m_{qv'} \,.
  \end{align}
\end{lemma}
\begin{proof}
Let $\WL(\Phi)\dd\WL(\Phi):=(m^{(2)}_{ij})$, $p:=u\dot{\wedge}v$ and $q:=u'\dot{\wedge}v'$. We have, for $n_1=n(n+1)/2$,
\begin{align}\label{eq:uv}
 m^{(2)}_{uv}&=\sum_{k\in[n_1]}m_{uk}\dd m_{kv}=
m_{up}\dd m_{pv}+\sum_{k\in[n_1]\backslash\{p\}}m_{uk}\dd m_{kv}\,,\\
m^{(2)}_{u'v'}&=\sum_{k\in[n_1]}m_{u'k}\dd m_{kv'}=
m_{u'q}\dd m_{qv'}+\sum_{k\in[n_1]\backslash\{q\}}m_{u'k}\dd m_{kv'}\,.
\end{align}
Since a stable graph recognizes binding edges, we have,  by uniqueness of binding vertex to $u,v$ or $u',v'$,
$$\mset{m_{up}\dd m_{pv},m_{u'q}\dd m_{qv'}}\cap\mset{ m_{uk}\dd m_{kv}\mid k\in[n_1]\backslash\{p\}}=\emptyset$$
 and
 $$\mset{m_{up}\dd m_{pv},m_{u'q}m_{qv'}}\cap\mset{ m_{u'k}\dd m_{kv'}\mid k\in[n_1]\backslash\{q\}}=\emptyset\,.$$
If $m_{uv}=m_{u'v'}$, then $m^{(2)}_{uv}=m^{(2)}_{u'v'}$ by stableness of $\WL(\Phi)$. It further implies $m_{up}\dd m_{pv}=m_{u'q}\dd m_{qv'}$ from facts above. It means $m_{up}=m_{u'q}$ and $m_{pv}=m_{qv'}$. By Proposition \ref{prop:dbqd}, we get $m_{pp}=m_{qq}$. (Remember that we assumed the labels of binding edges and vertices in $\Phi$ are inherited from $\WL(\Phi)$.)

On the other side, if $m_{pp}=m_{qq}$, then $m_{up}\dd m_{pv}=m_{u'q}\dd m_{qv'}$ by Proposition \ref{prop:dbqd}.
It means   $m_{up}=m_{u'q}$ and $m_{pv}=m_{qv'}$.   We then have $m_{up}^{(2)}=m_{u'q}^{(2)}$ in $\WL(\Phi)\dd\WL(\Phi)$, and
\begin{align}
m_{up}^{(2)}&= m_{up}\dd m_{pp}+m_{uv}\dd m_{vp}+m_{uu}\dd m_{up}+\sum_{k\in[n_1]\backslash\{p,u,v\}}m_{uk}\dd m_{kp}\,,\label{eq:up}\\
m_{u'q}^{(2)}&= m_{u'q}\dd m_{qq}+m_{u'v'}\dd m_{v'q}+m_{u'u'}\dd m_{u'q}+\sum_{k\in[n_1]\backslash\{q,u',v'\}}m_{u'k}\dd m_{kq}\,.\label{eq:u'q}
\end{align}
Note that $m_{up}, m_{pv}$ are only labels on $p$'s binding edges appeared in \eqref{eq:up}, and $m_{u'q}, m_{qv'}$ are only labels on  $q$'s binding edges appeared in \eqref{eq:u'q}. Since a stable graph recognizes binding edges, $m_{up}^{(2)}=m_{u'q}^{(2)}$ implies that
$$m_{up}\dd m_{pp}+m_{uv}\dd m_{vp}+m_{uu}\dd m_{up}=m_{u'q}\dd m_{qq}+m_{u'v'}\dd m_{v'q}+m_{u'u'}\dd m_{u'q}\,.$$
That further implies $
m_{up}\dd m_{pp}=m_{u'q}\dd m_{qq},
m_{uu}\dd m_{up}=m_{u'u'}\dd m_{u'q}$ and $m_{uv}\dd m_{vp}=m_{u'v'}\dd m_{v'q}$. That shows $m_{up}=m_{u'q}$, $m_{vp}=m_{v'q}$ and $m_{uv}=m_{u'v'}$. By the converse equivalent property, we have $m_{pu}=m_{qu'}$ and $m_{pv}=m_{qv'}$.

The remaining part is similar to prove.
That ends the proof.
\end{proof}

For a binding graph $B$, Lemma \ref{lem:bv} states that the stable graph $\WL(B)$ is completely determined by the labels on binding vertices.
The result in Lemma \ref{lem:bv} can be interpreted as such: For a stable graph $\WL(B):=(m_{ij})$ of a binding graph $B$ of order $n$, a complete labelling graph $X:=(x_{ij})$ of order $n$ satisfies that the labels $x_{uv}=x_{rs}$ iff $m_{up}=m_{rq}$  for $u,v,r,s\in[n]$ with $u<v,r<s$, $p=u\dw v$ and $q=r\dw s$, then the  graph $X$ is equivalent to the subgraph of $\WL(B)$ induced by  $[n]$.

We now consider a graph $\Theta$  as the result from $\Phi$ by substituting the labels on all basic vertices of $\Phi$ with $x_0$,  and substituting the labels on all binding edges of $\Phi$ with $x\in\Var$ and $x\notin\Phi$. Formally, the graph $\Theta:=(\theta_{ij})$ is defined as follows, for all  $i,j\in[n_1]$,
   \begin{align}\label{eq:theta}
\theta_{ij}:=\begin{cases}
     \phi_{ii}, & \hbox{if $i=j,\,i\in[n+1..n_1]$,}\\
      x, & \hbox{if $(i,j)$ is a binding edge,}\\
      x_0, &\hbox{otherwise.}
     \end{cases}
\end{align}
In other words, the graph $\Theta$ is a bipartite graph with all binding edges labeled with $x$, and all binding vertices inherit the corresponding labels from $\WL(\Phi)$.

Lemma \ref{lem:basicl} claims that the labels on basic vertices in $\Phi$ are completely determined by all the labels on their binding vertices, and the labels on binding edges in $\Phi$ are completely determined by the labels on binding vertices too. That implies by Lemma \ref{lem:bv} that the graph $\WL(\Phi)$ are determined by labels on their binding vertices, and by
Lemma \ref{lem:bv} that the graph $\WL(\Phi)$ is determined by labels on binding vertices too. By the construction of $\Theta$, we thus have the following conclusion. (A direct proof is sketched in appendix Section \ref{sec:proof1} and \ref{sec:proof}.)

\begin{theorem}\label{thm:theta}
 $\WL(\Phi)\approx\WL(\Theta)$ and $\hat{\Phi}\approx\hat{\Theta}$.%
\end{theorem}

This result in fact shows the labels of a stable graph of any binding graph is determined by the labels on binding vertices, provided the labeling induces equitable partition. The equitable partition is guaranteed by the regularity in bipartite like $\Theta$. We introduce a general notion of labeling to binding vertices in the next section.

\section{The Stable Partition of a Binding Graph is the Automorphism Partition}\label{sec:aut}

A \emph{plain binding graph} is a binding graph with an empty graph (a graph with only blank edges) as the basic graph.  It is unique (up to isomorphism) and denoted as $\Pi_n$ if the basic empty graph is of order $n$. There are totally $n_1:=n(n+1)/2$ vertices and $n(n-1)/2$ binding vertices in $\Pi_n$.

A binding vertex (b.v. for short) labeling  $\pi$ to  $\Pi_n:=(m_{ij})$ is an assignment to binding vertices in $[n+1..n_1]$ with variables in $\Var$ such that $\pi(i)\notin\{x_0,x\}$ for $\forall i\in [n+1..n_1]$ where $x$ is the label to binding edges. The graph after assignment by $\pi$ is denoted as $\Pi_n^\pi$.

The b.v. labeling $\pi$ will form a partition $\D:=(D_1,D_2,\ldots, D_d)$ to binding vertices such that each cell $D_i$ contains all binding vertices with the same label.

For each basic vertex $u$ of $\Pi_n^\pi$, the collection (multiset) $T_u$ of the labels on all its binding vertices is called the \emph{labeling type} by $\pi$, or just \emph{$\pi$-type} of $u$.  A collection $C$ of all basic vertices that share the identical $\pi$-type is a basic cell induced by $\pi$. In this way, $\pi$ will induce a partition of basic vertices denoted as $\C:=(C_1,\ldots, C_c)$.

The \emph{degree} of a basic vertex $u$ to a cell $D\in\D$ is the number of $u$'s neighbours in $D$, denoted as $\deg(u,D)$. If the degrees of all vertices from a basic cell $C\in\C$ to the binding  cell $D$ are identical, we say  $C$ is \emph{regular to} $D$, and denote this degree as $\deg(C,D)$. Similarly, we define $D$ is \emph{regular} to $C$,  and denote $\deg(p,C)$ for $p\in D$ and $\deg(D,C)$.

A b.v. labeling $\pi$ is said to be a\emph{ stable b.v. labeling} if any induced  binding cell $D\in \D$ is regular to any induced basic cell $C\in\C$.

\begin{theorem}
For a b.v. stable labeling $\pi$ to plain binding graph $\Pi_n$ and the partitions $\C:=(C_1,\ldots,C_c)$ and $\D:=(D_1,\ldots,D_d)$ induced by $\pi$,  it holds that
\begin{enumerate}
  \item Any basic cell from $\C$ and binding cell from $\D$ are regular to each other.
  \item $(\C,\D)$ is the stable partition to $\Pi_n^\pi$.%
\end{enumerate}
\end{theorem}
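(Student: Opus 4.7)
Part~(1) is essentially a definitional check. Two basic vertices $u,v$ lie in the same cell $C\in\C$ iff $T_u\equiv T_v$; since $\pi$ assigns distinct labels to binding vertices in distinct $\D$-cells, this is equivalent to $\deg(u,D)=\deg(v,D)$ for every $D\in\D$. Hence every $C\in\C$ is regular to every $D\in\D$; the converse direction is the stable-labeling hypothesis.

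For part~(2), write $\Pi:=\Pi_n^\pi$ and let $P_0$ be the initial vertex partition of $\Pi$ (all basic vertices in one cell, binding vertices grouped by $\D$). The plan has two steps: first identify $(\C,\D)$ as the coarsest equitable refinement $Q^*$ of $P_0$; then show that the SaS stable partition $P_\infty$ of $\Pi$ coincides with $Q^*$. For the first step, $(\C,\D)$ is equitable: the cross-cell degrees between $\C$ and $\D$ are uniform by part~(1); among basic cells all cross-degrees vanish because the basic graph is edgeless; among binding cells they vanish because there are no binding-binding edges. $(\C,\D)$ is coarsest: any coarsening that merges two basic cells $C_a\ne C_b$ yields a cell containing $u\in C_a, v\in C_b$ with $T_u\not\equiv T_v$, so $\deg(u,D)\ne\deg(v,D)$ for some $D\in\D$, violating equitability; coarsenings that merge binding cells break refinement of $P_0$.

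For the second step, $P_\infty$ is equitable (by Theorem~\ref{thm:jhhf}, since strongly equitable implies equitable) and refines $P_0$, so the coarsest-equitable property of $Q^*$ gives that $P_\infty$ refines $Q^*$. The reverse refinement is proved by induction on the SaS round $t\ge 0$ with the invariant: for every vertex $u$ and every $Q^*$-cell $Z$, the multiset $\mset{(A_t)_{u,k}:k\in Z}$ depends only on the $Q^*$-cell of $u$. The base case follows from the equitability of $Q^*$ applied to the uniform labeling of $\Pi$. For the inductive step one expands
\[
(A_t^2)_{u,k}=\sum_{W\in Q^*}\sum_{\ell\in W}(A_t)_{u,\ell}(A_t)_{\ell,k},
\]
and verifies that the resulting multiset over $k\in Z$ depends only on the $Q^*$-cell of $u$, using the invariant at step $t$ together with the equitability of $Q^*$. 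This forces $(A_t^2)_{u,u}=(A_t^2)_{v,v}$ whenever $u,v$ share a $Q^*$-cell, so no $Q^*$-cell is ever split by SaS. Combining both inclusions yields $P_\infty=Q^*=(\C,\D)$.

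\emph{Main obstacle.} The delicate part is the inductive step: showing that the cell-indexed sums close into the invariant at time $t+1$ using only data controlled by the invariant at time $t$. The key bookkeeping tracks how the joint distributions of $(A_t)_{u,\ell}$ (over $\ell\in W$) and $(A_t)_{\ell,k}$ (over $k\in Z$) aggregate, after summing over $\ell\in W$ and over $W\in Q^*$, into multisets depending only on the cell pair $(X,Z)$ with $X$ the cell of $u$. The plain-binding-graph structure---in particular, the uniqueness of the binding vertex for each pair of distinct basic vertices---ensures that the structural parameters driving these sums reduce to cell-indexed invariants furnished by part~(1); without this uniqueness the bulk sums would fail to close cell-wise and the invariant could be lost.
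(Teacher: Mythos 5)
Your part (1) matches the paper's own argument and is fine. For part (2) your route is genuinely different from the paper's (which deduces stability of $(\C,\D)$ from Lemma \ref{lem:basicl}, Lemma \ref{lem:bv} and Theorem \ref{thm:theta}, i.e., from the structural facts that basic-vertex labels and basic-edge labels of a stable binding graph are determined by the binding-vertex labels), but it has a genuine gap at exactly the step you flag as the ``main obstacle.'' Your plan is to identify $(\C,\D)$ with the coarsest equitable refinement $Q^*$ of $P_0$ and then to show by induction that SaS never splits a $Q^*$-cell, the inductive invariant being that $\mset{(A_t)_{u,k}\mid k\in Z}$ depends only on the $Q^*$-cell of $u$. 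That invariant records only the marginal multiset of each row restricted to each cell, i.e., color-refinement-type information, whereas $(A_t^2)_{u,k}=\sum_\ell (A_t)_{u,\ell}(A_t)_{\ell,k}$ depends on the joint distribution of the pairs $\bigl((A_t)_{u,\ell},(A_t)_{\ell,k}\bigr)$ over $\ell$. Equitability of $Q^*$ together with the step-$t$ invariant does not determine these products; this is precisely why the coarsest equitable (degree) partition is in general strictly coarser than the SaS/WL stable partition. So the inductive step cannot be ``verified'' from the stated hypotheses alone: it must invoke the plain-binding-graph structure (uniqueness of $u\dw v$, degree $2$ of binding vertices, emptiness of the basic graph) in an essential, quantitative way, and you only gesture at this (``the key bookkeeping tracks how the joint distributions \ldots\ aggregate'') without carrying it out. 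Since all of the actual content of part (2) is concentrated in that step, the proposal as written is not a proof.

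Two smaller remarks. First, your identification $Q^*=(\C,\D)$ implicitly relies on the standard fact that the coarsest equitable refinement of $P_0$ exists, is unique, and is refined by every equitable partition refining $P_0$; granting that, your argument (no proper coarsening of $(\C,\D)$ that still refines $P_0$ is equitable) does pin it down, and the inclusion ``$P_\infty$ refines $Q^*$'' via Theorem \ref{thm:jhhf} is correct. Second, if you want to complete your route, the missing content is essentially Lemma \ref{lem:bv} and Lemma \ref{lem:basicl} in disguise: you must show that after squaring, the label of a basic edge $(u,v)$ is controlled by the $\D$-cell of the unique binding vertex $u\dw v$ and the cells of $u,v$, and that this feedback never produces a finer vertex partition than $(\C,\D)$; that computation is where the uniqueness of binding vertices actually enters, and it needs to be written out.
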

\begin{proof}
For a basic cell $C\in\C$ and $u,v\in C$, the $\pi$-type $T_u$ of $u$ and $\pi$-type $T_v$ of $v$ satisfy $T_u\equiv T_v$ by the definition. For a cell $D\in\D$ determined by some label $\mathtt{d}\in\Var$, the degree $\deg(u,D)$ is the number of $u$'s binding vertices in $D$, and also the number of $\mathtt{d}$ in $T_u$. That indicates $\deg(u,D)=\deg(v,D)$ by $T_u\equiv T_v$ for all $u,v\in C$. Hence, $C$ is regular to $D$. On the other side, $D$ is regular to $C$ by the definition of stable b.v. labelings.

The regularity shown above implies that the partition $(\C,\D)$ to $\Pi_n^\pi$ is an equitable partition.  It is straightforward from Lemma \ref{lem:basicl}, \ref{lem:bv} and Theorem \ref{thm:theta} to obtain that $(\C,\D)$ is the stable partition to $\Pi_n^\pi$.

\end{proof}

If $|C_i|=1$ for all $i\in[c]$, it is called \emph{a discrete partition} to basic vertices.
 If a stable b.v. labeling $\pi$ induces a discrete partition to basic vertices, then $|C_i|=1$ for all $i\in[c]$. Let $|D_1|:=d_1$ and $p=u\dot{\wedge}v$ be a binding vertex in $D_1$ for two different basic vertices $u,v$. The singleton of basic cells implies vertex $u\in C_{i_1}=\{u\}$ and $v\in C_{i_2}=\{v\}$ with $i_1\ne i_2$. Since $D_1$ is regular to both $C_{i_1}$ and $C_{i_2}$, all vertices in $D_1$ are binding vertices between $u$ and $v$. That forces $d_1=1$ due to the uniqueness of binding vertex between any pair of basic vertices. That shows that $|D_j|=1$ for all binding cells in this setting.

On the other side, if $\pi$ assigns distinct labels to binding vertices, it is easy to verify that $\pi$ induces a discrete partition to basic vertices. If  a labeling $\pi$ induces a discrete partition to binding vertices, it is  then called \emph{a discrete labeling}. We thus have the following lemma.

\begin{lemma}
  For $n>2$, a stable b.v. labeling to $\Pi_n$ induces a discrete partition if and only if it is a discrete b.v.  labeling.%
\end{lemma}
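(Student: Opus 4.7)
The plan is to handle the two directions of the equivalence separately. For $(\Rightarrow)$ I would invoke the regularity built into the definition of a stable b.v.\ labeling; for $(\Leftarrow)$ I would exploit the uniqueness of the binding vertex between any pair of basic vertices together with the hypothesis $n>2$.

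For the forward direction, suppose the induced basic partition is discrete, so $|C_i|=1$ for every $i$. My argument would take an arbitrary binding cell $D_j$ and pick some $p=u\dw v\in D_j$; then $\{u\}$ and $\{v\}$ are themselves basic cells. Since $D_j$ is regular to each of them, every $q\in D_j$ must satisfy $\deg(q,\{u\})=\deg(p,\{u\})=1$ and $\deg(q,\{v\})=1$, i.e.\ $q$ binds both $u$ and $v$. Uniqueness of binding vertices then forces $q=p$, so $|D_j|=1$, and $\pi$ is a discrete b.v.\ labeling. This is essentially the paragraph immediately preceding the lemma; I would simply formalize it as the $(\Rightarrow)$ half.

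For the backward direction, suppose $\pi$ takes pairwise distinct values on the binding vertices. First I would note in passing that such a $\pi$ is automatically stable, since every binding cell is a singleton and regularity to any basic cell is then trivial. To prove that the induced basic partition is discrete, I would take two distinct basic vertices $u,v$ and, using $n>2$, pick a third basic vertex $w\notin\{u,v\}$. The label $\pi(u\dw w)$ lies in $T_u$, and I claim it does not lie in $T_v$: the binding vertices of $v$ are exactly $\{v\dw z:z\ne v\}$, and none of them equals $u\dw w$ (otherwise $\{u,w\}=\{v,z\}$, forcing $v\in\{u,w\}$, a contradiction). Distinctness of $\pi$'s values then excludes $\pi(u\dw w)$ from $T_v$, so $T_u\not\equiv T_v$; hence $u$ and $v$ sit in different basic cells, and $|C_i|=1$ for every $i$.

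The main subtlety is the hypothesis $n>2$ in the backward direction: without a third basic vertex, $T_u$ and $T_v$ would both collapse to the singleton $\{\pi(u\dw v)\}$ and the argument would fail. I do not anticipate any other obstacle.
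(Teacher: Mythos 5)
Your proof is correct and follows essentially the same route as the paper: the forward direction is exactly the paper's argument (regularity of a binding cell to the singleton basic cells $\{u\}$ and $\{v\}$ plus uniqueness of $u\dw v$ forces each binding cell to be a singleton), and the backward direction supplies the natural verification that the paper leaves as "easy to verify," correctly using a third basic vertex $w$ to separate $T_u$ from $T_v$, which is precisely where the hypothesis $n>2$ is needed.
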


If there exist edges connecting two cells, it is said that the two cells are neighbor. By the regularity between cells induced by a stable b.v. labeling, it is easy to see that each binding cell $D\in\D$ will have at most two basic cells as neighbors, since the binding vertices are all of degree $2$. We say that $D$ binds the two cells in this setting. Whenever the two cells are identical as $C\in\C$,  it should be $\deg(D,C)=2$. The binding cell $D$ is then said to be an exclusive binding cell to $C$. It is easy to see that there is at least one exclusive binding cell for each basic cell $C$, provided $|C|>1$. Note that an exclusive binding cell to $C$ consists of binding vertices among vertices of $C$.


We now consider two stable b.v. labelings $\pi_1$ and $\pi_2$ to plain binding graph $\Pi_n$.

\begin{definition}[Similar stable b.v. labelings]\label{def:similar}
  Let $\pi_1$ and $\pi_2$ be two stable b.v. labelings to plain binding graph $\Pi_n$. Let the partitions $\C:=(C_1,\ldots, C_c)$ and $\D:=(D_1,\ldots,D_d)$ be basic vertices partition and binding vertices partition induced by $\pi_1$, respectively, and the partitions $\mathcal{E}:=(E_1,\ldots, E_e)$ and $\mathcal{F}:=(F_1,\ldots, F_f)$ be basic vertices partition and binding vertices partition induced by $\pi_2$, respectively. If the followings are satisfied up to reordering of cells induced by $\pi_2$, labelings $\pi_1$ and $\pi_2$ are said to be equivalent.
\begin{enumerate}
  \item $c=e$ and $d=f$.
  \item $|C_i|=|E_i|$ and $|D_j|=|F_j|$ for all $i\in[c],j\in[d]$.
  \item $\deg(C_i,D_j)=\deg(E_i,F_j)$ and $\deg(D_j,C_i)=\deg(F_j,E_i)$ for all $i\in[c],j\in[d]$.%
\end{enumerate}
The label to vertices in $D_j$ and the label to vertices in $F_j$ are now called \emph{corresponding labels}.

If two equivalent stable b.v. labelings have the same corresponding labels, we say they are \emph{similar labelings}.%
\end{definition}

We will show that two graphs $\Pi_n^{\pi_1}$ and $\Pi_n^{\pi_2}$ are isomorphic for  two similar stable b.v. labelings $\pi_1$ and $\pi_2$.
 This conclusion is proved by induction on $n$. To use the inductive assumption, we have to construct a stable b.v. labeling $\pi'$ as a refinement of $\pi$, where the stable graph obtained by WL process will be used. For that purpose, we stress here some results about the stable graph obtained by WL process. The proofs to these results are similar to the previous proofs of the corresponding results about stable graphs obtained by SaS process (for example, Proposition \ref{prop:wdtxzhi} and Lemma \ref{lem:shbysb}), and omitted.

 \begin{proposition}
   The stable graph $\WL(A)$ obtained by WL process has the following properties.
   \begin{itemize}
     \item $\WL(A)$ recognizes vertices and edges. In case $A$ is a binding graph, $\WL(A)$ recognizes binding vertices and binding edges.
     \item The vertex partition of $\WL(A)$ is a strongly equitable partition.%
   \end{itemize}
 \end{proposition}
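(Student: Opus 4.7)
The plan is to mirror the earlier SaS-side proofs (Proposition \ref{prop:shibie}, Proposition \ref{prop:wdtxzhi}, Theorem \ref{thm:jhhf}, and Lemma \ref{lem:shbysb}) with the diamond product $\diamond$ playing the role of ordinary multiplication, and to exploit the already-established identity $\hat{A}\approx\WL(A)+(\WL(A))^{\T}$ from \eqref{eq:wdteq} to transfer what we can directly from $\hat{A}$.

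For the recognition statements, first observe that in the WL sequence \eqref{eq:prodsub} the initial substitution produces $X_{1}$ that recognizes vertices. I would then argue by induction that $X_{i+1}\approx X_i\diamond X_i$ preserves vertex recognition: the diagonal entry of $X_i\diamond X_i$ at position $(u,u)$ contains the symbol $x_{uu}\diamond x_{uu}$, which cannot appear in any off-diagonal sum $\sum_k x_{vk}\diamond x_{kw}$ with $v\ne w$ once $X_i$ already recognizes vertices, because vertex labels live only on the diagonal of $X_i$. The analogue of the Recognizable Property $\mathbf{R}$ then carries through to $\WL(A)$. Edge recognition is obtained by the same pattern, tracking the terms $x_{uu}\diamond x_{uv}$ and $x_{uv}\diamond x_{vv}$ inside $(X_1\diamond X_1)_{uv}$: these force any pair $(u,v)$ carrying a label of an edge of $A$ to be separated from any pair carrying a blank-edge label, and the distinction is preserved by every subsequent substitution.

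For the binding-graph refinement, I would follow Lemma \ref{lem:shbysb} verbatim with $\diamond$ in place of ordinary product. Writing $X_1\diamond X_1=(z_{ij})$ for the binding graph and expanding
\[
z_{pu}=x_{pu}\diamond x_{uu}+x_{pv}\diamond x_{vu}+\sum_{k\in[n_{1}]\backslash\{u,v\}}x_{pk}\diamond x_{ku},\qquad p=u\dot{\wedge}v,
\]
the middle term is the only one that carries the label $x_{vu}$ on its right factor when $(u,v)$ is a basic edge, whereas it contributes the blank symbol otherwise. The same kind of bookkeeping performed in Claims~1 and~2 of Lemma \ref{lem:shbysb} shows that $z_{pu}$ cannot coincide with $z_{p'u'}$ when one of the pairs binds a basic edge and the other binds a blank pair, and cannot coincide with the diagonal-involving sums for basic edges either; inductively this forces $\WL(B)$ to recognize binding vertices and binding edges on top of the general recognition of vertices and edges.

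For the strongly equitable partition I would use the transfer identity \eqref{eq:wdteq} together with the fact, already proved, that the corresponding partition of $\hat{A}$ is strongly equitable. Since $\Diag(\WL(A))\approx\Diag(\hat{A})$, the cells $\mathcal{C}=(C_{1},\ldots,C_{p})$ of $\WL(A)$ coincide with those of $\hat{A}$. Equitability in the oriented sense (row sums and column sums separately, as spelled out in the definition) follows from the diamond analogue of Proposition \ref{prop:wdtxzhi}(1), proved by squaring (with $\diamond$) and using that $X_t$ recognizes vertices: $x_{uu}=x_{vv}$ forces the $u$-th row of $\WL(A)$ to equal the $v$-th row as an ordered multiset, and symmetrically for columns by converse equivalence. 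The \emph{strong} disjointness of label-sets across distinct pairs of cells is then inherited from $\hat{A}$: if a label of $\WL(A)$ appeared at two edges lying between different pairs of cells, then the corresponding symmetric label in $\WL(A)+(\WL(A))^{\T}\approx\hat{A}$ would do the same, contradicting Theorem \ref{thm:jhhf}. The main technical nuisance, I expect, is book-keeping the noncommutativity of $\diamond$ throughout the diamond-squared expansions in the binding case, so that the recognition claims survive induction on the WL iteration count; every other step is a direct translation of the SaS proofs.
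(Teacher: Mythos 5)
Your proposal is correct and follows exactly the route the paper itself indicates: the paper omits this proof, stating only that it is obtained by repeating the SaS-side arguments (Proposition \ref{prop:shibie}, Proposition \ref{prop:wdtxzhi}, Theorem \ref{thm:jhhf}, Lemma \ref{lem:shbysb}) with the diamond product in place of ordinary multiplication, which is precisely what you carry out. Your additional shortcut of inheriting the strong label-disjointness across distinct pairs of cells from $\hat{A}$ via \eqref{eq:wdteq} together with converse equivalence is sound and spares some of the diamond bookkeeping.
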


 A stable b.v. labeling $\pi'$ is \emph{a refinement of $\pi$} when it  satisfies that  $\pi'(p)=\pi'(q)$ only if $\pi(p)=\pi(q)$ for binding vertices $p,q$.

\begin{lemma}\label{lem:tech}
  A stable b.v. labeling $\pi$ to $\Pi_n$ induces a stable b.v. labeling $\pi'$ as a refinement of $\pi$.%
\end{lemma}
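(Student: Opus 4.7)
The plan is to obtain $\pi'$ by running a known stabilization procedure on the labeled plain binding graph $\Pi_n^\pi$ and then reading the new labels off the binding vertices of the resulting stable graph. Concretely, I would apply the WL process to $\Pi_n^\pi$ to obtain $W:=\WL(\Pi_n^\pi)$, and then define $\pi'(p)$ to be the label that $W$ assigns to the binding vertex $p$, followed by an equivalent variable substitution so that these new labels avoid $x_0$, the binding edge label $x$, and the labels appearing on basic vertices of $W$. The use of $\WL$ (rather than SaS) is convenient here only because the preceding material establishes recognition of binding vertices/edges and strong equitability for $\WL$ stable graphs in exactly the form needed below, and the graph $\Pi_n^\pi$ fits into the $\Phi$/$\Theta$ framework of Section~\ref{sec:theta}.

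Next I would verify that $\pi'$ is a b.v.\ labeling that refines $\pi$. Because $W$ recognizes binding vertices, the values $\pi'(p)$ are legitimate variable labels distinct from $x_0$ and $x$, so $\pi'$ is a valid b.v.\ labeling. Since $W$ is a refinement of $\Pi_n^\pi$ (as WL always refines the input coloring), $\pi'(p)=\pi'(q)$ forces $p$ and $q$ to have equal labels already in $\Pi_n^\pi$, i.e.\ $\pi(p)=\pi(q)$. Hence $\pi'$ refines $\pi$, as required by the definition of refinement of stable b.v.\ labelings.

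It then remains to check that $\pi'$ is itself stable, i.e.\ that the induced binding partition $\D'$ is regular with respect to the induced basic partition $\C'$ (and vice versa). The key observation is that $\Pi_n^\pi$ is, up to re-labeling, exactly a graph of the type $\Phi$/$\Theta$ studied in Sections~\ref{sec:phi}--\ref{sec:theta}: binding vertices carry labels, basic vertices have labels determined by their $\pi$-types, and the only edges are binding edges. Consequently the analog of Lemma~\ref{lem:basicl} applies to $W$: the basic partition induced by $W$ coincides with the partition $\C'$ of basic vertices by $\pi'$-type. Now the vertex partition of $W$ is a strongly equitable partition (this is the WL-analog of Theorem~\ref{thm:jhhf}, which holds by the parallel proposition the paper asserts for $\WL$). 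Strong equitability immediately yields that every binding cell in $\D'$ is regular to every basic cell in $\C'$ and, symmetrically, that every basic cell in $\C'$ is regular to every binding cell in $\D'$. This is exactly the stability condition for the b.v.\ labeling $\pi'$.

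The one point that requires a little care — and which I expect to be the main technical obstacle — is the identification of the basic partition $\C'$ induced by the $\pi'$-types with the basic partition that $W$ produces. For this I would invoke the chain of equivalences in Lemma~\ref{lem:basicl}: in $W$, two basic vertices lie in the same cell iff the multisets of labels on their binding neighbours agree, which is exactly the condition defining the same $\pi'$-type. Once this identification is established, the strong equitability of $W$'s vertex partition transfers verbatim into the degree regularities $\deg(C,D)$ and $\deg(D,C)$ demanded of a stable b.v.\ labeling, and the lemma follows.
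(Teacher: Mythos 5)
Your construction does not produce the refinement this lemma is actually needed for, and in fact produces no refinement at all. By the theorem immediately following the definition of stable b.v.\ labelings, a stable b.v.\ labeling $\pi$ already induces the stable partition $(\C,\D)$ of $\Pi_n^{\pi}$. Hence the vertex partition of $W:=\WL(\Pi_n^{\pi})$ coincides with $(\C,\D)$, and reading the labels of $W$ off the binding vertices returns a labeling $\pi'$ whose induced partition of the binding vertices is exactly $\D$ again; up to an equivalent variable substitution your $\pi'$ is just $\pi$. This satisfies the literal wording of the lemma only vacuously ($\pi$ is trivially a refinement of itself), and it cannot serve the purpose for which the lemma is invoked in Theorem~\ref{thm:simiso}, where the constructed $\pi'$ must induce the basic partition $(\{1\},C_2',\ldots,C_s')$ with vertex $1$ in a singleton cell so that the induction can pass to the subgraph induced by $[2..n]\cup\{u\dw v\mid u,v\in[2..n]\}$.

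The missing idea is individualization before re-stabilization. The paper's proof takes the induced subgraph $X$ of $\WL(\Pi_n^{\pi})$ on the basic vertices, overwrites the labels of vertex $1$ and of all edges incident to it with $x_0$ (thereby splitting each cell of $\C$ into blocks according to the edge-labels toward vertex $1$), proves via Claims 1--4 that the resulting blocks are pairwise separated in the re-stabilization and that the rows within each block $X_{ab}$ are equivalent as multisets (this is where the noncommutative diamond product of the WL process is genuinely needed, not merely convenient), performs a block-wise fresh-variable substitution to obtain $Y$, and only then defines $\pi'$ from the pairs $\mset{y_{uv},y_{vu}}$. That is what makes $\pi'$ a proper stable refinement that isolates vertex $1$; your proposal skips this step entirely, so the degree-regularity verification you outline, while not wrong, is being applied to the wrong labeling.
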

\begin{proof}
  Given a stable b.v. labeling $\pi$, we will construct a stable labeling $\pi'$  from $\Pi_{n}^\pi$.

  Let $X:=(x_{ij})$ be the subgraph of $\WL(\Pi_n^{\pi}):=(w_{ij})$ induced by $[n]$. It is a stable graph obtained by WL process. Moreover,  by Lemma \ref{lem:bv},  for $u,v,r,x\in[n]$ and $u<v,r<s$, it holds that $x_{uv}=x_{rs}$ if and only if $w_{up}=w_{rq}$ (and $w_{pv}=w_{qs}$ by converse equivalent property).

  Let $\mathcal{C}:=(C_1,\ldots, C_c)$ and $\mathcal{D}:=(D_1,\ldots,D_d)$ be the partitions to basic and binding vertices, respectively, induced by $\pi$ in $\Pi_n^\pi$.
   The partition $\C$ is also the vertex partition of graph $X$.

  Assume $1\in C_1$, we will construct a graph $X_0$ from $X$ by turning  the labels on vertex $1$ and all $1$'s incident edges to $x_0$.

Assume the stable graph $X$ is in forms as follows (produce row-column permutation if necessary): The identical entries of the first row (column) locate consecutively, and in such a way, the same label are in the same block and different labels are in different blocks. Formally, let $X=(x_{ij})$ and its  block form is as follows.
\begin{align}\label{eq:xwdt}
 X&=\left(\begin{smallmatrix}
  X_{11}&X_{12}&X_{13}&\cdots&X_{1s}\\
   X_{21}&X_{22}&X_{23}&\cdots&X_{2s}\\
    X_{31}&X_{32}&X_{33}&\cdots&X_{3s}\\
     \vdots&\vdots&\vdots&\vdots&\vdots\\
      X_{s1}&X_{s2}&X_{s3}&\cdots&X_{ss}
\end{smallmatrix}\right)\,,&
 X_0&=\left(\begin{smallmatrix}
  \bar{X}_{11}&\bar{X}_{12}&\cdots&\bar{X}_{1s}\\
   \bar{X}_{21}&X_{22}&\cdots&X_{2s}\\
    \bar{X}_{31}&X_{32}&\cdots&X_{3s}\\
     \vdots&\vdots&\vdots&\vdots\\
      \bar{X}_{s1}&X_{s2}&\cdots&X_{ss}
\end{smallmatrix}\right)\,.
\end{align}
Here the first block is $X_{11}=(x_{11})$, and each block $X_{1i}$ is an $1\times \ell_i$ matrix, such that all entries in $X_{1i}$ are identical. Any entry in $X_{1i}$ is different with entries in $X_{1j}$ for $i\ne j$. A block $X_{ii}$ is a $\ell_i\times \ell_i$ matrix, and $X_{ij}$ a $\ell_i\times \ell_j$ matrix for $i,j\in[s]$.

We now change all entries in  $X_{1i}$ and $X_{i1}$ into $x_0$ and obtain a graph $X_0$ as shown in \eqref{eq:xwdt}. Remember that $x_0\notin X$ and hence $x_0$ appears only in the first row and first column of $X_0$.
 In $X_0$, all entries of $\bar{X}_{1j}$ and $\bar{X}_{j1}$ are  $x_0$, and $X_{ij}$ are all the same as in $X$.

 We now consider the stable graphs of $X_0$.
Set $X\diamond X:=(x_{ij}^{(2)})$ and $X_0\diamond X_0:=(z_{ij})$ in the following discussions, without further explanations.
Since the diagonal entries of $X$ (and of $X_{ii}$) are labels of vertices, we have the claims.\ep

\noindent\textbf{Claim} 1. For all $t\in[s]$, the vertices in each block $X_{tt}$ have the same label.\ep

\noindent \emph{Proof of Claim 1.}
 Let $u,v$ be two vertices in $X_{tt}$, the formation of $X$ indicates $x_{1u}=x_{1v}$ in $X_{1t}$. By the stability of $X$, it should hold that $x^{(2)}_{1u}=x^{(2)}_{1v}$ in $X\diamond X$. That is $\sum_{k\in[n]}x_{1k}\diamond x_{ku}=\sum_{k\in[n]}x_{1k}\diamond x_{kv}$, and we obtain $x_{11}\diamond x_{1u}+ x_{1u}\diamond x_{uu}=x_{11}\diamond x_{1v}+x_{1v}\diamond x_{vv}$ since the stable graph $X$ recognizes vertices. That means $x_{1u}\diamond x_{uu}=x_{1v}\diamond x_{vv}$, we thus have $x_{uu}=x_{vv}$.

That ends the proof of Claim 1.  $\hfill\lhd$\ep

 Claim 1 tells that the vertices in each block are in the same cell $C\in\C$. A cell of $X$ may be split into several blocks in this case. That is, some blocks may share the same vertex label.
 \ep

\noindent\textbf{Claim} 2. In the stable graph $\WL(X_0)$ of  $X_0$, vertices from different blocks $X_{aa}$ and $X_{bb}$ of $X_0$ can not have the same label for all $a,b\in[2..s]$, and $a\ne b$.\ep

\noindent \emph{Proof of Claim 2.}
For $a\ne b,\, a,b\in[2..s]$,  all vertices in $X_{aa}$ (respectively, in $X_{bb}$) have the same label by Claim 1. If the vertices labels in $X_{aa}$ and $X_{bb}$  are different, then all vertices in $X_{aa}$ cannot share the same  labels with any vertices in $X_{bb}$ in $\WL(X_0)$ according to the recognizing property $\mathbf{R}$ of a stable graph (valid to WL process).

We consider the case that the vertex label in $X_{aa}$ is the same as the vertex label in $X_{bb}$. That means $x_{uu}=x_{vv}$ for
 any vertex  $u$ in $X_{aa}$ and  any vertex  $v$ in $X_{bb}$. By the stability of $X$, we have $x^{(2)}_{uu}=x^{(2)}_{vv}$ in $X\diamond X$.

Since $a\ne b$, the formation of  $X$ implies $x_{1u}\ne x_{1v}$. In  $X_0\diamond X_0$,
\begin{align}
  z_{uu}&=\bar{x}_{u1}\diamond \bar{x}_{1u}+\sum_{k=2}^{n}x_{uk}\diamond x_{ku}=x_0\diamond x_0-x_{u1}\diamond x_{1u}+x^{(2)}_{uu}\,,\label{eq:tmp1}\\
   z_{vv}&=\bar{x}_{v1}\diamond \bar{x}_{1v}+\sum_{k=2}^{n}x_{vk}\diamond x_{kv}=x_0\diamond x_0-x_{v1}\diamond x_{1v}+ x^{(2)}_{vv}\,.\label{eq:tmp2}
   \end{align}
We thus get $z_{uu}\ne z_{vv}$ by the facts $x_{1u}\ne x_{1v}$ and $x^{(2)}_{uu}=x^{(2)}_{vv}$ in \eqref{eq:tmp1}, \eqref{eq:tmp2}.

The fact that $X_0\diamond X_0\rightarrowtail\WL(X_0)$ implies that $u$ and  $v$ have different labels in $\WL(X_0)$.

That is the end of Claim 2. $\hfill\lhd$\ep

\noindent\textbf{Claim} 3. For  $a,b\in[2..s]$ and vertices $u$ and $v$ in $X_{aa}$ and $X_{bb}$, respectively, it should be $a=b$  if  $z_{wu}=z_{wv}$ in $X_0\diamond X_0$ for all vertex $w\in[n]\backslash\{1\}$.\ep

\noindent \emph{Proof of Claim 3.}
 In $X_0\diamond X_0$,
  \begin{align}
   z_{wu}&=\bar{x}_{w1}\diamond \bar{x}_{1u}+\sum_{k=2}^{n}x_{wk}\diamond x_{ku}=x_0\diamond x_0-x_{w1}\diamond x_{1u}+x^{(2)}_{wu},\label{eq:zwu}\\
   z_{wv}&=\bar{x}_{w1}\diamond \bar{x}_{1v}+\sum_{k=2}^{n}x_{wk}\diamond x_{kv}=x_0\diamond x_0-x_{w1}\diamond x_{1v}+x^{(2)}_{wv}.\label{eq:zwv}
  \end{align}
 If $z_{wu}=z_{wv}$, then from \eqref{eq:zwu}, \eqref{eq:zwv} and the fact that $X$ recognizes vertices, we have   $x_{wu}\diamond x_{uu}+x_{ww}\diamond x_{wu}=x_{wv}\diamond x_{vv}+x_{ww}\diamond x_{wv}$. That further implies $x_{wu}\diamond x_{uu}=x_{wv}\diamond x_{vv}$ and $x_{ww}\diamond x_{wu}=x_{ww}\diamond x_{wv}$. We then have $x_{wu}=x_{wv}$. Therefore,
 $\sum_{k=1}^{n}x_{wk}\diamond x_{ku}=\sum_{k=1}^{n}x_{wk}\diamond x_{kv}$ by the stability of $X$.  That implies  $x_{w1}\diamond x_{1u}=x_{w1}\diamond x_{1v}$ from \eqref{eq:zwu}, \eqref{eq:zwv}. We thus have $x_{1u}=x_{1v}$, which means $a=b$ by the formation of $X$.

That ends the proof of Claim 3. $\hfill\lhd$\ep

We claim that the rows in each $X_{ab}$ are equivalent as multiset. \ep

\noindent\textbf{Claim} 4. For $a,b\in[2..s]$, let $U:=\{u+1,\ldots,u+m_a\}$ and $V:=\{v+1,\ldots,v+m_b\}$ be the sets containing all vertices in $X_{aa}$  and $X_{bb}$, respectively. Denote $T_{i}:=\mset{x_{(u+i)\,(v+j)}\mid j\in [m_b]}$ for all $u+i\in U$, then $T_{1}\equiv T_{2}\equiv\cdots\equiv T_{m_a}$.\ep

\noindent \emph{Proof of Claim 4.}
We only show $T_1\equiv T_2$. By the formation of $X$, for some $\alpha,\beta\in\Var$, we have
$$
X_{1a}=(x_{1\,u+1},\ldots,x_{1\,u+m_a})=(\alpha,\ldots,\alpha),\quad
X_{1b}=(x_{1\,v+1},\ldots,x_{1\,v+m_b})=(\beta,\ldots,\beta)\,.
$$

Note that $x_{1\,u+1}=x_{1\,u+2}=\alpha$ implies $x_{1\,u+1}^{(2)}=x^{(2)}_{1\,u+2}$ in $X\diamond X$. That is
$$x_{1\,u+1}^{(2)}=\sum_{k\in [n]}x_{1k}\diamond x_{k\,u+1}=\sum_{k\in [n]}x_{1k}\diamond x_{k\, u+2}=x_{1\,u+2}^{(2)}\,.$$
Since the entries in different blocks  of $X_{1i}$ and $X_{1j}$ in $X$ are with different variables, we get, from the noncommutativity of diamond product,
$$\sum_{k\in [m_b]
}x_{1\,v+k}\diamond x_{v+k\,u+1}=\sum_{k\in[m_b]}x_{1\,v+k}\diamond x_{v+k\,u+2}\,.$$
That is, $\sum_{k\in V}\beta\diamond x_{k\,u+1}=\sum_{k\in V}\beta\diamond x_{k\,u+2}$. Which gives $$\mset{x_{v+1\,u+1},\ldots,x_{v+m_b\,u+1}}\equiv\mset{x_{v+1\,u+2},\ldots,x_{v+m_b\,u+2}}.$$ By the converse equivalent property, we obtain $T_1\equiv T_2$.
Similarly, we obtain $T_1\equiv T_j$ for $j\in[m_a]$.

That ends the proof of Claim 4. $\hfill\lhd$\ep

We now update the labels in each $X_{ij}$ in $X_0$ block by block for $i,j\in[n]\backslash\{1\}$ in the way as follows: For each $X_{ij}$ chosen, we perform an equivalent variable substitution to $X_{ij}$ with new variables from $\Var$ that have not appeared in $X,X_0$ nor in any other substituted blocks. The resulting block is denoted as $Y_{ij}$ for $i,j\in[n]\backslash\{1\}$.

After updating of all $X_{ij}$, let $Y_{1i}:=X_{1i}$ and $Y_{j1}:=X_{j1}$.
The resulting graph is denoted as $Y$ shown as follows.

\begin{align} \label{eq:y}
 Y&:=\left(\begin{smallmatrix}
Y_{11}&Y_{12}&\cdots&Y_{1s}\\
   Y_{21}&Y_{22}&\cdots&Y_{2s}\\
    Y_{31}&Y_{32}&\cdots&Y_{3s}\\
     \vdots&\vdots&\vdots&\vdots\\
     Y_{s1}&Y_{s2}&\cdots&Y_{ss}
\end{smallmatrix}\right)\,.
\end{align}

It is not hard to see the following properties  from  the updating  procedure. For any $u,v,r,s\in[n]$,
\begin{itemize}
  \item $y_{uv}=y_{st}$ only if $x_{uv}=x_{st}$, and
  \item $y_{uv}=y_{st}$ implies $y_{vu}=y_{ts}$ (the converse equivalent property).
  \item $Y$ recognizes vertices since $X$ does.
  \item rows (and columns) are equivalent in each $Y_{ij}$ guaranteed by Claim 4.%
\end{itemize}

%
%
%
%

Now, we construct a stable b.v. labeling $\pi'$ to $\Pi_n$ based on $Y$. For each pair of basic vertices $u,v$ with $u<v$ and $p=u\dw v$, we set the label on binding edge $(u,p)$ as $y_{uv}$, and the label on binding edge $(p,v)$ as $y_{vu}$. After all binding edges are set, we turn to define the labels on binding vertices as such: For every two binding vertices $p=u\dw v$ and $q=r\dw s$, the label $\pi'(p)$ on $p$ and label $\pi'(q)$ on $q$ are identical iff $\mset{y_{uv},y_{vu}}=\mset{y_{rs},y_{sr}}$. That completes the definition of $\pi'$.

From the properties stated at the very beginning of the proof and properties listed just below \eqref{eq:y}, it is easy to see that $\pi'$ is indeed a refinement of $\pi$. (or to see it by Claim 1.)

 Labeling $\pi'$ induces the vertex partition to basic vertices as $C':=(\{1\},C_2',\ldots,C_s')$. Denote the vertex partition to binding vertices in $\Pi_n^{\pi'}$ as $\D'=(D'_1,\ldots,D'_{d'})$. For each $D'\in\D'$ and $C'\in\C'$, $D'$ is regular to $C'$ by Claim 4.

The b.v. labeling $\pi'$ is therefore a stable b.v. labeling to $\Pi_n$ as a refinement of $\pi$, by Claims.
That is the end of the proof.
\end{proof}

We now are able to prove the following conclusion using Lemma \ref{lem:tech}.
\begin{theorem}\label{thm:simiso}
   If $\pi_1,\pi_2$ are two similar stable b.v. labelings to $\Pi_n$ with $n>2$, then $\Pi_n^{\pi_1}\cong\Pi_n^{\pi_2}$.%
\end{theorem}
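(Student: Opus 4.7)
The plan is to proceed by induction on $n$, combined with an inner induction on the coarseness of the basic partition induced by the labelings. For the base case $n=3$, the plain binding graph $\Pi_3$ is a $6$-cycle (each of the $3$ basic vertices has degree $2$, each of the $3$ binding vertices has degree $2$), so any two similar stable b.v.\ labelings can be realized as isomorphic labeled $6$-cycles by matching basic vertices cell-by-cell and extending the matching canonically to binding vertices. For the inductive step, assume the result for all $m$ with $3 \le m < n$, and let $\pi_1,\pi_2$ be similar stable b.v.\ labelings on $\Pi_n$ inducing partitions $(\C,\D)$ and $(\mathcal{E},\mathcal{F})$ with $|C_i|=|E_i|$, $|D_j|=|F_j|$, and matching degrees between corresponding pairs of cells.

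The inner induction runs over the fineness of the basic partition. If every $|C_i|=1$ (whence every $|E_i|=1$), the bijection $\sigma\,:\,[n]\to[n]$ sending the unique vertex of $C_i$ to the unique vertex of $E_i$ and extended to binding vertices by $(u\dw v)^{\sigma}:=u^{\sigma}\dw v^{\sigma}$ is forced to respect $\pi_1$ and $\pi_2$: each binding vertex $u\dw v$ lies in the unique binding cell determined by the pair of basic cells it binds (by regularity and the fact that a binding cell binds at most two basic cells), and similar labelings assign corresponding labels to such cells (Definition \ref{def:similar}). Hence $\sigma$ realizes $\Pi_n^{\pi_1}\cong\Pi_n^{\pi_2}$ in the discrete case.

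Otherwise, select a basic cell $C_i$ with $|C_i|>1$ and pick $u\in C_i$, $v\in E_i$. Individualize $u$ (respectively $v$) by assigning fresh, corresponding labels to its $n-1$ incident binding vertices, obtaining labelings $\tilde{\pi}_1$, $\tilde{\pi}_2$ that remain equivalent and are still similar to each other because corresponding markers are used. Invoking Lemma \ref{lem:tech} to each yields stable b.v.\ refinements $\pi'_1$, $\pi'_2$ that strictly refine the basic partition (since $u$, respectively $v$, is now isolated in its own singleton cell, and the propagation generally splits further cells). Applying the inner induction to $(\pi'_1,\pi'_2)$ produces an isomorphism $\Pi_n^{\pi'_1}\cong\Pi_n^{\pi'_2}$, and because $\pi'_i$ refines $\pi_i$, the same bijection realizes $\Pi_n^{\pi_1}\cong\Pi_n^{\pi_2}$.

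The main obstacle is verifying that the refinements $\pi'_1,\pi'_2$ produced by Lemma \ref{lem:tech} are again similar in the sense of Definition \ref{def:similar}. The construction inside Lemma \ref{lem:tech} passes through the WL-stable graph of the individualized basic substructure and reads new binding-vertex labels off of the block decomposition of that stable graph; one must track that every piece of combinatorial data that defines similarity (numbers of cells, cell sizes, pairwise degrees, and the identification of corresponding labels) depends only on the structural invariants shared by similar inputs, so that individualizing corresponding vertices and then refining yields labelings whose cell structures and inter-cell degrees agree up to a consistent relabeling of cells. Once this preservation claim is established, the double induction closes and the theorem follows.
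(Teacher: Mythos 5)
Your overall strategy matches the paper's: induct on $n$, handle $n=3$ by direct inspection, and in the inductive step individualize a pair of corresponding basic vertices, refine via Lemma \ref{lem:tech}, and recurse. The structural difference is that you run an inner induction inside $\Pi_n$ down to the discrete basic partition, whereas the paper refines once (singling out vertex $1$ in both labelings), restricts $\pi_1',\pi_2'$ to $[2..n]\cup\{u\dw v\mid u,v\in[2..n]\}$ to obtain similar labelings of $\Pi_{n-1}$, applies the induction hypothesis there, and then extends the resulting isomorphism $\sigma$ to all of $\Pi_n$ by fixing vertex $1$ and sending $1\dw u$ to $1\dw u^\sigma$, checking the labels on the cells $1\dw C_i'$ via the blocks $\nabla_{1i}$ and $\Delta_{1i}$. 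Your discrete-partition base case is sound (singleton basic cells force singleton binding cells, and the degree and corresponding-label conditions then force the bijection), and your termination argument for the inner induction is fine. A minor inaccuracy: the individualization of the chosen vertex is performed \emph{inside} Lemma \ref{lem:tech} (by zeroing the first row and column of the WL-stable basic graph), not as a separate preliminary relabeling of the $n-1$ incident binding vertices followed by an invocation of the lemma; your preliminary relabeling would in general destroy the stability that Lemma \ref{lem:tech} requires of its input.

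The genuine gap is the one you name yourself and then defer: that the two refinements $\pi_1'$ and $\pi_2'$ are again \emph{similar} stable b.v.\ labelings with corresponding labels. This is not a technicality to be "tracked"; it is the entire mathematical content of the inductive step, since without it neither your inner induction nor the paper's appeal to the $\Pi_{n-1}$ hypothesis can be invoked. The paper's resolution is a coordination device: by similarity, the block decompositions $X^{\pi_1}=(\nabla_{ij})$ and $X^{\pi_2}=(\Delta_{ij})$ induced by individualizing the chosen vertices have blocks of equal dimensions with equal counts of corresponding variables, and the equivalent variable substitutions producing $Y$ are performed in lockstep --- each $\Delta_{ij}$ is rewritten with exactly the same fresh variables, in corresponding positions, as $\nabla_{ij}$ --- so that the resulting binding-vertex labelings have matching cell counts, cell sizes, inter-cell degrees, and identical corresponding labels by construction (with Claim 4 of Lemma \ref{lem:tech} supplying the row-equivalence needed for stability and regularity). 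Your proposal supplies no such mechanism, and picking an arbitrary $v\in E_i$ with uncoordinated fresh labels gives you no handle on why the two refined partitions should agree cell-by-cell. Until that preservation claim is established with an explicit construction, the double induction does not close.
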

\begin{proof}
   We prove the claim by induction on the number of basic vertices $n$.

For $n=3$ it is straightforward to verify the validity of the claim. Assuming the validity of the claim for $\Pi_{n-1}$ with $n> 2$, we now consider two similar stable b.v. labelings $\pi_1$ and $\pi_2$ to $\Pi_n$.

 Let the partitions $(\C,\D)$ and $(\mathcal{E},\mathcal{F})$ be induced, respectively, by $\pi_1$ and $\pi_2$ to $\Pi_n^{\pi_1}$ and $\Pi_n^{\pi_2}$, where $\C,\D,\mathcal{E},\mathcal{F}$ are defined as in Definition \ref{def:similar}. The label to $D_j$ is the same as the label to $F_j$ for all $j\in[d]$, since $\pi_1,\pi_2$ are similar stable labelings.

Without loss of generality, we assume $1\in C_1$ and $1\in E_1$ (in fact one may assume $C_i=E_i$ and $D_j=F_j$ if necessary by permutation). We construct $\pi_1'$ as a refinement of $\pi_1$  in the same way as in the proof of Lemma \ref{lem:tech}.

Since $\pi_2$ is similar to $\pi_1$,
it holds that  $\deg(C_i,D_j)=\deg(E_i,F_j)$ and $\deg(D_j,C_i)=\deg(F_j,E_i)$ for all $i\in[c],j\in[d]$.
Also $|C_i|=|E_i|$ and $|D_j|=|F_j|$ for all $i\in[c],j\in[d]$. All corresponding variables are the same variable.

In this setting, any  block   induced by $1\in C_1$  is with the same size  as block induced by $1\in E_1$ as in $X_0$ in \eqref{eq:xwdt}. We thus obtain two graphs $X_0^{\pi_1}$ and $X_0^{\pi_2}$ by $\pi_1$ and $\pi_2$, respectively, just as $X_0$ obtained by $\pi$ in Lemma \ref{lem:tech}. They are as follows.

\begin{align}  \label{eq:nabla}
  X^{\pi_1}&=\left(\begin{smallmatrix}
  \nabla_{11}&\nabla_{12}&\cdots&\nabla_{1s}\\
   \nabla_{21}&\nabla_{22}&\cdots&\nabla_{2s}\\
    \nabla_{31}&\nabla_{32}&\cdots&\nabla_{3s}\\
     \vdots&\vdots&\vdots&\vdots\\
     \nabla_{s1}&\nabla_{s2}&\cdots&\nabla_{ss}\\
\end{smallmatrix}\right)\,,&
 X_0^{\pi_1}&=\left(\begin{smallmatrix}
  \bar{\nabla}_{11}&\bar{\nabla}_{12}&\cdots&\bar{\nabla}_{1s}\\
   \bar{\nabla}_{21}&\nabla_{22}&\cdots&\nabla_{2s}\\
    \bar{\nabla}_{31}&\nabla_{32}&\cdots&\nabla_{3s}\\
     \vdots&\vdots&\vdots&\vdots\\
      \bar{\nabla}_{s1}&\nabla_{s2}&\cdots&\nabla_{ss}
\end{smallmatrix}\right)\,,\\[0.3cm] X^{\pi_2}&=\left(\begin{smallmatrix}
  \Delta_{11}&\Delta_{12}&\cdots&\Delta_{1s}\\
   \Delta_{21}&\Delta_{22}&\cdots&\Delta_{2s}\\
    \Delta_{31}&\Delta_{32}&\cdots&\Delta_{3s}\\
     \vdots&\vdots&\vdots&\vdots\\
      \Delta_{s1}&\Delta_{s2}&\cdots&\Delta_{ss}\\
      \end{smallmatrix}\right)\,,&
 X_0^{\pi_2}&=\left(\begin{smallmatrix}
  \bar{\Delta}_{11}&\bar{\Delta}_{12}&\cdots&\bar{\Delta}_{1s}\\
   \bar{\Delta}_{21}&\Delta_{22}&\cdots&\Delta_{2s}\\
    \bar{\Delta}_{31}&\Delta_{32}&\cdots&\Delta_{3s}\\
     \vdots&\vdots&\vdots&\vdots\\
      \bar{\Delta}_{s1}&\Delta_{s2}&\cdots&\Delta_{ss}
\end{smallmatrix}\right)\,.\label{eq:nabla2}
\end{align}
Matrix blocks $\nabla_{ij},\Delta_{ij}$ have the same dimension and the same number of correspondent variables by the similarity, and  so do the blocks $\nabla_{1k},\Delta_{1k}$ and $\bar{\nabla}_{1k},\bar{\Delta}_{1k}$ for all $i,j\in[n]\backslash\{1\}$ and $k\in[n]$.

We proceed  $X_0^{\pi_1}$ as in the proof of Lemma \ref{lem:tech} to $X_0$ and finally obtain $\pi_1'$.
Similarly, we obtain $\pi'_2$ to $\pi_2$. But, update each $\Delta_{ij}$ in the same way as $\nabla_{ij}$ such that the corresponding variables with the same new variables for both of them. This guarantees that  $\pi_2'$ will have the same number of binding labels of each kind as $\pi_1$ does.

Let $\C':=(\{1\},C_2,\ldots,C_s')$, $\D':=(D_1',\ldots,D'_{d'})$ be partitions induced by $\pi_1'$, and $\mathcal{E}':=(\{1\},E_2,\ldots,E_s')$, $\mathcal{F}':=(F_1',\ldots,F'_{d'})$ be partitions induced by $\pi_2'$. The similarity between $\pi_1$ and $\pi_2$, and the procedure of constructions give
$\deg(C_i',D_j')=\deg(E_i',F_j')$ and $\deg(D_j',C_i')=\deg(F_j',E_i')$ for all $i\in[s],j\in[d']$.
Also $|C_i'|=|E_i'|$ and $|D_j'|=|F_j'|$ for all $i\in[s],j\in[d']$.

In this way, we obtain two similar stable b.v. labelings $\pi_1',\pi_2'$. Also, whenever they are restricted to $[2..n]\cup\{u\dw v\mid u,v\in[2..n]\}$, they are similar too, by constructions (notice that vertex $1$ in a singleton cell).

By induction assumption, whenever $\pi'_1$ and $\pi'_2$ are restricted to the subgraphs $\Pi_{n-1}^{\pi_1'}$ and $\Pi_{n-1}^{\pi_2'}$, respectively,  of  $\Pi_n^{\pi'_1}$ and $\Pi_n^{\pi'_2}$ induced by $[2..n]\cup\{u\dw v\mid u,v\in[2..n]\}$, then  $\Pi_{n-1}^{\pi_1'}\cong\Pi_{n-1}^{\pi_2'}$.

%
%
%
%

That means, the subgraph induced by $[2..n]\cup\{u\dw v\mid u,v\in[2..n]\}$ of $\Pi_n^{\pi_1'}$ is isomorphic to the subgraph induced by $[2..n]\cup\{u\dw v\mid u,v\in[2..n]\}$ of $\Pi_n^{\pi_2'}$. We denote this isomorphism as $\sigma$.
We then construct map $\delta$ from $\Pi_n^{\pi_1'}$ to $\Pi_n^{\pi_2'}$ as follows.
\begin{align}\label{eq:delta}
 w^\delta:=\left\{
             \begin{array}{ll}
               1, & \hbox{If $w=1$ ,} \\
               w^\sigma, & \hbox{If $w\in [2..n]$ ,} \\
               u^\sigma\dot{\wedge}v^\sigma,& \hbox{If $w= u\dot{\wedge}v$ and $u,v\in [2..n]$ ,}\\
               1\dot{\wedge}u^\sigma, & \hbox{If $w=1\dot{\wedge}u$ and $u\in[2..n]$.}
             \end{array}
           \right.
\end{align}
  Since $\sigma$ is a 1-1 map, the map $\delta$ is a 1-1 map that preserves adjacency. To show that it also preserves labels, we only need to check that $\pi_1'(1\dot{\wedge}u))=\pi_2'(1\dot{\wedge}u^\sigma)$. That is true because  $u$ and $u^\sigma$ should, respectively, be in  $\nabla_{ii}$ and $\Delta_{ii}$ for some $i\in[2..n]$ as shown in \eqref{eq:nabla2}. However, according to the construction, the variables in $\nabla_{1i}$ and $\Delta_{1i}$ are identical in $X^{\pi_1}$ and $X^{\pi_2}$ by \eqref{eq:nabla}.

  Hence, the map $\delta$ is an isomorphism from $\Pi_n^{\pi_1'}$ to $\Pi_n^{\pi_2'}$. In fact, $\delta$ is also an isomorphism from  $\Pi_n^{\pi_1}$ to $\Pi_n^{\pi_2}$, since $\pi_1'$ and $\pi_2'$ are, respectively, the refinements of $\pi_1$ and $\pi_2$.

The claim of the theorem is hence valid for all $n>2$ by induction.
That finishes the proof.

\end{proof}

Given a stable b.v. labeling $\pi$ to $\Pi_n$, it induces partition $(\C,\D)$ with $\C:=(C_1,\ldots,C_c)$ and $\D:=(D_1,\ldots, D_d)$. Assume that $|C_1|>1$ and $1,2\in C_1$, we perform a row-column permutation to transpose vertices $1$ and $2$ in $\Pi_n^{\pi}$ obtaining a graph $\Pi_n'$. Denote the b.v. labeling to $\Pi_n'$ as $\pi_1$.

Labelings $\pi$ and $\pi_1$ are similar stable b.v. labelings.
By the construction in \eqref{eq:delta}, we obtain an isomorphism $\delta'$ from $\Pi_n^{\pi}$ to $\Pi_n^{\pi_1}$ such that $1^{\delta'}=1$. While the vertex $1$ in $\Pi_n^{\pi_1}$ is in fact the vertex $2$ of $\Pi_n^{\pi}$ and $\delta'$ is an automorphism of $\Pi_n^{\pi}$, in fact it shows that $C_1$ is an orbit  of automorphism group of $\Pi_n^{\pi}$ by the arbitrariness of vertex $1,2$ in $C_1$.  By the arbitrariness of $C_1$ (by row-column permutation if necessary), it shows all basic cells are orbits of the automorphism group of $\Pi_n^{\pi}$.
The binding cells are orbits  can be concluded from the uniqueness of binding vertices  between any pair of basic vertices and the fact that basic cells are orbits. We thus have the following.

\begin{corollary}\label{corr:storb}
  A stable b.v. labeling $\pi$ to $\Pi_n$ induces the automorphism  partition to  $\Pi_n^\pi$.%
\end{corollary}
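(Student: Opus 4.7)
The plan is to prove both containments between the $\pi$-induced partition $(\C,\D)$ and the orbit partition of $\Aut(\Pi_n^{\pi})$. The easy direction, that each orbit lies inside a single cell, follows because any $\sigma\in\Aut(\Pi_n^\pi)$ fixes the basic and binding vertex sets setwise by Theorem \ref{thm:complete}\ref{thm:complete-2}, preserves the labels $\pi(p)$ on binding vertices, and preserves adjacency; therefore for any basic vertex $u$ the multiset of labels on its binding neighbours coincides with that of $u^\sigma$, forcing $u$ and $u^\sigma$ into the same $\pi$-type, while any binding vertex $p$ is mapped to a vertex with the same $\pi$-label as itself.

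For the reverse containment at the basic level I would reduce to Theorem \ref{thm:simiso} via a swap trick, exactly as hinted in the paragraph preceding the corollary. Fix a basic cell $C$ with $|C|>1$ and two vertices $u,v\in C$; after an initial row-column permutation assume $u=1$ and $v=2$, and let $\tau$ be the transposition $(1\;2)$. Let $\pi_1$ be the b.v. labeling of $\tau(\Pi_n^\pi)=\Pi_n^{\pi_1}$ obtained by this transposition. Because $1$ and $2$ share the same $\pi$-type, the partitions induced by $\pi$ and $\pi_1$ have identical cell sizes, identical corresponding labels and identical inter-cell degrees, so $\pi$ and $\pi_1$ are similar stable b.v. labelings. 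Theorem \ref{thm:simiso} then produces an isomorphism $\delta:\Pi_n^\pi\to\Pi_n^{\pi_1}$, and the inductive construction \eqref{eq:delta} inspects to give $1^\delta=1$. Composing with the inverse row-column permutation $\tau^{-1}$ yields an automorphism $\tau^{-1}\circ\delta\in\Aut(\Pi_n^\pi)$ sending $1$ to $2$; by the arbitrariness of the choice of $u,v$ within $C$ and of $C$ itself, every basic cell is a single orbit.

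For the binding level, I would argue that once basic cells are orbits, binding cells are too, using the bijection $u\dw v\leftrightarrow\{u,v\}$ guaranteed by uniqueness of binding vertices. Given $p_1=u_1\dw v_1$ and $p_2=u_2\dw v_2$ in the same binding cell $D$, the regularity of $\pi$ forces $\{u_1,v_1\}$ and $\{u_2,v_2\}$ to lie in matching basic cells (either both pairs in one cell if $D$ is exclusive, or one vertex from each of two cells). The same swap-and-apply procedure, now using the row-column permutation that transposes $u_1\leftrightarrow u_2$ and $v_1\leftrightarrow v_2$, produces a similar stable b.v. labeling, hence an automorphism of $\Pi_n^\pi$ that carries $\{u_1,v_1\}$ setwise to $\{u_2,v_2\}$; since $\Aut(\Pi_n^\pi)$ commutes with the $\dw$ operation (Theorem \ref{thm:complete}), this automorphism sends $p_1$ to $p_2$.

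The main obstacle will be the bookkeeping in the basic case: verifying that the swap labeling $\pi_1$ really is similar to $\pi$ in the strong sense of Definition \ref{def:similar} (same cell sizes, same corresponding labels, same degrees $\deg(C_i,D_j)$), and then tracing through the inductive construction of Theorem \ref{thm:simiso} to confirm that the individualized vertex $1$ appears as a singleton in both induced refinements and is consequently fixed by the isomorphism $\delta$ produced in \eqref{eq:delta}. Everything else, including the binding-cell argument, reduces cleanly to the basic case once this step is in place.
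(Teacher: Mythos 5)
Your proposal follows essentially the same route as the paper: transpose two basic vertices of a common cell by a row-column permutation, observe that the resulting b.v.\ labeling is similar to $\pi$, invoke Theorem \ref{thm:simiso} and the construction \eqref{eq:delta} to get an isomorphism fixing vertex $1$, and compose with the inverse transposition to obtain an automorphism carrying one vertex of the cell to the other; the binding cells then follow from the uniqueness of binding vertices. You are somewhat more explicit than the paper about the easy containment (orbits refine cells) and about spelling out the binding-cell step, but the argument is the same.
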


The labels on binding vertices in graph $\Theta$, by Lemma \ref{lem:basicl}, compose a stable b.v. labeling $\pi$ to $\Pi_n$. Corollary \ref{corr:storb} and Theorem \ref{thm:theta} give the main conclusion as follows.
\begin{theorem}\label{thm:main}
  The stable partition to any binding graph is the automorphism partition.%
\end{theorem}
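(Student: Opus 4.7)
The plan is to reduce the statement directly to Corollary \ref{corr:storb} via the chain of equivalences built up through $\hat{B}\approx\hat{\Phi}\approx\hat{\Theta}$. Let $A$ be the basic graph of a binding graph $B$ with $n>2$ basic vertices (the cases $n\le 2$ being trivial). The key observation is that as an unlabeled graph, $\Theta$ is exactly the plain binding graph $\Pi_n$, and the only nontrivial labels it carries are on its binding vertices. So $\Theta=\Pi_n^{\pi}$ where the b.v.\ labeling $\pi$ is defined by $\pi(p):=\hat{\Phi}_{pp}$ for each binding vertex $p$.

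Next I would verify that $\pi$ is a \emph{stable} b.v.\ labeling of $\Pi_n$. The partition $(\mathcal{C},\mathcal{D})$ it induces on basic and binding vertices is, by construction, the restriction of the stable partition of $\hat{\Phi}$. By Theorem \ref{thm:jhhf} this partition is strongly equitable in $\hat{\Phi}$, hence equitable. In $\Phi$ (and equivalently in $\Theta$) binding vertices have degree exactly $2$, so for any basic cell $C\in\mathcal{C}$ and any binding cell $D\in\mathcal{D}$, the degree $\deg(D,C)$ is well-defined by strong equitability, and $\deg(C,D)$ is well-defined by equitability. Together with Lemma \ref{lem:basicl}, which ensures that a basic cell is determined by the multiset of labels on its binding neighbors, we conclude that $\pi$ induces precisely the regularity required in the definition of a stable b.v.\ labeling.

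Applying Corollary \ref{corr:storb} then yields that the partition $(\mathcal{C},\mathcal{D})$ is the automorphism partition of $\Theta=\Pi_n^\pi$. It remains to transfer this conclusion back to $B$. By Theorem \ref{thm:xphi} we have $\Aut(B)=\Aut(\hat{B})=\Aut(\hat{\Phi})$, and by Theorem \ref{thm:theta} $\hat{\Phi}\approx\hat{\Theta}$, so Proposition \ref{prop:lehman} and Lemma \ref{lem:bst} give $\Aut(B)=\Aut(\hat{\Theta})=\Aut(\Theta)$. On the partition side, the stable partition of $B$ is the vertex partition of $\hat{B}$, which by $\hat{B}\approx\hat{\Phi}\approx\hat{\Theta}$ coincides with the vertex partition of $\hat{\Theta}$, and hence with $(\mathcal{C},\mathcal{D})$. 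Therefore the stable partition of $B$ equals the automorphism partition of $\Theta$, which equals the automorphism partition of $B$.

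The main obstacle I expect is the verification that $\pi$ really is a stable b.v.\ labeling in the precise technical sense used by Corollary \ref{corr:storb}, and in particular that the partition it induces on $\Pi_n$ agrees with the restriction to binding-vertex labels of the stable partition of $\hat{\Phi}$ --- one must check this agreement for basic vertices (via Lemma \ref{lem:basicl}) and that no cells are illegitimately merged or split when one forgets the basic-vertex labels and the distinctions among binding-edge labels that were present in $\hat{\Phi}$. This is exactly what Lemmas \ref{lem:basicl}, \ref{lem:bv} and Theorem \ref{thm:theta} were designed to deliver, but writing the argument down cleanly requires carefully tracking the correspondence of cells across the three equivalent graphs $\hat{B}$, $\hat{\Phi}$, and $\hat{\Theta}$.
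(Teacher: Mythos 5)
Your proposal is correct and follows essentially the same route as the paper: identify $\Theta$ with $\Pi_n^\pi$ for the b.v.\ labeling $\pi$ inherited from $\hat{\Phi}$, use Lemma \ref{lem:basicl} (together with the strongly equitable stable partition) to see that $\pi$ is a stable b.v.\ labeling, invoke Corollary \ref{corr:storb} to get the automorphism partition of $\Theta$, and transfer back to $B$ via Theorem \ref{thm:theta} and Theorem \ref{thm:xphi}. The paper compresses this into two sentences; your write-up merely makes explicit the verification steps the paper leaves implicit.
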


\section{Decision Procedure of Graph Isomorphism and the Complexity}\label{sec:gid}

It is well known that the problem of computing automorphism partitions of graphs is polynomial-time equivalent to graph isomorphism problem (cf. Mathon \cite{Mathon79}). We now give an explicit decision procedure of graph isomorphism in this section and estimate the time complexity of it. The main purpose here is to show a polynomial-time procedure and by no means to pursue the optimum time complexity.

As is well known, if the graph isomorphism problem is polynomial-time solvable to connected simple graphs, so is the  problem of graph isomorphism in general. We will therefore focus on a decision procedure to connected simple graphs.

Before introducing the decision procedure $\mathtt{GI}$, some preparations are needed. Given any two connected simple graphs $A_1,A_2$ over $[n]$ such that the labels to the edges of them are $x\in\Var$ and the labels to blank edges (and vertices) are $x_0$, in order to make directly use of the conclusion in previous sections\footnote{The usage of a wing graph in $\mathtt{GI}$ is not essential, the reason of constructing a wing graph here is that the conclusions obtained previously are subject to connected simple graphs or binding graphs of connected simple graphs.},
 we have to combine them into one connected simple graph, which is called a wing graph of $A_1$ and $A_2$.

 The wing graph $A$ over $[2n+1]$ is a connected simple graph of order $2n+1$ such that the induced subgraph of $A$ by $[n]$ is just $A_1$, and the induced graph by $[n+1..2n]$ is just $A_2$. The last vertex of $A$ is adjacent to all vertices in $[2n]$.  Formally,
$$A:=\left(\begin{smallmatrix}
 A_1&X_0&\mathbf{x}\\
X_0&A_2&\mathbf{x}\\
\mathbf{x}'&\mathbf{x}'&x_0
\end{smallmatrix}\right)\,.
$$
Here $X_0$ is a square matrix of order $n$ with all entries being $x_0$, $\mathbf{x}$ is a $n\times 1$ matrix with all entries being $x$, and $\mathbf{x}'$ is a $1\times n$ matrix with all entries being $x$.

In this case, it is easy to see that $A_1\cong A_2$ if and only if every orbit of $\Aut(A)$ consists of vertices from both sets $[n]$ and $[n+1.. 2n]$, apart from the singleton orbit $\{2n+1\}$ (since vertex $2n+1$ is  the only vertex of degree $2n$ in $A$).

Now, we are ready to present the procedure $\mathtt{GI}$ for testing graph isomorphism as follows.\ep

\noindent \underline{\textbf{Procedure $\mathtt{GI}$}:} Input two connected simple graphs $A_1,A_2$ over $[n]$.
\renewcommand{\labelenumi}{\roman{enumi}.}
\begin{enumerate}
  \item Construct the wing graph $A$ of $A_1$ and $A_2$;
  \item Construct the binding graph $B$ of $A$;
  \item Compute the stable graph $\WL(B)$ with WL process;
  \item Form the vertex partition $\mathcal{C}$ of $\WL(B)$ ;
  \item Inspect the cells of $\mathcal{C}$ except the cell $\{2n+1\}$. If every basic cell contains vertices both from $[n]$ and $[n+1..2n]$, output ``YES''; otherwise output ``NO''.%
\end{enumerate}

By Theorem \ref{thm:complete} and \ref{thm:main} and $N=2n+1>2$, the cells in $\mathcal{C}$ are all orbits of $\Aut(B)=\Aut(\WL(B))$. It assures that the procedure $\mathtt{GI}$ outputs ``YES'' iff $A_1\cong A_2$. That claims, the procedure  $\mathtt{GI}$ indeed decides if $A_1\cong A_2$.

\renewcommand{\labelenumi}{\arabic{enumi}.}
We now estimate the time cost at each step in $\mathtt{GI}$.
\begin{enumerate}
  \item The construction of $A$ over $[N]$ from $A_1$ and $A_2$ accomplishes with $O(N^2)=O(n^2)$ steps, where $N:=2n+1$.
  \item The construction of binding graph $B$ over $[N_1]$ needs at most $O(N_1^2)=O(n^4)$ steps, where $N_1:=N(N+1)/2$.
  \item The evaluation of $\WL(B)$ costs at most $O(N_1^3\cdot N_1\cdot \log N_1)=O(n^{8}\log n)$ steps by Theorem \ref{thm:saswl}, assuming the cost for  multiplication of matrices of order $N_1$ is $O(N_1^3)$.

  \item The formation of vertex partition $\mathcal{C}$ of  $\WL(B)$ uses only $O(N_1)=O(n^2)$ steps.
  \item The examination of cells from $\mathcal{C}$ can be completed in $O(N_1)=O(n^2)$ steps.%
\end{enumerate}
     Totally, the procedure $\mathtt{GI}$ consumes at most $O(n^{8}\log n)$ steps for two graphs of order $n$. That allows us to conclude the following.
      \begin{theorem}
       Graph isomorphism problem is solvable in polynomial time.%
      \end{theorem}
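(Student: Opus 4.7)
The plan is to establish that the procedure $\mathtt{GI}$ both decides graph isomorphism correctly and runs in polynomial time; the final theorem then follows immediately. Since the classical reduction of arbitrary graph isomorphism to that of connected simple graphs is standard, I would focus only on the connected simple case as the procedure does.

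For correctness I would first analyze the wing graph $A$ over $[N]$ with $N=2n+1$. Because vertex $2n+1$ is the unique vertex of degree $2n$ in $A$, any automorphism of $A$ must fix it; consequently the orbits of $\Aut(A)$ partition into $\{2n+1\}$ and orbits contained entirely in $[2n]$, and it is straightforward to verify that $A_1\cong A_2$ holds if and only if every non-singleton orbit in $[2n]$ meets both halves $[n]$ and $[n+1..2n]$. Next I would invoke Theorem \ref{thm:complete} to pass from $A$ to its binding graph $B$: since $B$ is constructed with $N>2$ and $\Aut(A)\cong\Aut(B)$ with basic orbits of $B$ coinciding with orbits of $A$, the orbits of $\Aut(B)$ restricted to the basic vertices $[N]$ are exactly the orbits of $\Aut(A)$. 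Then Theorem \ref{thm:main} gives that the stable partition $\mathcal{C}$ of $\hat{B}$ is precisely the automorphism partition of $B$. Hence the cells of $\mathcal{C}$ intersected with $[2n]$ are exactly the orbits whose membership the procedure inspects, and step (v) returns ``YES'' iff $A_1\cong A_2$.

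For the complexity bound I would tally the costs of the five steps using $N=2n+1$ and $N_1=N(N+1)/2=O(n^2)$. Steps (i), (ii), (iv) and (v) are trivially bounded by $O(N_1^2)=O(n^4)$. The dominant contribution is step (iii), the SaS stabilization of $\hat{B}$. By Theorem \ref{thm:saswl} the iteration number of SaS on a graph of order $N_1$ is $O(N_1\log N_1)$, and each iteration consists of a matrix squaring at cost $O(N_1^3)$ plus an equivalent variable substitution at cost $O(N_1^2)$. Therefore step (iii) takes $O(N_1^4\log N_1)=O(n^{8}\log n)$ steps, which dominates the total.

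Combining the correctness argument with this upper bound yields that $\mathtt{GI}$ decides isomorphism of two graphs of order $n$ in $O(n^{8}\log n)$ steps, and the theorem follows. The only delicate point I would anticipate is guaranteeing that the $O(n\log n)$ iteration bound of Theorem \ref{thm:saswl} applies at the size $N_1$ of the binding graph (not at the smaller size $n$); this is immediate from the statement of the theorem but worth pointing out explicitly to avoid a mis-scaled complexity claim. Everything else is a matter of bookkeeping, and no additional structural results beyond Theorems \ref{thm:complete}, \ref{thm:saswl}, and \ref{thm:main} are required.
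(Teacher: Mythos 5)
Your proposal follows the paper's own proof almost verbatim: the same reduction to connected simple graphs, the same analysis of the wing graph via the unique vertex of degree $2n$, the same appeal to Theorem \ref{thm:complete} to transfer orbits between $A$ and $B$, the same use of Theorem \ref{thm:main} to identify the stable partition with the automorphism partition, and the same cost tally with step (iii) dominating at $O(N_1^3\cdot N_1\log N_1)=O(n^8\log n)$. Your remark that the $O(\cdot\log\cdot)$ iteration bound must be applied at size $N_1$ rather than $n$ is a sensible bookkeeping point and is consistent with the paper's computation.

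There is one concrete slip in your correctness argument. You claim $A_1\cong A_2$ if and only if every \emph{non-singleton} orbit in $[2n]$ meets both halves. That equivalence is false: if $A_1$ and $A_2$ are both rigid and non-isomorphic, every orbit in $[2n]$ is a singleton, your condition holds vacuously, yet $A_1\not\cong A_2$. The correct criterion, and the one step (v) of $\mathtt{GI}$ actually tests, is that \emph{every} cell other than $\{2n+1\}$ contains vertices from both $[n]$ and $[n+1..2n]$ (equivalently, that \emph{some} orbit meets both halves, since any automorphism of $A$ moving one vertex across the cut must carry the whole connected component $A_1$ onto $A_2$). Dropping the word ``non-singleton'' repairs the argument; with that qualifier left in, your justification does not establish that step (v) returns ``NO'' on non-isomorphic rigid inputs, even though the procedure itself does. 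Everything else in the proposal matches the paper's proof.
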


\section{Brief Discussions}\label{sec:discuss}
In this work we have proposed the notion of description graphs and introduced three processes to obtain description graphs. The stable graphs obtained in our processes were proven to be equivalent to stable graphs obtained by WL process in the partition of vertices.  The stable partitions were then proven to be strongly equitable partitions. We have also proposed a new graph-isomorphism complete class of graphs named binding graphs and shown that the stable partitions to binding graphs are automorphism partitions. That leads to a polynomial-time procedure for testing of graph isomorphism.

The description graphs were defined only with regard to undirected graphs. The analogous notions can be defined to directed graphs.
The fact that graph isomorphism problem is in $\mathtt{P}$ will bring a bunch of problems into $\mathtt{P}$ (cf., e.g. Booth and Colbourn \cite{BoothCo79},Mathon \cite{Mathon79}, Luks \cite{Luks93} and Babai \cite{Babai15}), and answer some open problems in computational complexity classes relevant to graph isomorphism problem (cf. e.g.  K{\"o}bler, Sch{\"o}ning and Tor{\'a}n \cite{KoblerScTo93}).

The new class of binding graphs are expected to have more applications in graph theory and even in group theory. The process for graph isomorphism here may be extended and applied to more combinatoric structures like relational structures in general and even algebraic structures.

 It is interesting to apply some other approaches of graph isomorphism to the class of binding graphs.  The other interesting work  may include exploiting the optimum complexity of graph isomorphism in theory, and implementing more efficient programs in  practice for graph isomorphism using the results of this work.
Finally, we put the problem mentioned  in Section \ref{sec:argument} as an open problem as follows.

\begin{problem} Is identifying a graph from all unisomorphic ones polynomial-time equivalent to testing of graph isomorphism?%
\end{problem}

The identification of a graph here means a characterization of the graph, such as the canonical form of the graph, such that any graph that is not isomorphic to it is different in the characterization.

\section*{Acknowledgement}
The author is grateful to Eugene Luks, who kindly clarifies issues of my concerns on graph isomorphism every now and then.
Thanks to my students: Shujiao Cao, Jinyong Chang and Tianci Peng for discussions on this topic. Thanks also to Haiying Li, Stoicho D. Stoichev and Gerta R{\"u}cker for their deliveries and communications of some articles.

\bibliographystyle{alpha}

\bibliography{rui-bib}
\clearpage
\appendix
\section{Some Examples for Illustration of Contexts}
Here are some examples for clarifying the contexts. Different numbers in the matrices indicate different variables.

1. The first example to show that \emph{the usage of numerical labels will some times to obtain a faulty stable graph.}

      During the computation of a stable graph with SaS (or WL) process, we insist, theoretically, that the graph obtained in each iteration be a labeled graph with variables as labels. The reason is that whenever  performing an equivalent number substitution to $A_i^2$ with numbers to obtain $A_{i+1}$, the ordering of numbers used in the substitution will affects the multiplication of matrices in next iteration and sometimes leads to a faulty stable graph. We give an example as follows.

      In the following, we perform an equivalent number substitution to a real matrix $A:=(a_{ij})$ in such a way of ``first come, first served''(ff for short): We list the rows of $A$ in a sequence of row one, row 2,$\ldots$, row $n$, and delete the numbers in the sequence that previously appeared. The remaining list of numbers consists of all distinct numbers in $A$. We then substitute each number in $A$ with its position in the list, and obtain the resulting graph.  We name such kinds of equivalent number substitutions as \emph{ff-equivalent number substitutions}.

      The following example shows that it will sometimes get faulty stable graphs whenever use numerical labels in graphs.

  Let the adjacency matrix $A$ of a graph as follow. We relabel the vertices as $2$ to obtain $A_1$ such that $A_1$ recognizes vertices. After evaluation of $A_1^2$, we perform a ff-equivalent number substitution to $A_1^2$ to obtain $A_2$. It is a pity that $A_2^2\approx A_2$, which is a faulty stable graph to $A$. The genuine stable graph $\hat{A}$ is as shown. The vertex partition to $\hat{A}$ is

 $$\{\{1, 14\}, \{2, 19\}, \{3, 8\}, \{4\}, \{5, 15\}, \{6, 18\}, \{7,12\}, \{9\}, \{10\}, \{11, 17\}, \{13, 20\}, \{16\}, \{21\}\}.$$
 Which is the automorphism partition of graph $A$.

  \begin{align}
  A&=\left(\begin{smallmatrix}
  0 & 1 & 1 & 1 & 1 & 1 & 1 & 1 & 1 & 0 & 0 & 0 & 0 & 0 & 0 & 0 & 0 & 0 & 0 & 0 & 0 \\
 1 & 0 & 1 & 1 & 1 & 0 & 0 & 0 & 0 & 1 & 1 & 1 & 1 & 0 & 0 & 0 & 0 & 0 & 0 & 0 & 0 \\
 1 & 1 & 0 & 0 & 0 & 1 & 1 & 0 & 0 & 1 & 0 & 0 & 0 & 1 & 1 & 1 & 0 & 0 & 0 & 0 & 0 \\
 1 & 1 & 0 & 0 & 0 & 1 & 0 & 0 & 0 & 0 & 1 & 0 & 0 & 1 & 0 & 0 & 1 & 1 & 1 & 0 & 0 \\
 1 & 1 & 0 & 0 & 0 & 0 & 0 & 1 & 1 & 0 & 0 & 0 & 0 & 0 & 0 & 0 & 1 & 1 & 0 & 1 & 1 \\
 1 & 0 & 1 & 1 & 0 & 0 & 0 & 0 & 0 & 0 & 0 & 1 & 0 & 0 & 1 & 0 & 1 & 0 & 0 & 1 & 1 \\
 1 & 0 & 1 & 0 & 0 & 0 & 0 & 0 & 0 & 0 & 1 & 0 & 0 & 0 & 0 & 1 & 0 & 1 & 1 & 1 & 1 \\
 1 & 0 & 0 & 0 & 1 & 0 & 0 & 0 & 0 & 1 & 0 & 1 & 0 & 1 & 0 & 1 & 0 & 1 & 1 & 0 & 0 \\
 1 & 0 & 0 & 0 & 1 & 0 & 0 & 0 & 0 & 1 & 0 & 0 & 1 & 1 & 1 & 0 & 0 & 0 & 0 & 1 & 1 \\
 0 & 1 & 1 & 0 & 0 & 0 & 0 & 1 & 1 & 0 & 0 & 0 & 1 & 0 & 0 & 1 & 0 & 0 & 1 & 1 & 0 \\
 0 & 1 & 0 & 1 & 0 & 0 & 1 & 0 & 0 & 0 & 0 & 0 & 1 & 0 & 1 & 1 & 0 & 1 & 0 & 1 & 0 \\
 0 & 1 & 0 & 0 & 0 & 1 & 0 & 1 & 0 & 0 & 0 & 0 & 1 & 1 & 0 & 1 & 1 & 0 & 0 & 0 & 1 \\
 0 & 1 & 0 & 0 & 0 & 0 & 0 & 0 & 1 & 1 & 1 & 1 & 0 & 0 & 1 & 0 & 1 & 1 & 0 & 0 & 0 \\
 0 & 0 & 1 & 1 & 0 & 0 & 0 & 1 & 1 & 0 & 0 & 1 & 0 & 0 & 1 & 0 & 0 & 1 & 1 & 0 & 0 \\
 0 & 0 & 1 & 0 & 0 & 1 & 0 & 0 & 1 & 0 & 1 & 0 & 1 & 1 & 0 & 0 & 0 & 0 & 1 & 0 & 1 \\
 0 & 0 & 1 & 0 & 0 & 0 & 1 & 1 & 0 & 1 & 1 & 1 & 0 & 0 & 0 & 0 & 1 & 0 & 0 & 0 & 1 \\
 0 & 0 & 0 & 1 & 1 & 1 & 0 & 0 & 0 & 0 & 0 & 1 & 1 & 0 & 0 & 1 & 0 & 0 & 1 & 1 & 0 \\
 0 & 0 & 0 & 1 & 1 & 0 & 1 & 1 & 0 & 0 & 1 & 0 & 1 & 1 & 0 & 0 & 0 & 0 & 0 & 0 & 1 \\
 0 & 0 & 0 & 1 & 0 & 0 & 1 & 1 & 0 & 1 & 0 & 0 & 0 & 1 & 1 & 0 & 1 & 0 & 0 & 1 & 0 \\
 0 & 0 & 0 & 0 & 1 & 1 & 1 & 0 & 1 & 1 & 1 & 0 & 0 & 0 & 0 & 0 & 1 & 0 & 1 & 0 & 0 \\
 0 & 0 & 0 & 0 & 1 & 1 & 1 & 0 & 1 & 0 & 0 & 1 & 0 & 0 & 1 & 1 & 0 & 1 & 0 & 0 & 0
\end{smallmatrix}\right),\
A_1=\left(\begin{smallmatrix}
 2 & 1 & 1 & 1 & 1 & 1 & 1 & 1 & 1 & 0 & 0 & 0 & 0 & 0 & 0 & 0 & 0 & 0 & 0 & 0 & 0 \\
 1 & 2 & 1 & 1 & 1 & 0 & 0 & 0 & 0 & 1 & 1 & 1 & 1 & 0 & 0 & 0 & 0 & 0 & 0 & 0 & 0 \\
 1 & 1 & 2 & 0 & 0 & 1 & 1 & 0 & 0 & 1 & 0 & 0 & 0 & 1 & 1 & 1 & 0 & 0 & 0 & 0 & 0 \\
 1 & 1 & 0 & 2 & 0 & 1 & 0 & 0 & 0 & 0 & 1 & 0 & 0 & 1 & 0 & 0 & 1 & 1 & 1 & 0 & 0 \\
 1 & 1 & 0 & 0 & 2 & 0 & 0 & 1 & 1 & 0 & 0 & 0 & 0 & 0 & 0 & 0 & 1 & 1 & 0 & 1 & 1 \\
 1 & 0 & 1 & 1 & 0 & 2 & 0 & 0 & 0 & 0 & 0 & 1 & 0 & 0 & 1 & 0 & 1 & 0 & 0 & 1 & 1 \\
 1 & 0 & 1 & 0 & 0 & 0 & 2 & 0 & 0 & 0 & 1 & 0 & 0 & 0 & 0 & 1 & 0 & 1 & 1 & 1 & 1 \\
 1 & 0 & 0 & 0 & 1 & 0 & 0 & 2 & 0 & 1 & 0 & 1 & 0 & 1 & 0 & 1 & 0 & 1 & 1 & 0 & 0 \\
 1 & 0 & 0 & 0 & 1 & 0 & 0 & 0 & 2 & 1 & 0 & 0 & 1 & 1 & 1 & 0 & 0 & 0 & 0 & 1 & 1 \\
 0 & 1 & 1 & 0 & 0 & 0 & 0 & 1 & 1 & 2 & 0 & 0 & 1 & 0 & 0 & 1 & 0 & 0 & 1 & 1 & 0 \\
 0 & 1 & 0 & 1 & 0 & 0 & 1 & 0 & 0 & 0 & 2 & 0 & 1 & 0 & 1 & 1 & 0 & 1 & 0 & 1 & 0 \\
 0 & 1 & 0 & 0 & 0 & 1 & 0 & 1 & 0 & 0 & 0 & 2 & 1 & 1 & 0 & 1 & 1 & 0 & 0 & 0 & 1 \\
 0 & 1 & 0 & 0 & 0 & 0 & 0 & 0 & 1 & 1 & 1 & 1 & 2 & 0 & 1 & 0 & 1 & 1 & 0 & 0 & 0 \\
 0 & 0 & 1 & 1 & 0 & 0 & 0 & 1 & 1 & 0 & 0 & 1 & 0 & 2 & 1 & 0 & 0 & 1 & 1 & 0 & 0 \\
 0 & 0 & 1 & 0 & 0 & 1 & 0 & 0 & 1 & 0 & 1 & 0 & 1 & 1 & 2 & 0 & 0 & 0 & 1 & 0 & 1 \\
 0 & 0 & 1 & 0 & 0 & 0 & 1 & 1 & 0 & 1 & 1 & 1 & 0 & 0 & 0 & 2 & 1 & 0 & 0 & 0 & 1 \\
 0 & 0 & 0 & 1 & 1 & 1 & 0 & 0 & 0 & 0 & 0 & 1 & 1 & 0 & 0 & 1 & 2 & 0 & 1 & 1 & 0 \\
 0 & 0 & 0 & 1 & 1 & 0 & 1 & 1 & 0 & 0 & 1 & 0 & 1 & 1 & 0 & 0 & 0 & 2 & 0 & 0 & 1 \\
 0 & 0 & 0 & 1 & 0 & 0 & 1 & 1 & 0 & 1 & 0 & 0 & 0 & 1 & 1 & 0 & 1 & 0 & 2 & 1 & 0 \\
 0 & 0 & 0 & 0 & 1 & 1 & 1 & 0 & 1 & 1 & 1 & 0 & 0 & 0 & 0 & 0 & 1 & 0 & 1 & 2 & 0 \\
 0 & 0 & 0 & 0 & 1 & 1 & 1 & 0 & 1 & 0 & 0 & 1 & 0 & 0 & 1 & 1 & 0 & 1 & 0 & 0 & 2
\end{smallmatrix}\right),\notag{}\\
 A_1^2&=\left(\begin{smallmatrix}
   12 & 7 & 7 & 6 & 7 & 6 & 5 & 5 & 5 & 4 & 3 & 3 & 2 & 4 & 3 & 3 & 3 & 4 & 3 & 4 & 4 \\
 7 & 12 & 6 & 6 & 5 & 4 & 3 & 4 & 4 & 6 & 6 & 5 & 7 & 3 & 3 & 4 & 4 & 4 & 2 & 3 & 2 \\
 7 & 6 & 12 & 4 & 2 & 6 & 6 & 4 & 4 & 6 & 4 & 4 & 3 & 5 & 6 & 6 & 2 & 2 & 4 & 3 & 4 \\
 6 & 6 & 4 & 12 & 4 & 6 & 4 & 4 & 2 & 2 & 6 & 4 & 4 & 6 & 4 & 2 & 6 & 6 & 6 & 4 & 2 \\
 7 & 5 & 2 & 4 & 12 & 4 & 4 & 6 & 7 & 4 & 3 & 4 & 4 & 3 & 2 & 3 & 5 & 6 & 3 & 6 & 6 \\
 6 & 4 & 6 & 6 & 4 & 12 & 4 & 2 & 4 & 2 & 3 & 6 & 3 & 4 & 6 & 4 & 7 & 2 & 4 & 5 & 6 \\
 5 & 3 & 6 & 4 & 4 & 4 & 12 & 4 & 3 & 4 & 7 & 2 & 2 & 3 & 4 & 7 & 3 & 6 & 5 & 6 & 6 \\
 5 & 4 & 4 & 4 & 6 & 2 & 4 & 12 & 4 & 6 & 2 & 6 & 3 & 7 & 2 & 6 & 4 & 6 & 6 & 3 & 4 \\
 5 & 4 & 4 & 2 & 7 & 4 & 3 & 4 & 12 & 6 & 3 & 3 & 6 & 5 & 7 & 2 & 3 & 4 & 4 & 6 & 6 \\
 4 & 6 & 6 & 2 & 4 & 2 & 4 & 6 & 6 & 12 & 4 & 4 & 6 & 4 & 4 & 6 & 4 & 2 & 6 & 6 & 2 \\
 3 & 6 & 4 & 6 & 3 & 3 & 7 & 2 & 3 & 4 & 12 & 3 & 7 & 3 & 5 & 5 & 4 & 7 & 4 & 5 & 4 \\
 3 & 5 & 4 & 4 & 4 & 6 & 2 & 6 & 3 & 4 & 3 & 12 & 6 & 5 & 4 & 7 & 7 & 4 & 3 & 2 & 6 \\
 2 & 7 & 3 & 4 & 4 & 3 & 2 & 3 & 6 & 6 & 7 & 6 & 12 & 4 & 6 & 4 & 5 & 5 & 3 & 4 & 4 \\
 4 & 3 & 5 & 6 & 3 & 4 & 3 & 7 & 5 & 4 & 3 & 5 & 4 & 12 & 7 & 3 & 3 & 6 & 7 & 2 & 4 \\
 3 & 3 & 6 & 4 & 2 & 6 & 4 & 2 & 7 & 4 & 5 & 4 & 6 & 7 & 12 & 3 & 3 & 4 & 5 & 4 & 6 \\
 3 & 4 & 6 & 2 & 3 & 4 & 7 & 6 & 2 & 6 & 5 & 7 & 4 & 3 & 3 & 12 & 5 & 4 & 4 & 4 & 6 \\
 3 & 4 & 2 & 6 & 5 & 7 & 3 & 4 & 3 & 4 & 4 & 7 & 5 & 3 & 3 & 5 & 12 & 3 & 6 & 7 & 4 \\
 4 & 4 & 2 & 6 & 6 & 2 & 6 & 6 & 4 & 2 & 7 & 4 & 5 & 6 & 4 & 4 & 3 & 12 & 4 & 3 & 6 \\
 3 & 2 & 4 & 6 & 3 & 4 & 5 & 6 & 4 & 6 & 4 & 3 & 3 & 7 & 5 & 4 & 6 & 4 & 12 & 7 & 2 \\
 4 & 3 & 3 & 4 & 6 & 5 & 6 & 3 & 6 & 6 & 5 & 2 & 4 & 2 & 4 & 4 & 7 & 3 & 7 & 12 & 4 \\
 4 & 2 & 4 & 2 & 6 & 6 & 6 & 4 & 6 & 2 & 4 & 6 & 4 & 4 & 6 & 6 & 4 & 6 & 2 & 4 & 12
 \end{smallmatrix}\right)\,,\notag{}\\
  A_2&=\left(\begin{smallmatrix}
  1 & 2 & 2 & 3 & 2 & 3 & 4 & 4 & 4 & 5 & 6 & 6 & 7 & 5 & 6 & 6 & 6 & 5 & 6 & 5 & 5 \\
 2 & 1 & 3 & 3 & 4 & 5 & 6 & 5 & 5 & 3 & 3 & 4 & 2 & 6 & 6 & 5 & 5 & 5 & 7 & 6 & 7 \\
 2 & 3 & 1 & 5 & 7 & 3 & 3 & 5 & 5 & 3 & 5 & 5 & 6 & 4 & 3 & 3 & 7 & 7 & 5 & 6 & 5 \\
 3 & 3 & 5 & 1 & 5 & 3 & 5 & 5 & 7 & 7 & 3 & 5 & 5 & 3 & 5 & 7 & 3 & 3 & 3 & 5 & 7 \\
 2 & 4 & 7 & 5 & 1 & 5 & 5 & 3 & 2 & 5 & 6 & 5 & 5 & 6 & 7 & 6 & 4 & 3 & 6 & 3 & 3 \\
 3 & 5 & 3 & 3 & 5 & 1 & 5 & 7 & 5 & 7 & 6 & 3 & 6 & 5 & 3 & 5 & 2 & 7 & 5 & 4 & 3 \\
 4 & 6 & 3 & 5 & 5 & 5 & 1 & 5 & 6 & 5 & 2 & 7 & 7 & 6 & 5 & 2 & 6 & 3 & 4 & 3 & 3 \\
 4 & 5 & 5 & 5 & 3 & 7 & 5 & 1 & 5 & 3 & 7 & 3 & 6 & 2 & 7 & 3 & 5 & 3 & 3 & 6 & 5 \\
 4 & 5 & 5 & 7 & 2 & 5 & 6 & 5 & 1 & 3 & 6 & 6 & 3 & 4 & 2 & 7 & 6 & 5 & 5 & 3 & 3 \\
 5 & 3 & 3 & 7 & 5 & 7 & 5 & 3 & 3 & 1 & 5 & 5 & 3 & 5 & 5 & 3 & 5 & 7 & 3 & 3 & 7 \\
 6 & 3 & 5 & 3 & 6 & 6 & 2 & 7 & 6 & 5 & 1 & 6 & 2 & 6 & 4 & 4 & 5 & 2 & 5 & 4 & 5 \\
 6 & 4 & 5 & 5 & 5 & 3 & 7 & 3 & 6 & 5 & 6 & 1 & 3 & 4 & 5 & 2 & 2 & 5 & 6 & 7 & 3 \\
 7 & 2 & 6 & 5 & 5 & 6 & 7 & 6 & 3 & 3 & 2 & 3 & 1 & 5 & 3 & 5 & 4 & 4 & 6 & 5 & 5 \\
 5 & 6 & 4 & 3 & 6 & 5 & 6 & 2 & 4 & 5 & 6 & 4 & 5 & 1 & 2 & 6 & 6 & 3 & 2 & 7 & 5 \\
 6 & 6 & 3 & 5 & 7 & 3 & 5 & 7 & 2 & 5 & 4 & 5 & 3 & 2 & 1 & 6 & 6 & 5 & 4 & 5 & 3 \\
 6 & 5 & 3 & 7 & 6 & 5 & 2 & 3 & 7 & 3 & 4 & 2 & 5 & 6 & 6 & 1 & 4 & 5 & 5 & 5 & 3 \\
 6 & 5 & 7 & 3 & 4 & 2 & 6 & 5 & 6 & 5 & 5 & 2 & 4 & 6 & 6 & 4 & 1 & 6 & 3 & 2 & 5 \\
 5 & 5 & 7 & 3 & 3 & 7 & 3 & 3 & 5 & 7 & 2 & 5 & 4 & 3 & 5 & 5 & 6 & 1 & 5 & 6 & 3 \\
 6 & 7 & 5 & 3 & 6 & 5 & 4 & 3 & 5 & 3 & 5 & 6 & 6 & 2 & 4 & 5 & 3 & 5 & 1 & 2 & 7 \\
 5 & 6 & 6 & 5 & 3 & 4 & 3 & 6 & 3 & 3 & 4 & 7 & 5 & 7 & 5 & 5 & 2 & 6 & 2 & 1 & 5 \\
 5 & 7 & 5 & 7 & 3 & 3 & 3 & 5 & 3 & 7 & 5 & 3 & 5 & 5 & 3 & 3 & 5 & 3 & 7 & 5 & 1
 \end{smallmatrix}\right)\,,\label{a2}\\
  \hat{A}&=\left(\begin{smallmatrix}
  1 & 2 & 3 & 4 & 5 & 6 & 7 & 8 & 9 & 10 & 11 & 12 & 13 & 14 & 15 & 16 & 17 & 18 & 19 & 20 & 21 \\
 2 & 22 & 23 & 24 & 25 & 26 & 27 & 28 & 29 & 30 & 31 & 32 & 33 & 19 & 34 & 35 & 36 & 37 & 38 & 39 & 40 \\
 3 & 23 & 41 & 42 & 43 & 44 & 45 & 46 & 47 & 48 & 49 & 50 & 51 & 8 & 52 & 53 & 54 & 55 & 28 & 56 & 57 \\
 4 & 24 & 42 & 58 & 59 & 60 & 61 & 42 & 62 & 63 & 64 & 61 & 65 & 4 & 59 & 66 & 64 & 60 & 24 & 65 & 67 \\
 5 & 25 & 43 & 59 & 68 & 69 & 70 & 52 & 71 & 72 & 73 & 74 & 75 & 15 & 76 & 77 & 78 & 79 & 34 & 80 & 81 \\
 6 & 26 & 44 & 60 & 69 & 82 & 83 & 55 & 84 & 85 & 86 & 87 & 88 & 18 & 79 & 89 & 90 & 91 & 37 & 92 & 93 \\
 7 & 27 & 45 & 61 & 70 & 83 & 94 & 50 & 95 & 96 & 97 & 98 & 99 & 12 & 74 & 100 & 101 & 87 & 32 & 102 & 103 \\
 8 & 28 & 46 & 42 & 52 & 55 & 50 & 41 & 47 & 48 & 54 & 45 & 56 & 3 & 43 & 53 & 49 & 44 & 23 & 51 & 57 \\
 9 & 29 & 47 & 62 & 71 & 84 & 95 & 47 & 104 & 105 & 106 & 95 & 107 & 9 & 71 & 108 & 106 & 84 & 29 & 107 & 109 \\
 10 & 30 & 48 & 63 & 72 & 85 & 96 & 48 & 105 & 110 & 111 & 96 & 112 & 10 & 72 & 113 & 111 & 85 & 30 & 112 & 114 \\
 11 & 31 & 49 & 64 & 73 & 86 & 97 & 54 & 106 & 111 & 115 & 101 & 116 & 17 & 78 & 117 & 118 & 90 & 36 & 119 & 120 \\
 12 & 32 & 50 & 61 & 74 & 87 & 98 & 45 & 95 & 96 & 101 & 94 & 102 & 7 & 70 & 100 & 97 & 83 & 27 & 99 & 103 \\
 13 & 33 & 51 & 65 & 75 & 88 & 99 & 56 & 107 & 112 & 116 & 102 & 121 & 20 & 80 & 122 & 119 & 92 & 39 & 123 & 124 \\
 14 & 19 & 8 & 4 & 15 & 18 & 12 & 3 & 9 & 10 & 17 & 7 & 20 & 1 & 5 & 16 & 11 & 6 & 2 & 13 & 21 \\
 15 & 34 & 52 & 59 & 76 & 79 & 74 & 43 & 71 & 72 & 78 & 70 & 80 & 5 & 68 & 77 & 73 & 69 & 25 & 75 & 81 \\
 16 & 35 & 53 & 66 & 77 & 89 & 100 & 53 & 108 & 113 & 117 & 100 & 122 & 16 & 77 & 125 & 117 & 89 & 35 & 122 & 126 \\
 17 & 36 & 54 & 64 & 78 & 90 & 101 & 49 & 106 & 111 & 118 & 97 & 119 & 11 & 73 & 117 & 115 & 86 & 31 & 116 & 120 \\
 18 & 37 & 55 & 60 & 79 & 91 & 87 & 44 & 84 & 85 & 90 & 83 & 92 & 6 & 69 & 89 & 86 & 82 & 26 & 88 & 93 \\
 19 & 38 & 28 & 24 & 34 & 37 & 32 & 23 & 29 & 30 & 36 & 27 & 39 & 2 & 25 & 35 & 31 & 26 & 22 & 33 & 40 \\
 20 & 39 & 56 & 65 & 80 & 92 & 102 & 51 & 107 & 112 & 119 & 99 & 123 & 13 & 75 & 122 & 116 & 88 & 33 & 121 & 124 \\
 21 & 40 & 57 & 67 & 81 & 93 & 103 & 57 & 109 & 114 & 120 & 103 & 124 & 21 & 81 & 126 & 120 & 93 & 40 & 124 & 127
 \end{smallmatrix}\right)\,.\label{NuSaS}%
\end{align}%

\noindent\textbf{Suggestions to implementation of SaS process:} While SaS process is just a process of square-and-substitution process as shown in \eqref{eq:wlseq}, it is easy to implement with adjacency matrix to obtain a stable graph. Although it is theoretically using variables as labels, we suggest a procedure named as the numerical SaS (NuSaS for short) procedure.
In NuSaS procedure, use normal 0-1 adjacency matrix of a graph as the starting matrix. In the substitution part, one may use integers as labels to perform ff-equivalent number substitution (or any equivalent number substitution) to the square of a matrix. The stable graph obtained will be a pseudo-stable graph. In most of time, a pseudo-stable graph is with the same vertex partition as that from a genuine stable graph. It could happen a faulty as illustrated previously in this example.

In order to avoid the faulty stable (like $A_2$ as above in \eqref{a2}), whenever a stable graph is obtained by NuSaS procedure, perform another and a different equivalent number substitution to the pseudo-stable graph to obtain a graph, and compute and obtain the second pseudo-stable graph. If the vertex partition obtained from the second stable graph is the same as the one from the first stable graph, it is most \emph{possible} that you get a genuine stable partition. Another option is to perform a row-column permutation (and to permute back after computation) to the pseudo-stable graph and then to compute with NuSaS process, it will get a different partition with overwhelming probability if the partition from  previous pseudo-stable graph is not the correct one. The \emph{safe option} is to perform an equivalent variable substitution to the pseudo-stable graph, and then to evaluate and obtain the stable graph.\ep

2. The second example tends to illustrate the strongly equitable partition by the stable graph of a graph $X$ (cf. \eqref{gX}) obtained by SaS process or WL process.

      Graph $X$ is a graph of order $24$. The stable graph $\hat{X}$ obtained by SaS process and the stable graph $\WL(X)$ obtained by WL process are shown, respectively, in \eqref{sasX} and \eqref{WLX}.

      We remind that the stable graphs $\hat{X}$ and $\WL(X)$ do not give the automorphism partition of graph $X$ in the example. The stable graphs of the binding graph $B$ of graph $X$ successfully partition the vertices into orbits of the automorphism group of graph $X$.
      One may find the followings.

        (a) The stable graph $\hat{X}$ is a symmetric matrix (a labeled complete graph), while the stable graph $\WL(X)$ is not a symmetric one. However, $\WL(X)$ respects the converse equivalent property. (cf. Proposition \ref{prop:sqdiomnd} and the contexts above it)

        (b) One may find that the labels in each block (colored or not) appear only in the block or corresponding transposed part (in $\hat{X}$). And any two rows (or columns) within a block have the same labels as multisets. That indicts the strongly equitable partitions.

       (c) The vertex partitions by stable graph $\hat{X}$ and $\WL(L)$ are the same as $$\{\{1,2,3,4,5,6,7,8\},\{9,10,11,12,13,14,15,16\},\{17,18,19,20\},\{21,22,23,24\}\}$$
            (cf. Theorem \ref{thm:saswl})
            The $\WL(X)$ is finer in the partition of edges.

      (d) One may confirm that (cf. \eqref{eq:wdteq})
       $\hat{X}\approx\mathtt{WL}(X)+\big(\mathtt{WL}(X)\big)^\T\,.$

     (e) The labels of vertices appear only on the diagonal of $\hat{X}$ and $\WL(X)$, respectively. (This indicates the stable graph recognizes vertices.)

     (f) The matrix $X*\hat{X}$ is the Hadamard (components) product \eqref{XsasX} of $X$ and $\hat{X}$. The nonzero entries in this product are the labels to edges in $X$. One may check that these nonzero labels do not appear as other entries in $\hat{X}$. Similarly to $X*\WL(X)$ in \eqref{XWLX} and $\WL(X)$. Which means the stable graph recognizes edges.

  \begin{align}
  &X=\left(\begin{smallmatrix}
 0& 0& 0& 0& 1& 1& 0& 0& 1& 1& 0& 0& 1& 1& 0& 0& 1& 1& 0& 0& 1& 1& 0&
   0\\0& 0& 0& 0& 0& 0& 1& 1& 1& 1& 0& 0& 0& 0& 1& 1& 0& 0& 1& 1& 0&
   0& 1& 1\\0& 0& 0& 0& 0& 0& 1& 1& 0& 0& 1& 1& 1& 1& 0& 0& 1& 1& 0&
   0& 0& 0& 1& 1\\0& 0& 0& 0& 1& 1& 0& 0& 0& 0& 1& 1& 0& 0& 1& 1& 0&
   0& 1& 1& 1& 1& 0& 0\\1& 0& 0& 1& 0& 0& 0& 0& 1& 0& 1& 0& 1& 0& 0&
   1& 0& 1& 0& 1& 0& 0& 1& 1\\1& 0& 0& 1& 0& 0& 0& 0& 0& 1& 0& 1& 0&
   1& 1& 0& 1& 0& 1& 0& 0& 0& 1& 1\\0& 1& 1& 0& 0& 0& 0& 0& 1& 0& 1&
   0& 0& 1& 1& 0& 0& 1& 0& 1& 1& 1& 0& 0\\0& 1& 1& 0& 0& 0& 0& 0& 0&
   1& 0& 1& 1& 0& 0& 1& 1& 0& 1& 0& 1& 1& 0& 0\\1& 1& 0& 0& 1& 0& 1&
   0& 0& 1& 1& 0& 0& 0& 0& 0& 1& 0& 1& 0& 1& 0& 1& 0\\1& 1& 0& 0& 0&
   1& 0& 1& 1& 0& 0& 1& 0& 0& 0& 0& 0& 1& 0& 1& 0& 1& 0& 1\\0& 0& 1&
   1& 1& 0& 1& 0& 1& 0& 0& 1& 0& 0& 0& 0& 1& 0& 1& 0& 0& 1& 0& 1\\0&
   0& 1& 1& 0& 1& 0& 1& 0& 1& 1& 0& 0& 0& 0& 0& 0& 1& 0& 1& 1& 0& 1&
  0\\1& 0& 1& 0& 1& 0& 0& 1& 0& 0& 0& 0& 0& 1& 0& 1& 0& 0& 1& 1& 1&
  0& 1& 0\\1& 0& 1& 0& 0& 1& 1& 0& 0& 0& 0& 0& 1& 0& 1& 0& 0& 0& 1&
  1& 0& 1& 0& 1\\0& 1& 0& 1& 0& 1& 1& 0& 0& 0& 0& 0& 0& 1& 0& 1& 1&
  1& 0& 0& 1& 0& 1& 0\\0& 1& 0& 1& 1& 0& 0& 1& 0& 0& 0& 0& 1& 0& 1&
  0& 1& 1& 0& 0& 0& 1& 0& 1\\1& 0& 1& 0& 0& 1& 0& 1& 1& 0& 1& 0& 0&
  0& 1& 1& 0& 0& 0& 0& 1& 0& 0& 1\\1& 0& 1& 0& 1& 0& 1& 0& 0& 1& 0&
  1& 0& 0& 1& 1& 0& 0& 0& 0& 0& 1& 1& 0\\0& 1& 0& 1& 0& 1& 0& 1& 1&
  0& 1& 0& 1& 1& 0& 0& 0& 0& 0& 0& 0& 1& 1& 0\\0& 1& 0& 1& 1& 0& 1&
  0& 0& 1& 0& 1& 1& 1& 0& 0& 0& 0& 0& 0& 1& 0& 0& 1\\1& 0& 0& 1& 0&
  0& 1& 1& 1& 0& 0& 1& 1& 0& 1& 0& 1& 0& 0& 1& 0& 0& 0& 0\\1& 0& 0&
  1& 0& 0& 1& 1& 0& 1& 1& 0& 0& 1& 0& 1& 0& 1& 1& 0& 0& 0& 0& 0\\0&
  1& 1& 0& 1& 1& 0& 0& 1& 0& 0& 1& 1& 0& 1& 0& 0& 1& 1& 0& 0& 0& 0&
  0\\0& 1& 1& 0& 1& 1& 0& 0& 0& 1& 1& 0& 0& 1& 0& 1& 1& 0& 0& 1& 0&
  0& 0& 0
  \end{smallmatrix}\right),\label{gX}\\
 &\hat{X}=\left(\begin{smallmatrix}
 \tcb{1}& \tcb{2}& \tcb{3}& \tcb{4}& \tcb{5}& \tcb{5}& \tcb{6}& \tcb{6}& 7& 7& 8& 8& 9& 9& 10& 10& \tcr{11}& \tcr{11}& \tcr{12}& \tcr{12}&
  13& 13& 14& 14\\
  \tcb{2}& \tcb{1}& \tcb{4}& \tcb{3}& \tcb{6}& \tcb{6}& \tcb{5}& \tcb{5}& 7& 7& 8& 8& 10& 10& 9& 9&
  \tcr{12}& \tcr{12}& \tcr{11}& \tcr{11}& 14& 14& 13& 13\\
  \tcb{3}& \tcb{4}& \tcb{1}& \tcb{2}& \tcb{6}& \tcb{6}& \tcb{5}& \tcb{5}& 8& 8& 7&
  7& 9& 9& 10& 10& \tcr{11}& \tcr{11}& \tcr{12}& \tcr{12}& 14& 14& 13& 13\\
  \tcb{4}& \tcb{3}& \tcb{2}& \tcb{1}& \tcb{5}& \tcb{5}&
   \tcb{6}& \tcb{6}& 8& 8& 7& 7& 10& 10& 9& 9& \tcr{12}& \tcr{12}& \tcr{11}& \tcr{11}& 13& 13& 14&
  14\\\tcb{5}& \tcb{6}& \tcb{6}& \tcb{5}& \tcb{1}& \tcb{4}& \tcb{3}& \tcb{2}& 9& 10& 9& 10& 7& 8& 8& 7& \tcr{12}& \tcr{11}& \tcr{12}&
  \tcr{11}& 14& 14& 13& 13\\
  \tcb{5}& \tcb{6}& \tcb{6}& \tcb{5}& \tcb{4}& \tcb{1}& \tcb{2}& \tcb{3}& 10& 9& 10& 9& 8& 7& 7&
   8& \tcr{11}& \tcr{12}& \tcr{11}& \tcr{12}& 14& 14& 13& 13\\
   \tcb{6}& \tcb{5}& \tcb{5}& \tcb{6}& \tcb{3}& \tcb{2}& \tcb{1}& \tcb{4}& 9& 10&
   9& 10& 8& 7& 7& 8& \tcr{12}& \tcr{11}& \tcr{12}& \tcr{11}& 13& 13& 14& 14\\
   \tcb{6}& \tcb{5}& \tcb{5}& \tcb{6}& \tcb{2}&
   \tcb{3}& \tcb{4}& \tcb{1}& 10& 9& 10& 9& 7& 8& 8& 7& \tcr{11}& \tcr{12}& \tcr{11}& \tcr{12}& 13& 13& 14&
  14\\
  7& 7& 8& 8& 9& 10& 9& 10& \tcb{15}& \tcb{16}& \tcb{17}& \tcb{18}& \tcb{19}& \tcb{20}& \tcb{19}& \tcb{20}& 21&
  22& 21& 22& \tcr{23}& \tcr{24}& \tcr{23}& \tcr{24}\\
  7& 7& 8& 8& 10& 9& 10& 9& \tcb{16}& \tcb{15}& \tcb{18}&
  \tcb{17}& \tcb{20}& \tcb{19}& \tcb{20}& \tcb{19}& 22& 21& 22& 21& \tcr{24}& \tcr{23}& \tcr{24}& \tcr{23}\\
  8& 8& 7& 7& 9&
   10& 9& 10& \tcb{17}& \tcb{18}& \tcb{15}& \tcb{16}& \tcb{20}& \tcb{19}& \tcb{20}& \tcb{19}& 21& 22& 21& 22& \tcr{24}& \tcr{23}&
  \tcr{24}& \tcr{23}\\
  8& 8& 7& 7& 10& 9& 10& 9& \tcb{18}& \tcb{17}& \tcb{16}& \tcb{15}& \tcb{19}& \tcb{20}& \tcb{19}& \tcb{20}&
  22& 21& 22& 21& \tcr{23}& \tcr{24}& \tcr{23}& \tcr{24}\\
  9& 10& 9& 10& 7& 8& 8& 7& \tcb{19}& \tcb{20}&
  \tcb{20}& \tcb{19}& \tcb{15}& \tcb{17}& \tcb{18}& \tcb{16}& 22& 22& 21& 21& \tcr{23}& \tcr{24}& \tcr{23}& \tcr{24}\\
  9& 10& 9&
  10& 8& 7& 7& 8& \tcb{20}& \tcb{19}& \tcb{19}& \tcb{20}& \tcb{17}& \tcb{15}& \tcb{16}& \tcb{18}& 22& 22& 21& 21& \tcr{24}&
  \tcr{23}& \tcr{24}& \tcr{23}\\
  10& 9& 10& 9& 8& 7& 7& 8& \tcb{19}& \tcb{20}& \tcb{20}& \tcb{19}& \tcb{18}& \tcb{16}& \tcb{15}&
  \tcb{17}& 21& 21& 22& 22& \tcr{23}& \tcr{24}& \tcr{23}& \tcr{24}\\
  10& 9& 10& 9& 7& 8& 8& 7& \tcb{20}&
  \tcb{19}& \tcb{19}& \tcb{20}& \tcb{16}& \tcb{18}& \tcb{17}& \tcb{15}& 21& 21& 22& 22& \tcr{24}& \tcr{23}& \tcr{24}& \tcr{23}\\
  11&
  12& 11& 12& 12& 11& 12& 11& 21& 22& 21& 22& 22& 22& 21& 21& \tcb{25}& \tcb{26}&
  \tcb{26}& \tcb{27}& 28& 29& 29& 28\\
  11& 12& 11& 12& 11& 12& 11& 12& 22& 21&
  22& 21& 22& 22& 21& 21& \tcb{26}& \tcb{25}& \tcb{27}& \tcb{26}& 29& 28& 28& 29\\12& 11&
  12& 11& 12& 11& 12& 11& 21& 22& 21& 22& 21& 21& 22& 22& \tcb{26}& \tcb{27}& \tcb{25}&
  \tcb{26}& 29& 28& 28& 29\\12& 11& 12& 11& 11& 12& 11& 12& 22& 21& 22&
  21& 21& 21& 22& 22& \tcb{27}& \tcb{26}& \tcb{26}& \tcb{25}& 28& 29& 29& 28\\
  13& 14& 14&
  13& 14& 14& 13& 13& 23& 24& 24& 23& 23& 24& 23& 24& 28& 29& 29& 28&
  \tcb{30}& \tcb{31}& \tcb{32}& \tcb{33}\\13& 14& 14& 13& 14& 14& 13& 13& 24& 23& 23& 24&
  24& 23& 24& 23& 29& 28& 28& 29& \tcb{31}& \tcb{30}& \tcb{33}& \tcb{32}\\14& 13& 13& 14&
  13& 13& 14& 14& 23& 24& 24& 23& 23& 24& 23& 24& 29& 28& 28& 29& \tcb{32}&
  \tcb{33}& \tcb{30}& \tcb{31}\\14& 13& 13& 14& 13& 13& 14& 14& 24& 23& 23& 24& 24&
  23& 24& 23& 28& 29& 29& 28& \tcb{33}& \tcb{32}& \tcb{31}& \tcb{30}
   \end{smallmatrix}\right),\label{sasX}\\
 &\WL(X)=\left(\begin{smallmatrix}
 \tcb{1}&  \tcb{2}&  \tcb{3}&  \tcb{4}&  \tcb{5}&  \tcb{5}&  \tcb{6}&  \tcb{6}& 7& 7& 8& 8& 9& 9& 10& 10& \tcr{11}& \tcr{11}& \tcr{12}& \tcr{12}&
  13& 13& 14& 14\\
   \tcb{2}&  \tcb{1}&  \tcb{4}&  \tcb{3}&  \tcb{6}&  \tcb{6}&  \tcb{5}&  \tcb{5}& 7& 7& 8& 8& 10& 10& 9& 9&
  \tcr{12}& \tcr{12}& \tcr{11}& \tcr{11}& 14& 14& 13& 13\\
   \tcb{3}&  \tcb{4}&  \tcb{1}&  \tcb{2}&  \tcb{6}&  \tcb{6}&  \tcb{5}&  \tcb{5}& 8& 8& 7&
  7& 9& 9& 10& 10& \tcr{11}& \tcr{11}& \tcr{12}& \tcr{12}& 14& 14& 13& 13\\
   \tcb{4}&  \tcb{3}&  \tcb{2}&  \tcb{1}&  \tcb{5}&  \tcb{5}&
    \tcb{6}&  \tcb{6}& 8& 8& 7& 7& 10& 10& 9& 9& \tcr{12}& \tcr{12}& \tcr{11}& \tcr{11}& 13& 13& 14&
  14\\
   \tcb{5}&  \tcb{6}&  \tcb{6}&  \tcb{5}&  \tcb{1}&  \tcb{4}&  \tcb{3}&  \tcb{2}& 9& 10& 9& 10& 7& 8& 8& 7& \tcr{12}& \tcr{11}& \tcr{12}&
  \tcr{11}& 14& 14& 13& 13\\
   \tcb{5}&  \tcb{6}&  \tcb{6}&  \tcb{5}&  \tcb{4}&  \tcb{1}&  \tcb{2}&  \tcb{3}& 10& 9& 10& 9& 8& 7& 7&
   8& \tcr{11}& \tcr{12}& \tcr{11}& \tcr{12}& 14& 14& 13& 13\\
    \tcb{6}&  \tcb{5}&  \tcb{5}&  \tcb{6}&  \tcb{3}&  \tcb{2}&  \tcb{1}& \tcb{4}& 9& 10&
   9& 10& 8& 7& 7& 8& \tcr{12}& \tcr{11}& \tcr{12}& \tcr{11}& 13& 13& 14& 14\\
    \tcb{6}&  \tcb{5}&  \tcb{5}&  \tcb{6}&  \tcb{2}&
    \tcb{3}&  \tcb{4}&  \tcb{1}& 10& 9& 10& 9& 7& 8& 8& 7& \tcr{11}& \tcr{12}& \tcr{11}& \tcr{12}& 13& 13& 14&
  14\\15& 15& 16& 16& 17& 18& 17& 18&  \tcb{19}&  \tcb{20}&  \tcb{21}& \tcb{22}&  \tcb{23}&  \tcb{24}&  \tcb{23}&
   \tcb{24}& 25& 26& 25& 26& \tcr{27}& \tcr{28}& \tcr{27}& \tcr{28}\\15& 15& 16& 16& 18& 17& 18&
  17&  \tcb{20}&  \tcb{19}&  \tcb{22}&  \tcb{21}&  \tcb{24}&  \tcb{23}& \tcb{24}&  \tcb{23}& 26& 25& 26& 25& \tcr{28}& \tcr{27}& \tcr{28}&
  \tcr{27}\\
  16& 16& 15& 15& 17& 18& 17& 18&  \tcb{21}&  \tcb{22}&  \tcb{19}&  \tcb{20}&  \tcb{24}&  \tcb{23}&  \tcb{24}&
   \tcb{23}& 25& 26& 25& 26& \tcr{28}& \tcr{27}& \tcr{28}& \tcr{27}\\
   16& 16& 15& 15& 18& 17& 18&
  17&  \tcb{22}&  \tcb{21}&  \tcb{20}&  \tcb{19}&  \tcb{23}&  \tcb{24}& \tcb{23}&  \tcb{24}& 26& 25& 26& 25& \tcr{27}& \tcr{28}& \tcr{27}&
  \tcr{28}\\17& 18& 17& 18& 15& 16& 16& 15&  \tcb{23}&  \tcb{24}&  \tcb{24}&  \tcb{23}&  \tcb{19}&  \tcb{21}&  \tcb{22}&
   \tcb{20}& 26& 26& 25& 25& \tcr{27}& \tcr{28}& \tcr{27}& \tcr{28}\\17& 18& 17& 18& 16& 15& 15&
  16&  \tcb{24}&  \tcb{23}&  \tcb{23}&  \tcb{24}&  \tcb{21}&  \tcb{19}&  \tcb{20}&  \tcb{22}& 26& 26& 25& 25& \tcr{28}& \tcr{27}& \tcr{28}&
  \tcr{27}\\18& 17& 18& 17& 16& 15& 15& 16&  \tcb{23}&  \tcb{24}&  \tcb{24}&  \tcb{23}&  \tcb{22}&  \tcb{20}&  \tcb{19}&
   \tcb{21}& 25& 25& 26& 26& \tcr{27}& \tcr{28}& \tcr{27}& \tcr{28}\\18& 17& 18& 17& 15& 16& 16&
  15&  \tcb{24}&  \tcb{23}&  \tcb{23}&  \tcb{24}&  \tcb{20}&  \tcb{22}&  \tcb{21}&  \tcb{19}& 25& 25& 26& 26& \tcr{28}& \tcr{27}& \tcr{28}&
  \tcr{27}\\
  29& 30& 29& 30& 30& 29& 30& 29& 31& 32& 31& 32& 32& 32& 31&
  31&  \tcb{33}&  \tcb{34}&  \tcb{34}&  \tcb{35}& 36& 37& 37& 36\\29& 30& 29& 30& 29& 30& 29&
  30& 32& 31& 32& 31& 32& 32& 31& 31&  \tcb{34}&  \tcb{33}&  \tcb{35}&  \tcb{34}& 37& 36& 36&
  37\\30& 29& 30& 29& 30& 29& 30& 29& 31& 32& 31& 32& 31& 31& 32&
  32&  \tcb{34}&  \tcb{35}&  \tcb{33}&  \tcb{34}& 37& 36& 36& 37\\30& 29& 30& 29& 29& 30& 29&
  30& 32& 31& 32& 31& 31& 31& 32& 32&  \tcb{35}&  \tcb{34}&  \tcb{34}&  \tcb{33}& 36& 37& 37&
  36\\38& 39& 39& 38& 39& 39& 38& 38& 40& 41& 41& 40& 40& 41& 40&
  41& 42& 43& 43& 42&  \tcb{44}&  \tcb{45}&  \tcb{46}&  \tcb{47}\\
  38& 39& 39& 38& 39& 39& 38&
  38& 41& 40& 40& 41& 41& 40& 41& 40& 43& 42& 42& 43&  \tcb{45}& \tcb{44}&  \tcb{47}&
   \tcb{46}\\39& 38& 38& 39& 38& 38& 39& 39& 40& 41& 41& 40& 40& 41& 40&
  41& 43& 42& 42& 43&  \tcb{46}&  \tcb{47}&  \tcb{44}&  \tcb{45}\\39& 38& 38& 39& 38& 38& 39&
  39& 41& 40& 40& 41& 41& 40& 41& 40& 42& 43& 43& 42&  \tcb{47}& \tcb{46}&  \tcb{45}&  \tcb{44}
 \end{smallmatrix}\right),\label{WLX}\\
 &X*\hat{X}=\left(\begin{smallmatrix}
  0 & 0 & 0 & 0 & 5 & 5 & 0 & 0 & 7 & 7 & 0 & 0 & 9 & 9 & 0 & 0 & 11 & 11 & 0 & 0 & 13 & 13 & 0 & 0 \\
 0 & 0 & 0 & 0 & 0 & 0 & 5 & 5 & 7 & 7 & 0 & 0 & 0 & 0 & 9 & 9 & 0 & 0 & 11 & 11 & 0 & 0 & 13 & 13 \\
 0 & 0 & 0 & 0 & 0 & 0 & 5 & 5 & 0 & 0 & 7 & 7 & 9 & 9 & 0 & 0 & 11 & 11 & 0 & 0 & 0 & 0 & 13 & 13 \\
 0 & 0 & 0 & 0 & 5 & 5 & 0 & 0 & 0 & 0 & 7 & 7 & 0 & 0 & 9 & 9 & 0 & 0 & 11 & 11 & 13 & 13 & 0 & 0 \\
 5 & 0 & 0 & 5 & 0 & 0 & 0 & 0 & 9 & 0 & 9 & 0 & 7 & 0 & 0 & 7 & 0 & 11 & 0 & 11 & 0 & 0 & 13 & 13 \\
 5 & 0 & 0 & 5 & 0 & 0 & 0 & 0 & 0 & 9 & 0 & 9 & 0 & 7 & 7 & 0 & 11 & 0 & 11 & 0 & 0 & 0 & 13 & 13 \\
 0 & 5 & 5 & 0 & 0 & 0 & 0 & 0 & 9 & 0 & 9 & 0 & 0 & 7 & 7 & 0 & 0 & 11 & 0 & 11 & 13 & 13 & 0 & 0 \\
 0 & 5 & 5 & 0 & 0 & 0 & 0 & 0 & 0 & 9 & 0 & 9 & 7 & 0 & 0 & 7 & 11 & 0 & 11 & 0 & 13 & 13 & 0 & 0 \\
 7 & 7 & 0 & 0 & 9 & 0 & 9 & 0 & 0 & 16 & 17 & 0 & 0 & 0 & 0 & 0 & 21 & 0 & 21 & 0 & 23 & 0 & 23 & 0 \\
 7 & 7 & 0 & 0 & 0 & 9 & 0 & 9 & 16 & 0 & 0 & 17 & 0 & 0 & 0 & 0 & 0 & 21 & 0 & 21 & 0 & 23 & 0 & 23 \\
 0 & 0 & 7 & 7 & 9 & 0 & 9 & 0 & 17 & 0 & 0 & 16 & 0 & 0 & 0 & 0 & 21 & 0 & 21 & 0 & 0 & 23 & 0 & 23 \\
 0 & 0 & 7 & 7 & 0 & 9 & 0 & 9 & 0 & 17 & 16 & 0 & 0 & 0 & 0 & 0 & 0 & 21 & 0 & 21 & 23 & 0 & 23 & 0 \\
 9 & 0 & 9 & 0 & 7 & 0 & 0 & 7 & 0 & 0 & 0 & 0 & 0 & 17 & 0 & 16 & 0 & 0 & 21 & 21 & 23 & 0 & 23 & 0 \\
 9 & 0 & 9 & 0 & 0 & 7 & 7 & 0 & 0 & 0 & 0 & 0 & 17 & 0 & 16 & 0 & 0 & 0 & 21 & 21 & 0 & 23 & 0 & 23 \\
 0 & 9 & 0 & 9 & 0 & 7 & 7 & 0 & 0 & 0 & 0 & 0 & 0 & 16 & 0 & 17 & 21 & 21 & 0 & 0 & 23 & 0 & 23 & 0 \\
 0 & 9 & 0 & 9 & 7 & 0 & 0 & 7 & 0 & 0 & 0 & 0 & 16 & 0 & 17 & 0 & 21 & 21 & 0 & 0 & 0 & 23 & 0 & 23 \\
 11 & 0 & 11 & 0 & 0 & 11 & 0 & 11 & 21 & 0 & 21 & 0 & 0 & 0 & 21 & 21 & 0 & 0 & 0 & 0 & 28 & 0 & 0 & 28 \\
 11 & 0 & 11 & 0 & 11 & 0 & 11 & 0 & 0 & 21 & 0 & 21 & 0 & 0 & 21 & 21 & 0 & 0 & 0 & 0 & 0 & 28 & 28 & 0 \\
 0 & 11 & 0 & 11 & 0 & 11 & 0 & 11 & 21 & 0 & 21 & 0 & 21 & 21 & 0 & 0 & 0 & 0 & 0 & 0 & 0 & 28 & 28 & 0 \\
 0 & 11 & 0 & 11 & 11 & 0 & 11 & 0 & 0 & 21 & 0 & 21 & 21 & 21 & 0 & 0 & 0 & 0 & 0 & 0 & 28 & 0 & 0 & 28 \\
 13 & 0 & 0 & 13 & 0 & 0 & 13 & 13 & 23 & 0 & 0 & 23 & 23 & 0 & 23 & 0 & 28 & 0 & 0 & 28 & 0 & 0 & 0 & 0 \\
 13 & 0 & 0 & 13 & 0 & 0 & 13 & 13 & 0 & 23 & 23 & 0 & 0 & 23 & 0 & 23 & 0 & 28 & 28 & 0 & 0 & 0 & 0 & 0 \\
 0 & 13 & 13 & 0 & 13 & 13 & 0 & 0 & 23 & 0 & 0 & 23 & 23 & 0 & 23 & 0 & 0 & 28 & 28 & 0 & 0 & 0 & 0 & 0 \\
 0 & 13 & 13 & 0 & 13 & 13 & 0 & 0 & 0 & 23 & 23 & 0 & 0 & 23 & 0 & 23 & 28 & 0 & 0 & 28 & 0 & 0 & 0 & 0
 \end{smallmatrix}\right),\label{XsasX}\\
 &X*\WL(X)=\left(\begin{smallmatrix}
  0 & 0 & 0 & 0 & 5 & 5 & 0 & 0 & 7 & 7 & 0 & 0 & 9 & 9 & 0 & 0 & 11 & 11 & 0 & 0 & 13 & 13 & 0 & 0 \\
 0 & 0 & 0 & 0 & 0 & 0 & 5 & 5 & 7 & 7 & 0 & 0 & 0 & 0 & 9 & 9 & 0 & 0 & 11 & 11 & 0 & 0 & 13 & 13 \\
 0 & 0 & 0 & 0 & 0 & 0 & 5 & 5 & 0 & 0 & 7 & 7 & 9 & 9 & 0 & 0 & 11 & 11 & 0 & 0 & 0 & 0 & 13 & 13 \\
 0 & 0 & 0 & 0 & 5 & 5 & 0 & 0 & 0 & 0 & 7 & 7 & 0 & 0 & 9 & 9 & 0 & 0 & 11 & 11 & 13 & 13 & 0 & 0 \\
 5 & 0 & 0 & 5 & 0 & 0 & 0 & 0 & 9 & 0 & 9 & 0 & 7 & 0 & 0 & 7 & 0 & 11 & 0 & 11 & 0 & 0 & 13 & 13 \\
 5 & 0 & 0 & 5 & 0 & 0 & 0 & 0 & 0 & 9 & 0 & 9 & 0 & 7 & 7 & 0 & 11 & 0 & 11 & 0 & 0 & 0 & 13 & 13 \\
 0 & 5 & 5 & 0 & 0 & 0 & 0 & 0 & 9 & 0 & 9 & 0 & 0 & 7 & 7 & 0 & 0 & 11 & 0 & 11 & 13 & 13 & 0 & 0 \\
 0 & 5 & 5 & 0 & 0 & 0 & 0 & 0 & 0 & 9 & 0 & 9 & 7 & 0 & 0 & 7 & 11 & 0 & 11 & 0 & 13 & 13 & 0 & 0 \\
 15 & 15 & 0 & 0 & 17 & 0 & 17 & 0 & 0 & 20 & 21 & 0 & 0 & 0 & 0 & 0 & 25 & 0 & 25 & 0 & 27 & 0 & 27 & 0 \\
 15 & 15 & 0 & 0 & 0 & 17 & 0 & 17 & 20 & 0 & 0 & 21 & 0 & 0 & 0 & 0 & 0 & 25 & 0 & 25 & 0 & 27 & 0 & 27 \\
 0 & 0 & 15 & 15 & 17 & 0 & 17 & 0 & 21 & 0 & 0 & 20 & 0 & 0 & 0 & 0 & 25 & 0 & 25 & 0 & 0 & 27 & 0 & 27 \\
 0 & 0 & 15 & 15 & 0 & 17 & 0 & 17 & 0 & 21 & 20 & 0 & 0 & 0 & 0 & 0 & 0 & 25 & 0 & 25 & 27 & 0 & 27 & 0 \\
 17 & 0 & 17 & 0 & 15 & 0 & 0 & 15 & 0 & 0 & 0 & 0 & 0 & 21 & 0 & 20 & 0 & 0 & 25 & 25 & 27 & 0 & 27 & 0 \\
 17 & 0 & 17 & 0 & 0 & 15 & 15 & 0 & 0 & 0 & 0 & 0 & 21 & 0 & 20 & 0 & 0 & 0 & 25 & 25 & 0 & 27 & 0 & 27 \\
 0 & 17 & 0 & 17 & 0 & 15 & 15 & 0 & 0 & 0 & 0 & 0 & 0 & 20 & 0 & 21 & 25 & 25 & 0 & 0 & 27 & 0 & 27 & 0 \\
 0 & 17 & 0 & 17 & 15 & 0 & 0 & 15 & 0 & 0 & 0 & 0 & 20 & 0 & 21 & 0 & 25 & 25 & 0 & 0 & 0 & 27 & 0 & 27 \\
 29 & 0 & 29 & 0 & 0 & 29 & 0 & 29 & 31 & 0 & 31 & 0 & 0 & 0 & 31 & 31 & 0 & 0 & 0 & 0 & 36 & 0 & 0 & 36 \\
 29 & 0 & 29 & 0 & 29 & 0 & 29 & 0 & 0 & 31 & 0 & 31 & 0 & 0 & 31 & 31 & 0 & 0 & 0 & 0 & 0 & 36 & 36 & 0 \\
 0 & 29 & 0 & 29 & 0 & 29 & 0 & 29 & 31 & 0 & 31 & 0 & 31 & 31 & 0 & 0 & 0 & 0 & 0 & 0 & 0 & 36 & 36 & 0 \\
 0 & 29 & 0 & 29 & 29 & 0 & 29 & 0 & 0 & 31 & 0 & 31 & 31 & 31 & 0 & 0 & 0 & 0 & 0 & 0 & 36 & 0 & 0 & 36 \\
 38 & 0 & 0 & 38 & 0 & 0 & 38 & 38 & 40 & 0 & 0 & 40 & 40 & 0 & 40 & 0 & 42 & 0 & 0 & 42 & 0 & 0 & 0 & 0 \\
 38 & 0 & 0 & 38 & 0 & 0 & 38 & 38 & 0 & 40 & 40 & 0 & 0 & 40 & 0 & 40 & 0 & 42 & 42 & 0 & 0 & 0 & 0 & 0 \\
 0 & 38 & 38 & 0 & 38 & 38 & 0 & 0 & 40 & 0 & 0 & 40 & 40 & 0 & 40 & 0 & 0 & 42 & 42 & 0 & 0 & 0 & 0 & 0 \\
 0 & 38 & 38 & 0 & 38 & 38 & 0 & 0 & 0 & 40 & 40 & 0 & 0 & 40 & 0 & 40 & 42 & 0 & 0 & 42 & 0 & 0 & 0 & 0
 \end{smallmatrix}\right),\label{XWLX}\\
&\begin{smallmatrix}
\tcr{1} & \tcb{2} & \tcb{3} & \tcb{4} & 5 & 5 & 6 & 6 & 7 & 7 & 8 & 8 & 9 & 9 & 10 & 10 & 11 & 11 & 12 & 12 & 13 & 13 & 14 & 14 \\
 \tcb{2} & \tcr{1} & \tcb{4} & \tcb{3} & 6 & 6 & 5 & 5 & 7 & 7 & 8 & 8 & 10 & 10 & 9 & 9 & 12 & 12 & 11 & 11 & 14 & 14 & 13 & 13 \\
 \tcb{3} & \tcb{4} & \tcr{1} & \tcb{2} & 6 & 6 & 5 & 5 & 8 & 8 & 7 & 7 & 9 & 9 & 10 & 10 & 11 & 11 & 12 & 12 & 14 & 14 & 13 & 13 \\
 \tcb{4} & \tcb{3} & \tcb{2} & \tcr{1} & 5 & 5 & 6 & 6 & 8 & 8 & 7 & 7 & 10 & 10 & 9 & 9 & 12 & 12 & 11 & 11 & 13 & 13 & 14 & 14 \\
 5 & 6 & 6 & 5 & \tcr{161} & \tcb{162} & \tcb{163} & \tcb{164} & 165 & 166 & 165 & 166 & 167 & 168 & 168 & 167 & 169 & 170 & 169 & 170 & 171 & 171 & 172 & 172 \\
 5 & 6 & 6 & 5 & \tcb{162} & \tcr{161} & \tcb{164} & \tcb{163} & 166 & 165 & 166 & 165 & 168 & 167 & 167 & 168 & 170 & 169 & 170 & 169 & 171 & 171 & 172 & 172 \\
 6 & 5 & 5 & 6 & \tcb{163} & \tcb{164} & \tcr{161} & \tcb{162} & 165 & 166 & 165 & 166 & 168 & 167 & 167 & 168 & 169 & 170 & 169 & 170 & 172 & 172 & 171 & 171 \\
 6 & 5 & 5 & 6 & \tcb{164} & \tcb{163} & \tcb{162} & \tcr{161} & 166 & 165 & 166 & 165 & 167 & 168 & 168 & 167 & 170 & 169 & 170 & 169 & 172 & 172 & 171 & 171 \\
 7 & 7 & 8 & 8 & 165 & 166 & 165 & 166 & \tcr{319} & \tcb{320} & \tcb{321} & \tcb{322} & 323 & 324 & 323 & 324 & 325 & 326 & 325 & 326 & 327 & 328 & 327 & 328 \\
 7 & 7 & 8 & 8 & 166 & 165 & 166 & 165 & \tcb{320} & \tcr{319} & \tcb{322} & \tcb{321} & 324 & 323 & 324 & 323 & 326 & 325 & 326 & 325 & 328 & 327 & 328 & 327 \\
 8 & 8 & 7 & 7 & 165 & 166 & 165 & 166 & \tcb{321} & \tcb{322} & \tcr{319} & \tcb{320} & 324 & 323 & 324 & 323 & 325 & 326 & 325 & 326 & 328 & 327 & 328 & 327 \\
 8 & 8 & 7 & 7 & 166 & 165 & 166 & 165 & \tcb{322} & \tcb{321} & \tcb{320} & \tcr{319} & 323 & 324 & 323 & 324 & 326 & 325 & 326 & 325 & 327 & 328 & 327 & 328 \\
 9 & 10 & 9 & 10 & 167 & 168 & 168 & 167 & 323 & 324 & 324 & 323 & \tcr{475} & \tcb{476} & \tcb{477} & \tcb{478} & 479 & 479 & 480 & 480 & 481 & 482 & 481 & 482 \\
 9 & 10 & 9 & 10 & 168 & 167 & 167 & 168 & 324 & 323 & 323 & 324 & \tcb{476} & \tcr{475} & \tcb{478} & \tcb{477} & 479 & 479 & 480 & 480 & 482 & 481 & 482 & 481 \\
 10 & 9 & 10 & 9 & 168 & 167 & 167 & 168 & 323 & 324 & 324 & 323 & \tcb{477} & \tcb{478} & \tcr{475} & \tcb{476} & 480 & 480 & 479 & 479 & 481 & 482 & 481 & 482 \\
 10 & 9 & 10 & 9 & 167 & 168 & 168 & 167 & 324 & 323 & 323 & 324 & \tcb{478} & \tcb{477} & \tcb{476} & \tcr{475} & 480 & 480 & 479 & 479 & 482 & 481 & 482 & 481 \\
 11 & 12 & 11 & 12 & 169 & 170 & 169 & 170 & 325 & 326 & 325 & 326 & 479 & 479 & 480 & 480 & \tcr{629} & \tcb{630} & \tcb{631} & \tcb{632} & 633 & 634 & 634 & 633 \\
 11 & 12 & 11 & 12 & 170 & 169 & 170 & 169 & 326 & 325 & 326 & 325 & 479 & 479 & 480 & 480 & \tcb{630} & \tcr{629} & \tcb{632} & \tcb{631} & 634 & 633 & 633 & 634 \\
 12 & 11 & 12 & 11 & 169 & 170 & 169 & 170 & 325 & 326 & 325 & 326 & 480 & 480 & 479 & 479 & \tcb{631} & \tcb{632} & \tcr{629} & \tcb{630} & 634 & 633 & 633 & 634 \\
 12 & 11 & 12 & 11 & 170 & 169 & 170 & 169 & 326 & 325 & 326 & 325 & 480 & 480 & 479 & 479 & \tcb{632} & \tcb{631} & \tcb{630} & \tcr{629} & 633 & 634 & 634 & 633 \\
 13 & 14 & 14 & 13 & 171 & 171 & 172 & 172 & 327 & 328 & 328 & 327 & 481 & 482 & 481 & 482 & 633 & 634 & 634 & 633 & \tcr{781} & \tcb{782} & \tcb{783} & \tcb{784} \\
 13 & 14 & 14 & 13 & 171 & 171 & 172 & 172 & 328 & 327 & 327 & 328 & 482 & 481 & 482 & 481 & 634 & 633 & 633 & 634 & \tcb{782} & \tcr{781} & \tcb{784} & \tcb{783} \\
 14 & 13 & 13 & 14 & 172 & 172 & 171 & 171 & 327 & 328 & 328 & 327 & 481 & 482 & 481 & 482 & 634 & 633 & 633 & 634 & \tcb{783} & \tcb{784} & \tcr{781} & \tcb{782} \\
 14 & 13 & 13 & 14 & 172 & 172 & 171 & 171 & 328 & 327 & 327 & 328 & 482 & 481 & 482 & 481 & 633 & 634 & 634 & 633 & \tcb{784} & \tcb{783} & \tcb{782} & \tcr{781}
\end{smallmatrix}\label{XB}
\end{align}
The last matrix in \eqref{XB} is the induced subgraph of the stable graph $\hat{B}$ by vertices $[24]$, where $B$ is the binding graph of  graph $X$. One may see that the basic cells in the vertex partition is the automorphism partition of graph $X$, shown as follows. $$\{\{1,2,3,4\},\{5,6,7,8\},\{9,10,11,12\},\{13,14,15,16\},\{17,18,19,20\},\{21,22,23,24\}\}.$$
\ep

3. We now present the third example of a small graph to illustrate the graphs $\Phi$, $\Theta$ and so on appeared in the article. The graph $X$, the stable graph $\hat{X}$ obtained by SaS process, the stable graph $\WL(X)$ obtained by WL process and a binding graph $\mathtt{bi}(X)$ of graph $X$ are listed in the followings.

The stable graph of binding graph $\mathtt{bi}(X)$ obtained by SaS process is given on page \pageref{page1}. On the same page the bipartite graph $\Phi$ constructed from the stable graph is also listed. The numbers in blue and red are the number of column and row, respectively.

The bipartite graph $\Theta$ constructed from $\Phi$ is given on page \pageref{page2}. On the page \pageref{page3} of this example, we show the stable graph of binding graph $\mathtt{bi}(X)$ obtained by WL process.

One may verify (cf. Theorem \ref{thm:theta} and Theorem \ref{thm:xphi}) that $\hat{\Theta}\approx\hat{\Phi}$ and they are equivalent to the stable graph of $\mathtt{bi}(X)$ obtained by SaS process on page \pageref{page1}, and $\WL(\Theta)\approx \WL(\Phi)$ and they are equivalent to the stable graph of $\mathtt{bi}(X)$ obtained by WL process shown on page \pageref{page3}.

\begin{align}
X=\left(\begin{smallmatrix}
 0 & 1 & 1 & 1 & 0 & 0 & 0 & 0 \\
 1 & 0 & 1 & 0 & 1 & 0 & 0 & 0 \\
 1 & 1 & 0 & 0 & 0 & 1 & 0 & 0 \\
 1 & 0 & 0 & 0 & 0 & 0 & 1 & 1 \\
 0 & 1 & 0 & 0 & 0 & 0 & 1 & 1 \\
 0 & 0 & 1 & 0 & 0 & 0 & 1 & 1 \\
 0 & 0 & 0 & 1 & 1 & 1 & 0 & 0 \\
 0 & 0 & 0 & 1 & 1 & 1 & 0 & 0
\end{smallmatrix}\right),\quad
\hat{X}=\left(\begin{smallmatrix}
 1 & 2 & 2 & 3 & 4 & 4 & 5 & 5 \\
 2 & 1 & 2 & 4 & 3 & 4 & 5 & 5 \\
 2 & 2 & 1 & 4 & 4 & 3 & 5 & 5 \\
 3 & 4 & 4 & 6 & 7 & 7 & 8 & 8 \\
 4 & 3 & 4 & 7 & 6 & 7 & 8 & 8 \\
 4 & 4 & 3 & 7 & 7 & 6 & 8 & 8 \\
 5 & 5 & 5 & 8 & 8 & 8 & 9 & 10 \\
 5 & 5 & 5 & 8 & 8 & 8 & 10 & 9
\end{smallmatrix}\right),\quad
\WL(X)=\left(\begin{smallmatrix}
 1 & 2 & 2 & 3 & 4 & 4 & 5 & 5 \\
 2 & 1 & 2 & 4 & 3 & 4 & 5 & 5 \\
 2 & 2 & 1 & 4 & 4 & 3 & 5 & 5 \\
 6 & 7 & 7 & 8 & 9 & 9 & 10 & 10 \\
 7 & 6 & 7 & 9 & 8 & 9 & 10 & 10 \\
 7 & 7 & 6 & 9 & 9 & 8 & 10 & 10 \\
 11 & 11 & 11 & 12 & 12 & 12 & 13 & 14 \\
 11 & 11 & 11 & 12 & 12 & 12 & 14 & 13
\end{smallmatrix}\right).\label{xphi}\notag{}
\end{align}
\begin{align}
\mathtt{bi}(X)=\left(\begin{smallmatrix}
 0 & \tcb{1} & \tcb{2} & \tcb{3} & \tcb{4} & \tcb{5} & \tcb{6} & \tcb{7} & \tcb{8} & \tcb{9} & \tcb{10} & \tcb{11} & \tcb{12} & \tcb{13} & \tcb{14} & \tcb{15} & \tcb{16} & \tcb{17} & \tcb{18} & \tcb{19} & \tcb{20} & \tcb{21} & \tcb{22} & \tcb{23} & \tcb{24} & \tcb{25} & \tcb{26} & \tcb{27} & \tcb{28} & \tcb{29} & \tcb{30} & \tcb{31} & \tcb{32} & \tcb{33} & \tcb{34} & \tcb{35} & \tcb{36} \\
\tcr{1} & 0 & 1 & 1 & 1 & 0 & 0 & 0 & 0 & 1 & 1 & 0 & 1 & 0 & 0 & 1 & 1 & 0 & 0 & 0 & 0 & 1 & 1 & 0 & 0 & 0 & 0 & 0 & 0 & 0 & 0 & 0 & 0 & 0 & 0 & 0 & 0 \\
\tcr{2} & 1 & 0 & 1 & 0 & 1 & 0 & 0 & 0 & 1 & 0 & 1 & 0 & 1 & 0 & 0 & 0 & 1 & 1 & 0 & 0 & 0 & 0 & 1 & 1 & 0 & 0 & 0 & 0 & 0 & 0 & 0 & 0 & 0 & 0 & 0 & 0 \\
 \tcr{3} & 1 & 1 & 0 & 0 & 0 & 1 & 0 & 0 & 0 & 1 & 1 & 0 & 0 & 1 & 0 & 0 & 0 & 0 & 1 & 1 & 0 & 0 & 0 & 0 & 1 & 1 & 0 & 0 & 0 & 0 & 0 & 0 & 0 & 0 & 0 & 0 \\
 \tcr{4} & 1 & 0 & 0 & 0 & 0 & 0 & 1 & 1 & 0 & 0 & 0 & 1 & 0 & 0 & 0 & 0 & 1 & 0 & 1 & 0 & 0 & 0 & 0 & 0 & 0 & 0 & 1 & 1 & 0 & 1 & 1 & 0 & 0 & 0 & 0 & 0 \\
 \tcr{5} & 0 & 1 & 0 & 0 & 0 & 0 & 1 & 1 & 0 & 0 & 0 & 0 & 1 & 0 & 1 & 0 & 0 & 0 & 0 & 1 & 0 & 0 & 0 & 0 & 0 & 0 & 1 & 0 & 1 & 0 & 0 & 1 & 1 & 0 & 0 & 0 \\
 \tcr{6} & 0 & 0 & 1 & 0 & 0 & 0 & 1 & 1 & 0 & 0 & 0 & 0 & 0 & 1 & 0 & 1 & 0 & 1 & 0 & 0 & 0 & 0 & 0 & 0 & 0 & 0 & 0 & 1 & 1 & 0 & 0 & 0 & 0 & 1 & 1 & 0 \\
 \tcr{7} & 0 & 0 & 0 & 1 & 1 & 1 & 0 & 0 & 0 & 0 & 0 & 0 & 0 & 0 & 0 & 0 & 0 & 0 & 0 & 0 & 1 & 0 & 1 & 0 & 1 & 0 & 0 & 0 & 0 & 1 & 0 & 1 & 0 & 1 & 0 & 1 \\
 \tcr{8} & 0 & 0 & 0 & 1 & 1 & 1 & 0 & 0 & 0 & 0 & 0 & 0 & 0 & 0 & 0 & 0 & 0 & 0 & 0 & 0 & 0 & 1 & 0 & 1 & 0 & 1 & 0 & 0 & 0 & 0 & 1 & 0 & 1 & 0 & 1 & 1 \\
 \tcr{9} & 1 & 1 & 0 & 0 & 0 & 0 & 0 & 0 & 0 & 0 & 0 & 0 & 0 & 0 & 0 & 0 & 0 & 0 & 0 & 0 & 0 & 0 & 0 & 0 & 0 & 0 & 0 & 0 & 0 & 0 & 0 & 0 & 0 & 0 & 0 & 0 \\
 \tcr{10} & 1 & 0 & 1 & 0 & 0 & 0 & 0 & 0 & 0 & 0 & 0 & 0 & 0 & 0 & 0 & 0 & 0 & 0 & 0 & 0 & 0 & 0 & 0 & 0 & 0 & 0 & 0 & 0 & 0 & 0 & 0 & 0 & 0 & 0 & 0 & 0 \\
 \tcr{11} & 0 & 1 & 1 & 0 & 0 & 0 & 0 & 0 & 0 & 0 & 0 & 0 & 0 & 0 & 0 & 0 & 0 & 0 & 0 & 0 & 0 & 0 & 0 & 0 & 0 & 0 & 0 & 0 & 0 & 0 & 0 & 0 & 0 & 0 & 0 & 0 \\
 \tcr{12} & 1 & 0 & 0 & 1 & 0 & 0 & 0 & 0 & 0 & 0 & 0 & 0 & 0 & 0 & 0 & 0 & 0 & 0 & 0 & 0 & 0 & 0 & 0 & 0 & 0 & 0 & 0 & 0 & 0 & 0 & 0 & 0 & 0 & 0 & 0 & 0 \\
 \tcr{13} & 0 & 1 & 0 & 0 & 1 & 0 & 0 & 0 & 0 & 0 & 0 & 0 & 0 & 0 & 0 & 0 & 0 & 0 & 0 & 0 & 0 & 0 & 0 & 0 & 0 & 0 & 0 & 0 & 0 & 0 & 0 & 0 & 0 & 0 & 0 & 0 \\
 \tcr{14} & 0 & 0 & 1 & 0 & 0 & 1 & 0 & 0 & 0 & 0 & 0 & 0 & 0 & 0 & 0 & 0 & 0 & 0 & 0 & 0 & 0 & 0 & 0 & 0 & 0 & 0 & 0 & 0 & 0 & 0 & 0 & 0 & 0 & 0 & 0 & 0 \\
 \tcr{15} & 1 & 0 & 0 & 0 & 1 & 0 & 0 & 0 & 0 & 0 & 0 & 0 & 0 & 0 & 0 & 0 & 0 & 0 & 0 & 0 & 0 & 0 & 0 & 0 & 0 & 0 & 0 & 0 & 0 & 0 & 0 & 0 & 0 & 0 & 0 & 0 \\
 \tcr{16} & 1 & 0 & 0 & 0 & 0 & 1 & 0 & 0 & 0 & 0 & 0 & 0 & 0 & 0 & 0 & 0 & 0 & 0 & 0 & 0 & 0 & 0 & 0 & 0 & 0 & 0 & 0 & 0 & 0 & 0 & 0 & 0 & 0 & 0 & 0 & 0 \\
 \tcr{17} & 0 & 1 & 0 & 1 & 0 & 0 & 0 & 0 & 0 & 0 & 0 & 0 & 0 & 0 & 0 & 0 & 0 & 0 & 0 & 0 & 0 & 0 & 0 & 0 & 0 & 0 & 0 & 0 & 0 & 0 & 0 & 0 & 0 & 0 & 0 & 0 \\
 \tcr{18} & 0 & 1 & 0 & 0 & 0 & 1 & 0 & 0 & 0 & 0 & 0 & 0 & 0 & 0 & 0 & 0 & 0 & 0 & 0 & 0 & 0 & 0 & 0 & 0 & 0 & 0 & 0 & 0 & 0 & 0 & 0 & 0 & 0 & 0 & 0 & 0 \\
 \tcr{19} & 0 & 0 & 1 & 1 & 0 & 0 & 0 & 0 & 0 & 0 & 0 & 0 & 0 & 0 & 0 & 0 & 0 & 0 & 0 & 0 & 0 & 0 & 0 & 0 & 0 & 0 & 0 & 0 & 0 & 0 & 0 & 0 & 0 & 0 & 0 & 0 \\
 \tcr{20} & 0 & 0 & 1 & 0 & 1 & 0 & 0 & 0 & 0 & 0 & 0 & 0 & 0 & 0 & 0 & 0 & 0 & 0 & 0 & 0 & 0 & 0 & 0 & 0 & 0 & 0 & 0 & 0 & 0 & 0 & 0 & 0 & 0 & 0 & 0 & 0 \\
 \tcr{21} & 1 & 0 & 0 & 0 & 0 & 0 & 1 & 0 & 0 & 0 & 0 & 0 & 0 & 0 & 0 & 0 & 0 & 0 & 0 & 0 & 0 & 0 & 0 & 0 & 0 & 0 & 0 & 0 & 0 & 0 & 0 & 0 & 0 & 0 & 0 & 0 \\
 \tcr{22} & 1 & 0 & 0 & 0 & 0 & 0 & 0 & 1 & 0 & 0 & 0 & 0 & 0 & 0 & 0 & 0 & 0 & 0 & 0 & 0 & 0 & 0 & 0 & 0 & 0 & 0 & 0 & 0 & 0 & 0 & 0 & 0 & 0 & 0 & 0 & 0 \\
 \tcr{23} & 0 & 1 & 0 & 0 & 0 & 0 & 1 & 0 & 0 & 0 & 0 & 0 & 0 & 0 & 0 & 0 & 0 & 0 & 0 & 0 & 0 & 0 & 0 & 0 & 0 & 0 & 0 & 0 & 0 & 0 & 0 & 0 & 0 & 0 & 0 & 0 \\
 \tcr{24} & 0 & 1 & 0 & 0 & 0 & 0 & 0 & 1 & 0 & 0 & 0 & 0 & 0 & 0 & 0 & 0 & 0 & 0 & 0 & 0 & 0 & 0 & 0 & 0 & 0 & 0 & 0 & 0 & 0 & 0 & 0 & 0 & 0 & 0 & 0 & 0 \\
 \tcr{25} & 0 & 0 & 1 & 0 & 0 & 0 & 1 & 0 & 0 & 0 & 0 & 0 & 0 & 0 & 0 & 0 & 0 & 0 & 0 & 0 & 0 & 0 & 0 & 0 & 0 & 0 & 0 & 0 & 0 & 0 & 0 & 0 & 0 & 0 & 0 & 0 \\
 \tcr{26} & 0 & 0 & 1 & 0 & 0 & 0 & 0 & 1 & 0 & 0 & 0 & 0 & 0 & 0 & 0 & 0 & 0 & 0 & 0 & 0 & 0 & 0 & 0 & 0 & 0 & 0 & 0 & 0 & 0 & 0 & 0 & 0 & 0 & 0 & 0 & 0 \\
 \tcr{27} & 0 & 0 & 0 & 1 & 1 & 0 & 0 & 0 & 0 & 0 & 0 & 0 & 0 & 0 & 0 & 0 & 0 & 0 & 0 & 0 & 0 & 0 & 0 & 0 & 0 & 0 & 0 & 0 & 0 & 0 & 0 & 0 & 0 & 0 & 0 & 0 \\
 \tcr{28} & 0 & 0 & 0 & 1 & 0 & 1 & 0 & 0 & 0 & 0 & 0 & 0 & 0 & 0 & 0 & 0 & 0 & 0 & 0 & 0 & 0 & 0 & 0 & 0 & 0 & 0 & 0 & 0 & 0 & 0 & 0 & 0 & 0 & 0 & 0 & 0 \\
 \tcr{29} & 0 & 0 & 0 & 0 & 1 & 1 & 0 & 0 & 0 & 0 & 0 & 0 & 0 & 0 & 0 & 0 & 0 & 0 & 0 & 0 & 0 & 0 & 0 & 0 & 0 & 0 & 0 & 0 & 0 & 0 & 0 & 0 & 0 & 0 & 0 & 0 \\
 \tcr{30} & 0 & 0 & 0 & 1 & 0 & 0 & 1 & 0 & 0 & 0 & 0 & 0 & 0 & 0 & 0 & 0 & 0 & 0 & 0 & 0 & 0 & 0 & 0 & 0 & 0 & 0 & 0 & 0 & 0 & 0 & 0 & 0 & 0 & 0 & 0 & 0 \\
 \tcr{31} & 0 & 0 & 0 & 1 & 0 & 0 & 0 & 1 & 0 & 0 & 0 & 0 & 0 & 0 & 0 & 0 & 0 & 0 & 0 & 0 & 0 & 0 & 0 & 0 & 0 & 0 & 0 & 0 & 0 & 0 & 0 & 0 & 0 & 0 & 0 & 0 \\
 \tcr{32} & 0 & 0 & 0 & 0 & 1 & 0 & 1 & 0 & 0 & 0 & 0 & 0 & 0 & 0 & 0 & 0 & 0 & 0 & 0 & 0 & 0 & 0 & 0 & 0 & 0 & 0 & 0 & 0 & 0 & 0 & 0 & 0 & 0 & 0 & 0 & 0 \\
 \tcr{33} & 0 & 0 & 0 & 0 & 1 & 0 & 0 & 1 & 0 & 0 & 0 & 0 & 0 & 0 & 0 & 0 & 0 & 0 & 0 & 0 & 0 & 0 & 0 & 0 & 0 & 0 & 0 & 0 & 0 & 0 & 0 & 0 & 0 & 0 & 0 & 0 \\
 \tcr{34} & 0 & 0 & 0 & 0 & 0 & 1 & 1 & 0 & 0 & 0 & 0 & 0 & 0 & 0 & 0 & 0 & 0 & 0 & 0 & 0 & 0 & 0 & 0 & 0 & 0 & 0 & 0 & 0 & 0 & 0 & 0 & 0 & 0 & 0 & 0 & 0 \\
 \tcr{35} & 0 & 0 & 0 & 0 & 0 & 1 & 0 & 1 & 0 & 0 & 0 & 0 & 0 & 0 & 0 & 0 & 0 & 0 & 0 & 0 & 0 & 0 & 0 & 0 & 0 & 0 & 0 & 0 & 0 & 0 & 0 & 0 & 0 & 0 & 0 & 0 \\
 \tcr{36} & 0 & 0 & 0 & 0 & 0 & 0 & 1 & 1 & 0 & 0 & 0 & 0 & 0 & 0 & 0 & 0 & 0 & 0 & 0 & 0 & 0 & 0 & 0 & 0 & 0 & 0 & 0 & 0 & 0 & 0 & 0 & 0 & 0 & 0 & 0 & 0
\end{smallmatrix}\right).\tag{This is the binding graph of $X$}
\end{align}

\begin{align}
\begin{smallmatrix}
 0 & \tcb{1} & \tcb{2} & \tcb{3} & \tcb{4} & \tcb{5} & \tcb{6} & \tcb{7} & \tcb{8} & \tcb{9} & \tcb{10} & \tcb{11} & \tcb{12} & \tcb{13} & \tcb{14} & \tcb{15} & \tcb{16} & \tcb{17} & \tcb{18} & \tcb{19} & \tcb{20} & \tcb{21} & \tcb{22} & \tcb{23} & \tcb{24} & \tcb{25} & \tcb{26} & \tcb{27} & \tcb{28} & \tcb{29} & \tcb{30} & \tcb{31} & \tcb{32} & \tcb{33} & \tcb{34} & \tcb{35} & \tcb{36} \\
 \tcr{1} & 1 & 2 & 2 & 3 & 4 & 4 & 5 & 5 & 6 & 6 & 7 & 8 & 9 & 9 & 10 & 10 & 11 & 12 & 11 & 12 & 13 & 13 & 14 & 14 & 14 & 14 & 15 & 15 & 16 & 17 & 17 & 18 & 18 & 18 & 18 & 19 \\
 \tcr{2} & 2 & 1 & 2 & 4 & 3 & 4 & 5 & 5 & 6 & 7 & 6 & 9 & 8 & 9 & 11 & 12 & 10 & 10 & 12 & 11 & 14 & 14 & 13 & 13 & 14 & 14 & 15 & 16 & 15 & 18 & 18 & 17 & 17 & 18 & 18 & 19 \\
 \tcr{3} & 2 & 2 & 1 & 4 & 4 & 3 & 5 & 5 & 7 & 6 & 6 & 9 & 9 & 8 & 12 & 11 & 12 & 11 & 10 & 10 & 14 & 14 & 14 & 14 & 13 & 13 & 16 & 15 & 15 & 18 & 18 & 18 & 18 & 17 & 17 & 19 \\
 \tcr{4} & 3 & 4 & 4 & 20 & 21 & 21 & 22 & 22 & 23 & 23 & 24 & 25 & 26 & 26 & 27 & 27 & 28 & 29 & 28 & 29 & 30 & 30 & 31 & 31 & 31 & 31 & 32 & 32 & 33 & 34 & 34 & 35 & 35 & 35 & 35 & 36 \\
 \tcr{5} & 4 & 3 & 4 & 21 & 20 & 21 & 22 & 22 & 23 & 24 & 23 & 26 & 25 & 26 & 28 & 29 & 27 & 27 & 29 & 28 & 31 & 31 & 30 & 30 & 31 & 31 & 32 & 33 & 32 & 35 & 35 & 34 & 34 & 35 & 35 & 36 \\
 \tcr{6} & 4 & 4 & 3 & 21 & 21 & 20 & 22 & 22 & 24 & 23 & 23 & 26 & 26 & 25 & 29 & 28 & 29 & 28 & 27 & 27 & 31 & 31 & 31 & 31 & 30 & 30 & 33 & 32 & 32 & 35 & 35 & 35 & 35 & 34 & 34 & 36 \\
 \tcr{7} & 5 & 5 & 5 & 22 & 22 & 22 & 37 & 38 & 39 & 39 & 39 & 40 & 40 & 40 & 41 & 41 & 41 & 41 & 41 & 41 & 42 & 43 & 42 & 43 & 42 & 43 & 44 & 44 & 44 & 45 & 46 & 45 & 46 & 45 & 46 & 47 \\
 \tcr{8} & 5 & 5 & 5 & 22 & 22 & 22 & 38 & 37 & 39 & 39 & 39 & 40 & 40 & 40 & 41 & 41 & 41 & 41 & 41 & 41 & 43 & 42 & 43 & 42 & 43 & 42 & 44 & 44 & 44 & 46 & 45 & 46 & 45 & 46 & 45 & 47 \\
 \tcr{9} & 6 & 6 & 7 & 23 & 23 & 24 & 39 & 39 & 48 & 49 & 49 & 50 & 50 & 51 & 52 & 53 & 52 & 53 & 54 & 54 & 55 & 55 & 55 & 55 & 56 & 56 & 57 & 58 & 58 & 59 & 59 & 59 & 59 & 60 & 60 & 61 \\
 \tcr{10} & 6 & 7 & 6 & 23 & 24 & 23 & 39 & 39 & 49 & 48 & 49 & 50 & 51 & 50 & 53 & 52 & 54 & 54 & 52 & 53 & 55 & 55 & 56 & 56 & 55 & 55 & 58 & 57 & 58 & 59 & 59 & 60 & 60 & 59 & 59 & 61 \\
 \tcr{11} & 7 & 6 & 6 & 24 & 23 & 23 & 39 & 39 & 49 & 49 & 48 & 51 & 50 & 50 & 54 & 54 & 53 & 52 & 53 & 52 & 56 & 56 & 55 & 55 & 55 & 55 & 58 & 58 & 57 & 60 & 60 & 59 & 59 & 59 & 59 & 61 \\
 \tcr{12} & 8 & 9 & 9 & 25 & 26 & 26 & 40 & 40 & 50 & 50 & 51 & 62 & 63 & 63 & 64 & 64 & 65 & 66 & 65 & 66 & 67 & 67 & 68 & 68 & 68 & 68 & 69 & 69 & 70 & 71 & 71 & 72 & 72 & 72 & 72 & 73 \\
 \tcr{13} & 9 & 8 & 9 & 26 & 25 & 26 & 40 & 40 & 50 & 51 & 50 & 63 & 62 & 63 & 65 & 66 & 64 & 64 & 66 & 65 & 68 & 68 & 67 & 67 & 68 & 68 & 69 & 70 & 69 & 72 & 72 & 71 & 71 & 72 & 72 & 73 \\
 \tcr{14} & 9 & 9 & 8 & 26 & 26 & 25 & 40 & 40 & 51 & 50 & 50 & 63 & 63 & 62 & 66 & 65 & 66 & 65 & 64 & 64 & 68 & 68 & 68 & 68 & 67 & 67 & 70 & 69 & 69 & 72 & 72 & 72 & 72 & 71 & 71 & 73 \\
 \tcr{15} & 10 & 11 & 12 & 27 & 28 & 29 & 41 & 41 & 52 & 53 & 54 & 64 & 65 & 66 & 74 & 75 & 76 & 77 & 77 & 78 & 79 & 79 & 80 & 80 & 81 & 81 & 82 & 83 & 84 & 85 & 85 & 86 & 86 & 87 & 87 & 88 \\
 \tcr{16} & 10 & 12 & 11 & 27 & 29 & 28 & 41 & 41 & 53 & 52 & 54 & 64 & 66 & 65 & 75 & 74 & 77 & 78 & 76 & 77 & 79 & 79 & 81 & 81 & 80 & 80 & 83 & 82 & 84 & 85 & 85 & 87 & 87 & 86 & 86 & 88 \\
 \tcr{17} & 11 & 10 & 12 & 28 & 27 & 29 & 41 & 41 & 52 & 54 & 53 & 65 & 64 & 66 & 76 & 77 & 74 & 75 & 78 & 77 & 80 & 80 & 79 & 79 & 81 & 81 & 82 & 84 & 83 & 86 & 86 & 85 & 85 & 87 & 87 & 88 \\
 \tcr{18} & 12 & 10 & 11 & 29 & 27 & 28 & 41 & 41 & 53 & 54 & 52 & 66 & 64 & 65 & 77 & 78 & 75 & 74 & 77 & 76 & 81 & 81 & 79 & 79 & 80 & 80 & 83 & 84 & 82 & 87 & 87 & 85 & 85 & 86 & 86 & 88 \\
 \tcr{19} & 11 & 12 & 10 & 28 & 29 & 27 & 41 & 41 & 54 & 52 & 53 & 65 & 66 & 64 & 77 & 76 & 78 & 77 & 74 & 75 & 80 & 80 & 81 & 81 & 79 & 79 & 84 & 82 & 83 & 86 & 86 & 87 & 87 & 85 & 85 & 88 \\
 \tcr{20} & 12 & 11 & 10 & 29 & 28 & 27 & 41 & 41 & 54 & 53 & 52 & 66 & 65 & 64 & 78 & 77 & 77 & 76 & 75 & 74 & 81 & 81 & 80 & 80 & 79 & 79 & 84 & 83 & 82 & 87 & 87 & 86 & 86 & 85 & 85 & 88 \\
 \tcr{21} & 13 & 14 & 14 & 30 & 31 & 31 & 42 & 43 & 55 & 55 & 56 & 67 & 68 & 68 & 79 & 79 & 80 & 81 & 80 & 81 & 89 & 90 & 91 & 92 & 91 & 92 & 93 & 93 & 94 & 95 & 96 & 97 & 98 & 97 & 98 & 99 \\
 \tcr{22} & 13 & 14 & 14 & 30 & 31 & 31 & 43 & 42 & 55 & 55 & 56 & 67 & 68 & 68 & 79 & 79 & 80 & 81 & 80 & 81 & 90 & 89 & 92 & 91 & 92 & 91 & 93 & 93 & 94 & 96 & 95 & 98 & 97 & 98 & 97 & 99 \\
 \tcr{23} & 14 & 13 & 14 & 31 & 30 & 31 & 42 & 43 & 55 & 56 & 55 & 68 & 67 & 68 & 80 & 81 & 79 & 79 & 81 & 80 & 91 & 92 & 89 & 90 & 91 & 92 & 93 & 94 & 93 & 97 & 98 & 95 & 96 & 97 & 98 & 99 \\
 \tcr{24} & 14 & 13 & 14 & 31 & 30 & 31 & 43 & 42 & 55 & 56 & 55 & 68 & 67 & 68 & 80 & 81 & 79 & 79 & 81 & 80 & 92 & 91 & 90 & 89 & 92 & 91 & 93 & 94 & 93 & 98 & 97 & 96 & 95 & 98 & 97 & 99 \\
 \tcr{25} & 14 & 14 & 13 & 31 & 31 & 30 & 42 & 43 & 56 & 55 & 55 & 68 & 68 & 67 & 81 & 80 & 81 & 80 & 79 & 79 & 91 & 92 & 91 & 92 & 89 & 90 & 94 & 93 & 93 & 97 & 98 & 97 & 98 & 95 & 96 & 99 \\
 \tcr{26} & 14 & 14 & 13 & 31 & 31 & 30 & 43 & 42 & 56 & 55 & 55 & 68 & 68 & 67 & 81 & 80 & 81 & 80 & 79 & 79 & 92 & 91 & 92 & 91 & 90 & 89 & 94 & 93 & 93 & 98 & 97 & 98 & 97 & 96 & 95 & 99 \\
 \tcr{27} & 15 & 15 & 16 & 32 & 32 & 33 & 44 & 44 & 57 & 58 & 58 & 69 & 69 & 70 & 82 & 83 & 82 & 83 & 84 & 84 & 93 & 93 & 93 & 93 & 94 & 94 & 100 & 101 & 101 & 102 & 102 & 102 & 102 & 103 & 103 & 104 \\
 \tcr{28} & 15 & 16 & 15 & 32 & 33 & 32 & 44 & 44 & 58 & 57 & 58 & 69 & 70 & 69 & 83 & 82 & 84 & 84 & 82 & 83 & 93 & 93 & 94 & 94 & 93 & 93 & 101 & 100 & 101 & 102 & 102 & 103 & 103 & 102 & 102 & 104 \\
 \tcr{29} & 16 & 15 & 15 & 33 & 32 & 32 & 44 & 44 & 58 & 58 & 57 & 70 & 69 & 69 & 84 & 84 & 83 & 82 & 83 & 82 & 94 & 94 & 93 & 93 & 93 & 93 & 101 & 101 & 100 & 103 & 103 & 102 & 102 & 102 & 102 & 104 \\
 \tcr{30} & 17 & 18 & 18 & 34 & 35 & 35 & 45 & 46 & 59 & 59 & 60 & 71 & 72 & 72 & 85 & 85 & 86 & 87 & 86 & 87 & 95 & 96 & 97 & 98 & 97 & 98 & 102 & 102 & 103 & 105 & 106 & 107 & 108 & 107 & 108 & 109 \\
 \tcr{31} & 17 & 18 & 18 & 34 & 35 & 35 & 46 & 45 & 59 & 59 & 60 & 71 & 72 & 72 & 85 & 85 & 86 & 87 & 86 & 87 & 96 & 95 & 98 & 97 & 98 & 97 & 102 & 102 & 103 & 106 & 105 & 108 & 107 & 108 & 107 & 109 \\
 \tcr{32} & 18 & 17 & 18 & 35 & 34 & 35 & 45 & 46 & 59 & 60 & 59 & 72 & 71 & 72 & 86 & 87 & 85 & 85 & 87 & 86 & 97 & 98 & 95 & 96 & 97 & 98 & 102 & 103 & 102 & 107 & 108 & 105 & 106 & 107 & 108 & 109 \\
 \tcr{33} & 18 & 17 & 18 & 35 & 34 & 35 & 46 & 45 & 59 & 60 & 59 & 72 & 71 & 72 & 86 & 87 & 85 & 85 & 87 & 86 & 98 & 97 & 96 & 95 & 98 & 97 & 102 & 103 & 102 & 108 & 107 & 106 & 105 & 108 & 107 & 109 \\
 \tcr{34} & 18 & 18 & 17 & 35 & 35 & 34 & 45 & 46 & 60 & 59 & 59 & 72 & 72 & 71 & 87 & 86 & 87 & 86 & 85 & 85 & 97 & 98 & 97 & 98 & 95 & 96 & 103 & 102 & 102 & 107 & 108 & 107 & 108 & 105 & 106 & 109 \\
 \tcr{35} & 18 & 18 & 17 & 35 & 35 & 34 & 46 & 45 & 60 & 59 & 59 & 72 & 72 & 71 & 87 & 86 & 87 & 86 & 85 & 85 & 98 & 97 & 98 & 97 & 96 & 95 & 103 & 102 & 102 & 108 & 107 & 108 & 107 & 106 & 105 & 109 \\
 \tcr{36} & 19 & 19 & 19 & 36 & 36 & 36 & 47 & 47 & 61 & 61 & 61 & 73 & 73 & 73 & 88 & 88 & 88 & 88 & 88 & 88 & 99 & 99 & 99 & 99 & 99 & 99 & 104 & 104 & 104 & 109 & 109 & 109 & 109 & 109 & 109 & 110
\end{smallmatrix}\tag{This is the stable graph of $\mathtt{bi}(X)$ obtained by SaS process.}\label{page1}\\[.3cm]
\begin{smallmatrix}
0 & \tcb{1} & \tcb{2} & \tcb{3} & \tcb{4} & \tcb{5} & \tcb{6} & \tcb{7} & \tcb{8} & \tcb{9} & \tcb{10} & \tcb{11} & \tcb{12} & \tcb{13} & \tcb{14} & \tcb{15} & \tcb{16} & \tcb{17} & \tcb{18} & \tcb{19} & \tcb{20} & \tcb{21} & \tcb{22} & \tcb{23} & \tcb{24} & \tcb{25} & \tcb{26} & \tcb{27} & \tcb{28} & \tcb{29} & \tcb{30} & \tcb{31} & \tcb{32} & \tcb{33} & \tcb{34} & \tcb{35} & \tcb{36} \\
 \tcr{1} & 1 & 0 & 0 & 0 & 0 & 0 & 0 & 0 & 6 & 6 & 0 & 8 & 0 & 0 & 10 & 10 & 0 & 0 & 0 & 0 & 13 & 13 & 0 & 0 & 0 & 0 & 0 & 0 & 0 & 0 & 0 & 0 & 0 & 0 & 0 & 0 \\
 \tcr{2} & 0 & 1 & 0 & 0 & 0 & 0 & 0 & 0 & 6 & 0 & 6 & 0 & 8 & 0 & 0 & 0 & 10 & 10 & 0 & 0 & 0 & 0 & 13 & 13 & 0 & 0 & 0 & 0 & 0 & 0 & 0 & 0 & 0 & 0 & 0 & 0 \\
 \tcr{3} & 0 & 0 & 1 & 0 & 0 & 0 & 0 & 0 & 0 & 6 & 6 & 0 & 0 & 8 & 0 & 0 & 0 & 0 & 10 & 10 & 0 & 0 & 0 & 0 & 13 & 13 & 0 & 0 & 0 & 0 & 0 & 0 & 0 & 0 & 0 & 0 \\
 \tcr{4} & 0 & 0 & 0 & 20 & 0 & 0 & 0 & 0 & 0 & 0 & 0 & 25 & 0 & 0 & 0 & 0 & 28 & 0 & 28 & 0 & 0 & 0 & 0 & 0 & 0 & 0 & 32 & 32 & 0 & 34 & 34 & 0 & 0 & 0 & 0 & 0 \\
 \tcr{5} & 0 & 0 & 0 & 0 & 20 & 0 & 0 & 0 & 0 & 0 & 0 & 0 & 25 & 0 & 28 & 0 & 0 & 0 & 0 & 28 & 0 & 0 & 0 & 0 & 0 & 0 & 32 & 0 & 32 & 0 & 0 & 34 & 34 & 0 & 0 & 0 \\
 \tcr{6} & 0 & 0 & 0 & 0 & 0 & 20 & 0 & 0 & 0 & 0 & 0 & 0 & 0 & 25 & 0 & 28 & 0 & 28 & 0 & 0 & 0 & 0 & 0 & 0 & 0 & 0 & 0 & 32 & 32 & 0 & 0 & 0 & 0 & 34 & 34 & 0 \\
 \tcr{7} & 0 & 0 & 0 & 0 & 0 & 0 & 37 & 0 & 0 & 0 & 0 & 0 & 0 & 0 & 0 & 0 & 0 & 0 & 0 & 0 & 42 & 0 & 42 & 0 & 42 & 0 & 0 & 0 & 0 & 45 & 0 & 45 & 0 & 45 & 0 & 47 \\
 \tcr{8} & 0 & 0 & 0 & 0 & 0 & 0 & 0 & 37 & 0 & 0 & 0 & 0 & 0 & 0 & 0 & 0 & 0 & 0 & 0 & 0 & 0 & 42 & 0 & 42 & 0 & 42 & 0 & 0 & 0 & 0 & 45 & 0 & 45 & 0 & 45 & 47 \\
 \tcr{9} & 6 & 6 & 0 & 0 & 0 & 0 & 0 & 0 & 48 & 0 & 0 & 0 & 0 & 0 & 0 & 0 & 0 & 0 & 0 & 0 & 0 & 0 & 0 & 0 & 0 & 0 & 0 & 0 & 0 & 0 & 0 & 0 & 0 & 0 & 0 & 0 \\
 \tcr{10} & 6 & 0 & 6 & 0 & 0 & 0 & 0 & 0 & 0 & 48 & 0 & 0 & 0 & 0 & 0 & 0 & 0 & 0 & 0 & 0 & 0 & 0 & 0 & 0 & 0 & 0 & 0 & 0 & 0 & 0 & 0 & 0 & 0 & 0 & 0 & 0 \\
 \tcr{11} & 0 & 6 & 6 & 0 & 0 & 0 & 0 & 0 & 0 & 0 & 48 & 0 & 0 & 0 & 0 & 0 & 0 & 0 & 0 & 0 & 0 & 0 & 0 & 0 & 0 & 0 & 0 & 0 & 0 & 0 & 0 & 0 & 0 & 0 & 0 & 0 \\
 \tcr{12} & 8 & 0 & 0 & 25 & 0 & 0 & 0 & 0 & 0 & 0 & 0 & 62 & 0 & 0 & 0 & 0 & 0 & 0 & 0 & 0 & 0 & 0 & 0 & 0 & 0 & 0 & 0 & 0 & 0 & 0 & 0 & 0 & 0 & 0 & 0 & 0 \\
 \tcr{13} & 0 & 8 & 0 & 0 & 25 & 0 & 0 & 0 & 0 & 0 & 0 & 0 & 62 & 0 & 0 & 0 & 0 & 0 & 0 & 0 & 0 & 0 & 0 & 0 & 0 & 0 & 0 & 0 & 0 & 0 & 0 & 0 & 0 & 0 & 0 & 0 \\
 \tcr{14} & 0 & 0 & 8 & 0 & 0 & 25 & 0 & 0 & 0 & 0 & 0 & 0 & 0 & 62 & 0 & 0 & 0 & 0 & 0 & 0 & 0 & 0 & 0 & 0 & 0 & 0 & 0 & 0 & 0 & 0 & 0 & 0 & 0 & 0 & 0 & 0 \\
 \tcr{15} & 10 & 0 & 0 & 0 & 28 & 0 & 0 & 0 & 0 & 0 & 0 & 0 & 0 & 0 & 74 & 0 & 0 & 0 & 0 & 0 & 0 & 0 & 0 & 0 & 0 & 0 & 0 & 0 & 0 & 0 & 0 & 0 & 0 & 0 & 0 & 0 \\
 \tcr{16} & 10 & 0 & 0 & 0 & 0 & 28 & 0 & 0 & 0 & 0 & 0 & 0 & 0 & 0 & 0 & 74 & 0 & 0 & 0 & 0 & 0 & 0 & 0 & 0 & 0 & 0 & 0 & 0 & 0 & 0 & 0 & 0 & 0 & 0 & 0 & 0 \\
 \tcr{17} & 0 & 10 & 0 & 28 & 0 & 0 & 0 & 0 & 0 & 0 & 0 & 0 & 0 & 0 & 0 & 0 & 74 & 0 & 0 & 0 & 0 & 0 & 0 & 0 & 0 & 0 & 0 & 0 & 0 & 0 & 0 & 0 & 0 & 0 & 0 & 0 \\
 \tcr{18} & 0 & 10 & 0 & 0 & 0 & 28 & 0 & 0 & 0 & 0 & 0 & 0 & 0 & 0 & 0 & 0 & 0 & 74 & 0 & 0 & 0 & 0 & 0 & 0 & 0 & 0 & 0 & 0 & 0 & 0 & 0 & 0 & 0 & 0 & 0 & 0 \\
 \tcr{19} & 0 & 0 & 10 & 28 & 0 & 0 & 0 & 0 & 0 & 0 & 0 & 0 & 0 & 0 & 0 & 0 & 0 & 0 & 74 & 0 & 0 & 0 & 0 & 0 & 0 & 0 & 0 & 0 & 0 & 0 & 0 & 0 & 0 & 0 & 0 & 0 \\
 \tcr{20} & 0 & 0 & 10 & 0 & 28 & 0 & 0 & 0 & 0 & 0 & 0 & 0 & 0 & 0 & 0 & 0 & 0 & 0 & 0 & 74 & 0 & 0 & 0 & 0 & 0 & 0 & 0 & 0 & 0 & 0 & 0 & 0 & 0 & 0 & 0 & 0 \\
 \tcr{21} & 13 & 0 & 0 & 0 & 0 & 0 & 42 & 0 & 0 & 0 & 0 & 0 & 0 & 0 & 0 & 0 & 0 & 0 & 0 & 0 & 89 & 0 & 0 & 0 & 0 & 0 & 0 & 0 & 0 & 0 & 0 & 0 & 0 & 0 & 0 & 0 \\
 \tcr{22} & 13 & 0 & 0 & 0 & 0 & 0 & 0 & 42 & 0 & 0 & 0 & 0 & 0 & 0 & 0 & 0 & 0 & 0 & 0 & 0 & 0 & 89 & 0 & 0 & 0 & 0 & 0 & 0 & 0 & 0 & 0 & 0 & 0 & 0 & 0 & 0 \\
 \tcr{23} & 0 & 13 & 0 & 0 & 0 & 0 & 42 & 0 & 0 & 0 & 0 & 0 & 0 & 0 & 0 & 0 & 0 & 0 & 0 & 0 & 0 & 0 & 89 & 0 & 0 & 0 & 0 & 0 & 0 & 0 & 0 & 0 & 0 & 0 & 0 & 0 \\
 \tcr{24} & 0 & 13 & 0 & 0 & 0 & 0 & 0 & 42 & 0 & 0 & 0 & 0 & 0 & 0 & 0 & 0 & 0 & 0 & 0 & 0 & 0 & 0 & 0 & 89 & 0 & 0 & 0 & 0 & 0 & 0 & 0 & 0 & 0 & 0 & 0 & 0 \\
 \tcr{25} & 0 & 0 & 13 & 0 & 0 & 0 & 42 & 0 & 0 & 0 & 0 & 0 & 0 & 0 & 0 & 0 & 0 & 0 & 0 & 0 & 0 & 0 & 0 & 0 & 89 & 0 & 0 & 0 & 0 & 0 & 0 & 0 & 0 & 0 & 0 & 0 \\
 \tcr{26} & 0 & 0 & 13 & 0 & 0 & 0 & 0 & 42 & 0 & 0 & 0 & 0 & 0 & 0 & 0 & 0 & 0 & 0 & 0 & 0 & 0 & 0 & 0 & 0 & 0 & 89 & 0 & 0 & 0 & 0 & 0 & 0 & 0 & 0 & 0 & 0 \\
 \tcr{27} & 0 & 0 & 0 & 32 & 32 & 0 & 0 & 0 & 0 & 0 & 0 & 0 & 0 & 0 & 0 & 0 & 0 & 0 & 0 & 0 & 0 & 0 & 0 & 0 & 0 & 0 & 100 & 0 & 0 & 0 & 0 & 0 & 0 & 0 & 0 & 0 \\
 \tcr{28} & 0 & 0 & 0 & 32 & 0 & 32 & 0 & 0 & 0 & 0 & 0 & 0 & 0 & 0 & 0 & 0 & 0 & 0 & 0 & 0 & 0 & 0 & 0 & 0 & 0 & 0 & 0 & 100 & 0 & 0 & 0 & 0 & 0 & 0 & 0 & 0 \\
 \tcr{29} & 0 & 0 & 0 & 0 & 32 & 32 & 0 & 0 & 0 & 0 & 0 & 0 & 0 & 0 & 0 & 0 & 0 & 0 & 0 & 0 & 0 & 0 & 0 & 0 & 0 & 0 & 0 & 0 & 100 & 0 & 0 & 0 & 0 & 0 & 0 & 0 \\
 \tcr{30} & 0 & 0 & 0 & 34 & 0 & 0 & 45 & 0 & 0 & 0 & 0 & 0 & 0 & 0 & 0 & 0 & 0 & 0 & 0 & 0 & 0 & 0 & 0 & 0 & 0 & 0 & 0 & 0 & 0 & 105 & 0 & 0 & 0 & 0 & 0 & 0 \\
 \tcr{31} & 0 & 0 & 0 & 34 & 0 & 0 & 0 & 45 & 0 & 0 & 0 & 0 & 0 & 0 & 0 & 0 & 0 & 0 & 0 & 0 & 0 & 0 & 0 & 0 & 0 & 0 & 0 & 0 & 0 & 0 & 105 & 0 & 0 & 0 & 0 & 0 \\
 \tcr{32} & 0 & 0 & 0 & 0 & 34 & 0 & 45 & 0 & 0 & 0 & 0 & 0 & 0 & 0 & 0 & 0 & 0 & 0 & 0 & 0 & 0 & 0 & 0 & 0 & 0 & 0 & 0 & 0 & 0 & 0 & 0 & 105 & 0 & 0 & 0 & 0 \\
 \tcr{33} & 0 & 0 & 0 & 0 & 34 & 0 & 0 & 45 & 0 & 0 & 0 & 0 & 0 & 0 & 0 & 0 & 0 & 0 & 0 & 0 & 0 & 0 & 0 & 0 & 0 & 0 & 0 & 0 & 0 & 0 & 0 & 0 & 105 & 0 & 0 & 0 \\
 \tcr{34} & 0 & 0 & 0 & 0 & 0 & 34 & 45 & 0 & 0 & 0 & 0 & 0 & 0 & 0 & 0 & 0 & 0 & 0 & 0 & 0 & 0 & 0 & 0 & 0 & 0 & 0 & 0 & 0 & 0 & 0 & 0 & 0 & 0 & 105 & 0 & 0 \\
 \tcr{35} & 0 & 0 & 0 & 0 & 0 & 34 & 0 & 45 & 0 & 0 & 0 & 0 & 0 & 0 & 0 & 0 & 0 & 0 & 0 & 0 & 0 & 0 & 0 & 0 & 0 & 0 & 0 & 0 & 0 & 0 & 0 & 0 & 0 & 0 & 105 & 0 \\
 \tcr{36} & 0 & 0 & 0 & 0 & 0 & 0 & 47 & 47 & 0 & 0 & 0 & 0 & 0 & 0 & 0 & 0 & 0 & 0 & 0 & 0 & 0 & 0 & 0 & 0 & 0 & 0 & 0 & 0 & 0 & 0 & 0 & 0 & 0 & 0 & 0 & 110
  \end{smallmatrix}\tag{This is graph $\Phi$ obtained from the stable graph above}
 \end{align}
 \begin{align}
   \begin{smallmatrix}
 0 & \tcb{1} & \tcb{2} & \tcb{3} & \tcb{4} & \tcb{5} & \tcb{6} & \tcb{7} & \tcb{8} & \tcb{9} & \tcb{10} & \tcb{11} & \tcb{12} & \tcb{13} & \tcb{14} & \tcb{15} & \tcb{16} & \tcb{17} & \tcb{18} & \tcb{19} & \tcb{20} & \tcb{21} & \tcb{22} & \tcb{23} & \tcb{24} & \tcb{25} & \tcb{26} & \tcb{27} & \tcb{28} & \tcb{29} & \tcb{30} & \tcb{31} & \tcb{32} & \tcb{33} & \tcb{34} & \tcb{35} & \tcb{36} \\
 \tcr{1} & 0 & 0 & 0 & 0 & 0 & 0 & 0 & 0 & 1 & 1 & 0 & 1 & 0 & 0 & 1 & 1 & 0 & 0 & 0 & 0 & 1 & 1 & 0 & 0 & 0 & 0 & 0 & 0 & 0 & 0 & 0 & 0 & 0 & 0 & 0 & 0 \\
 \tcr{2} & 0 & 0 & 0 & 0 & 0 & 0 & 0 & 0 & 1 & 0 & 1 & 0 & 1 & 0 & 0 & 0 & 1 & 1 & 0 & 0 & 0 & 0 & 1 & 1 & 0 & 0 & 0 & 0 & 0 & 0 & 0 & 0 & 0 & 0 & 0 & 0 \\
 \tcr{3} & 0 & 0 & 0 & 0 & 0 & 0 & 0 & 0 & 0 & 1 & 1 & 0 & 0 & 1 & 0 & 0 & 0 & 0 & 1 & 1 & 0 & 0 & 0 & 0 & 1 & 1 & 0 & 0 & 0 & 0 & 0 & 0 & 0 & 0 & 0 & 0 \\
 \tcr{4} & 0 & 0 & 0 & 0 & 0 & 0 & 0 & 0 & 0 & 0 & 0 & 1 & 0 & 0 & 0 & 0 & 1 & 0 & 1 & 0 & 0 & 0 & 0 & 0 & 0 & 0 & 1 & 1 & 0 & 1 & 1 & 0 & 0 & 0 & 0 & 0 \\
 \tcr{5} & 0 & 0 & 0 & 0 & 0 & 0 & 0 & 0 & 0 & 0 & 0 & 0 & 1 & 0 & 1 & 0 & 0 & 0 & 0 & 1 & 0 & 0 & 0 & 0 & 0 & 0 & 1 & 0 & 1 & 0 & 0 & 1 & 1 & 0 & 0 & 0 \\
 \tcr{6} & 0 & 0 & 0 & 0 & 0 & 0 & 0 & 0 & 0 & 0 & 0 & 0 & 0 & 1 & 0 & 1 & 0 & 1 & 0 & 0 & 0 & 0 & 0 & 0 & 0 & 0 & 0 & 1 & 1 & 0 & 0 & 0 & 0 & 1 & 1 & 0 \\
 \tcr{7} & 0 & 0 & 0 & 0 & 0 & 0 & 0 & 0 & 0 & 0 & 0 & 0 & 0 & 0 & 0 & 0 & 0 & 0 & 0 & 0 & 1 & 0 & 1 & 0 & 1 & 0 & 0 & 0 & 0 & 1 & 0 & 1 & 0 & 1 & 0 & 1 \\
 \tcr{8} & 0 & 0 & 0 & 0 & 0 & 0 & 0 & 0 & 0 & 0 & 0 & 0 & 0 & 0 & 0 & 0 & 0 & 0 & 0 & 0 & 0 & 1 & 0 & 1 & 0 & 1 & 0 & 0 & 0 & 0 & 1 & 0 & 1 & 0 & 1 & 1 \\
 \tcr{9} & 1 & 1 & 0 & 0 & 0 & 0 & 0 & 0 & 48 & 0 & 0 & 0 & 0 & 0 & 0 & 0 & 0 & 0 & 0 & 0 & 0 & 0 & 0 & 0 & 0 & 0 & 0 & 0 & 0 & 0 & 0 & 0 & 0 & 0 & 0 & 0 \\
 \tcr{10} & 1 & 0 & 1 & 0 & 0 & 0 & 0 & 0 & 0 & 48 & 0 & 0 & 0 & 0 & 0 & 0 & 0 & 0 & 0 & 0 & 0 & 0 & 0 & 0 & 0 & 0 & 0 & 0 & 0 & 0 & 0 & 0 & 0 & 0 & 0 & 0 \\
 \tcr{11} & 0 & 1 & 1 & 0 & 0 & 0 & 0 & 0 & 0 & 0 & 48 & 0 & 0 & 0 & 0 & 0 & 0 & 0 & 0 & 0 & 0 & 0 & 0 & 0 & 0 & 0 & 0 & 0 & 0 & 0 & 0 & 0 & 0 & 0 & 0 & 0 \\
 \tcr{12} & 1 & 0 & 0 & 1 & 0 & 0 & 0 & 0 & 0 & 0 & 0 & 62 & 0 & 0 & 0 & 0 & 0 & 0 & 0 & 0 & 0 & 0 & 0 & 0 & 0 & 0 & 0 & 0 & 0 & 0 & 0 & 0 & 0 & 0 & 0 & 0 \\
 \tcr{13} & 0 & 1 & 0 & 0 & 1 & 0 & 0 & 0 & 0 & 0 & 0 & 0 & 62 & 0 & 0 & 0 & 0 & 0 & 0 & 0 & 0 & 0 & 0 & 0 & 0 & 0 & 0 & 0 & 0 & 0 & 0 & 0 & 0 & 0 & 0 & 0 \\
 \tcr{14} & 0 & 0 & 1 & 0 & 0 & 1 & 0 & 0 & 0 & 0 & 0 & 0 & 0 & 62 & 0 & 0 & 0 & 0 & 0 & 0 & 0 & 0 & 0 & 0 & 0 & 0 & 0 & 0 & 0 & 0 & 0 & 0 & 0 & 0 & 0 & 0 \\
 \tcr{15} & 1 & 0 & 0 & 0 & 1 & 0 & 0 & 0 & 0 & 0 & 0 & 0 & 0 & 0 & 74 & 0 & 0 & 0 & 0 & 0 & 0 & 0 & 0 & 0 & 0 & 0 & 0 & 0 & 0 & 0 & 0 & 0 & 0 & 0 & 0 & 0 \\
 \tcr{16} & 1 & 0 & 0 & 0 & 0 & 1 & 0 & 0 & 0 & 0 & 0 & 0 & 0 & 0 & 0 & 74 & 0 & 0 & 0 & 0 & 0 & 0 & 0 & 0 & 0 & 0 & 0 & 0 & 0 & 0 & 0 & 0 & 0 & 0 & 0 & 0 \\
 \tcr{17} & 0 & 1 & 0 & 1 & 0 & 0 & 0 & 0 & 0 & 0 & 0 & 0 & 0 & 0 & 0 & 0 & 74 & 0 & 0 & 0 & 0 & 0 & 0 & 0 & 0 & 0 & 0 & 0 & 0 & 0 & 0 & 0 & 0 & 0 & 0 & 0 \\
 \tcr{18} & 0 & 1 & 0 & 0 & 0 & 1 & 0 & 0 & 0 & 0 & 0 & 0 & 0 & 0 & 0 & 0 & 0 & 74 & 0 & 0 & 0 & 0 & 0 & 0 & 0 & 0 & 0 & 0 & 0 & 0 & 0 & 0 & 0 & 0 & 0 & 0 \\
 \tcr{19} & 0 & 0 & 1 & 1 & 0 & 0 & 0 & 0 & 0 & 0 & 0 & 0 & 0 & 0 & 0 & 0 & 0 & 0 & 74 & 0 & 0 & 0 & 0 & 0 & 0 & 0 & 0 & 0 & 0 & 0 & 0 & 0 & 0 & 0 & 0 & 0 \\
 \tcr{20} & 0 & 0 & 1 & 0 & 1 & 0 & 0 & 0 & 0 & 0 & 0 & 0 & 0 & 0 & 0 & 0 & 0 & 0 & 0 & 74 & 0 & 0 & 0 & 0 & 0 & 0 & 0 & 0 & 0 & 0 & 0 & 0 & 0 & 0 & 0 & 0 \\
 \tcr{21} & 1 & 0 & 0 & 0 & 0 & 0 & 1 & 0 & 0 & 0 & 0 & 0 & 0 & 0 & 0 & 0 & 0 & 0 & 0 & 0 & 89 & 0 & 0 & 0 & 0 & 0 & 0 & 0 & 0 & 0 & 0 & 0 & 0 & 0 & 0 & 0 \\
 \tcr{22} & 1 & 0 & 0 & 0 & 0 & 0 & 0 & 1 & 0 & 0 & 0 & 0 & 0 & 0 & 0 & 0 & 0 & 0 & 0 & 0 & 0 & 89 & 0 & 0 & 0 & 0 & 0 & 0 & 0 & 0 & 0 & 0 & 0 & 0 & 0 & 0 \\
 \tcr{23} & 0 & 1 & 0 & 0 & 0 & 0 & 1 & 0 & 0 & 0 & 0 & 0 & 0 & 0 & 0 & 0 & 0 & 0 & 0 & 0 & 0 & 0 & 89 & 0 & 0 & 0 & 0 & 0 & 0 & 0 & 0 & 0 & 0 & 0 & 0 & 0 \\
 \tcr{24} & 0 & 1 & 0 & 0 & 0 & 0 & 0 & 1 & 0 & 0 & 0 & 0 & 0 & 0 & 0 & 0 & 0 & 0 & 0 & 0 & 0 & 0 & 0 & 89 & 0 & 0 & 0 & 0 & 0 & 0 & 0 & 0 & 0 & 0 & 0 & 0 \\
 \tcr{25} & 0 & 0 & 1 & 0 & 0 & 0 & 1 & 0 & 0 & 0 & 0 & 0 & 0 & 0 & 0 & 0 & 0 & 0 & 0 & 0 & 0 & 0 & 0 & 0 & 89 & 0 & 0 & 0 & 0 & 0 & 0 & 0 & 0 & 0 & 0 & 0 \\
 \tcr{26} & 0 & 0 & 1 & 0 & 0 & 0 & 0 & 1 & 0 & 0 & 0 & 0 & 0 & 0 & 0 & 0 & 0 & 0 & 0 & 0 & 0 & 0 & 0 & 0 & 0 & 89 & 0 & 0 & 0 & 0 & 0 & 0 & 0 & 0 & 0 & 0 \\
 \tcr{27} & 0 & 0 & 0 & 1 & 1 & 0 & 0 & 0 & 0 & 0 & 0 & 0 & 0 & 0 & 0 & 0 & 0 & 0 & 0 & 0 & 0 & 0 & 0 & 0 & 0 & 0 & 100 & 0 & 0 & 0 & 0 & 0 & 0 & 0 & 0 & 0 \\
 \tcr{28} & 0 & 0 & 0 & 1 & 0 & 1 & 0 & 0 & 0 & 0 & 0 & 0 & 0 & 0 & 0 & 0 & 0 & 0 & 0 & 0 & 0 & 0 & 0 & 0 & 0 & 0 & 0 & 100 & 0 & 0 & 0 & 0 & 0 & 0 & 0 & 0 \\
 \tcr{29} & 0 & 0 & 0 & 0 & 1 & 1 & 0 & 0 & 0 & 0 & 0 & 0 & 0 & 0 & 0 & 0 & 0 & 0 & 0 & 0 & 0 & 0 & 0 & 0 & 0 & 0 & 0 & 0 & 100 & 0 & 0 & 0 & 0 & 0 & 0 & 0 \\
 \tcr{30} & 0 & 0 & 0 & 1 & 0 & 0 & 1 & 0 & 0 & 0 & 0 & 0 & 0 & 0 & 0 & 0 & 0 & 0 & 0 & 0 & 0 & 0 & 0 & 0 & 0 & 0 & 0 & 0 & 0 & 105 & 0 & 0 & 0 & 0 & 0 & 0 \\
 \tcr{31} & 0 & 0 & 0 & 1 & 0 & 0 & 0 & 1 & 0 & 0 & 0 & 0 & 0 & 0 & 0 & 0 & 0 & 0 & 0 & 0 & 0 & 0 & 0 & 0 & 0 & 0 & 0 & 0 & 0 & 0 & 105 & 0 & 0 & 0 & 0 & 0 \\
 \tcr{32} & 0 & 0 & 0 & 0 & 1 & 0 & 1 & 0 & 0 & 0 & 0 & 0 & 0 & 0 & 0 & 0 & 0 & 0 & 0 & 0 & 0 & 0 & 0 & 0 & 0 & 0 & 0 & 0 & 0 & 0 & 0 & 105 & 0 & 0 & 0 & 0 \\
 \tcr{33} & 0 & 0 & 0 & 0 & 1 & 0 & 0 & 1 & 0 & 0 & 0 & 0 & 0 & 0 & 0 & 0 & 0 & 0 & 0 & 0 & 0 & 0 & 0 & 0 & 0 & 0 & 0 & 0 & 0 & 0 & 0 & 0 & 105 & 0 & 0 & 0 \\
 \tcr{34} & 0 & 0 & 0 & 0 & 0 & 1 & 1 & 0 & 0 & 0 & 0 & 0 & 0 & 0 & 0 & 0 & 0 & 0 & 0 & 0 & 0 & 0 & 0 & 0 & 0 & 0 & 0 & 0 & 0 & 0 & 0 & 0 & 0 & 105 & 0 & 0 \\
 \tcr{35} & 0 & 0 & 0 & 0 & 0 & 1 & 0 & 1 & 0 & 0 & 0 & 0 & 0 & 0 & 0 & 0 & 0 & 0 & 0 & 0 & 0 & 0 & 0 & 0 & 0 & 0 & 0 & 0 & 0 & 0 & 0 & 0 & 0 & 0 & 105 & 0 \\
 \tcr{36} & 0 & 0 & 0 & 0 & 0 & 0 & 1 & 1 & 0 & 0 & 0 & 0 & 0 & 0 & 0 & 0 & 0 & 0 & 0 & 0 & 0 & 0 & 0 & 0 & 0 & 0 & 0 & 0 & 0 & 0 & 0 & 0 & 0 & 0 & 0 & 110
   \end{smallmatrix}\tag{This is the graph $\Theta$ obtained from $\Phi$ in last page}\label{page2}
\end{align}
\begin{landscape}
\begin{align}\begin{smallmatrix}
 0 & \tcb{1} & \tcb{2} & \tcb{3} & \tcb{4} & \tcb{5} & \tcb{6} & \tcb{7} & \tcb{8} & \tcb{9} & \tcb{10} & \tcb{11} & \tcb{12} & \tcb{13} & \tcb{14} & \tcb{15} & \tcb{16} & \tcb{17} & \tcb{18} & \tcb{19} & \tcb{20} & \tcb{21} & \tcb{22} & \tcb{23} & \tcb{24} & \tcb{25} & \tcb{26} & \tcb{27} & \tcb{28} & \tcb{29} & \tcb{30} & \tcb{31} & \tcb{32} & \tcb{33} & \tcb{34} & \tcb{35} & \tcb{36} \\
 \tcr{1} & 1 & 2 & 2 & 3 & 4 & 4 & 5 & 5 & 6 & 6 & 7 & 8 & 9 & 9 & 10 & 10 & 11 & 12 & 11 & 12 & 13 & 13 & 14 & 14 & 14 & 14 & 15 & 15 & 16 & 17 & 17 & 18 & 18 & 18 & 18 & 19 \\
 \tcr{2} & 2 & 1 & 2 & 4 & 3 & 4 & 5 & 5 & 6 & 7 & 6 & 9 & 8 & 9 & 11 & 12 & 10 & 10 & 12 & 11 & 14 & 14 & 13 & 13 & 14 & 14 & 15 & 16 & 15 & 18 & 18 & 17 & 17 & 18 & 18 & 19 \\
 \tcr{3} & 2 & 2 & 1 & 4 & 4 & 3 & 5 & 5 & 7 & 6 & 6 & 9 & 9 & 8 & 12 & 11 & 12 & 11 & 10 & 10 & 14 & 14 & 14 & 14 & 13 & 13 & 16 & 15 & 15 & 18 & 18 & 18 & 18 & 17 & 17 & 19 \\
 \tcr{4} & 20 & 21 & 21 & 22 & 23 & 23 & 24 & 24 & 25 & 25 & 26 & 27 & 28 & 28 & 29 & 29 & 30 & 31 & 30 & 31 & 32 & 32 & 33 & 33 & 33 & 33 & 34 & 34 & 35 & 36 & 36 & 37 & 37 & 37 & 37 & 38 \\
 \tcr{5} & 21 & 20 & 21 & 23 & 22 & 23 & 24 & 24 & 25 & 26 & 25 & 28 & 27 & 28 & 30 & 31 & 29 & 29 & 31 & 30 & 33 & 33 & 32 & 32 & 33 & 33 & 34 & 35 & 34 & 37 & 37 & 36 & 36 & 37 & 37 & 38 \\
 \tcr{6} & 21 & 21 & 20 & 23 & 23 & 22 & 24 & 24 & 26 & 25 & 25 & 28 & 28 & 27 & 31 & 30 & 31 & 30 & 29 & 29 & 33 & 33 & 33 & 33 & 32 & 32 & 35 & 34 & 34 & 37 & 37 & 37 & 37 & 36 & 36 & 38 \\
 \tcr{7} & 39 & 39 & 39 & 40 & 40 & 40 & 41 & 42 & 43 & 43 & 43 & 44 & 44 & 44 & 45 & 45 & 45 & 45 & 45 & 45 & 46 & 47 & 46 & 47 & 46 & 47 & 48 & 48 & 48 & 49 & 50 & 49 & 50 & 49 & 50 & 51 \\
 \tcr{8} & 39 & 39 & 39 & 40 & 40 & 40 & 42 & 41 & 43 & 43 & 43 & 44 & 44 & 44 & 45 & 45 & 45 & 45 & 45 & 45 & 47 & 46 & 47 & 46 & 47 & 46 & 48 & 48 & 48 & 50 & 49 & 50 & 49 & 50 & 49 & 51 \\
 \tcr{9} & 52 & 52 & 53 & 54 & 54 & 55 & 56 & 56 & 57 & 58 & 58 & 59 & 59 & 60 & 61 & 62 & 61 & 62 & 63 & 63 & 64 & 64 & 64 & 64 & 65 & 65 & 66 & 67 & 67 & 68 & 68 & 68 & 68 & 69 & 69 & 70 \\
 \tcr{10} & 52 & 53 & 52 & 54 & 55 & 54 & 56 & 56 & 58 & 57 & 58 & 59 & 60 & 59 & 62 & 61 & 63 & 63 & 61 & 62 & 64 & 64 & 65 & 65 & 64 & 64 & 67 & 66 & 67 & 68 & 68 & 69 & 69 & 68 & 68 & 70 \\
 \tcr{11} & 53 & 52 & 52 & 55 & 54 & 54 & 56 & 56 & 58 & 58 & 57 & 60 & 59 & 59 & 63 & 63 & 62 & 61 & 62 & 61 & 65 & 65 & 64 & 64 & 64 & 64 & 67 & 67 & 66 & 69 & 69 & 68 & 68 & 68 & 68 & 70 \\
 \tcr{12} & 71 & 72 & 72 & 73 & 74 & 74 & 75 & 75 & 76 & 76 & 77 & 78 & 79 & 79 & 80 & 80 & 81 & 82 & 81 & 82 & 83 & 83 & 84 & 84 & 84 & 84 & 85 & 85 & 86 & 87 & 87 & 88 & 88 & 88 & 88 & 89 \\
 \tcr{13} & 72 & 71 & 72 & 74 & 73 & 74 & 75 & 75 & 76 & 77 & 76 & 79 & 78 & 79 & 81 & 82 & 80 & 80 & 82 & 81 & 84 & 84 & 83 & 83 & 84 & 84 & 85 & 86 & 85 & 88 & 88 & 87 & 87 & 88 & 88 & 89 \\
 \tcr{14} & 72 & 72 & 71 & 74 & 74 & 73 & 75 & 75 & 77 & 76 & 76 & 79 & 79 & 78 & 82 & 81 & 82 & 81 & 80 & 80 & 84 & 84 & 84 & 84 & 83 & 83 & 86 & 85 & 85 & 88 & 88 & 88 & 88 & 87 & 87 & 89 \\
 \tcr{15} & 90 & 91 & 92 & 93 & 94 & 95 & 96 & 96 & 97 & 98 & 99 & 100 & 101 & 102 & 103 & 104 & 105 & 106 & 107 & 108 & 109 & 109 & 110 & 110 & 111 & 111 & 112 & 113 & 114 & 115 & 115 & 116 & 116 & 117 & 117 & 118 \\
 \tcr{16} & 90 & 92 & 91 & 93 & 95 & 94 & 96 & 96 & 98 & 97 & 99 & 100 & 102 & 101 & 104 & 103 & 107 & 108 & 105 & 106 & 109 & 109 & 111 & 111 & 110 & 110 & 113 & 112 & 114 & 115 & 115 & 117 & 117 & 116 & 116 & 118 \\
 \tcr{17} & 91 & 90 & 92 & 94 & 93 & 95 & 96 & 96 & 97 & 99 & 98 & 101 & 100 & 102 & 105 & 106 & 103 & 104 & 108 & 107 & 110 & 110 & 109 & 109 & 111 & 111 & 112 & 114 & 113 & 116 & 116 & 115 & 115 & 117 & 117 & 118 \\
 \tcr{18} & 92 & 90 & 91 & 95 & 93 & 94 & 96 & 96 & 98 & 99 & 97 & 102 & 100 & 101 & 107 & 108 & 104 & 103 & 106 & 105 & 111 & 111 & 109 & 109 & 110 & 110 & 113 & 114 & 112 & 117 & 117 & 115 & 115 & 116 & 116 & 118 \\
 \tcr{19} & 91 & 92 & 90 & 94 & 95 & 93 & 96 & 96 & 99 & 97 & 98 & 101 & 102 & 100 & 106 & 105 & 108 & 107 & 103 & 104 & 110 & 110 & 111 & 111 & 109 & 109 & 114 & 112 & 113 & 116 & 116 & 117 & 117 & 115 & 115 & 118 \\
 \tcr{20} & 92 & 91 & 90 & 95 & 94 & 93 & 96 & 96 & 99 & 98 & 97 & 102 & 101 & 100 & 108 & 107 & 106 & 105 & 104 & 103 & 111 & 111 & 110 & 110 & 109 & 109 & 114 & 113 & 112 & 117 & 117 & 116 & 116 & 115 & 115 & 118 \\
 \tcr{21} & 119 & 120 & 120 & 121 & 122 & 122 & 123 & 124 & 125 & 125 & 126 & 127 & 128 & 128 & 129 & 129 & 130 & 131 & 130 & 131 & 132 & 133 & 134 & 135 & 134 & 135 & 136 & 136 & 137 & 138 & 139 & 140 & 141 & 140 & 141 & 142 \\
 \tcr{22} & 119 & 120 & 120 & 121 & 122 & 122 & 124 & 123 & 125 & 125 & 126 & 127 & 128 & 128 & 129 & 129 & 130 & 131 & 130 & 131 & 133 & 132 & 135 & 134 & 135 & 134 & 136 & 136 & 137 & 139 & 138 & 141 & 140 & 141 & 140 & 142 \\
 \tcr{23} & 120 & 119 & 120 & 122 & 121 & 122 & 123 & 124 & 125 & 126 & 125 & 128 & 127 & 128 & 130 & 131 & 129 & 129 & 131 & 130 & 134 & 135 & 132 & 133 & 134 & 135 & 136 & 137 & 136 & 140 & 141 & 138 & 139 & 140 & 141 & 142 \\
 \tcr{24} & 120 & 119 & 120 & 122 & 121 & 122 & 124 & 123 & 125 & 126 & 125 & 128 & 127 & 128 & 130 & 131 & 129 & 129 & 131 & 130 & 135 & 134 & 133 & 132 & 135 & 134 & 136 & 137 & 136 & 141 & 140 & 139 & 138 & 141 & 140 & 142 \\
 \tcr{25} & 120 & 120 & 119 & 122 & 122 & 121 & 123 & 124 & 126 & 125 & 125 & 128 & 128 & 127 & 131 & 130 & 131 & 130 & 129 & 129 & 134 & 135 & 134 & 135 & 132 & 133 & 137 & 136 & 136 & 140 & 141 & 140 & 141 & 138 & 139 & 142 \\
 \tcr{26} & 120 & 120 & 119 & 122 & 122 & 121 & 124 & 123 & 126 & 125 & 125 & 128 & 128 & 127 & 131 & 130 & 131 & 130 & 129 & 129 & 135 & 134 & 135 & 134 & 133 & 132 & 137 & 136 & 136 & 141 & 140 & 141 & 140 & 139 & 138 & 142 \\
 \tcr{27} & 143 & 143 & 144 & 145 & 145 & 146 & 147 & 147 & 148 & 149 & 149 & 150 & 150 & 151 & 152 & 153 & 152 & 153 & 154 & 154 & 155 & 155 & 155 & 155 & 156 & 156 & 157 & 158 & 158 & 159 & 159 & 159 & 159 & 160 & 160 & 161 \\
 \tcr{28} & 143 & 144 & 143 & 145 & 146 & 145 & 147 & 147 & 149 & 148 & 149 & 150 & 151 & 150 & 153 & 152 & 154 & 154 & 152 & 153 & 155 & 155 & 156 & 156 & 155 & 155 & 158 & 157 & 158 & 159 & 159 & 160 & 160 & 159 & 159 & 161 \\
 \tcr{29} & 144 & 143 & 143 & 146 & 145 & 145 & 147 & 147 & 149 & 149 & 148 & 151 & 150 & 150 & 154 & 154 & 153 & 152 & 153 & 152 & 156 & 156 & 155 & 155 & 155 & 155 & 158 & 158 & 157 & 160 & 160 & 159 & 159 & 159 & 159 & 161 \\
 \tcr{30} & 162 & 163 & 163 & 164 & 165 & 165 & 166 & 167 & 168 & 168 & 169 & 170 & 171 & 171 & 172 & 172 & 173 & 174 & 173 & 174 & 175 & 176 & 177 & 178 & 177 & 178 & 179 & 179 & 180 & 181 & 182 & 183 & 184 & 183 & 184 & 185 \\
 \tcr{31} & 162 & 163 & 163 & 164 & 165 & 165 & 167 & 166 & 168 & 168 & 169 & 170 & 171 & 171 & 172 & 172 & 173 & 174 & 173 & 174 & 176 & 175 & 178 & 177 & 178 & 177 & 179 & 179 & 180 & 182 & 181 & 184 & 183 & 184 & 183 & 185 \\
 \tcr{32} & 163 & 162 & 163 & 165 & 164 & 165 & 166 & 167 & 168 & 169 & 168 & 171 & 170 & 171 & 173 & 174 & 172 & 172 & 174 & 173 & 177 & 178 & 175 & 176 & 177 & 178 & 179 & 180 & 179 & 183 & 184 & 181 & 182 & 183 & 184 & 185 \\
 \tcr{33} & 163 & 162 & 163 & 165 & 164 & 165 & 167 & 166 & 168 & 169 & 168 & 171 & 170 & 171 & 173 & 174 & 172 & 172 & 174 & 173 & 178 & 177 & 176 & 175 & 178 & 177 & 179 & 180 & 179 & 184 & 183 & 182 & 181 & 184 & 183 & 185 \\
 \tcr{34} & 163 & 163 & 162 & 165 & 165 & 164 & 166 & 167 & 169 & 168 & 168 & 171 & 171 & 170 & 174 & 173 & 174 & 173 & 172 & 172 & 177 & 178 & 177 & 178 & 175 & 176 & 180 & 179 & 179 & 183 & 184 & 183 & 184 & 181 & 182 & 185 \\
 \tcr{35} & 163 & 163 & 162 & 165 & 165 & 164 & 167 & 166 & 169 & 168 & 168 & 171 & 171 & 170 & 174 & 173 & 174 & 173 & 172 & 172 & 178 & 177 & 178 & 177 & 176 & 175 & 180 & 179 & 179 & 184 & 183 & 184 & 183 & 182 & 181 & 185 \\
 \tcr{36} & 186 & 186 & 186 & 187 & 187 & 187 & 188 & 188 & 189 & 189 & 189 & 190 & 190 & 190 & 191 & 191 & 191 & 191 & 191 & 191 & 192 & 192 & 192 & 192 & 192 & 192 & 193 & 193 & 193 & 194 & 194 & 194 & 194 & 194 & 194 & 195
 \end{smallmatrix}\tag{This is the stable graph $\WL(\mathtt{bi}(X))$ of binding graph $\mathtt{bi}(X)$ obtained by WL process.}\label{page3}
 \end{align}
\end{landscape}
\section{An Explicit Proof (Sketch) of $\hat{\Phi}\approx\hat{\Theta}$ in Theorem \ref{thm:theta}}\label{sec:proof}

\noindent\emph{Proof of $\hat{\Phi}\approx\hat{\Theta}$ in Theorem \ref{thm:theta}}. Since $\Theta\rightarrowtail\Phi$
 by the definition of $\Theta$, it holds that $\hat{\Theta}\rightarrowtail\hat{\Phi}$.

To show the other side, let's look at the process to evaluate the stable graph $\hat{\Theta}$ with SaS stabilization. The first step is to make $\Theta$ recognize vertices. From definition of $\Theta$, we only has to rename the labels $x_0$ of all basic vertices as a variable $y\in\Var$ with $y\notin\Theta$. The result graph is denoted as $\Theta_1=(\theta_{ij})$.

In this setting, for any $i\ne j$ and $i,j\in[n_1]$, $\theta_{ii}$ is $y$ iff $i$ is a basic vertex; $\phi_{ii}$ otherwise. $\theta_{ij}$ is $x$ iff one of $i,j$ is a basic vertex and the other is its binding vertex; $x_0$ otherwise.

Now let $\Theta^2_1:=(y_{ij})$, where  $y_{ij}=\sum_{k=1}^{n_1}\theta_{ik}\theta_{jk}$. Recall that vetices in $[n]$ are all basic vertices and $n_1=n(n+1)/2$. We consider the different cases separately.
\begin{itemize}
  \item Let $i\in[n]$ be any basic vertex. We have
\begin{align}\label{eq:thetabv1}
y_{ii}&=\sum_{k=1}^{n_1}\theta_{ik}^2=y^2+(n-1)x^2+(n_1-n)x_0^2.
\end{align}

  \item Let $i\in[n+1..n_1]$ be any binding vertex. We have
\begin{align}\label{eq:thetabv2}
y_{ii}&=\sum_{k=1}^{n_1}\theta_{ik}^2=\phi_{ii}^2+2x^2+(n_1-3)x_0^2.
\end{align}

  \item Let $i,j\in[n]$ be two different basic vertices. We have
\begin{align}\label{eq:thetabv3}
  y_{ij}&=\sum_{k=1}^{n_1}\theta_{ik}\theta_{jk}=\theta_{ii}\theta_{ij}+\theta_{ij}\theta_{jj}+\sum_{k\in[n]\backslash\{i,j\}}\theta_{ik}\theta_{jk}+\sum_{k=n+1}^{n_1}\theta_{ik}\theta_{jk}\notag\\
   &=\big(\,2 x_0y+(n-2)x_0^2\,\big)+\big(\,2(n-2)xx_0+x^2+(n_1-n-2(n-2)-1)x_0^2\,\big)\notag\\
&=2x_0y+x^2+2(n-2)xx_0+(n_1-2n+1)x_0^2.
\end{align}

\item Let $i,j\in[n+1..n_1]$ be two different binding vertices. If $i$ and $j$ bind a common basic vertex, they are called binding siblings. We have
\begin{align}\label{eq:thetabv4}
  y_{ij}&=\sum_{k=1}^{n_1}\theta_{ik}\theta_{jk}=\theta_{ii}\theta_{ij}+
\theta_{ij}\theta_{jj}+\sum_{k=1}^{n}\theta_{ik}\theta_{jk}+
\sum_{\stackrel{k\in[n+1..n_1]}{k\notin\{i,j\}}}\theta_{ik}\theta_{jk}\notag\\
   &=\left\{
       \begin{array}{ll}
        x^2+2xx_0+(n-3)x_0^2+(\phi_{ii}+\phi_{jj})x_0+(n_1-n-2)x_0^2 , & \hbox{if $i,j$ are binding siblings;} \\
          4xx_0+(n-4)x_0^2+(\phi_{ii}+\phi_{jj})x_0+(n_1-n-2)x_0^2, & \hbox{otherwise.}
       \end{array}
     \right.\notag\\
&=\left\{
       \begin{array}{ll}
        x^2+2xx_0+(\phi_{ii}+\phi_{jj})x_0+(n_1-5)x_0^2 , & \hbox{if $i$ and  $j$ are binding siblings;} \\
          4xx_0+(\phi_{ii}+\phi_{jj})x_0+(n_1-6)x_0^2, & \hbox{otherwise.}
       \end{array}
     \right.
\end{align}

\item Let $i\in[n]$ be a basic vertex and $j\in[n+1..n_1]$ be a  binding vertex. We have
\begin{align}\label{eq:thetabv5}
  y_{ij}&=\sum_{k=1}^{n_1}\theta_{ik}\theta_{jk}=\theta_{ii}\theta_{ij}+
\theta_{ij}\theta_{jj}+\sum_{\stackrel{k\in[n]}{k\ne i}}\theta_{ik}\theta_{jk}+
\sum_{\stackrel{k\in[n+1..n_1]}{k\ne j}}\theta_{ik}\theta_{jk}\notag\\
   &=\left\{
       \begin{array}{ll}
         xy+x\phi_{jj}+(n-2)x_0^2+xx_0+(n-2)xx_0+(n_1-n-(n-1))x_0^2, & \hbox{if $j$ binds $i$;} \\
         x_0y+x_0\phi_{jj}+(n-3)x_0^2+2xx_0+(n-1)xx_0+(n_1-2n)x_0^2, & \hbox{otherwise.}
       \end{array}
     \right.\notag\\
&=\left\{
       \begin{array}{ll}
         xy+x\phi_{jj}+(n-1)xx_0+(n_1-n-1)x_0^2, & \hbox{if $j$ binds $i$;} \\
         x_0y+x_0\phi_{jj}+(n+1)xx_0+(n_1-n-3)x_0^2, & \hbox{otherwise.}
       \end{array}
     \right.
\end{align}
\end{itemize}
Let $\Theta_2:=(z_{ij})$ is the graph after equivalent variable substitution to $\Theta_1^2$. In the graph $\Theta_2$, we have the following observations.
\begin{itemize}
  \item[(a)] For any basic vertices $i,j\in[n]$, we have $z_{ii}=z_{jj}$ from \eqref{eq:thetabv1}, and they are independent of labels of binding vertices of $\Phi$.

 \item[(b)] For any two binding vertices $i,j\in[n+1..n_1]$, then $z_{ii}=z_{jj}$ if and only if $\theta_{ii}=\theta_{jj}$ according to \eqref{eq:thetabv2}.

  \item[(c)]  For all basic vertices $i,j,u,v$ and $i\ne j, u\ne v$ we have $z_{ij}=z_{uv}$ by \eqref{eq:thetabv3}, and they are independent of labels of binding vertices of $\Phi$.
  \item[(d)] The labels on edges between binding siblings do not overlap with those labels on edges between binding non-siblings by  \eqref{eq:thetabv4}.  The labels on both vertices will contribute to those labels.
\item[(e)] The label on a binding edge dependents on the label to binding vertex. The labels on binding edges do not overlap with labels on non-binding edges by  \eqref{eq:thetabv5}.%
\end{itemize}
We now look at $\Theta_2^2:=(w_{ij})=\big(\sum_{k=1}^{n_1}z_{ik}z_{kj}\big)$. Recall that the first $n$ vertices are basic vertices. We justify the labels $w_{ii}, w_{jj}$ for all $i,j\in[n]$. Since
$$
w_{ii}=\sum_{k=1}^{n_1}z_{ik}^2=z_{ii}^2+
\sum_{\stackrel{k\in[n]}{k\ne i}}z_{ik}^2+\sum_{\stackrel{u\in[n]\backslash\{i\}}{k=i\dot{\wedge}u}}z_{ik}^2+
\sum_{\stackrel{ u,v\in[n]\backslash\{i\}}{k=u\dot{\wedge}v}}z_{ik}^2$$
and
$$w_{jj}=\sum_{k=1}^{n_1}z_{jk}^2=z_{jj}^2+
\sum_{\stackrel{k\in[n]}{k\ne i}}z_{jk}^2+\sum_{\stackrel{u\in[n]\backslash\{j\}}{k=j\dot{\wedge}u}}z_{jk}^2+
\sum_{\stackrel{u,v\in[n]\backslash\{j\}}{k=u\dot{\wedge}v}}z_{jk}^2\,,$$
we have, for any $i,j\in[n]$, $z^2_{ii}=z^2_{jj}$ and $\sum_{\stackrel{k\in[n]}{k\ne i}}z_{ik}^2=\sum_{\stackrel{k\in[n]}{k\ne j}}z_{jk}^2$ by the observations (a) and (c).

Careful inspection from \eqref{eq:thetabv5} will gives:
\begin{align}\label{eq:thetabv7}
  \mset{\theta_{kk}\mid k=i\dot{\wedge}u,\,u\in[n]\backslash\{i\}}\equiv\mset{\theta_{kk}\mid k=j\dot{\wedge}u,\,u\in[n]\backslash\{j\}}
\end{align}
 if and only if $$\sum_{\stackrel{u\in[n]\backslash\{i\}}{k=i\dot{\wedge}u}}z_{ik}^2=\sum_{\stackrel{u\in[n]\backslash\{j\}}{k=j\dot{\wedge}u}}z_{jk}^2 \quad\text{and}\quad\sum_{\stackrel{u,v\in[n]\backslash\{i\}}{k=u\dot{\wedge}v}}z_{ik}^2=\sum_{\stackrel{u,v\in[n]\backslash\{j\}}{k=u\dot{\wedge}v}}z_{jk}^2\,.$$

That is equivalent to $w_{ii}=w_{jj}$. We get $w_{ii}=w_{jj}$ if and only if \eqref{eq:thetabv7} holds for basic vertices $i,j\in[n]$ in $\Phi$. However, \eqref{eq:thetabv7} holds iff
\begin{align}\label{eq:thetabv8}
  \mset{\phi_{kk}\mid k=i\dot{\wedge}u,\,u\in[n]\backslash\{i\}}\equiv\mset{\phi_{kk}\mid k=j\dot{\wedge}u,\,u\in[n]\backslash\{j\}}
\end{align}

We let readers convince themselves that for any binding vertices $i,j\in[n+1,n_1]$, we have $w_{ii}=w_{jj}$ if and only if $\phi_{ii}=\phi_{jj}$ holds in $\Phi$.

 Let the graph $\Theta_3$ be graph after the equivalent variable substitution to $\Theta_2^2$. That shows, the labels to basic vertices in $\Theta_3$ are equivalent to those in $\Phi$ and labels to binding vertices keep ``stead still'' and are equivalent to those in $\Theta$ and hence in $\Phi$. That is, the diagonal of $\Phi$ is recreated equivalently in $\Theta_3$.

Since $\Theta_3\rightarrowtail\hat{\Theta}$ and stable graph $\hat{\Theta}$ recognizes binding edges and induces strongly equitable partition. The analysis above together with Proposition \ref{prop:dbqd} implies $\Phi\rightarrowtail\hat{\Theta}$, and hence $\hat{\Phi}\rightarrowtail\hat{\Theta}$.

We thus have $\hat{\Theta}\approx\hat{\Phi}$. That finishes the proof.
\endproof

\section{An Explicit Proof (Sketch) of $\WL(\Phi)\approx\WL(\Theta)$ in Theorem \ref{thm:theta}}\label{sec:proof1}

\noindent\emph{Proof of\quad $\WL(\Phi)\approx\WL(\Theta)$ in Theorem \ref{thm:theta}}. Since $\Theta\rightarrowtail\Phi$
 by the definition of $\Theta$, it holds that $\WL(\Theta)\rightarrowtail\WL(\Phi)$.

To show the other side, let's look at the process to evaluate the stable graph $\WL(\Theta)$ with WL process. The first step is to make $\Theta$ recognize vertices. From definition of $\Theta$, we only has to rename the labels $x_0$ of all basic vertices as a variable $y\in\Var$ with $y\notin\Theta$. The result graph is denoted as $\Theta_1=(\theta_{ij})$.

In this setting, for any $i\ne j$ and $i,j\in[n_1]$, $\theta_{ii}$ is $y$ iff $i$ is a basic vertex; $\phi_{ii}$ otherwise. $\theta_{ij}$ is $x$ iff one of $i,j$ is a basic vertex and the other is its binding vertex; $x_0$ otherwise.

Now let $\Theta_1\dd\Theta_1:=(y_{ij})$, where  $y_{ij}=\sum_{k=1}^{n_1}\theta_{ik}\dd\theta_{jk}$. Recall that vetices in $[n]$ are all basic vertices and $n_1=n(n+1)/2$. We consider the different cases separately as follows.
\begin{itemize}
  \item Let $i\in[n]$ be any basic vertex. We have
\begin{align}\label{eq:thetabv1d}
y_{ii}&=\sum_{k=1}^{n_1}\theta_{ik}\dd\theta_{ki}=y\dd y+(n-1)x\dd x+(n_1-n)x_0\dd x_0.
\end{align}

  \item Let $i\in[n+1..n_1]$ be any binding vertex. We have
\begin{align}\label{eq:thetabv2d}
y_{ii}&=\sum_{k=1}^{n_1}\theta_{ik}\dd\theta_{ik}=\phi_{ii}\dd\phi_{ii}+2x\dd x+(n_1-3)x_0\dd x_0.
\end{align}

  \item Let $i,j\in[n]$ be two different basic vertices. We have
\begin{align}\label{eq:thetabv3d}
  y_{ij}&=\sum_{k=1}^{n_1}\theta_{ik}\dd\theta_{kj}
  =\theta_{ii}\dd\theta_{ij}+\theta_{ij}\dd\theta_{jj}+\sum_{k\in[n]\backslash\{i,j\}}\theta_{ik}\dd\theta_{kj}+\sum_{k=n+1}^{n_1}\theta_{ik}\dd\theta_{kj}\notag\\
   &=\,y\dd x_0+x_0\dd y+ \sum_{k\in[n]\backslash\{i,j\}}x_0\dd x_0+\sum_{k=n+1}^{n_1}\theta_{ik}\dd\theta_{kj}\notag\\
    &=\,y\dd x_0+x_0\dd y+ (n-2)x_0\dd x_0+\notag\\
    &\hspace{1cm}+x\dd x+(n-2)(x\dd x_0+x_0\dd x)+(n_1-n-2(n-2)-1)x_0\dd x_0\notag\\
&=\,y\dd x_0+x_0\dd y+ x\dd x+(n-2)(x\dd x_0+x_0\dd x)+(n_1-2n+1)x_0\dd x_0.
\end{align}

\item Let $i,j\in[n+1..n_1]$ be two different binding vertices. If $i$ and $j$ bind a common basic vertex, they are called binding siblings. We have
\begin{align}
  y_{ij}&=\sum_{k=1}^{n_1}\theta_{ik}\dd\theta_{kj}=\theta_{ii}\dd\theta_{ij}+
\theta_{ij}\dd\theta_{jj}+\sum_{k=1}^{n}\theta_{ik}\dd\theta_{kj}+
\sum_{\stackrel{k\in[n+1..n_1]}{k\notin\{i,j\}}}\theta_{ik}\dd\theta_{kj}\notag\\
&=\phi_{ii}\dd x_0+
x_0\dd\phi_{jj}+(n_1-n-2)\,(x_0\dd x_0)+
\sum_{k=1}^{n}\theta_{ik}\dd\theta_{kj}\notag
\end{align}
Notice that if $i,j$ are binding siblings, then
$$\sum_{k=1}^{n}\theta_{ik}\dd\theta_{kj}=x\dd x+x_0\dd x+x\dd x_0+(n-3)x_0\dd x_0.$$ And if $i,j$ are not binding siblings, then $$\sum_{k=1}^{n}\theta_{ik}\dd\theta_{kj}=2(x_0\dd x+x\dd x_0)+(n-4)(x_0\dd x_0)\,.$$
These give the following.
\begin{align}
y_{ij}&=\left\{
       \begin{array}{ll}
       \phi_{ii}\dd x_0+
x_0\dd\phi_{jj}+
x\dd x+x_0\dd x+x\dd x_0+(n_1-5)x_0\dd x_0, & \hbox{if $i,j$ are binding siblings;} \\
          \phi_{ii}\dd x_0+
x_0\dd\phi_{jj}+
2(x_0\dd x+x\dd x_0)+(n-6)(x_0\dd x_0), & \hbox{otherwise.}
       \end{array}
     \right.\label{eq:thetabv4d}
\end{align}

\item Let $i\in[n]$ be a basic vertex and $j\in[n+1..n_1]$ be a  binding vertex. We have
\begin{align}
  y_{ij}&=\sum_{k=1}^{n_1}\theta_{ik}\dd\theta_{kj}=\theta_{ii}\dd\theta_{ij}+
\theta_{ij}\dd\theta_{jj}+\sum_{\stackrel{k\in[n]}{k\ne i}}\theta_{ik}\dd\theta_{kj}+
\sum_{\stackrel{k\in[n+1..n_1]}{k\ne j}}\theta_{ik}\dd\theta_{kj}\notag\\
&=y\dd\theta_{ij}+
\theta_{ij}\dd\phi_{jj}+\sum_{\stackrel{k\in[n]}{k\ne i}}\theta_{ik}\dd\theta_{kj}+
\sum_{\stackrel{k\in[n+1..n_1]}{k\ne j}}\theta_{ik}\dd\theta_{kj}.\notag
\end{align}
If $j$ binds $i$, then
\begin{align}
 y_{ij}&=y\dd x+
x\dd\phi_{jj}+x_0\dd x+(n-2)(x_0\dd x_0)+(n-2)(x\dd x_0)+
(n_1-n-(n-1))(x_0\dd x_0)\notag\\
&=y\dd x+
x\dd\phi_{jj}+x_0\dd x+(n-2)(x\dd x_0)+
(n_1-n-1))(x_0\dd x_0)\,.\label{eq:thetabv5d}
\end{align}
If $j$ does not bind $i$, then
\begin{align}
 y_{ij}&=y\dd x_0+
x_0\dd\phi_{jj}+(n-3)(x_0\dd x_0)+2(x_0\dd x)+\notag\\
&\hspace{2cm}+(n-1)(x\dd x_0)+(n_1-2n)(x_0\dd x_0)\notag\\
&=y\dd x_0+
x_0\dd\phi_{jj}+2(x_0\dd x)+(n-1)(x\dd x_0)+
(n_1-n-3)(x_0\dd x_0)\,. \label{eq:thetabv6d}
\end{align}
\end{itemize}
Let $\Theta_2:=(z_{ij})$ is the graph after equivalent variable substitution to $\Theta_1\dd\Theta_1$. In the graph $\Theta_2$, we have the following observations.
\begin{itemize}
  \item[(a)] For any basic vertices $i,j\in[n]$, we have $z_{ii}=z_{jj}$ from \eqref{eq:thetabv1d}, and they are independent of labels of binding vertices of $\Phi$.

 \item[(b)] For any two binding vertices $i,j\in[n+1..n_1]$, then $z_{ii}=z_{jj}$ if and only if $\theta_{ii}=\theta_{jj}$ according to \eqref{eq:thetabv2d}.

  \item[(c)]  For all basic vertices $i,j,u,v$ and $i\ne j, u\ne v$ we have $z_{ij}=z_{uv}$ by \eqref{eq:thetabv3d}, and they are independent of labels of binding vertices of $\Phi$.
  \item[(d)] The labels on edges between binding siblings do not overlap with those labels on edges between binding non-siblings by  \eqref{eq:thetabv4d}.  The labels on both vertices will contribute to those labels.
\item[(e)] The label on a binding edge dependents on the label to binding vertex. The labels on binding edges do not overlap with labels on non-binding edges by  \eqref{eq:thetabv5d},\eqref{eq:thetabv6d}.%
\end{itemize}
We now look at $\Theta_2\dd \Theta_2:=(w_{ij})=\big(\sum_{k=1}^{n_1}z_{ik}\dd z_{kj}\big)$. Recall that the first $n$ vertices are basic vertices. We justify the labels $w_{ii}, w_{jj}$ for all $i,j\in[n]$. Since
$$
w_{ii}=\sum_{k=1}^{n_1}z_{ik}\dd z_{ki}=z_{ii}\dd z_{ii}+
\sum_{\stackrel{k\in[n]}{k\ne i}}z_{ik}\dd z_{ki}+\sum_{\stackrel{u\in[n]\backslash\{i\}}{k=i\dot{\wedge}u}}z_{ik}\dd z_{ki}+
\sum_{\stackrel{ u,v\in[n]\backslash\{i\}}{k=u\dot{\wedge}v}}z_{ik}\dd z_{ki}$$
and
$$w_{jj}=\sum_{k=1}^{n_1}z_{jk}\dd z_{kj}=z_{jj}\dd z_{jj}+
\sum_{\stackrel{k\in[n]}{k\ne i}}z_{jk}\dd z_{kj}+\sum_{\stackrel{u\in[n]\backslash\{j\}}{k=j\dot{\wedge}u}}z_{jk}\dd z_{kj}+
\sum_{\stackrel{u,v\in[n]\backslash\{j\}}{k=u\dot{\wedge}v}}z_{jk}\dd z_{kj}\,,$$
we have, for any $i,j\in[n]$, $z_{ii}\dd z_{ii}=z_{jj}\dd z_{jj}$ and $\sum_{\stackrel{k\in[n]}{k\ne i}}z_{ik}\dd z_{ki}=\sum_{\stackrel{k\in[n]}{k\ne j}}z_{jk}\dd z_{kj}$ by the observations (a) and (c).

Careful inspection from \eqref{eq:thetabv5d} and \eqref{eq:thetabv6d} will gives:
\begin{align}\label{eq:thetabv7d}
 \mset{\theta_{kk}\mid k=i\dot{\wedge}u,\,u\in[n]\backslash\{i\}}\equiv\mset{\theta_{kk}\mid k=j\dot{\wedge}u,\,u\in[n]\backslash\{j\}}
\end{align}
 if and only if $$\sum_{\stackrel{u\in[n]\backslash\{i\}}{k=i\dot{\wedge}u}}z_{ik}\dd z_{ki}=\sum_{\stackrel{u\in[n]\backslash\{j\}}{k=j\dot{\wedge}u}}z_{jk}\dd z_{kj} \quad\text{and}\quad\sum_{\stackrel{u,v\in[n]\backslash\{i\}}{k=u\dot{\wedge}v}}z_{ik}\dd z_{ki}=\sum_{\stackrel{u,v\in[n]\backslash\{j\}}{k=u\dot{\wedge}v}}z_{jk}\dd z_{kj}\,.$$

That is equivalent to $w_{ii}=w_{jj}$. We get $w_{ii}=w_{jj}$ if and only if \eqref{eq:thetabv7d} holds for basic vertices $i,j\in[n]$ in $\Phi$. However,
equation \eqref{eq:thetabv7d} is equivalent to
\begin{align}\label{eq:thetabv8d}
 \mset{\phi_{kk}\mid k=i\dot{\wedge}u,\,u\in[n]\backslash\{i\}}\equiv\mset{\phi_{kk}\mid k=j\dot{\wedge}u,\,u\in[n]\backslash\{j\}}
\end{align}

We let readers convince themselves that for any binding vertices $i,j\in[n+1,n_1]$, we have $w_{ii}=w_{jj}$ if and only if $\phi_{ii}=\phi_{jj}$ in $\Phi$.

 Let the graph $\Theta_3$ be graph after the equivalent variable substitution to $\Theta_2\dd\Theta_2$. That shows, the labels to basic vertices in $\Theta_3$ are equivalent to those in $\Phi$ and labels to binding vertices keep ``stead still'' and are equivalent to those in $\Theta$ and hence in $\Phi$. That is, the diagonal of $\Phi$ is recreated equivalently in $\Theta_3$. That shows $\Phi\rightarrowtail\Theta_3\rightarrowtail\WL(\Theta)$.
 That implies $\Phi\rightarrowtail\WL(\Theta)$. We finally get  $\WL(\Phi)\rightarrowtail\WL(\Theta)$.

We thus have $\WL(\Theta)\approx\WL(\Phi)$. That finishes the proof.
\endproof

\end{document}